\documentclass[a4paper, 11pt]{article}
\usepackage[left=1.15in,right=1.15in,top=1.15in,
bottom=1.25in,dvips]{geometry}
\emergencystretch=1em
\addtolength{\skip\footins}{2pc plus 3pt}
\usepackage[utf8]{inputenc}
\usepackage{lmodern}
\usepackage{indentfirst}
\usepackage[affil-it]{authblk}
\usepackage{rotating}
\usepackage{tocloft}
\setlength{\cftbeforetoctitleskip}{0pt}
\setlength{\cftbeforesecskip}{5pt}
\usepackage{titlesec}
 \titleformat{\subparagraph}[runin]{\normalfont}{\thesubparagraph}{0pt}{\underline}[.]
 \titleformat{\paragraph}[hang]{\normalfont\bfseries}{\theparagraph}{0pt}{}
\usepackage{titletoc}


\usepackage[textsize=scriptsize]{todonotes} 
\setlength{\marginparwidth}{1in}
\makeatletter
\renewcommand{\todo}[2][]{\tikzexternaldisable\@todo[#1]{#2}\tikzexternalenable}
\makeatother

\usepackage{standalone}
\usepackage{subfiles}

\usepackage[hidelinks]{hyperref}
\usepackage{enumitem}
\setlist[enumerate]{itemsep=2.0pt plus 1.0 pt minus 0.5pt, topsep=4.0pt plus 2.0 pt minus 1.0pt}
\setlist[itemize]{itemsep=2.0pt plus 1.0 pt minus 0.5pt, topsep=4.0pt plus 2.0 pt minus 1.0pt}
\usepackage{tikz}
\usetikzlibrary{external}
\tikzexternalize
\usepackage{calc}
\usepackage{graphicx}
\usepackage{subcaption}
\usepackage{epsfig}
\usepackage{color, soul}

\usepackage{marginnote}

\usepackage{color,soulutf8}

\usepackage{amsmath}
\usepackage{amsthm, slashed}
\usepackage[capitalize]{cleveref}
\usepackage{amssymb}
\usepackage{array}
\usepackage{bbm}
\usepackage{bm}
\usepackage{marvosym}
\usepackage{mathtools}
\usepackage{mathrsfs}
\usepackage{upgreek}
\usepackage{accents}
\usepackage{textcomp}
\usepackage{slashed} 
\usepackage{harmony} 
\usepackage{wasysym} 
\wasyfamily
\DeclareFontShape{U}{wasy}{b}{n}{ <-10> ssub * wasy/m/n
 <10> <10.95> <12> <14.4> <17.28> <20.74> <24.88>wasyb10 }{}
\DeclareMathAlphabet\mathbfcal{OMS}{cmsy}{b}{n}
\newcommand\numberthis{\addtocounter{equation}{1}\tag{\theequation}}
\renewcommand{\Re}{\operatorname{Re}}
\renewcommand{\Im}{\operatorname{Im}}

\allowdisplaybreaks

\numberwithin{equation}{section}

\setcounter{introtheorem}{-1}

\newtheorem{definition}{Definition}[section]
\newtheorem{theorem}{Theorem}[section]
\newtheorem*{theorem*}{Theorem}
 
\newtheorem*{conjecture*}{Conjecture}
\newtheorem{corollary}{Corollary}[section]
\newtheorem*{corollary*}{Corollary}

\newtheorem{proposition}{Proposition}[section]
\newtheorem{lemma}[proposition]{Lemma}
\newtheorem{remark}[proposition]{Remark}

\usepackage{csquotes}
\usepackage[backend=biber,
style=alphabetic, date=year, doi=false,isbn=false,maxnames=10]{biblatex}
\addbibresource{../References/Mendeley/library.bib}
\renewbibmacro{in:}{}

\AtEveryBibitem{\clearfield{pages}} 
\DeclareBibliographyCategory{needsurl}
\newcommand{\entryneedsurl}[1]{\addtocategory{needsurl}{#1}}
\renewbibmacro*{url+urldate}{%
  \ifcategory{needsurl}
    {\printfield{url}%
     \iffieldundef{urlyear}
       {}
       {\setunit*{\addspace}%
        \printurldate}}
    {}}
\entryneedsurl{Olver2018}
\entryneedsurl{Tao2018}

\newcommand{\mb}[1]{\mathbb{#1}}

\newcommand{\mc}[1]{\mathcal{#1}}
\newcommand{\mr}[1]{\mathrm{#1}}

\newcommand{\p}{\partial}
\newcommand{\lp}{\left}
\newcommand{\rp}{\right}
\newcommand\supp{\mathrm{supp}}


\newcommand{\swei}[2]{#1^{[{#2}]}}

\hyphenation{Teu-kol-sky}


\title{\textbf{Mode stability for the Teukolsky equation \\ on extremal and subextremal Kerr spacetimes}}

\date{\today}

\author{\Large Rita Teixeira da Costa}

\affil{\small University of Cambridge, Department of Pure Mathematics and Mathematical
Statistics, Wilberforce~Road,~Cambridge~CB3~0WA,~United~Kingdom }

\begin{document}

\maketitle

\vspace{-1.2\baselineskip}

\begin{abstract} 
We prove that there are no exponentially growing modes nor modes on the real axis for the Teukolsky equation on  Kerr black hole spacetimes, both in the extremal and subextremal case.  We also give a quantitative refinement of mode stability. As an immediate application, we show that the transmission and reflection coefficients of the scattering problem are bounded, independently of the specific angular momentum of the black hole, in any compact set of real frequencies excluding zero frequency and the superradiant threshold.

While in the subextremal setting these results were known previously, the extremal case is more involved and has remained an open problem. Ours are the first results outside axisymmetry and could serve as a preliminary step towards understanding  boundedness, scattering and decay properties of general solutions to the Teukolsky equation on extremal Kerr black holes.
\end{abstract}

\vspace{\baselineskip}

\tableofcontents

\section{Introduction}

Einstein's equation in vacuum,
\begin{equation}
\mr{Ric}(g)=0\,, \label{eq:einstein}
\end{equation}
and in particular its black hole solutions, have been the object of intense study by the general relativity community. Among such solutions, the Kerr family of spacetimes \cite{Kerr1963} is especially prominent. A Kerr black hole is characterized by a mass $M>0$ and specific angular momentum $|a|\leq M$; we say it is subextremal if $|a|<M$ and extremal if $|a|=M$. 
A fundamental problem concerning these spacetimes is whether they are nonlinearly stable as solutions of the Einstein equation (\ref{eq:einstein}); we refer the reader to \cite{Dafermos2011} for an introduction to the question of nonlinear stability. Another interesting problem is to understand scattering processes on Kerr black holes spacetimes. This paper will concern the Teukolsky equation, which plays a key role in both these problems.

In order to solve the nonlinear stability problem, the first step is to show linear stability, i.e.\ boundedness and decay statements for the linearization of the Einstein equation \eqref{eq:einstein} around the Kerr solutions. In this context, the  Teukolsky equation \cite{Teukolsky1973}
\begin{equation}
\begin{gathered}
\Box_g\upalpha^{[s]} +\frac{2s}{\rho^2}(r-M)\p_r\upalpha^{[s]} +\frac{2s}{\rho^2}\lp[\frac{a(r-M)}{\Delta}+i\frac{\cos\theta}{\sin^2\theta}\rp]\p_\phi\upalpha^{[s]} 
\\
+\frac{2s}{\rho^2}\lp(\frac{M(r^2-a^2)}{\Delta}-r -ia\cos\theta\rp)\p_t\upalpha^{[s]} +\frac{1}{\rho^2}\lp(s-s^2\cot^2\theta\rp)\upalpha^{[s]}  =0
\end{gathered}\label{eq:Teukolsky-equation}
\end{equation}
for $s=\pm 2$ plays a central role, as it describes the dynamics of the extremal curvature components of the metric in the Newman–Penrose formalism \cite{Newman1962}. (Here, we have used  Boyer--Lindquist coordinates $(t,r,\theta,\phi)$ and we refer the reader to \cref{sec:kerr-spacetime} for a definition of the weights $\Delta$ and $\rho$.) This equation can also be considered for arbitrary values of $s\in\frac12\mathbb{Z}$. For $s=0$, the Teukolsky equation \eqref{eq:Teukolsky-equation} reduces to the wave equation. For $s=\pm 1$, it describes the evolution of the extreme components of the Maxwell equations in a null frame (see \cite{Chandrasekhar} for an extended discussion).

As shown by Teukolsky, equation \eqref{eq:Teukolsky-equation} is separable; thus, it is natural to begin its study by focusing on fixed frequency solutions, called \textit{mode solutions}, given by
\begin{align}
\swei{\alpha}{s}(t,r,\theta,\phi) = e^{-i\omega t}e^{im\phi}S(\theta)R(r)\,, \quad \omega\in \mathbb{C},\, m\in\mathbb{Z}\,,  \label{eq:mode-sol-intro}
\end{align}
where $S$ and $R$ satisfy certain ODEs with boundary conditions that ensure $\swei{\alpha}{s}$ has finite energy along suitable spacelike hypersurfaces (see \cref{def:mode-solution}). 

If one expects boundedness and decay for \eqref{eq:Teukolsky-equation} to hold, the most basic statement one can hope for is that there are no exponentially growing mode solutions, expressible as \eqref{eq:mode-sol-intro}, with $\Im \omega >0$, or solutions with $\omega\in\mathbb{R}\backslash\{0\}$ to \eqref{eq:Teukolsky-equation} that would threaten the validity of a boundedness property. Such a statement is trivially true for Schwarzschild black holes ($a=0$) when $s=0$, by the conservation of the energy associated with the Killing field $\p_t$. However, when $a\neq 0$, this proof no longer holds, due the existence of an ergoregion, where $\p_t$ becomes spacelike, and the phenomenon of \textit{superradiance} that can thus occur for frequencies satisfying
\begin{equation}
\omega(\omega-m\upomega_+)\leq 0\,,\quad \omega_+:=\frac{a}{2M(M+\sqrt{M^2-a^2})}\label{eq:intro-superrad}\,.
\end{equation}
While in the general $s\in\frac12\mathbb{Z}$ case, the $\p_t$ energy is not conserved, the special algebraic properties of \eqref{eq:Teukolsky-equation} yield another identity for fixed frequency solutions which easily gives mode stability statements for Schwarzschild black holes ($a=0$), but not for general rotating spacetimes. At least for $s\leq 2$, the proof once again only fails for the superradiant frequencies in \eqref{eq:intro-superrad}.

Despite these difficulties, Whiting \cite{Whiting1989} was able to show mode stability for the Teukolsky equation of any spin $s\in\frac12\mathbb{Z}$ on subextremal Kerr with $\omega$ in the upper half-plane, thanks to a clever transformation of the mode solutions into solutions of another wave equation for which there is no ergoregion. Extending this result to the real axis was an open problem for almost 25 years, and it was finally put to rest by  Shlapentokh-Rothman \cite{Shlapentokh-Rothman2015}, who showed in addition that the radial integral transformation was enough to obtain Whiting's result, and generalized the result for real $\omega\neq 0$. Shlapentokh-Rothman's result was obtained for $s=0$; more recently, \cite{Andersson2017} extended \cite{Shlapentokh-Rothman2015} for all spins. The results in this paragraph are summarized in
\begin{theorem}[Mode stability for subextremal Kerr] \label{thm:mode-stability-subextremal} There are no non-trivial mode solutions to the homogeneous Teukolsky equation \eqref{eq:Teukolsky-equation} of any spin $s\in\frac12\mathbb{Z}$ on \textbf{subextremal} Kerr backgrounds with $\omega$ such that $\Im \omega\geq 0$ and $\omega\neq 0$.
\end{theorem}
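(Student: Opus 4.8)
The plan is to follow the route of Whiting and of Shlapentokh-Rothman, with the spin-weighted refinements of Andersson--Ma--Paganini--Whiting. First I would substitute the ansatz \eqref{eq:mode-sol-intro} into \eqref{eq:Teukolsky-equation}: separation of variables yields a spin-weighted spheroidal angular ODE for $S(\theta)$ on $[0,\pi]$ --- whose regularity at the two poles selects a discrete family of separation constants $\lambda=\lambda_{m\ell}(a\omega)$ --- together with a radial ODE for $R(r)$ on $(r_+,\infty)$ of the schematic form
\begin{equation*}
\Delta^{-s}\frac{d}{dr}\!\left(\Delta^{s+1}\frac{dR}{dr}\right) + \left(\frac{K^2 - 2is(r-M)K}{\Delta} + 4is\omega r - \lambda\right)R = 0, \qquad K := (r^2+a^2)\omega - am,
\end{equation*}
supplemented by the boundary conditions read off from \cref{def:mode-solution}: $R$ is ``ingoing'' at the horizon $r=r_+$ and ``outgoing'' at infinity. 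It is worth recording at the outset that the angular operator has real coefficients when $a\omega\in\mathbb{R}$, so $\lambda\in\mathbb{R}$ on the real axis, whereas for $\Im\omega>0$ the relevant $\lambda_{m\ell}(a\omega)$ is its analytic continuation.

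I would then split into two regimes. When $\Im\omega>0$, the boundary conditions together with the sign of $\Im\omega$ force $R$ to decay exponentially both as $r\to\infty$ (where the tortoise coordinate $r_\ast\to+\infty$) and as $r\to r_+$ (where $r_\ast\to-\infty$), so the mode is genuinely square-integrable on $(r_+,\infty)$ with the natural weights. The central device is Whiting's integral transformation: one builds an operator $R\mapsto\widehat R$ whose kernel is an explicit product of powers of $(r-r_+)$ and $(r-r_-)$ with the exponential of a bilinear form in $r$ and the new variable, arranged so that (i) $\widehat R$ solves a second ODE of Teukolsky type from which the superradiant/ergoregion obstruction has been removed --- equivalently, one whose naturally associated current is sign-definite --- and (ii) the exponential decay of $R$ makes $\widehat R$ well defined with controlled endpoint behaviour. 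For this transformed problem, multiplying by $\overline{\widehat R}$ and integrating by parts (the endpoint decay killing all boundary terms) produces an identity in which every term carries the same sign, forcing $\widehat R\equiv 0$; invertibility of the transform then gives $R\equiv 0$.

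For $\omega\in\mathbb{R}\setminus\{0\}$ the mode no longer decays but merely oscillates at both ends, so more care is needed. I would first pin down its precise asymptotics at the horizon and at infinity, and then argue that Whiting's transformation extends to the real axis --- either by limiting absorption from $\Im\omega>0$, or by estimating the kernel integrals directly --- with the boundary terms in the resulting integration by parts still vanishing. The conserved current of the transformed ODE then has a definite sign away from the degenerate frequencies, and a current/positivity argument following Shlapentokh-Rothman --- which is exactly where the exclusions $\omega\neq 0$ and $\omega\neq m\upomega_+$ (cf.\ \eqref{eq:intro-superrad}) enter --- shows that a nontrivial $R$ is incompatible with conservation.

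The main obstacle is precisely the construction and analysis of Whiting's transformation: checking that the chosen kernel intertwines the two ODEs, that its image satisfies the boundary conditions of the auxiliary problem, and --- especially on the real axis --- that every boundary term genuinely drops out while the surviving bulk term has a strict sign. For general spin $s\in\frac12\mathbb{Z}$ there is the additional nuisance that both the radial equation and the transform carry first-order terms, so one must either absorb them into a modified current or invoke the Teukolsky--Starobinsky identities relating spin $s$ and spin $-s$; either way the positivity bookkeeping is the delicate point.
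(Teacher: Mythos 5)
Your proposal follows essentially the same route as the paper: separation of variables, Whiting's integral transformation for $s\leq 0$ (defined directly on the real axis as an oscillatory integral in the spirit of Shlapentokh-Rothman, with the transformed ODE admitting a sign-definite $T$-current and the transform shown injective via a Paley--Wiener argument), and the Teukolsky--Starobinsky identities to reduce $s>0$ to $s<0$. One minor remark: in the subextremal case the threshold $\omega=m\upomega_+$ need not be excluded --- the horizon boundary term of the transformed current is positive-definite there as long as $\omega\neq 0$ --- and that exclusion only becomes necessary at extremality.
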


For subextremal Kerr, boundedness and decay for $s=0$ was finally settled in \cite{Dafermos2016b}, with the proof of Theorem~\ref{thm:mode-stability-subextremal} on the real axis, in \cite{Shlapentokh-Rothman2015}, playing a key role, via a quantitative statement which we will soon describe. Over the last years, some boundedness and decay statements for higher spins have been obtained. For $s=\pm 1$, Kerr backgrounds with $a=0$ and $|a|\ll M$ were considered in \cite{Pasqualotto2016} and \cite{Ma2017}, respectively, previously addressed in \cite{Andersson2015a} using the Maxwell equations instead of the Teukolsky formalism. Moreover, for $s=\pm 2$, we recall that full linear stability has been shown for $a=0$ in \cite{Dafermos2016a} and  boundedness and decay for \eqref{eq:Teukolsky-equation} in the $|a|\ll M$ case was shown in \cite{Dafermos2017,Ma2017a}. 

\medskip

To complete the above results, one would like to understand the limit $|a|\to M$, which is moreover  of great relevance for astrophysics \cite{Volonteri2005}. However, our understanding is significantly more lacking for extremal Kerr black holes. This is due to the plethora of new difficulties that the degenerate horizon poses, even at the level of $s=0$ (see also \cite{Aretakis2018a}):
\begin{enumerate}[label=(\alph*)]
\item \textit{Degeneracy of the redshift and Aretakis instability}. For subextremal Kerr, the event horizon has a positive surface gravity,  which allows one to infer strong decay properties of solutions to the wave equation along the horizon. As the surface gravity vanishes in the extremal case, this stabilizing mechanism is absent. Indeed, in \cite{Aretakis2012}, it was observed that transverse derivatives of axisymmetric waves ($m=0$) grow polynomially along the horizon, a phenomenon which is known as \textit{Aretakis instability}; the result was later generalized for higher spins in \cite{Lucietti2012}. Unlike the remaing phenomena we will discuss, the Aretakis instability is present even for nonrotating extremal spacetimes, such as extremal Reissner-Nordström black holes (spherically symmetric charged black holes); however, in that case, the remaining difficulties present on the list do not occur and it is known that general solutions of \eqref{eq:Teukolsky-equation} with $s=0$ are bounded and decay \cite{Aretakis2011,Aretakis2011b}, and their precise late-time asymptotics have been understood in \cite{Angelopoulos2018a}.

More recently, in the heuristic work \cite{Casals2016}, the authors suggest that solutions with fixed azimuthal mode for $s=0$ also display the effect, with a faster growth for transverse derivatives than the $m=0$ case\footnote{We remark that the conclusions of \cite{Casals2016} can only hold assuming a mode stability result for extremal Kerr of the type of Theorem~\ref{thm:mode-stability-full}, to follow; this  enables the authors to restrict their investigation into the regularity properties of a Green's function for \eqref{eq:Teukolsky-equation} with $s=0$ to frequencies close to the superradiant theshold, $\omega\approx m\upomega_+$, rather than the wider range of real frequencies where, in the absence of a mode stability statement, the Green's function could have branch points or poles.}. 

\item \textit{Trapping meets superradiance}. In \cite{Dafermos2016b}, a fundamental insight for establishing boundedness and decay for  the Teukolsky equation with $s=0$ on subextremal Kerr was the fact that frequencies which are superradiant cannot correspond to \textit{trapped null geodesics}, high-energy geodesics which neither intersect the event horizon nor escape to null infinity. However, when $|a|=M$, superradiant frequencies, which exist for $m\neq 0$ by \eqref{eq:intro-superrad}, can be marginally trapped (see the introduction in \cite{Aretakis2012a}).

\item \textit{No available mode stability statement}. In \cite{Dafermos2016b}, mode stability on the real axis played a very important role. While the obstacle to the proof of mode stability is the superradiance present for both rotating subextremal and extremal Kerr backgrounds (see \eqref{eq:intro-superrad}), the degenerate horizon in extremal Kerr drastically changes the character of the Teukolsky equation with respect to the subextremal case. This suggests that a new Whiting-type transformation is necessary.
\end{enumerate}

In view of difficulties we have discussed, it is not surprising that, in the extremal case, only axisymmetric ($m=0$) solutions to the Teukolsky equation with $s=0$, which do not exhibit superradiance, have been thoroughly studied in \cite{Aretakis2012}. The present paper represents a first step to address general (i.e.\ not necessarily axisymmetric) solutions of any spin $s\in\frac12\mathbb{Z}$ of \eqref{eq:Teukolsky-equation} by providing a definitive resolution to (c):

\begin{theorem}[Mode stability for extremal Kerr] There are no non-trivial mode solutions to the Teukolsky equation \eqref{eq:Teukolsky-equation} of any spin $s\in\frac12\mathbb{Z}$ on \textbf{extremal} Kerr backgrounds with $\omega$ and $m$ such that either $\Im \omega> 0$ or $\omega\in\mathbb{R}\backslash\{0,m\upomega_+\}$. \label{thm:mode-stability-full}
\end{theorem}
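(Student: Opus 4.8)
The plan is to extend the Whiting--Shlapentokh-Rothman strategy behind \Cref{thm:mode-stability-subextremal} to the degenerate geometry of extremal Kerr, where the confluence of the two horizons $r_\pm$ into the single point $r=M$ forces a genuinely new integral transformation. After separating a putative mode solution as in \eqref{eq:mode-sol-intro}, so that $S(\theta)$ is a spin-weighted spheroidal function with some eigenvalue $\lambda$ and $R(r)$ solves the radial Teukolsky ODE on $(M,\infty)$ with $\Delta=(r-M)^2$, subject to the boundary conditions at $r=M$ and $r=\infty$ defining a mode solution, it suffices to prove $R\equiv 0$. I would first record the endpoint behaviour of $R$. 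At infinity the equation has an irregular singular point of rank one, and $R$ is a combination of $e^{+i\omega r}$ and $e^{-i\omega r}$ times powers of $r$, the finite-energy condition selecting the outgoing branch. At $r=M$, writing $K := (r^2+a^2)\omega - am$, one computes $K(M)=2M^2(\omega - m\upomega_+)$, which is nonzero precisely under the hypothesis $\omega\neq m\upomega_+$; consequently $r=M$ is \emph{also} an irregular singular point of rank one, at which $R$ is a combination of $e^{\pm i K(M)/(r-M)}$ times powers of $r-M$, the horizon boundary condition selecting the ingoing branch. (If instead $\omega=m\upomega_+$, the point $r=M$ collapses to a regular singular point; this threshold, like $\omega=0$, is excluded.)

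The obstruction to a direct energy estimate on the radial ODE is superradiance, present for $m\neq 0$ by \eqref{eq:intro-superrad}: the current associated with $\p_t$ fails to be coercive in the ergoregion, and the Teukolsky--Starobinsky identities do not help at the superradiant frequencies. To remove it I would construct a transform
\[
\tilde R(x)\;=\;\int \mathcal{K}(x,r)\,R(r)\,dr
\]
along a suitable contour in $r$, whose kernel $\mathcal{K}$ intertwines the radial Teukolsky operator with a second-order operator that is ``good'': its potential is real, and the natural conserved current for the associated wave equation is positive definite --- that is, that equation has no ergoregion. Heuristically $\mathcal{K}$ should arise as the confluent limit $r_-\to r_+=M$ of Whiting's subextremal kernel; concretely, I expect it to be a product of an exponential affine in a variable conjugate to $x$ and in $1/(r-M)$ --- matching the essential singularity found above --- times powers with exponents tuned to $s$, $m$, $\omega$, and $\lambda$. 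Using the endpoint asymptotics of $R$, I would then need to (i) choose the contour so that the integral converges absolutely; (ii) justify differentiation under the integral sign and the intertwining identity; and (iii) verify that the boundary terms produced when the radial operator is transferred onto $R$ all vanish.

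For the transformed equation I would run a multiplier argument: pair it with $\overline{\tilde R}$, and separately with $\overline{\tilde R'}$, integrate over the new domain, and take imaginary and real parts. Because the transformed potential is real and superradiance has been eliminated, every bulk and boundary contribution carries a definite sign --- using here the decay of $\tilde R$ at its endpoints, itself a consequence of the asymptotics of $R$ and the explicit kernel --- forcing $\tilde R\equiv 0$ whenever $\Im\omega\geq 0$ and $\omega\notin\{0,m\upomega_+\}$. It then remains to show the transform is injective on the solution space at hand, so that $\tilde R\equiv 0$ implies $R\equiv 0$; this should follow either by exhibiting an inverse of the same integral type or by an analytic-continuation argument in $r$. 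The case $\Im\omega>0$ and the real case $\omega\in\mathbb{R}\setminus\{0,m\upomega_+\}$ are treated in parallel, the latter requiring --- as in Shlapentokh-Rothman's treatment of the subextremal real axis --- that the transform be read as a limit from $\Im\omega>0$, or via a contour deformation, since its convergence on the real axis is only borderline; tracking the constants through these steps also yields the quantitative refinement.

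\textbf{The main obstacle} is the construction and justification of the transformation: identifying the correct confluent kernel and, above all, showing that the boundary terms at the degenerate horizon $r=M$ vanish. The essential singularity $e^{\pm i K(M)/(r-M)}$ there --- absent in the subextremal case, where $r_+$ is merely a regular singular point --- makes the choice of contour, the convergence of $\tilde R$, and the boundary-term bookkeeping substantially more delicate, and it is exactly the need to accommodate this structure that precludes a direct reuse of Whiting's transformation.
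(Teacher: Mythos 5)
Your strategy coincides with the paper's for non-positive spin: a new integral transformation whose kernel pairs an exponential in $(x-M)(r-M)$ with the essential-singularity factor $e^{2iM^2(\omega-m\upomega_+)(r-M)^{-1}}$ (exactly the $e^{iK(M)/(r-M)}$ phase you identify), an energy/multiplier argument on the transformed equation whose potential is real and superradiance-free, injectivity via Fourier analysis, and a regularization of the borderline oscillatory integral on the real axis by a limit from complex values of the transform variable (cf.\ \eqref{eq:transformation-intro} and Proposition~\ref{prop:ode-u-tilde}). The genuine gap is that this transformation cannot work for $s>0$, and you offer no mechanism for that half of the theorem. Near the degenerate horizon the outgoing radial solution behaves like $(r-M)^{-2iM\omega-2s}e^{\beta(r-M)^{-1}}$, so after multiplying by any kernel of the type you describe the integrand has absolute value of order $(r-M)^{-2s}$, which fails to be locally integrable at $r=M$ once $s\geq 1/2$; this is an obstruction of local integrability, not of oscillation, so no contour choice repairs it. The paper therefore defines the transformation only for $s\leq 0$ and deduces the $s>0$ case from the Teukolsky--Starobinsky identities, which requires the additional (and not automatic) input that these identities map nontrivial outgoing spin-$+s$ solutions to nontrivial outgoing spin-$-s$ solutions; this in turn forces one to handle the algebraically special frequencies where $\mathfrak{C}_s=0$, by showing that there the outgoing conditions at the two ends are incompatible (Lemmas~\ref{lemma:algebraically-special} and \ref{lemma:TS-radial-boundary-conditions}).

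A secondary inaccuracy: the transformed equation is not a same-type confluent limit of Whiting's. The kernel intertwines the double confluent Heun operator with a \emph{reduced} confluent Heun operator whose singularity at $x=\infty$ has half-integer rank, so $\tilde{u}$ behaves like $e^{\pm 4\sqrt{-\beta\gamma}(x/M)^{1/2}}$ times powers of $x$ rather than like $e^{\pm i\omega x}$. Extracting the boundary term at $x=\infty$ for the energy identity then requires a stationary-phase or steepest-descent analysis with a case split on the sign of $\omega(\omega-m\upomega_+)$ (Lemmas~\ref{lemma:technical-core-mode-stability-extremal} and \ref{lemma:g-tilde-asymptotics-superrad}); you correctly flag the horizon boundary terms as delicate, but in the actual execution the harder asymptotics sit at the opposite end.
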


It turns out that for  $\Im\omega>0$, \cref{thm:mode-stability-full} can be inferred from \cref{thm:mode-stability-subextremal} (in the $\Im\omega>0$ case) by a continuity argument. However, such an argument cannot be used to establish mode stability for $\omega$ on the real axis. Indeed, the main idea of this paper is to construct a novel integral transformation adapted to the $|a|=M$ case. Our transformation is given by\footnote{The equality here should be understood in the space of square-integrable functions; see already Proposition~\ref{prop:ode-u-tilde} for a precise statement.}
\begin{align}
\begin{split}
\tilde{u}(x) &:=(x^2+2M^2)^{1/2}(x-M)^{-s}(x-2M)^{-2iM\omega}  \times\\
&\qquad\times \int_{M}^\infty e^{\frac{2i\omega}{M}(x-M)(r-M)}(r-M)^{-2iM\omega}e^{2M^2i(\omega-m\upomega_+)(r-M)^{-1}}e^{i\omega r}R(r) dr\,,
\end{split} \label{eq:transformation-intro}
\end{align}
where $R$ is as given in \eqref{eq:mode-sol-intro}. Our \eqref{eq:transformation-intro} retains the relevant properties of Whiting's transformation which enable it to be used in a proof of Theorem~\ref{thm:mode-stability-full}: the $\p_t$ energy is conserved and can be used to infer mode stability for $\tilde{u}$, moreover the map $R\mapsto \tilde{u}$ is injective. However, the transformation is adapted to an extremal Kerr spacetime, where the radial ODE has a different character comparing to the subextremal case (due to the degenerate nature of the horizon when $r=M$) and, moreover, unlike in Whiting's case, does not produce an ODE for $\tilde{u}$ of the same type as that of the radial ODE for $R$.

We note already that the fact that the superradiance threshold $\omega=m\upomega_+$ is omitted from Theorem \ref{thm:mode-stability-full} due to the fact that, for mode solutions, the boundary conditions required of $R$ (see \eqref{eq:mode-sol-intro}) at the future event horizon are not \textit{a priori} continuously defined in the limit $\omega\to m\upomega_+$, similarly to what occurs with the boundary conditions at future null infinity in the limit $\omega\to 0$. 

Our integral transformation allows us to prove Theorem~\ref{thm:mode-stability-full} for both real $\omega$ and $\Im\omega>0$.  In doing so, we combine Shlapentokh-Rothman's approach with the original insight of Whiting for dealing with nontrivial spin: we define a novel radial transformation only for $s\leq0$ and the $s>0$ case is addressed using a set of differential relations connecting $+s$ and $-s$ spin solutions to the Teukolsky equation, known as the Teukolsky--Starobinsky identities (see \cite{Teukolsky1974,Starobinsky1974} for the original papers for $|s|=1,2$ and \cite{Kalnins1989} for a generalization to all $s\in\frac12\mathbb{Z}$), which we will show are a nondegenerate map between mode solutions of different signs of spin. Moreover, we juxtapose our proof of Theorem~\ref{thm:mode-stability-full} with that of \cref{thm:mode-stability-subextremal}, presenting a unified proof of the two statements.

We would like to point out that Theorems \ref{thm:mode-stability-subextremal} and \ref{thm:mode-stability-full} are a testament to the particular nature of the Teukolsky equation on a Kerr black hole spacetime. For a general wave-type equation on a Kerr background, mode stability is not true in general: unstable modes have been constructed for the Klein--Gordon equation \cite{Shlapentokh-Rothman2014} and even for the wave equation to which a non-negative and compactly supported potential is added \cite{Moschidis2017b}. 

\medskip

Finally, we note that Theorems~\ref{thm:mode-stability-subextremal} and \ref{thm:mode-stability-full} can be made quantitative, in particular in the arguably more relevant case of $\omega$ on the real axis. Recall that imposing that $R(r)$ is the radial part of a fixed-frequency solution \eqref{eq:mode-sol-intro} of the Teukolsky equation \eqref{eq:Teukolsky-equation} yields an ODE for $R(r)$. We can define the following solutions of this ODE by their asymptotic behavior (the notation will be made clear in Section~\ref{sec:radial-ODE}),
\begin{alignat*}{3}
\Delta^{s}(r^2+a^2)^{1/2}\swei{R}{s}_{\mc{H}^+}&\sim e^{-i(\omega-m\upomega_+)r^*}&&\text{~ as~}r^*\to-\infty\,, \\
\Delta^{s}(r^2+a^2)^{1/2}\swei{R}{s}_{\mc{I}^+}&\sim e^{i\omega r^*}&&\text{~ as~}r^*\to\infty\,, \\
\Delta^{s}(r^2+a^2)^{1/2}\swei{R}{s}_{\mc{I}^-}&\sim e^{-i\omega r^*}&&\text{~ as~}r^*\to\infty\,, 
\end{alignat*}
where $r^*=\pm \infty$ correspond to $r=\infty$ and $r=M+\sqrt{M^2-a^2}$, respectively. By Theorems~\ref{thm:mode-stability-subextremal} and \ref{thm:mode-stability-full}, $\swei{R}{s}_{\mc{H}^+}$ and $\swei{R}{s}_{\mc{I}^+}$ are linearly independent, i.e.
\begin{align*}
\swei{\mathfrak{W}}{s}:=\Delta^{1+s}\lp[\swei{R}{s}_{\mc{H}^+}\frac{d}{dr}\swei{R}{s}_{\mc{I}^+}-\swei{R}{s}_{\mc{I}^+}\frac{d}{dr}\swei{R}{s}_{\mc{H}^+}\rp] \,,
\end{align*}
is nonvanishing and, thus, admits a lower bound on any compact range of frequencies where the theorems apply. Note that $\swei{\mathfrak{W}}{s}$ depends only on the frequency parameters. The lower bound which can be inferred directly from Theorems \ref{thm:mode-stability-subextremal} and \ref{thm:mode-stability-full} is not explicit; making these results quantitative means providing an \textit{explicit, computable} bound in terms of the black hole parameters, the Teukolsky spin, $s$, and the compact range of frequencies one considers. Continuity of the Wronskian in the entire range $|a|\leq M$ (a natural extension of the arguments in \cite{Hartle1974}) in fact allows us to obtain a bound uniform in $|a|\leq M$:

\begin{theorem}[Quantitative mode stability, rough statement]\label{thm:quantitative-intro} Given Kerr parameters $a$ and $M$ satisfying $|a|\leq M$ and a spin $s$, in any compact set, $\mc{A}$, of real frequency parameters where Theorems~\ref{thm:mode-stability-subextremal} and \ref{thm:mode-stability-full} hold,
\begin{align*}
\lp|\swei{\mathfrak{W}}{s}\rp|^{-1}\leq C(\mc{A},M,s)<\infty\,,
\end{align*}
for any $|a|\leq M$, where $C(\mc{A},M,s)$ can be explicitly computed.
\end{theorem}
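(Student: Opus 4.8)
The plan is to establish \cref{thm:quantitative-intro} in three stages. First, for each fixed $|a|\leq M$, I would extract an \emph{explicit} lower bound $|\swei{\mathfrak{W}}{s}|\geq c(\mc{A},a,M,s)>0$ by rendering quantitative the proofs of \cref{thm:mode-stability-subextremal} (via Whiting's transformation) and \cref{thm:mode-stability-full} (via the transformation \eqref{eq:transformation-intro}). Second, I would show that $\swei{\mathfrak{W}}{s}$, possibly after renormalization by an explicit nonvanishing factor, extends to a continuous function of $a$ on the \emph{closed} range $|a|\leq M$, jointly with the frequency parameters $(\omega,m,\Lambda)$ — where $\Lambda$ denotes the angular eigenvalue — following a natural extension of the arguments in \cite{Hartle1974}. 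Third, since $\mc{A}$ is compact, and crucially excludes $\omega=0$ and $\omega=m\upomega_+$ (near which the asymptotics defining $\swei{R}{s}_{\hor}$ and $\swei{R}{s}_{\scri}$ degenerate and no uniform control can hold), compactness of $\mc{A}\times\{|a|\leq M\}$ together with the non-vanishing of $\swei{\mathfrak{W}}{s}$ from \cref{thm:mode-stability-subextremal,thm:mode-stability-full} yields a positive lower bound there; combining this with the explicit per-$a$ estimate produces the $a$-independent, computable constant $C(\mc{A},M,s)$.

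For the explicit per-$a$ bound, the idea is that $\swei{\mathfrak{W}}{s}$ is, up to an explicit factor nonvanishing on $\mc{A}$, the reciprocal of the transmission coefficient $\mathfrak{T}$ of the scattering problem for \eqref{eq:Teukolsky-equation}, so it suffices to bound $|\mathfrak{T}|$ from above. One takes the scattering solution $R$ of the radial ODE that is purely ingoing at the event horizon and applies the integral transformation $R\mapsto\tilde u$; by construction $\tilde u$ solves a second, ergoregion-free ODE, so the $\p_t$-type energy identity for $\tilde u$ relates its transmission and reflection coefficients with a favourable sign and thereby bounds them explicitly. Since the transformation kernel — and hence the map between the two sets of scattering data, as well as the constant witnessing injectivity of $R\mapsto\tilde u$ — is given by explicit integrals, and since for $s>0$ one passes from the $s\leq 0$ result via the Teukolsky--Starobinsky identities, which will be shown to furnish a nondegenerate map between $\pm s$ mode solutions with explicit Wronskian-type constants, every step can be tracked to produce $c(\mc{A},a,M,s)$ in closed form.

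The delicate point, and where I expect the main obstacle to lie, is the transition through $|a|=M$. The subextremal constant $c(\mc{A},a,M,s)$ produced above a priori degenerates as $|a|\to M$, because Whiting's transformation does not converge to \eqref{eq:transformation-intro}: the radial ODE has a regular singular point at the horizon $r=M+\sqrt{M^2-a^2}$ when $|a|<M$, which collapses to an irregular (confluent) singularity at $r=M$ in the extremal limit, so the two transformations — and the corresponding basis solutions $\swei{R}{s}_{\hor}$ — are of genuinely different character. The renormalization in the second stage must therefore be chosen precisely so that $\swei{\mathfrak{W}}{s}$ varies continuously across extremality, which is possible exactly because $\mc{A}$ avoids $\omega=m\upomega_+$, where the horizon boundary condition is not even continuously defined; verifying this continuity — that is, controlling the basis solutions uniformly down to the degenerate horizon and matching them to their subextremal counterparts — is the heart of the argument, and the explicit bounds for $|a|$ in any fixed compact subinterval of $[0,M)$ and at $|a|=M$ then combine with it to close the proof.
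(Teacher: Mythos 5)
Your third stage is where the proposal loses the actual content of the theorem. Invoking compactness of $\mc{A}\times[-M,M]$ together with non-vanishing of $\swei{\mathfrak{W}}{s}$ proves only the \emph{existence} of a positive lower bound — this is precisely Corollary~\ref{cor:nonexplicit-quantitative-mode-stability}, which the paper explicitly contrasts with the quantitative statement. The point of Theorem~\ref{thm:quantitative-intro} is that the constant is \emph{explicitly computable}, and continuity of the Wronskian in $a$ (Theorem~\ref{thm:hartle-Z}) without a quantitative modulus of continuity cannot convert an explicit bound on a compact subinterval of $[0,M)$ plus an explicit bound at $|a|=M$ into an explicit bound on the gap in between. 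You correctly identify that the per-$a$ subextremal constant degenerates as $|a|\to M$ (the factor relating $\tilde{u}(+\infty)$ to $u(-\infty)$ in Proposition~\ref{prop:ode-u-tilde-sub} involves $(r_+-r_-)$ and $\Gamma(2\xi-s+1)$ with $|\xi|\to\infty$), but the continuity patch does not restore computability. The paper sidesteps the limit entirely: Proposition~\ref{proposition:estimates-u-tilde} is proved for all $|a|\leq M$ simultaneously, by multiplier-current estimates on the transformed ODEs whose constants depend only on $M$, $|s|$ and the supremum $C_{\mc{A}}$ taken over $a\in[-M,M]$ as well as over the frequencies; no matching across extremality enters the proof, and Corollary~\ref{cor:hartle} (continuity of $\swei{\mathfrak{W}}{s}$ up to $|a|=M$) is a \emph{consequence} of Theorem~\ref{thm:quantitative}, not an ingredient.

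The per-$a$ mechanism you propose also has a genuine obstruction. You want to bound the transmission coefficient by applying the integral transformation to the scattering solution $\swei{R}{s}_{\mc{H}^+}$, which near infinity contains an incoming component proportional to $e^{-i\omega r}$. But the transformations \eqref{eq:transformation-intro} and \eqref{eq:subextremal-transformation} are constructed and analyzed only for \emph{outgoing} solutions: the kernel weight $e^{\gamma r}=e^{-i\omega r}$ is tuned to cancel the outgoing oscillation $e^{i\omega r}$, and the convergence of the integral (Lemma~\ref{lemma:g-tilde-IBP}), the asymptotics of $\tilde{g}$ (Lemma~\ref{lemma:g-tilde-asymptotics-superrad}), the boundary identities in statement 4(b), and the injectivity argument all depend on this cancellation. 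For a solution with an incoming component the integrand carries an uncancelled $e^{-2i\omega r}$ and none of statements 1--5 of Propositions~\ref{prop:ode-u-tilde}/\ref{prop:ode-u-tilde-sub} are available without a substantial re-derivation. The paper keeps everything inside the outgoing class by introducing a compactly supported inhomogeneity and using the Green's function representation (Lemma~\ref{lemma:Greens-function}): the lower bound on $|\swei{\mathfrak{W}}{s}|$ is read off from the boundary value $|\Delta^{-s/2}u(-\infty)|^2$, which is controlled by the estimate on $\tilde{u}(+\infty)$ via statement 4(b); the boundedness of the transmission and reflection coefficients (Corollary~\ref{cor:bddness-scattering}) is then \emph{deduced from} the Wronskian bound, not the other way round.
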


As we mentioned, Theorem~\ref{thm:quantitative-intro} restricted to a compact range of subextremal Kerr black holes, which was obtained in \cite{Shlapentokh-Rothman2015}, is instrumental to the proof of boundedness and decay for the wave equation on a subextremal Kerr background \cite{Dafermos2016b}. We expect that our Theorem \ref{thm:quantitative-intro}, which gives a bound uniform in the specific angular momentum of the black hole as well, is also a preliminary step for future understanding of general solutions of the Teukolsky equation on an extremal Kerr background.

Theorem~\ref{thm:quantitative-intro} admits a more direct application which appears in the context of the \textit{scattering problem} for the Teukolsky equation. Recall that, for $\omega$ real,  $\lp(\swei{R}{s}_{\mc{I}^+},\swei{R}{s}_{\mc{I}^-}\rp)$ and $\lp(\swei{R}{s}_{\mc{H}^+},\swei{R}{s}_{\mc{H}^-}\rp)$ are pairs of linearly independent solutions to the radial ODE with spin $s$. Hence, we can write
\begin{align*}
\begin{split}
\frac{\swei{\mathfrak{T}}{s}}{-i(\omega-m\upomega_+)}\swei{R}{s}_{\mc{H}^+}&=\frac{\swei{\mathfrak{R}}{s}}{i\omega}\swei{R}{s}_{\mc{I}^+} +\frac{1}{i\omega}\swei{R}{s}_{\mc{I}^-}\,,\quad s\geq 0\,,\\
\frac{\swei{\mathfrak{{T}}}{s}}{i\omega}\swei{R}{s}_{\mc{I}^+}&=\frac{\swei{\mathfrak{{R}}}{s}}{-i(\omega-m\upomega_+)}\swei{R}{s}_{\mc{H}^+} +\frac{1}{-i(\omega-m\upomega_+)}\swei{R}{s}_{\mc{H}^-}\,,\quad s<0\,.
\end{split}
\end{align*}
for some complex $\swei{\mathfrak{T}}{s}$ and $\swei{\mathfrak{R}}{s}$. The complex numbers $\swei{\mathfrak{T}}{s}$ and $\swei{\mathfrak{R}}{s}$ are known as \textit{transmission} and \textit{reflection coefficients}, respectively, as they measure the fraction of energy of the initial flux on past null infinity, $\mc{I}^-$, if $s\geq 0$, or past event horizon, $\mc{H}^-$, if $s<0$, that is scattered to the two null surfaces to the future, the future event horizon, $\mc{H}^+$, and future null infinity, $\mc{I}^+$ (see Section \ref{sec:scattering} for a discussion for general $s$ and \cite{Dafermos2014} for $s=0$). An immediate application of Theorem~\ref{thm:quantitative-intro} is
\begin{corollary}[rough statement]\label{cor:scattering-intro} Given Kerr parameters $a$ and $M$ satisfying $|a|\leq M$ and a spin $s\in\lp\{0,\pm 1\frac12, \pm 1, \pm  \frac32,\pm 2\rp\}$, in any \textbf{admissible} compact set, $\mc{B}\subset \mc{A}$, of real frequency parameters with $\omega\neq 0$ and $\omega\neq m\upomega_+$,
\begin{align*}
\lp|\swei{\mathfrak{T}}{s}\rp|^{2}+\lp|\swei{\mathfrak{R}}{s}\rp|^{2}\leq C(\mc{B},M,s)<\infty\,,
\end{align*}
where $C(\mc{B},M,s)$ can be explicitly computed. (The definition of admissibility is specified in the full statement, Corollary~\ref{cor:bddness-scattering}.)
\end{corollary}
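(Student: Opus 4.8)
The plan is to derive Corollary~\ref{cor:scattering-intro} from the quantitative mode stability bound of Theorem~\ref{thm:quantitative-intro} by expressing the transmission and reflection coefficients in terms of Wronskians of the four solutions $\swei{R}{s}_{\mc{H}^\pm}$ and $\swei{R}{s}_{\mc{I}^\pm}$, and then controlling each of these Wronskians separately. First I would take the defining relations for $\swei{\mathfrak{T}}{s}$ and $\swei{\mathfrak{R}}{s}$ displayed above --- say for $s\geq 0$, the relation $\frac{\swei{\mathfrak{T}}{s}}{-i(\omega-m\upomega_+)}\swei{R}{s}_{\mc{H}^+}=\frac{\swei{\mathfrak{R}}{s}}{i\omega}\swei{R}{s}_{\mc{I}^+} +\frac{1}{i\omega}\swei{R}{s}_{\mc{I}^-}$ --- and contract it against $\swei{R}{s}_{\mc{I}^+}$ and against $\swei{R}{s}_{\mc{H}^+}$ via the Wronskian pairing $\Delta^{1+s}[\,\cdot\,]$, using that the Wronskian of any two solutions of the radial ODE is constant in $r$ (up to the explicit $\Delta^{1+s}$ weight built into $\swei{\mathfrak{W}}{s}$). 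This yields formulae of the schematic form $\swei{\mathfrak{T}}{s} = c_1(\omega,m)\,\mathfrak{W}[\swei{R}{s}_{\mc{I}^+},\swei{R}{s}_{\mc{I}^-}]\,/\,\swei{\mathfrak{W}}{s}$ and $\swei{\mathfrak{R}}{s} = c_2(\omega,m)\,\mathfrak{W}[\swei{R}{s}_{\mc{H}^+},\swei{R}{s}_{\mc{I}^-}]\,/\,\swei{\mathfrak{W}}{s}$, with $c_1,c_2$ elementary functions of the frequency parameters, and similarly in the $s<0$ case with the roles of $\mc{H}$ and $\mc{I}$ interchanged.

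The denominator $\swei{\mathfrak{W}}{s}$ is exactly the quantity bounded below (uniformly in $|a|\leq M$, on the compact set $\mc{A}$) by Theorem~\ref{thm:quantitative-intro}, so it remains to produce an explicit \emph{upper} bound on the numerator Wronskians, and on $|c_1|$, $|c_2|$. The Wronskians in the numerators are, by the boundary-condition normalizations, computable directly by evaluating at $r^*\to+\infty$ (for $\mathfrak{W}[\swei{R}{s}_{\mc{I}^+},\swei{R}{s}_{\mc{I}^-}]$) or at $r^*\to-\infty$: one expects $\mathfrak{W}[\swei{R}{s}_{\mc{I}^+},\swei{R}{s}_{\mc{I}^-}]$ to be an explicit elementary expression in $\omega$ (of the form $2i\omega$ up to weights), since both $\swei{R}{s}_{\mc{I}^+}$ and $\swei{R}{s}_{\mc{I}^-}$ are normalized by their asymptotics as $r^*\to\infty$ and the Wronskian of $e^{i\omega r^*}$ and $e^{-i\omega r^*}$ is constant. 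The mixed Wronskian $\mathfrak{W}[\swei{R}{s}_{\mc{H}^+},\swei{R}{s}_{\mc{I}^-}]$ is less immediately explicit --- it is not determined purely by the normalizations at a single end --- but it can be bounded by $|\swei{\mathfrak{W}}{s}|$ together with $|\swei{\mathfrak{T}}{s}|$-type quantities, or more robustly one can use the relation above again to rewrite $\mathfrak{W}[\swei{R}{s}_{\mc{H}^+},\swei{R}{s}_{\mc{I}^-}]$ in terms of $\swei{\mathfrak{W}}{s}$ and $\mathfrak{W}[\swei{R}{s}_{\mc{I}^+},\swei{R}{s}_{\mc{I}^-}]$. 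In fact the cleanest route is likely to first establish the exact algebraic identities relating $\swei{\mathfrak{T}}{s}$, $\swei{\mathfrak{R}}{s}$ and the Wronskians (these are the $s$-analogue of the classical $s=0$ scattering relations, cf.\ the discussion of Section~\ref{sec:scattering}), and then read off $|\swei{\mathfrak{T}}{s}|,|\swei{\mathfrak{R}}{s}|$ in terms of $|\swei{\mathfrak{W}}{s}|^{-1}$ and explicit frequency factors.

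The restriction to $s\in\{0,\pm\tfrac12,\pm 1,\pm\tfrac32,\pm 2\}$ and the notion of an \emph{admissible} compact set $\mc{B}$ should enter precisely here: one needs the frequency-dependent prefactors $c_1,c_2$ --- which involve negative powers of $\omega$ and of $\omega-m\upomega_+$, and possibly of the angular/Teukolsky eigenvalue --- to remain bounded on $\mc{B}$, which forces $\mc{B}$ to stay away from $\omega=0$ and $\omega=m\upomega_+$ and (for the higher-spin Teukolsky--Starobinsky-type reductions) away from the exceptional loci where those identities degenerate; ``admissible'' packages exactly these exclusions, and the finite list of spins is what makes the relevant Teukolsky--Starobinsky constants explicitly computable. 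The main obstacle I anticipate is not any single Wronskian computation but rather the bookkeeping required to make \emph{every} constant explicit and uniform in $|a|\leq M$ simultaneously: one must track the $a$-dependence of the boundary-condition normalizations (which involve $\upomega_+$, hence $a$), confirm that the passage to the limit $|a|\to M$ in these normalizations is continuous --- this is where the natural extension of the arguments of \cite{Hartle1974} used for Theorem~\ref{thm:quantitative-intro} is again needed --- and verify that the Teukolsky--Starobinsky constants relating $\pm s$ solutions, which carry their own $a$- and eigenvalue-dependence, are bounded on $\mc{B}$. Once these ingredients are assembled, the corollary follows by combining the explicit Wronskian formulae for the numerators with the explicit lower bound $|\swei{\mathfrak{W}}{s}|^{-1}\leq C(\mc{A},M,s)$ from Theorem~\ref{thm:quantitative-intro}, taking $C(\mc{B},M,s)$ to be the resulting product.
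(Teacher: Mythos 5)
Your treatment of the transmission coefficient is essentially the paper's: from \eqref{eq:def-t-r-2} one has $\swei{\mathfrak{T}}{s}=-2i(\omega-m\upomega_+)/\swei{\mathfrak{W}}{s}$ (and $\swei{\tilde{\mathfrak{T}}}{s}=-2i\omega/\swei{\mathfrak{W}}{s}$), so Theorem~\ref{thm:quantitative-intro} bounds it immediately. The gap is in your handling of the reflection coefficient. The mixed Wronskian $W[\swei{u}{s}_{\mc{I}^-},\swei{u}{s}_{\mc{H}^+}]$ is \emph{not} recoverable algebraically from $\swei{\mathfrak{W}}{s}$ and $W[\swei{u}{s}_{\mc{I}^+},\swei{u}{s}_{\mc{I}^-}]=2i\omega$: the four solutions $\swei{u}{s}_{\mc{H}^\pm},\swei{u}{s}_{\mc{I}^\pm}$ span a two-dimensional solution space, so the Pl\"ucker relation supplies only one equation tying together the two \emph{products} of mixed Wronskians, and the residual freedom is exactly the scattering data you are trying to bound. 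Evaluating $W[\swei{u}{s}_{\mc{I}^-},\swei{u}{s}_{\mc{H}^+}]$ at either end merely reintroduces the unknown connection coefficients $\swei{a}{s}_{\mc{H}^\pm}$ or $\swei{a}{s}_{\mc{I}^\pm}$, so the route ``rewrite the mixed Wronskian in terms of $\swei{\mathfrak{W}}{s}$ and $W[\swei{R}{s}_{\mc{I}^+},\swei{R}{s}_{\mc{I}^-}]$'' does not close.

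What actually controls $\swei{\mathfrak{R}}{s}$ in the paper is an additional \emph{analytic} input: the conserved flux identity (Proposition~\ref{prop:T-identity} for $s=0$, and the conserved current $W\lp(\swei{u}{+s},\overline{\swei{u}{-s}}\rp)$ of Proposition~\ref{prop:wronskian-identity} for $s\neq 0$), which, after the normalizations carried out in the proof of Corollary~\ref{cor:bddness-scattering}, yields the pseudo-unitarity relations $\lp|\swei{\mathfrak{R}}{s}\rp|^2+\frac{\omega-m\upomega_+}{\omega}\lp|\swei{\mathfrak{T}}{s}\rp|^2=1$ for integer $s$ and $\lp|\swei{\mathfrak{R}}{s}\rp|^2+\lp|\swei{\mathfrak{T}}{s}\rp|^2=1$ for half-integer $s$ or $a=0$. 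In the superradiant regime the first identity gives $\lp|\swei{\mathfrak{R}}{s}\rp|^2\leq 1+\frac{|\omega-m\upomega_+|}{|\omega|}\lp|\swei{\mathfrak{T}}{s}\rp|^2$, and only at that point does the lower bound on $\swei{\mathfrak{W}}{s}$ from Theorem~\ref{thm:quantitative-intro} enter. For $s\neq 0$ the derivation of the identity in turn requires the Teukolsky--Starobinsky identities to couple the $\pm s$ solutions, together with the positivity and nondegeneracy of $\mathfrak{C}_s$ --- valid for $|s|\leq 2$ by Remark~\ref{rmk:TS-constant-sign} and encoded in the admissibility requirement that $\mathfrak{C}_s>0$ with $|\mathfrak{C}_s|^{-1}$ bounded on $\mc{B}$ --- not merely the ``explicit computability'' of those constants. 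You do gesture at ``the $s$-analogue of the classical scattering relations'', which is indeed the missing ingredient, but as written your argument leaves precisely this step --- the one that is a conservation law rather than a Wronskian computation --- unjustified.
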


On its own, Corollary~\ref{cor:scattering-intro} cannot yet yield a  complete scattering theory (we direct the reader to the introduction of \cite{Dafermos2014} for an overview of the topic) for the Teukolsky equation on Kerr black hole spacetimes, as such a theory would require uniform boundedness of the transmission and reflection coefficients for \textit{all} admissible frequency parameters. For subextremal Kerr spacetimes, we note that the uniform boundedness of the transmission and reflection coefficients for $s=0$ follows directly from the estimates in \cite{Dafermos2016b}, which enable  the authors to construct a full scattering theory in \cite{Dafermos2014}. 

\bigskip

\textit{Outline.} This paper is  organized as follows. In \cref{sec:preliminaries}, we introduce the Kerr spacetime. We also discuss fixed-frequency solutions to the Teukolsky equation~\eqref{eq:Teukolsky-equation}, giving a precise definition of mode solution. Finally, we introduce the Teukolsky--Starobinsky identities which are then used to derive an energy identity for nontrivial spin. We finish the section by discussing how superradiance and the presence of an ergoregion present an obstruction to the proof of Theorems \ref{thm:mode-stability-subextremal} and \ref{thm:mode-stability-full}. In \cref{sec:integral-transformation}, we introduce the relevant integral transformations we will consider for the radial ODE when $|a|<M$ and $|a|=M$ and derive the transformed wave-type equations that they satisfy. In \cref{sec:proof}, we use these transformations to prove mode stability for $s\leq 0$ and, using  the Teukolsky--Starobinsky identities, extend the result to $s>0$; we also provide a different proof that Theorem \ref{thm:mode-stability-full} for $\Im\omega>0$ follows by Theorem \ref{thm:mode-stability-subextremal} for the same case. Finally, in \cref{sec:quantitative}, we give a precise statement and proof of Theorem \ref{thm:quantitative-intro}. We moreover discuss the setup of scattering problem for the Teukolsky equation and give the precise statement and proof of Corollary~\ref{cor:scattering-intro}. The paper concludes with a discussion of the frequencies $\omega=0$ and $\omega=m\upomega_+$ which are left out of Theorems~\ref{thm:mode-stability-subextremal} and \ref{thm:mode-stability-full}.

\bigskip 
\textit{Acknowledgements.} This work was supported by the EPSRC grant EP/L016516/1. The author would like express their gratitude to Dejan Gajic for suggesting this problem and pointing to useful references, to Mihalis Dafermos and Yakov Shlapentokh-Rothman for many fruitful discussions and helpful remarks, and to Hamed Masaood for sharing his insight into the scattering problem for the Teukolsky equation.

\section{Preliminaries}
\label{sec:preliminaries}

In this section, we start by introducing the Kerr spacetime for $|a|\leq M$ in \cref{sec:kerr-spacetime}. 
In Section \ref{sec:separation}, we review the separation of the Teukolsky equation for fixed frequency solutions, leading up to a rigorous definition of mode solution in Section \ref{sec:def-mode-sol}. 

In \cref{sec:teukolsky-starobinsky}, we review the Teukolsky--Starobinsky equations relating mode solutions of spin $+s$ to solutions of spin $-s$ and vice-versa. Finally, in \cref{sec:superradiance} we review the definition of energy in the context of the Teukolsky equation and recall the main obstruction to obtaining a mode stability statement for a rotating, $a\neq 0$, Kerr black hole, thus justifying the need for a Whiting-type transformation.

\subsection{The Kerr spacetime}
\label{sec:kerr-spacetime}

In this section, we recall some well-known properties of the Kerr spacetime; we refer the reader to \cite{Chandrasekhar,ONeill1995} or any standard textbook in general relativity for further details. 

Fix parameters $(a,M)$ with $M>0$ and $|a|\leq M$ and define
\begin{equation}
r_\pm:=M\pm\sqrt{M^2-a^2}\,. \label{eq:rpm}
\end{equation} 
The  Kerr exterior spacetime, $\mc{R}$, is a manifold-with-boundary which is covered by Kerr-star coordinates  $(t^*,r,\theta^*,\phi^*)\in \mathbb{R}\times [r_+,\infty)\times \mathbb{S}^2$ globally, apart from the usual degeneration of spherical coordinates. The future event horizon is defined to be $\mc{H^+}:=\p \mc{R}=\{r=r_+\}$.

More frequently, we will work in Boyer--Lindquist coordinates $(t,r,\theta,\phi)\in\mathbb{R}\times (r_+,\infty)\times \mathbb{S}^2$, which are obtained by the relations
\begin{gather}
t(t^*,r):=t^*-\overline{t}(r)\,,  \quad \quad \theta:=\theta^*\, , \quad \quad \phi(\phi^*,r):=\phi^* -\overline{\phi}(r) \mod 2\pi\,, \label{eq:kerr-star}
\end{gather}
where the functions $\overline{\phi}$ and $\overline{t}(r)$ are defined by
\begin{gather}
\frac{d\overline{\phi}}{dr}:=\frac{a}{\Delta}\,, \quad \frac{d\overline{t}}{dr}:=\frac{r^2+a^2}{\Delta}\,,
\label{eq:overline-t-phi}
\end{gather}
and some initial condition. These coordinates parametrize a manifold without boundary, $\tilde{\mc{R}}$, such that $\mc{R}=\tilde{\mc{R}}\cup \mc{H}^+$. With respect to Boyer--Lindquist coordinates, the Kerr metric becomes
\begin{align}
\begin{split}
g_{a,M} &= -\frac{\Delta -a^2\sin^2\theta}{\rho^2}dt^2- \frac{4Mar\sin^2\theta}{\rho^2} dtd\phi  \\ 
&\qquad+\left(\frac{(r^2+a^2)^2-\Delta a^2 \sin^2\theta}{\rho^2}\right)\sin^2\theta d\phi^2 +\frac{\rho^2}{\Delta}dr^2+\rho^2 d\theta^2\,,
\end{split} \label{eq:kerr-metric}
\end{align}
where 
\begin{gather*}
\rho^2 := r^2+a^2\cos^2\theta, \quad\quad \Delta:=r^2-2Mr+a^2=(r-r_+)(r-r_-) \,.
\end{gather*}

Starting from Boyer--Lindquist coordinates, we can define coordinates $(^*t,r,^*\theta,^*\phi)\in\mathbb{R}\times[r_+,\infty)\times\mathbb{S}^2$ by 
\begin{gather}
^*t(t,r):=t-\overline{t}(r)\,,  \quad \quad ^*\theta:=\theta\,, \quad \quad ^*\phi(\phi,r):=\phi -\overline{\phi}(r) \mod 2\pi\,, \label{eq:kerr-star-2}
\end{gather}
which, as the metric extends smoothly to $\mc{H}^-:=\{r=r_+\}$ in this chart, enable us to extend $\mc{R}$ to a larger manifold with boundary $\mc{R}\cup\mc{H}^-$.

It will also be convenient to define a new radial coordinate $r^*\colon(r_+,\infty)\to (-\infty,\infty)$ which is the unique function satisfying $r^*(3M)=0$ and
\begin{gather}
\frac{dr^*}{dr}=\frac{r^2+a^2}{\Delta} \text{,~~with~~} \Delta:=r^2-2Mr+a^2=(r-r_+)(r-r_-)\,. \label{eq:r-star}
\end{gather}

\subsection{Separation of the Teukolsky equation}
\label{sec:separation}

In this section, we will consider fixed frequency solutions to the Teukolsky equation~\eqref{eq:Teukolsky-equation}. We begin by identifying the range of frequencies we will be working with in Section \ref{sec:admissible-freq}. In Sections~\ref{sec:angular-ODE} and \ref{sec:radial-ODE}, we introduce the relevant ODEs arising from the separation of the Teukolsky equation~\eqref{eq:Teukolsky-equation}. The definition of mode solution, central to the proof of mode stability, is finally given in Section~\ref{sec:def-mode-sol}.

\subsubsection{Admissible frequencies}
\label{sec:admissible-freq}

For $\omega\in\mathbb{C}$ and $m\in\frac12\mathbb{Z}$, it will be convenient to define:
\begin{equation}
\xi := -i\frac{2M r_+}{r_+-r_-}(\omega-m\upomega_+) \,,\quad \beta:=2iM^2(\omega-m\upomega_+)\,, \quad \upomega_+:=\frac{a}{2Mr_+}\,. \label{eq:xi-upomega+}
\end{equation}

For the remainder of this paper, we will be interested in the following parameters:
\begin{definition}[Admissible frequencies] \label{def:admissible-freqs} Fix $s\in\frac12\mathbb{Z}$. 
\begin{enumerate}
\item We say the frequency parameter $m$ is admissible with respect to $s$ when, if $s$ is an integer, $m$ is also an integer and when, if $s$ is a half-integer, so is $m$.
\item We say the frequency pair $(m,l)$ is admissible with respect to $s$ when $m$ is admissible with respect to $s$, $l$ is an integer or half-integer if $s$ is an integer or half-integer, respectively, and $l\geq \max\{|m|,|s|\}$. 
\item We say the frequency triple $(\omega,m,l)$ is admissible with respect to $s$ when the pair $(m,l)$ is admissible with respect to $s$ and $\omega\in\mathbb{R}\backslash\{0\}$.
\item We say the frequency triple $(\omega,m,\lambda)$ is admissible with respect to $s$ when $m$ is admissible with respect to $s$ and 
$$(\omega,\lambda)\in\lp\{(\omega,\lambda)\in\mathbb{C}^2\colon\Im\omega>0 \text{~and~} \Im(\lambda\overline{\omega})<0\rp\}\cup\lp(\mathbb{R}\backslash\{0\}\rp)\times\mathbb{R}\,.$$
\item Fix $M>0$, $|a|\leq M$. We say the frequency pair $(\omega,m)$ is admissible with respect to $a$ when, if $|a|=M$, $\omega\neq m\upomega_+$. 
\end{enumerate}
\end{definition}

\subsubsection{The angular ODE}
\label{sec:angular-ODE}

We begin with a definition of smoothness for functions which we call \textit{spin-weighted}:
\begin{definition}[{\cite[Section 2.2.1]{Dafermos2017}}] \label{def:smooth-spin-weighted}
Let
\begin{gather*}
\tilde{Z}_1 = -\sin \phi \partial_\theta + \cos \phi  \left( -is \csc \theta - \cot \theta \partial_\phi\right) \,, \\
\tilde{Z}_2 = - \cos \phi \partial_\theta - \sin \phi  \left( -is \csc \theta - \cot \theta \partial_\phi\right) \,, \quad 
\tilde{Z}_3 = \partial_\phi \, .
\end{gather*}

We say a complex-valued function $f$ of the coordinates $(\theta,\phi)$ is a \textit{smooth $s$-spin-weighted function on $S^2$} if for any $k_1,k_2,k_3 \in \mathbb{N} \cup \{0\}$, $(\tilde{Z}_1)^{k_1} (\tilde{Z}_2)^{k_2} (\tilde{Z}_3)^{k_3} f$ is a smooth function of the coordinates away from the poles at $\theta=0$ and $\theta=\pi$ and such that $e^{is\phi}(\tilde{Z}_1)^{k_1} (\tilde{Z}_2)^{k_2} (\tilde{Z}_3)^{k_3} f$ and $e^{-is\phi}(\tilde{Z}_1)^{k_1} (\tilde{Z}_2)^{k_2} (\tilde{Z}_3)^{k_3} f$ extend continuously to, respectively, $\theta=0$ and $\theta=\pi$.
\end{definition}

We are now ready to introduce the angular ODE and its smooth $s$-spin-weighted solutions

\begin{proposition}[Smooth spin-weighted solutions of the angular ODE] \label{def:angular-ode}
Fix $s\in\frac12\mathbb{Z}$, let $m$ be admissible with respect to $s$, and assume $\nu\in\mathbb{C}$. Consider the angular ODE
\begin{gather}
\begin{gathered}
-\frac{d}{d\theta}\lp(\sin\theta\frac{d}{d\theta}\rp)S_{m,\bm\uplambda}^{[s],\,(\nu)}(\theta)
+ \lp(\frac{(m+s\cos\theta)^2}{\sin^2\theta}-\nu^2\cos^2\theta+2\nu s \cos\theta\rp)S_{m,\bm\uplambda}^{[s],\,(\nu)}(\theta)\\ =\bm\uplambda_{m}^{[s],\,(\nu)} S_{m,\bm\uplambda}^{[s],\,(\nu)}(\theta) \,,
\end{gathered} \label{eq:angular-ode}
\end{gather}
with the boundary condition that $e^{im\phi}S_{m,\bm\uplambda}^{[s],\,(\nu)}$ is a non-trivial smooth $s$-spin-weighted function (see Definition~\ref{def:smooth-spin-weighted}). 
\begin{enumerate}
\item {\normalfont Basic properties of the eigenvalues.}
\begin{enumerate}
\item We have $\overline{\bm\uplambda_{m}^{[s],\,(\nu)}}=\bm\uplambda_{m}^{[s],\,(\overline{\nu})}$. Hence, if $\nu\in\mathbb{R}$, then $\bm\uplambda_{m}^{[s],\,(\nu)}\in\mathbb{R}$.
\item If $\Im\nu>0$, then $\Im\lp( \overline{\nu}\,\bm\uplambda_{m}^{[s],\,(\nu)}\rp)<0$.
\end{enumerate}

\item {\normalfont Eigenvalues in $\mathbb{R}$.} For each $\nu\in\mathbb{R}$, there are countably many solutions to the problem; using $l$ as an index, we write such solutions, also called $s$-spin-weighted spheroidal harmonics with spheroidal parameter $\nu$, as $e^{im\phi}S_{ml}^{[s],\,(\nu)}$ and denote the corresponding eigenvalues by $\bm\uplambda_{ml}^{[s],\,(\nu)}$. The parameter $l$ is chosen to be admissible with respect to $s$ and such that the $\bm\uplambda_{ml}^{[s],\,(0)}=l(l+1)-s^2$ for $\nu =0$ and $\bm\uplambda_{ml}^{[s],\,(\nu)}$ varies smoothly with $\nu$. Moreover, $\lp\{e^{im\phi}S_{ml}^{[s],\,(\nu)}\rp\}_{ml}$ form a complete orthonormal basis on the space of smooth $s$-spin-weighted spheroidal functions (see Definition~\ref{def:smooth-spin-weighted}).

\item {\normalfont Eigenvalues in $\mathbb{C}$}. Given $\nu_0\in\mathbb{R}$, an eigenvalue $\bm\uplambda_{ml}^{[s],\,(\nu_0)}\in \mathbb{R}$ can be analytically continued to $\nu\in\mathbb{C}$ except for finitely many branch points (with no finite accumulation point), located away from the real axis, and branch cuts emanating from these. We define $\bm\uplambda_{ml\nu_0}^{[s],\,(\nu)}$, for $\nu_0\in\mathbb{R}$, as a global multivalued complex function of $\nu$ such that $\bm\uplambda_{ml\nu_0}^{[s],\,(\nu_0)}=\bm\uplambda_{ml}^{[s],\,(\nu_0)}$.  (We note that each branch point of $\bm\uplambda_{m\nu_0}^{[s],\,(\nu)}$ is a point where $\bm\uplambda_{ml\nu_0}^{[s],\,(\nu)}=\bm\uplambda_{m\tilde{l}\tilde{\nu}_0}^{[s],\,(\nu)}$, for $\tilde{\nu}_0\neq\nu_0$ and/or $\tilde{l}\neq l$.) For each $\nu$ and $\nu_0$, there are solutions $e^{im\phi}S_{ml\nu_0}^{[s],\,(\nu)}$ to \eqref{eq:angular-ode} associated with the eigenvalue $\bm\uplambda_{ml\nu_0}^{[s],\,(\nu)}$; however $\lp\{e^{im\phi}S_{ml\nu_0}^{[s],\,(\nu)}\rp\}_{ml}$ but they do not necessarily form a complete basis of the space of smooth $s$-spin-weighted spheroidal functions.
\end{enumerate}
\end{proposition}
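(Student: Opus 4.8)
The plan is to realise the angular ODE~\eqref{eq:angular-ode} as the eigenvalue problem of a (singular) Sturm--Liouville operator and to read off the three groups of statements from, respectively, classical singular Sturm--Liouville theory (part 2), two short integration-by-parts computations (part 1), and Kato's analytic perturbation theory together with oscillation theory (part 3). For $\nu\in\mathbb{R}$ I would introduce, on the Hilbert space $L^2\lp((0,\pi),\sin\theta\,d\theta\rp)$, the operator $A(\nu):=-\tfrac{d}{d\theta}\lp(\sin\theta\tfrac{d}{d\theta}\cdot\rp)+\tfrac{(m+s\cos\theta)^2}{\sin^2\theta}+\lp(2\nu s\cos\theta-\nu^2\cos^2\theta\rp)$ with domain the $S$ for which $e^{im\phi}S$ is a smooth $s$-spin-weighted function on $S^2$ in the sense of Definition~\ref{def:smooth-spin-weighted}, and take its closure. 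Changing variables to $x=\cos\theta$ turns this into a Jacobi-type singular Sturm--Liouville operator whose indicial exponents at the regular singular endpoints $x=\pm1$ are $\pm\tfrac12|m\pm s|$; the spin-weighted boundary condition is precisely the selection of the subdominant exponent at each endpoint, so $A(\nu)$ is (essentially) self-adjoint with compact resolvent. Hence for each real $\nu$ its spectrum consists of countably many real simple eigenvalues $\bm\uplambda$, bounded below and accumulating only at $+\infty$, with an associated orthonormal eigenbasis; at $\nu=0$ this is the classical spin-weighted spherical harmonic problem, with eigenvalues $l(l+1)-s^2$ for $l$ integer or half-integer, $l\geq\max\{|m|,|s|\}$. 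This gives part 2 once the labelling by $l$ is fixed as below.

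For part 1(a) I would conjugate the ODE: since $m,s\in\mathbb{R}$, the potential satisfies $\overline{V_\nu}=V_{\overline\nu}$, so $\overline{S}$ solves~\eqref{eq:angular-ode} with $\nu$ replaced by $\overline\nu$ and $\bm\uplambda$ by $\overline{\bm\uplambda}$; because the indicial exponents $\tfrac12|m\pm s|$ are real, $e^{im\phi}\overline{S}$ is again a smooth $s$-spin-weighted function, hence an admissible solution of the $\overline\nu$-problem. This gives a bijection between the two eigenvalue sets, i.e.\ $\overline{\bm\uplambda_{m}^{[s],\,(\nu)}}=\bm\uplambda_{m}^{[s],\,(\overline\nu)}$, and in particular reality for $\nu\in\mathbb{R}$ (also immediate from self-adjointness). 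For part 1(b), multiply~\eqref{eq:angular-ode} by $\overline{S}$, integrate against $\sin\theta\,d\theta$ over $(0,\pi)$, and integrate by parts; the boundary terms vanish since near the poles $\sin\theta\,S'\overline{S}=O(\theta^{2|m+s|})$ (and similarly at $\pi$), and all integrals converge by the indicial behaviour. Setting $a_1:=\int_0^\pi\lp(|S'|^2+\tfrac{(m+s\cos\theta)^2}{\sin^2\theta}|S|^2\rp)\sin\theta\,d\theta\geq0$, $a_2:=\int_0^\pi\cos^2\theta\,|S|^2\sin\theta\,d\theta>0$, $a_3:=\int_0^\pi\cos\theta\,|S|^2\sin\theta\,d\theta\in\mathbb{R}$, one obtains $\bm\uplambda\,\|S\|^2=a_1-\nu^2a_2+2\nu s\,a_3$, whence $\Im\lp(\overline\nu\,\bm\uplambda\rp)\|S\|^2=-\lp(a_1+|\nu|^2a_2\rp)\Im\nu$; since $a_2>0$ and $\nu\neq0$ whenever $\Im\nu>0$, the right-hand side is strictly negative, which is part 1(b).

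For part 3 I would regard $A(\nu)=A(0)+\lp(2\nu s\cos\theta-\nu^2\cos^2\theta\rp)$ as a function of $\nu\in\mathbb{C}$: the perturbation is multiplication by a bounded function depending entirely (polynomially) on $\nu$, so $\{A(\nu)\}_{\nu\in\mathbb{C}}$ is a holomorphic family of type (A), self-adjoint for real $\nu$. By Kato's analytic perturbation theory each eigenvalue $\bm\uplambda_{ml}^{[s],\,(\nu_0)}$, $\nu_0\in\mathbb{R}$, extends to a (possibly multivalued) holomorphic function of $\nu$ with only isolated algebraic branch points, no finite accumulation point, each branch point being a point where two continued eigenvalue branches collide (equivalently, where the continued eigenvalue ceases to be simple). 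To see the branch points avoid the real axis: for $\nu$ real the transformed Sturm--Liouville problem is oscillatory, so its $l$-th eigenfunction has exactly $l-\max\{|m|,|s|\}$ interior zeros, forcing the eigenvalues $\bm\uplambda_{ml}^{[s],\,(\nu)}$ to be simple and strictly increasing in $l$; hence no collisions occur on $\mathbb{R}$. This simultaneously fixes the labelling in part 2 (choose $l$ so that $\bm\uplambda_{ml}^{[s],\,(0)}=l(l+1)-s^2$ and continue) and defines $\bm\uplambda_{ml\nu_0}^{[s],\,(\nu)}$ as the global continuation from $\nu_0\in\mathbb{R}$, with associated solutions $e^{im\phi}S_{ml\nu_0}^{[s],\,(\nu)}$ read off the eigenprojections; completeness may fail off the real axis because $A(\nu)$ is no longer self-adjoint there. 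The main obstacle is the careful treatment of the two singular endpoints --- identifying the self-adjoint realisation that encodes Definition~\ref{def:smooth-spin-weighted} and deriving the resulting type-(A) analyticity and branch-point structure --- but all of this is classical for the spheroidal-type operator and can be imported from the literature on (spin-weighted) spheroidal harmonics.
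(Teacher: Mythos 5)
Your proposal is correct, and for parts 1 and 2 it follows essentially the same route as the paper: statement 2 is read off singular Sturm--Liouville theory for the self-adjoint realisation at real $\nu$, and statement 1(b) is exactly the paper's computation (multiply by $\overline{\nu S}$, integrate by parts, take imaginary parts; your identity $\Im\lp(\overline{\nu}\,\bm\uplambda\rp)\|S\|^2=-\lp(a_1+|\nu|^2a_2\rp)\Im\nu$ agrees with the integral the paper obtains, and your strictness argument via $a_2>0$ is equivalent to the paper's unique-continuation remark). For 1(a) you argue by conjugation symmetry of the ODE rather than by the paper's imaginary-part integral identity; this is arguably cleaner, since it directly yields the full claim $\overline{\bm\uplambda_m^{[s],(\nu)}}=\bm\uplambda_m^{[s],(\overline{\nu})}$ rather than only reality for real $\nu$. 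The genuine divergence is in part 3: the paper works with Frobenius expansions at $\theta=0,\pi$ and an analytic connection function $F_m^{[s]}(\nu,\bm\uplambda)$ whose vanishing characterises eigenvalues, continues eigenvalues by the implicit function theorem, and imports the key non-degeneracy $\p F_m^{[s]}/\p\bm\uplambda\neq 0$ on the real axis from the literature; you instead invoke Kato's theory for a holomorphic family of type (A) with compact resolvent and obtain the real-axis non-degeneracy from simplicity of the eigenvalues (self-adjointness plus oscillation theory). Both are legitimate: the paper's route makes the branch points concrete as common zeros of $F_m^{[s]}$ and $\p_{\bm\uplambda}F_m^{[s]}$ and ties directly to the cited references, while yours avoids the series analysis entirely and gets simplicity ``for free.'' One point you should add a sentence on in the Kato framework: to conclude that a continued eigenvalue branch only fails to extend at collision points, you must also rule out that the branch escapes to infinity at a finite $\nu$; for a polynomially-$\nu$-dependent \emph{bounded} multiplicative perturbation of an operator with compact resolvent this is standard (locally uniform resolvent bounds), but it is not automatic from type-(A) holomorphy alone. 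This is a presentational gap, not a mathematical one, and the paper's own argument is a sketch at a comparable level of rigour.
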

\begin{remark} \label{rmk:angular-prop-admissibility}
With Proposition~\ref{def:angular-ode}, setting $\nu=a\omega$, we can now motivate the admissibility conditions in Definition~\ref{def:admissible-freqs}, in particular for frequency parameters $\lambda$ and $l$: statement 1 in the proposition translates into the item 4 in the definition.
\end{remark}

\begin{remark} \label{rmk:complex-eigenvalues-completeness}
For $\nu\in\mathbb{C}$, Proposition~\ref{def:angular-ode} does not yield a complete basis of the space of $s$-spin-weighted spheroidal functions. For this reason, when $\omega\in\mathbb{C}$, the separation of the radial and angular variables does not necessarily account for the full space of admissible solutions to the Teukolsky equation. Consequently, while Theorems~\ref{thm:mode-stability-subextremal} and \ref{thm:mode-stability-full}, in the case $\Im\omega>0$, rule out mode solutions of the form \eqref{eq:mode-sol-intro}, they do not necessarily rule out solutions of the form
\begin{align*}
\swei{\alpha}{s}(t,r,\theta,\phi)= e^{-i\omega t}e^{im\phi}A(r,\theta)\,,
\end{align*}
for $A(r,\theta)$ with suitable boundary conditions. Thankfully, by following the approach in \cite{Dafermos2016b,Dafermos2014} for $s=0$, we expect that, to investigate the boundedness, scattering and decay properties of the Teukolsky equation, it is enough to consider $\omega\in\mathbb{R}$, where indeed we have a complete basis of smooth spin-weighted functions provided by the spin-weighted spheroidal harmonics.
\end{remark}

\begin{proof}[Proof sketch of Proposition~\ref{def:angular-ode}] In what follows, we will drop the sub- and superscripts on the angular eigenvalue whenever this does not lead to ambiguity, setting $\bm\uplambda(\nu):=\bm\uplambda_{m}^{[s],\,(\nu)}$.

We begin with statement 1. Assume $S$ is normalized to have unit $L^2$ norm. Multiplying the angular ODE by $\overline{S}$ and integrating by parts, we obtain
\begin{align*}
\Im\lp(\bm\uplambda_{m}^{[s],\,(\nu)}\rp) =\int_{0}^\pi \lp|S_{m,\bm\uplambda}^{[s],\,(\nu)}\rp|^2(\theta) \lp(-\cos^2\theta\Im(\nu^2)+2s\cos\theta\Im\nu\rp)\sin\theta d\theta\,,
\end{align*}
from where (a) becomes clear (see also the case $s=0$ as an application of \cite[Proposition 7, Chapter 1.5]{Meixner1954}). For (b), we multiply the angular ODE \eqref{eq:angular-ode} by $\overline{\omega S}$, integrate by parts and take the imaginary part to obtain
\begin{align*}
\begin{split}
&\Im\lp(\overline{\nu}\bm\uplambda_{m}^{[s],\,(\nu)}\rp)\\
&\quad=-\Im\nu\int_{0}^{\pi}\lp[\lp|\frac{dS_{m,\bm\uplambda}^{[s],\,(\nu)}}{d\theta}\rp|^2+\lp(\frac{(m+s\cos\theta)^2}{\sin^2\theta}+|\nu|^2\cos^2\theta\rp)\lp|S_{m,\bm\uplambda}^{[s],\,(\nu)}\rp|^2\rp]\sin\theta d\theta  \leq 0\,.
\end{split}
\end{align*}
Here, equality is achieved if and only if 
\begin{align*}
\lp|\frac{dS_{m,\bm\uplambda}^{[s],\,(\nu)}}{d\theta}\rp|^2=0\,,\quad\lp(\frac{(m+s\cos\theta)^2}{\sin^2\theta}+|\nu|^2\sin^2\theta\rp)\lp|S_{m,\bm\uplambda}^{[s],\,(\nu)}\rp|^2=0\,,\ \qquad \forall\theta\in(0,\pi)\,,
\end{align*}
in which case one obtains $S\equiv 0$.

Let us focus now on statement 2. For $\nu\in\mathbb{R}$, the operator 
$$\mathring{\slashed{\triangle}}^{[s]}_m=-\frac{d}{d\theta}\lp(\sin\theta\frac{d}{d\theta}\rp)
+ \lp(\frac{(m+s\cos\theta)^2}{\sin^2\theta}-\nu^2\cos^2\theta+2\nu s \cos\theta\rp)$$
is self-adjoint and it follows from Sturm-Liouville theory that it has a countable set of eigenfunctions, which form a complete basis os the space of $s$-spin-weighted spheroidal functions, and countable set of corresponding eigenvalues (see, for instance, \cite{Dafermos2017}). For each $s$, $m$ and $\nu$, these can be indexed by $l\in\frac12\mathbb{Z}$ satisfying the constraints in the statement, so that, in particular, $\lambda_{ml}^{[s],\,(\nu)}$ is smooth in $\nu$ (see \cite[Section 3.22, Proposition 1]{Meixner1954}). This concludes the proof of statement 2.

Now, for statement 3, consider $\nu\in\mathbb{C}$  and fix $m$ and $s$. The angular ODE \eqref{eq:angular-ode} has regular singular points at $\theta=0,\pi$, where one can apply an asymptotic analysis to determine the possible behaviors of a solution. By the boundary conditions in the statement, we are looking for solutions which are finite at $\theta=0,\pi$.

Fix $\nu\in\mathbb{C}$. For $\theta\in(0,\pi]$, we can define a solution satisfying the boundary condition at $\theta=\pi$ by a power series  \cite[Chapter 5]{Olver1973}:
$$S_{m,\bm\uplambda}^{[s],\,(\nu_0)}:=\sum_{k=0}^\infty \swei{c}{s}_k (\cos\theta+1)^{|m-s|/2+k}\,,\quad\theta\in(0,\pi]\,,$$
where $\swei{c}{s}_k$ are uniquely determined by some $\swei{c}{s}_0$, $\nu$, $m$ and $\bm\uplambda$, and we require $\swei{c}{s}_0\neq 0$ so that the solution we have constructed is not trivial. On the other hand, an asymptotic analysis near $\theta=0$ shows that 
\begin{align*}
S_{m,\bm\uplambda}^{[s],\,(\nu_0)}&=F_m^{[s]}(\nu,\bm\uplambda)\sum_{k=0}^\infty \swei{\tilde{b}}{s}_k (\cos\theta-1)^{-|m+s|/2+k}\\
&\qquad+G_m^{[s]}(\nu,m,\bm\uplambda)\sum_{k=0}^\infty \swei{b}{s}_k (\cos\theta-1)^{|m+s|/2+k}\,,\quad\theta\in[0,\pi)\,,
\end{align*}
for $\tilde{b}_0=b_0=1$ and some complex-valued function $F_m^{[s]}$ and $G_m^{[s]}$ which are analytic in their arguments (see \cite[Chapter 5]{Olver1973} or \cite[Appendix A]{Shlapentokh-Rothman2014}). By construction, at least one of $F_m^{[s]}$ and $G_m^{[s]}$ do not vanish. To satisfy the boundary conditions at $\theta=0$, we must have $F_m^{[s]}=0$. Indeed, 
\begin{align}
\text{the pair $(\nu,\bm\uplambda)$ corresponds to an eigenvalue} \Leftrightarrow F_m^{[s]}(\nu,\bm\uplambda)=0\,. \label{eq:iff-eigenvalue} 
\end{align}

Starting from an eigenvalue pair $(\nu_0,\bm\uplambda_0=\bm\uplambda(\nu_0))$ at some $\nu_0\in\mathbb{C}$, we can define the curve $\bm\uplambda=\bm\uplambda(\nu)$ by the implicit function theorem as long as
\begin{align*}
\frac{\p  F_m^{[s]}}{\p\bm\uplambda}\Big|_{(\nu_0,\bm\uplambda_0)} \neq 0\,.
\end{align*} 
It follows from \cite[Proposition B.1]{Shlapentokh-Rothman2014} that this condition holds when $(\nu_0,\bm\uplambda_0)\in\mathbb{R}^2$, so an eigenvalue $\bm\uplambda_{ml}^{[s],\,(\nu_0)}$ for $\nu_0\in\mathbb{R}$ admits a unique analytic extension into a region of the plane $\nu\in\mathbb{C}$ sufficiently close to the real axis; we call the extension $\bm\uplambda_{ml\nu_0}^{[s],\,(\nu)}$.

However, further away from the real axis, there are, in general, branch points and branch cuts; we refer the reader to \cite[Proposition 5 of Chapter 1.5, Proposition 2 of Chapter 3.22]{Meixner1954} for a proof in the case $s=0$ and to \cite[Chapter 3.2]{Meixner1980} for some numerical information on the branch points. For $\nu\in\mathbb{C}$, a branch point at $(\nu,\bm\uplambda_{ml\nu_0}^{[s],\,(\nu)})$ is due to
\begin{align*}
\frac{\p  F_m^{[s]}}{\p\bm\uplambda}\Big|_{(\nu,\bm\uplambda_{ml\nu_0}^{[s],\,(\nu)})} = 0\,,
\end{align*} 
i.e.\ having some finite number of complex pairs $(\nu,\bm\uplambda_{(i)})$, $i=1,...,N$ where $N\geq 2$, at which $F_m^{[s]}$ vanishes. In light of the equivalence \eqref{eq:iff-eigenvalue}, it is clear that $(\nu,\bm\uplambda_{(i)})=(\nu,\bm\uplambda_{ml\nu_0}^{[s],\,(\nu)})$ for some admissible $l$ and some $\nu_0\in\mathbb{R}$. We also refer the reader to \cite{Hartle1974,Stewart1975} for intuition regarding this point. This concludes our proof of statement 3. 
\end{proof}

\subsubsection{The radial ODE}
\label{sec:radial-ODE}

In this section, we discuss the radial ODE
\begin{equation}
\begin{gathered}
\Delta^{-s}\frac{d}{dr}\lp(\Delta^{s+1}\frac{d}{dr}\rp)R_{m\lambda}^{[s],\,(a\omega)}(r)\\
+\lp(\frac{[\omega(r^2+a^2)-a m]^2-2is(r-M)[\omega(r^2+a^2)-a m]}{\Delta}\rp)R_{m\lambda}^{[s],\,(a\omega)}(r) \\
+\lp(4is\omega r-\lambda-a^2\omega^2+2a m \omega\rp)R_{m\lambda}^{[s],\,(a\omega)}(r) = \hat{F}_{m\lambda}^{[s],\,(a\omega)}(r)\,, \end{gathered}\label{eq:radial-ODE}
\end{equation}
with inhomogeneity $\hat{F}_{m\lambda}^{[s],\,(a\omega)}(r)$. In this section and throughout this paper, whenever we refer to \eqref{eq:radial-ODE} as a homogeneous radial ODE, we are implicitly assuming $\hat{F}_{m\lambda}^{[s]}\equiv 0$.

\begin{remark}\label{rmk:admissible-triples} Often, we make statements regarding the radial ODE \eqref{eq:radial-ODE} by itself, so we will consider admissible frequency triples to have the form $(\omega,m,\lambda)$; in this case, solutions are written as $R_{m\lambda}^{[s],\,(a\omega)}$. 

When discussing fixed-frequency solutions to the Teukolsky equation~\eqref{eq:Teukolsky-equation}, the radial ODE we are considering arises from the separation of variables $r$ and $\theta$, so $\lambda$ is \textbf{fixed} to be the separation constant, i.e.\ a choice of $\bm\uplambda_{m}^{[s],\,(a\omega)}$ from the set identified in Proposition \ref{def:angular-ode}. We will not make this choice in our proofs, however it can help gain intuition in the case of $\omega$ real where, by contrast with $\Im\omega>0$ (see Remark~\ref{rmk:complex-eigenvalues-completeness}), Proposition \ref{def:angular-ode} gives a complete basis of the space of smooth spin-weighted functions. For $\omega\in\mathbb{R}$, we will consider admissible frequency triples arising from fixing $\lambda=\bm\uplambda_{ml}^{[s],\,(a\omega)}$ to be of the form $(\omega,m,l)$ and denote solutions of that radial ODE by $R_{ml}^{[s],\,(a\omega)}$.
\end{remark}

For the homogeneous case of \eqref{eq:radial-ODE}, the ODE has a singularity at $r=r_\pm$, which is regular if $|a|<M$ but irregular of rank 1 if $|a|=M$, and an irregular singularity of rank 1 at $r=\infty$. By standard asymptotic analysis, we can span the solution space by two linearly independent asymptotic solutions at each of the singularities (we refer the reader to \cite[Chapters 5 and 7]{Olver1973} and \cite[Appendix A]{Shlapentokh-Rothman2014} for more detail; see also \cite{Erdelyi1956,Ince1956}). Our basis is given by the following:
\begin{definition} Fix $M>0$, $|a|\leq M$, $s\in\frac12\mathbb{Z}$ and an admissible frequency triple $(\omega,m,\lambda)$ with respect to $a$ and $s$. 
\begin{enumerate}
\item Define $\swei{R}{s}_{\mc{H}^+}$ and $\swei{R}{s}_{\mc{H}^-}$ to be the unique classical solutions to the {\normalfont homogeneous} radial ODE~(\ref{eq:radial-ODE}) with boundary conditions
\begin{enumerate}
\item if $|a|<M$,
  \begin{enumerate}
\item $\swei{R}{s}_{\mc{H}^+}(r)(r-r_+)^{-\xi+s}$ is smooth at $r=r_+\,,$
\item $\lp|\lp((r^2+a^2)^{1/2}\Delta^{s}(r-r_+)^{-\xi}\swei{R}{s}_{\mc{H}^+}\rp)\big|_{r=r_+}\rp|^2=1\,;$
  \end{enumerate}
\item if $|a|<M$ and additionally $\omega-m\upomega_+\neq 0$,
  \begin{enumerate}
\item $\swei{R}{s}_{\mc{H}^-}(r)(r-r_+)^{\xi}$ is smooth at $r=r_+\,,$
\item $\lp|\lp((r^2+a^2)^{1/2}\Delta^{s/2}(r-r_+)^{\xi}\swei{R}{s}_{\mc{H}^-}\rp)\big|_{r=r_+}\rp|^2=1\,;$
  \end{enumerate}
\item if $|a|=M$ and additionally $\omega-m\upomega_+\neq 0$,
  \begin{enumerate}
\item $(r-M)^{2iM\omega+2s}\swei{R}{s}_{\mc{H}^+}(r)e^{-\beta(r-M)^{-1}}$ and $(r-M)^{-2iM\omega}\swei{R}{s}_{\mc{H}^-}(r)e^{\beta(r-M)^{-1}}$ are both smooth at $r=M\,,$
\item $\lp|\lp((r^2+M^2)^{1/2}(r-M)^{2iM\omega+2s}e^{-\beta(r-M)^{-1}}\swei{R}{s}_{\mc{H}^+}\rp)\big|_{r=M}\rp|^2=1\,,$ and 
\item $\lp|\lp((r^2+M^2)^{1/2}(r-M)^{-2iM\omega}e^{\beta(r-M)^{-1}}\swei{R}{s}_{\mc{H}^-}\rp)\big|_{r=M}\rp|^2=1\,.$
  \end{enumerate}
\end{enumerate}
\item Define $\swei{R}{s}_{\mc{I}^+}$ and $\swei{R}{s}_{\mc{I}^-}$ to be the unique classical solution to the {\normalfont homogeneous} radial ODE~(\ref{eq:radial-ODE}) and  boundary conditions
\begin{enumerate}
\item $\swei{R}{s}_{\mc{I}^+}\sim e^{i\omega r}r^{2Mi\omega-2s-1}$ and $\swei{R}{s}_{\mc{I}^-}\sim e^{-i\omega r}r^{-2Mi\omega-1}$ asymptotically\footnote{This notation means that there are constants $\{c_k\}_{k=0}^\infty$ such that for every $N\geq 1$, $\swei{R}{s}_{\mc{I}^+}(r) = e^{i\omega r+ 2iM\omega \log r}\sum_{k=0}^{N}c_kr^{-2s-k-1}+O(r^{-2s-N-2})$ for large $r$.} as $r\to \infty\,.$
\item $\lp|\lp(e^{-i\omega r}r^{-2iM\omega}\Delta^{s}(r^2+a^2)^{1/2}\swei{R}{s}_{\mc{I}^+}\rp)\big|_{r=\infty}\rp|^2=1\,,$ and 
\item $\lp|\lp(e^{i\omega r}r^{2iM\omega-2s}\Delta^{s}(r^2+a^2)^{1/2}\swei{R}{s}_{\mc{I}^-}\rp)\big|_{r=\infty}\rp|^2=1\,.$
\end{enumerate}
\end{enumerate}
\label{def:uhor-uout}
\end{definition}

In light of the previous comments, we find that we have the following representation for solutions:
\begin{lemma} \label{lemma:R-general-asymptotics} Fix $M>0$, $|a|\leq M$, $s\in\frac12\mathbb{Z}$ and an admissible frequency triple $(\omega,m,\lambda)$ with respect to $a$ and $s$ such that $\omega\neq m\upomega_+$. A solution $R_{m\lambda}^{[s],\,(a\omega)}$ to the homogeneous radial ODE~\eqref{eq:radial-ODE} can be written as, dropping most sub and superscripts,
\begin{equation}
\begin{split}
\swei{R}{s}&= \frac{1}{2Mr_+}\lp(\swei{a}{s}_{\mc{H}^+}\swei{R}{s}_{\mc{H}^+}+\swei{a}{s}_{\mc{H}^-}\swei{R}{s}_{\mc{H}^-}\rp)\,,\\
\swei{R}{s}&= \swei{a}{s}_{\mc{I}^+}\swei{R}{s}_{\mc{I}^+}+\swei{a}{s}_{\mc{I}^-}\swei{R}{-s}_{\mc{I}^-}\,,
\end{split}\label{eq:R-general-asymptotics}
\end{equation}
for some $\swei{a}{s}_{\mc{H}^\pm},\swei{a}{s}_{\mc{I}^\pm}\in\mathbb{C}$. 
\end{lemma}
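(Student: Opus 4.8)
The plan is to use that the homogeneous radial ODE \eqref{eq:radial-ODE} is a linear second-order ordinary differential equation whose coefficients are real-analytic on the open interval $(r_+,\infty)$, so that its solution space there is a two-dimensional complex vector space. By Definition~\ref{def:uhor-uout}, the four functions $\swei{R}{s}_{\mc{H}^+},\swei{R}{s}_{\mc{H}^-},\swei{R}{s}_{\mc{I}^+},\swei{R}{s}_{\mc{I}^-}$ all belong to this space, so the lemma reduces to the assertion that the pairs $\{\swei{R}{s}_{\mc{H}^+},\swei{R}{s}_{\mc{H}^-}\}$ and $\{\swei{R}{s}_{\mc{I}^+},\swei{R}{s}_{\mc{I}^-}\}$ are each linearly independent. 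Granting this, an arbitrary solution $\swei{R}{s}$ is a linear combination of each pair, and the fixed constant $(2Mr_+)^{-1}$ may be absorbed into the first set of coefficients to produce exactly the representations \eqref{eq:R-general-asymptotics}.

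For the linear independence I would rewrite \eqref{eq:radial-ODE} in the self-adjoint form $\frac{d}{dr}\lp(\Delta^{s+1}\frac{dR}{dr}\rp)+\Delta^s(\cdots)R=0$ and use that, for any two solutions $R_1,R_2$, the modified Wronskian $\Delta^{s+1}\lp(R_1\frac{dR_2}{dr}-R_2\frac{dR_1}{dr}\rp)$ is independent of $r$; a pair is linearly independent precisely when this constant is nonzero. I would then evaluate the constant at the appropriate singular endpoint using the normalized asymptotics of Definition~\ref{def:uhor-uout}. Near $r=\infty$ the solutions $\swei{R}{s}_{\mc{I}^\pm}$ carry the opposing factors $e^{\pm i\omega r}$, whose derivatives dominate, and one finds the constant equals a nonzero multiple of $2i\omega$, which is nonzero since $\omega\neq 0$ by admissibility. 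Near $r=r_+$, in the subextremal case $\swei{R}{s}_{\mc{H}^\pm}$ have the distinct leading exponents $\xi-s$ and $-\xi$ (the indicial roots at the regular singular point $r=r_+$) and the constant is a nonzero multiple of $s-2\xi$; in the extremal case $\swei{R}{s}_{\mc{H}^\pm}$ carry the opposing essential singularities $e^{\mp\beta(r-M)^{-1}}$ and the constant is a nonzero multiple of $\beta$. In all cases $\xi,\beta\neq0$ precisely because $\omega\neq m\upomega_+$ — the same hypothesis that makes the two horizon solutions well defined and genuinely distinct.

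The substantive external input is the existence of the four solutions with the stated asymptotic behaviour, which is already the content of Definition~\ref{def:uhor-uout} and rests on the classical theory of regular and rank-one irregular singular points: Frobenius theory at $r=r_+$ when $|a|<M$, and the asymptotic theory of irregular singularities (see \cite[Chapters~5 and~7]{Olver1973} and \cite[Appendix~A]{Shlapentokh-Rothman2014}) at $r=\infty$ always and at $r=M$ in the extremal case — this last being the only genuinely delicate instance. The one subtlety internal to the argument is the subextremal endpoint computation, which needs the two indicial exponents at $r_+$ to differ, i.e.\ $2\xi\neq s$; this holds automatically for $\omega\in\mathbb{R}\setminus\{0\}$ (then $\xi$ is purely imaginary while $s$ is real), and for $\Im\omega>0$ the degenerate equality $2\xi=s$ forces $\Re(\omega-m\upomega_+)=0$ and can be dispatched via the logarithmic second solution of the repeated-root case. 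I expect these two points — the extremal irregular-singularity analysis and this edge case — to be the only ones requiring thought; the rest is bookkeeping.
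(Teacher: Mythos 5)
Your proposal is correct and follows the same route the paper intends: the paper gives no explicit proof of this lemma, simply asserting it "in light of the previous comments" (the two-dimensionality of the solution space plus the asymptotic theory at the singular points cited before Definition~\ref{def:uhor-uout}), and your Wronskian verification of the linear independence of each pair is exactly the missing bookkeeping — indeed your endpoint values $2i\omega$, $(s-2\xi)\cdot\frac{r_+-r_-}{2Mr_+}$ and $-2i(\omega-m\upomega_+)$ coincide with those the paper computes later in the proof of Proposition~\ref{prop:wronskian-identity}. The edge case $2\xi=s$ you flag for $\Im\omega>0$ is really a question about the well-definedness of $\swei{R}{s}_{\mc{H}^-}$, which the paper subsumes into Definition~\ref{def:uhor-uout} rather than into this lemma, so your treatment is if anything more careful than the paper's.
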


In the more general inhomogeneous setting of \eqref{eq:radial-ODE}, we define
\begin{definition}[Outgoing solution to the radial ODE]\label{def:outgoing-radial-solution}  Fix $M>0$, $|a|\leq M$, $s\in\frac12\mathbb{Z}$ and an admissible frequency triple $(\omega,m,\lambda)$ with respect to $a$ and $s$. Suppose $\hat{F}_{m\lambda}^{[s],\,(a\omega)}$ is compactly supported with support away from $r=r_+$. We say $R$ is an {\normalfont outgoing} solution to the radial ODE \eqref{eq:radial-ODE} if it is nontrivial and satisfies the boundary conditions
\begin{enumerate}
\item if $|a|=M$, $R_{m\lambda}^{[s],\,(a\omega)}(r)(r-M)^{2iM\omega+2s}e^{-\beta(r-M)^{-1}}$ is smooth at $r=M$;
\item if $|a|<M$, $R_{m\lambda}^{[s],\,(a\omega)}(r)(r-r_+)^{s-\xi}$ is smooth at $r=r_+$;
\item for $|a|\leq M$, $R_{m\lambda}^{[s],\,(a\omega)}(r) \sim e^{i\omega r}r^{2iM\omega-1-2s}$ asymptotically as $r\to \infty$;
\end{enumerate}
\end{definition}

\begin{remark} \label{rmk:outgoing-radial-solution}
If $R_{m\lambda}^{[s],\,(a\omega)}$ is an outgoing solution to the \textit{homogeneous} radial ODE~\eqref{eq:radial-ODE}, in the sense of Definition~\ref{def:outgoing-radial-solution}, then, for some for some $\swei{a}{s}_{\mc{H}^+},\swei{a}{s}_{\mc{I}^+}\in\mathbb{C}$, it can be written as \begin{equation}
\begin{split}
\swei{R}{s}&= \frac{1}{2Mr_+}\swei{a}{s}_{\mc{H}^+}\swei{R}{s}_{\mc{H}^+}\,,\\
\swei{R}{s}&= \swei{a}{s}_{\mc{I}^+}\swei{R}{s}_{\mc{I}^+}\,.
\end{split}\label{eq:R-general-asymptotics-outgoing}
\end{equation}
\end{remark}

Often, it will be useful to work with the following rescaling of a solution $R$ of the inhomogeneous radial ODE \eqref{eq:radial-ODE}:
$$u_{m\lambda}^{[s],\,(a\omega)}(r)=(r^2+a^2)^{1/2}\Delta^{s/2}R_{m\lambda}^{[s],\,(a\omega)}(r)\,.$$ 
In terms of $u$, \eqref{eq:radial-ODE} becomes, dropping sub and superscripts,
\begin{gather*}
u''+(\omega^2-V)u=H\,,\numberthis\label{eq:u-Schrodinger}
\end{gather*}
where $H=\Delta^{1+s/2}(r^2+a^2)^{-3/2}\hat{F}$, the derivatives are taken with respect to the $r^*$ coordinate \eqref{eq:r-star}, and
\begin{align}
V&=\frac{\Delta(\lambda+s+s^2+a^2\omega^2)+4Mam\omega r-a^2m^2}{(r^2+a^2)^2}+\frac{\Delta}{(r^2+a^2)^4}\lp(a^2\Delta +2Mr (r^2-a^2)\rp)\nonumber \\
&\qquad+\frac{M^2-a^2}{(r^2+a^2)^2}s^2  -2is \frac{\omega(r^3-3Mr^2+a^2r+Ma^2)+am(r-M)}{(r^2+a^2)^2}\,. \label{eq:Teukolsky-potential}
\end{align}

We note that Definition~\ref{def:uhor-uout} can be rephrased in terms of this rescaling: we can define 
\begin{gather}\label{eq:uout-uhor}
\swei{u}{s}_{\mc{I}^+}:=(r^2+a^2)^{1/2}\Delta^{s/2}\swei{R}{s}_{\mc{I}^+}\,,\qquad\swei{u}{s}_{\mc{H}^+}:=(r^2+a^2)^{1/2}\Delta^{s/2}\swei{R}{s}_{\mc{H}^+}\,.
\end{gather} 

\subsubsection{Mode solutions}
\label{sec:def-mode-sol}

We are now ready to define precisely what is meant by \textit{mode solution}:
\begin{definition}[Mode solution] \label{def:mode-solution}  Fix $M>0$, $|a|\leq M$ and $s\in\frac12\mathbb{Z}$. Let $\upalpha^{[s]}$ be a solution to the Teukolsky equation \eqref{eq:Teukolsky-equation} on $\tilde{\mc{R}}$ (see Section~\ref{sec:preliminaries}) which depends on the variables $\theta$ and $\phi$ as a smooth $s$-spin weighted function. We say $\upalpha^{[s]}$ is a mode solution if there exists an admissible frequency triple $(\omega,m,\bm\uplambda_{m}^{[s],\,(a\omega)})$ with respect to $a$ and $s$ (see Definition~\ref{def:admissible-freqs}) and with $\bm\uplambda_{m}^{[s], (a\omega)}$, as identified in Proposition~\ref{def:angular-ode}, such that, in Boyer--Lindquist coordinates,
\begin{equation} 
\swei{\alpha}{s}(t,r,\theta,\phi) = e^{-i\omega t}e^{im\phi}S_{m\bm\uplambda}^{[s],\,(a\omega)}(\theta)R_{m\bm\uplambda}^{[s],\,(a\omega)}(r)\,, \label{eq:mode-solution}
\end{equation}
where 
\begin{enumerate}
\item $e^{im\phi}S_{m\bm\uplambda}^{[s],\,(a\omega)}(\theta)$ is a smooth $s$-spin-weighted solution of the angular ODE \eqref{eq:angular-ode} (see Proposition~\ref{def:angular-ode}), with respect to the spheroidal parameter $\nu=a\omega$;
\item $R_{m\bm\uplambda}^{[s],\,(a\omega)}(r)$ is an outgoing solution, in the sense of Definition \ref{def:outgoing-radial-solution} and Remark \ref{rmk:outgoing-radial-solution}, to the {\normalfont homogeneous} radial ODE \eqref{eq:radial-ODE} with parameter $\lambda$ replaced by $\bm\uplambda_{m}^{[s], (a\omega)}$, identified in Proposition~\ref{def:angular-ode}.
\end{enumerate}
\end{definition}

For $s=0$, we can motivate the boundary conditions we have imposed for the radial ODE \eqref{eq:radial-ODE} in \cref{def:mode-solution}. Recall that, at the singularities of the radial ODE \eqref{eq:radial-ODE}, $r=r_+$ and $r=\infty$, there are \textit{a priori} two possible linearly independent asymptotic behaviors (see Lemma~\ref{lemma:R-general-asymptotics}). Our choice of boundary conditions at $r=r_+$ ensures that mode solutions of the wave equation ($s=0$) extend smoothly to the future event horizon $\mc{H}^+$:
\begin{lemma} \label{lemma:smooth-extension-horizon} Mode solutions, in the sense of Definition~\ref{def:mode-solution},  to the Teukolsky equation~\eqref{eq:Teukolsky-equation} with $s=0$  extend smoothly to $\mc{H}^+$.
\end{lemma}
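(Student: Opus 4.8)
The plan is to work in the Kerr-star coordinate system $(t^*, r, \theta^*, \phi^*)$ introduced in \cref{sec:kerr-spacetime}, in which the Kerr metric extends smoothly across $\mc{H}^+ = \{r = r_+\}$, and to show that a mode solution $\swei{\alpha}{0} = e^{-i\omega t} e^{im\phi} S(\theta) R(r)$ rewritten in these coordinates is a smooth function up to and including $r = r_+$. First I would use the relations \eqref{eq:kerr-star}--\eqref{eq:overline-t-phi}, namely $t = t^* - \overline{t}(r)$ and $\phi = \phi^* - \overline{\phi}(r) \bmod 2\pi$ with $\overline{t}' = (r^2+a^2)/\Delta$ and $\overline{\phi}' = a/\Delta$, to compute
\begin{align*}
\swei{\alpha}{0}(t^*, r, \theta^*, \phi^*) = e^{-i\omega t^*} e^{im\phi^*} S(\theta^*) \, e^{i\omega \overline{t}(r)} e^{-im\overline{\phi}(r)} R(r) \, .
\end{align*}
Since the $t^*, \phi^*, \theta^*$ dependence is manifestly smooth (the angular part being a smooth $0$-spin-weighted function, hence genuinely smooth on $S^2$), the whole question reduces to showing that the radial factor $\widehat{R}(r) := e^{i\omega \overline{t}(r) - i m \overline{\phi}(r)} R(r)$ extends smoothly to $r = r_+$, and moreover that no coordinate degeneracy at the horizon spoils this — but the Kerr-star chart is regular there, so smoothness of $\widehat{R}$ as a function of $r \in [r_+, \infty)$ is exactly what is needed.

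The key step is then an asymptotic analysis of the radial ODE \eqref{eq:radial-ODE} near the regular singular point $r = r_+$ (recall $|a| < M$ is the relevant case since at $|a| = M$ the horizon is degenerate; for $s=0$ one treats $|a| \le M$, but the genuinely subtle extremal case can be handled as in \cref{sec:radial-ODE}). By the discussion preceding \cref{def:uhor-uout}, the two Frobenius exponents at $r = r_+$ are $\pm \xi$ (with $s=0$), and the outgoing boundary condition in \cref{def:outgoing-radial-solution} selects $R \sim (r - r_+)^{-\xi}$, i.e. $R(r)(r-r_+)^{\xi}$ is smooth at $r = r_+$ by Remark~\ref{rmk:outgoing-radial-solution} and \eqref{eq:uout-uhor}. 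From $\overline{t}' = (r^2+a^2)/\Delta$ and $\Delta = (r-r_+)(r-r_-)$, a partial-fractions computation gives $\overline{t}(r) = \frac{r_+^2 + a^2}{r_+ - r_-}\log(r - r_+) + (\text{smooth at } r_+)$, and similarly $\overline{\phi}(r) = \frac{a}{r_+ - r_-}\log(r-r_+) + (\text{smooth})$. Hence
\begin{align*}
e^{i\omega \overline{t}(r) - i m \overline{\phi}(r)} = (r - r_+)^{\frac{i}{r_+ - r_-}\left(\omega(r_+^2+a^2) - ma\right)} \cdot (\text{smooth, nonvanishing at } r_+) \, .
\end{align*}
Using $\upomega_+ = a/(2Mr_+)$ and $r_+^2 + a^2 = 2Mr_+$ (since $\Delta(r_+) = 0$), the exponent simplifies to $\frac{2Mr_+ i}{r_+ - r_-}(\omega - m\upomega_+) = -\xi$ by the definition \eqref{eq:xi-upomega+}. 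Therefore $\widehat{R}(r) = (r-r_+)^{-\xi} \cdot R(r) \cdot (\text{smooth nonvanishing}) = \big[(r-r_+)^{\xi} R(r)\big]^{-1}\cdots$ — more carefully, $\widehat R(r) = \big[(r - r_+)^{\xi} R(r)\big] \cdot (\text{smooth nonvanishing at } r_+)$ is a product of two functions smooth at $r = r_+$, hence smooth there.

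The main obstacle I anticipate is \emph{not} the exponent bookkeeping above (which is routine once one commits to tracking the logarithmic parts of $\overline t$ and $\overline\phi$ exactly), but rather justifying that $R(r)(r-r_+)^{\xi}$ being \emph{smooth} — not merely bounded or continuous — at the regular singular point is legitimate. When $\xi \notin \frac12\mathbb{Z}$ the indicial roots $\pm\xi$ do not differ by an integer, so standard Frobenius theory (as in \cite[Chapter 5]{Olver1973}) gives a genuine convergent power-series solution $(r-r_+)^{-\xi}\sum_k c_k (r-r_+)^k$, and smoothness is immediate. The delicate subcase is when $2\xi \in \mathbb{Z}$, where a logarithmic term could in principle appear in the second solution; one must check that the \emph{outgoing} solution selected by \cref{def:outgoing-radial-solution} is the non-logarithmic branch, or argue that the resonance condition forces the log coefficient to vanish for the relevant exponent. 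I would dispatch this by the same asymptotic-analysis references already invoked in the proof of \cref{def:angular-ode} — \cite[Chapters 5, 7]{Olver1973} and \cite[Appendix A]{Shlapentokh-Rothman2014} — noting that the outgoing branch is characterized by its leading behavior and the recursion for its coefficients never triggers the obstruction. Finally, one checks smoothness across the poles $\theta = 0, \pi$ of the spherical chart, which is immediate from the $s = 0$ case of \cref{def:smooth-spin-weighted}: a smooth $0$-spin-weighted function is just a smooth function on $S^2$.
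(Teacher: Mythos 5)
Your approach is the same as the paper's: pass to Kerr-star coordinates, isolate the radial factor $e^{i(\omega\overline{t}-m\overline{\phi})}R(r)$, compute the logarithmic divergence of $\overline{t}$ and $\overline{\phi}$ at $r=r_+$ by partial fractions, and observe that the outgoing boundary condition is engineered so that the resulting power $(r-r_+)^{-\xi}$ from the phase cancels against the behavior of $R$. Two points need fixing, however. First, you have the boundary condition with the wrong sign: Definition~\ref{def:outgoing-radial-solution} (with $s=0$) requires $R(r)(r-r_+)^{-\xi}$ to be smooth at $r=r_+$, i.e.\ $R\sim(r-r_+)^{+\xi}$, not $R\sim(r-r_+)^{-\xi}$ as you state. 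With your (correct) computation that $e^{i\omega\overline{t}-im\overline{\phi}}=(r-r_+)^{-\xi}\cdot(\text{smooth nonvanishing})$, the product is $\widehat{R}=\big[(r-r_+)^{-\xi}R(r)\big]\cdot(\text{smooth})$, which is smooth precisely by the definition; your final displayed identity $\widehat{R}=\big[(r-r_+)^{\xi}R(r)\big]\cdot(\text{smooth})$ is inconsistent with the line preceding it and with the cancellation you need. (Your worry about logarithmic Frobenius branches is moot for the same reason: smoothness of $R(r)(r-r_+)^{-\xi}$ is \emph{postulated} in Definition~\ref{def:outgoing-radial-solution}, so there is nothing to verify at this stage.)

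Second, the lemma covers $|a|\leq M$, and the extremal case cannot be dismissed as "handled as in Section~\ref{sec:radial-ODE}": when $|a|=M$ the horizon is an irregular singular point, $\overline{t}(r)=r-2M^2(r-M)^{-1}+2M\log(r-M)+C$ and $\overline{\phi}(r)=-a(r-M)^{-1}+C$, so the phase contributes an essential singularity $e^{-\beta(r-M)^{-1}}(r-M)^{-2iM\omega}$ rather than a pure power. The cancellation then uses the extremal outgoing condition that $R(r)(r-M)^{2iM\omega}e^{-\beta(r-M)^{-1}}$ is smooth, and this computation must actually be carried out — the paper does so explicitly alongside the subextremal one.
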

\begin{proof}
To check that our choice of boundary conditions lead to solutions which are smooth at the horizon, we must change to coordinates which are well defined at the horizon. In Kerr-star coordinates \eqref{eq:kerr-star},
\begin{equation}
\swei{\alpha}{s}(t^*,r,\theta,\phi^*) = e^{i\lp(\omega \overline{t}-m\overline{\phi}\rp)}e^{-i\omega t^*}e^{im\phi^*}S_{m\ell}^{[s],\,(a\omega)}(\theta)R_{m\ell}^{[s],\,(a\omega)}(r)\,. \label{eq:mode-sol-kerr-star}
\end{equation}

For subextremal Kerr, we have 
\begin{align*}
&\frac{d\overline{t}}{dr}=\frac{r^2+a^2}{\Delta} = 1+ \frac{2Mr_+}{r_+-r_-}\frac{1}{r-r_+}-\frac{2Mr_-}{r_+-r_-}\frac{1}{r-r_-}\\
&\quad \Rightarrow \overline{t}(r)=r+\frac{2Mr_+}{r_+-r_-}\log\lp(\frac{r-r_+}{r}\rp)-\frac{2Mr_-}{r_+-r_-}\log\lp(\frac{r-r_-}{r}\rp) + C_1\,,\\
&\frac{d\overline{\phi}}{dr}=\frac{a}{\Delta} = \frac{a}{r_+-r_-}\lp(\frac{1}{r-r_+}-\frac{1}{r-r_-}\rp)\Rightarrow \overline{\phi}(r)=\frac{a}{r_+-r_-}\log\lp(\frac{r-r_+}{r-r_-}\rp)+C_2\,,
\end{align*}
so, as $r\to r_+$, we have
\begin{align*}
i\lp(\omega\overline{t}-m\overline{\phi}\rp)= -\xi\log(r-r_+)+ o\lp(\log(r-r_+)\rp)\,.
\end{align*}

For extremal Kerr, since $|a|=M$, we have
\begin{align*}
&\frac{d\overline{t}}{dr}=\frac{r^2+M^2}{(r-M)^2} = 1+ \frac{2Mr}{(r-M)^2} \Rightarrow \overline{t}(r)=r -\frac{2M^2}{r-M}+ 2M\log(r-M)+C_3\,,\\
&\frac{d\overline{\phi}}{dr}=\frac{a}{(r-M)^2} \Rightarrow \overline{\phi}(r)=-\frac{a}{r-M}+C_4\,,
\end{align*}
so, as $r\to M$, we have
\begin{align*}
i\lp(\omega\overline{t}-m\overline{\phi}\rp)=-\frac{\beta}{r-M}+ 2Mi\omega\log(r-M)+ o\lp(\log(r-M)\rp)\,.
\end{align*}

Thus, the boundary conditions at $r=r_+$ in our definition precisely ensure that (\ref{eq:mode-sol-kerr-star}) is smooth at the horizon in both the extremal and subextremal case.
\end{proof}

At $r=\infty$, the boundary condition in \cref{def:mode-solution} ensures that, when $s=0$, mode solutions have finite energy on suitable spacelike hypersurfaces on a Kerr spacetime: hyperboloidal and asymptotically flat hypersurfaces when $\Im\omega=0$ and asymptotically flat hypersurfaces when $\Im\omega>0$ (see \cref{fig:hypersurfaces} for a sketch of these hypersurfaces and \cite[Appendix D]{Shlapentokh-Rothman2015} for precise definitions and a proof that the behavior as $r\to\infty$ is compatible with these statements).

\begin{figure}[htbp]
    \centering
    \begin{subfigure}[t]{0.5\textwidth}
        \centering
 \includegraphics[scale=1]{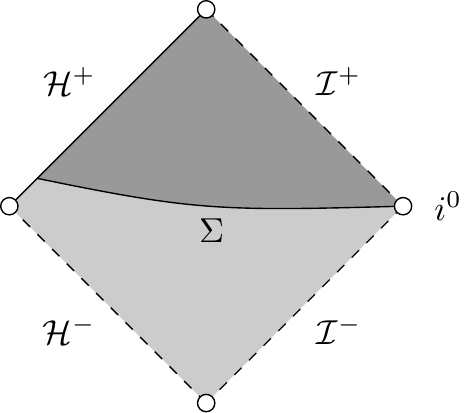}
        \caption{}
    \end{subfigure}%
    ~ 
    \begin{subfigure}[t]{0.45\textwidth}
        \centering
\includegraphics[scale=1]{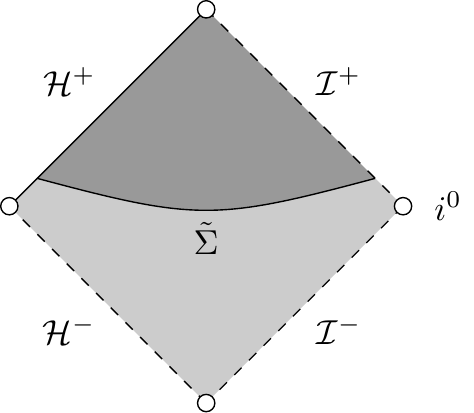}
        \caption{}
    \end{subfigure}
    \caption{These Penrose diagrams for the exterior of a Kerr black hole, $\mc{R}$, with $|a|\leq M$ show asymptotically flat (a) and hyperboloidal hypersurfaces (b), which are spacelike hypersurfaces that intersect the future event horizon $\mc{H}^+$ and, respectively, spacelike infinity, $i^0$, or null infinity, $\mc{I}^+$.   Mode solutions have finite energy on surfaces (a)  and (b) if $\Im\omega>0$ and on surfaces (b) if $\Im\omega=0$.}
    \label{fig:hypersurfaces}
\end{figure}

For general $s$, since the Teukolsky equation \eqref{eq:Teukolsky-equation} is derived in the Newman--Penrose formalism with the algebraically special frame for Kerr, which is itself not regular at $\mc{H}^+$, \cref{lemma:smooth-extension-horizon} does not apply. However, using an appropriately weighted definition of energy for solutions of \eqref{eq:Teukolsky-equation}, we can also show that mode solutions for $s\neq 0$ have finite energy on the same spacelike hypersurfaces as for $s=0$ (see \cite{Dafermos2017} for definitions of energy for $s=\pm 2$).

\subsection{The Teukolsky--Starobinsky identities}
\label{sec:teukolsky-starobinsky}

In this section, we will introduce the Teukolsky--Starobinsky identities. We begin by defining the differential operators
\begin{align}
\mc{D}^{\pm}_n &= \frac{d}{dr}\pm i\lp(\frac{\omega(r^2+a^2)}{\Delta} -\frac{am}{\Delta}\rp)+\frac{2n(r-M)}{\Delta}\,, \label{eq:def-D-pm}\\
\mc{L}^{\pm}_n &= \frac{d}{d\theta}\pm\lp(\frac{m}{\sin\theta}-a\omega\cos\theta\rp)+n\cot\theta\,. \label{eq:def-L-pm}
\end{align}

\subsubsection{The Teukolsky--Starobinsky constant}

We state the following two propositions, which are implicit in much of the literature:
\begin{proposition}\label{prop:eigenfunctions-angular} Fix $s\in\{0,\frac12,1,\frac32,2\}$, $\omega\in\mathbb{C}$ and a frequency parameter $m$ admissible with respect to $s$. Then, a solution of the angular ODE~\eqref{eq:angular-ode} with spin $\pm s$ corresponding to an angular eigenvalue $\bm\uplambda_{m}^{[s],\,(a\omega)}$ is an eigenfunction of the operator 
\begin{align}
\prod_{j=0}^{2s-1}\mc{L}^\mp_{s-j}\prod_{k=0}^{2s-1}\mc{L}^\pm_{s-k}\,, \label{eq:twice-TS-ang}
\end{align}
with indices $j,k$ increasing from right to left on the product. The eigenvalue, which is called the {\normalfont angular Teukolsky--Starobinsky constant}, $\mathfrak{B}_s=\mathfrak{B}_s(|s|,\omega,m,\bm\uplambda_{m}^{[s],\,(a\omega)})$, can be computed explicitly. 

If $\omega\in\mathbb{R}$, we will alternatively denote the angular Teukolsky--Starobinsky constant corresponding to an $s$-spin-weighted spheroidal harmonic $S_{ml}^{[s],\,(a\omega)}$ by $\mathfrak{B}_s(|s|,\omega,m,l)$, which makes the choice $\bm\uplambda_{m}^{[s],\,(a\omega)}=\bm\uplambda_{ml}^{[s],\,(a\omega)}$ explicit. Moreover, in this case, $\mathfrak B_s\geq 0$ for  integer $s$ and $\mathfrak{B}_s\leq 0$ for half-integer $s$.
\end{proposition}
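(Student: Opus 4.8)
The plan is to establish Proposition~\ref{prop:eigenfunctions-angular} by a direct computation exploiting the ladder structure of the operators $\mc{L}^\pm_n$ defined in \eqref{eq:def-L-pm}. First I would record the basic intertwining (commutation) relations: if $e^{im\phi}S_{m,\bm\uplambda}^{[+s],\,(a\omega)}$ solves the angular ODE \eqref{eq:angular-ode} with spin $+s$ and eigenvalue $\bm\uplambda_{m}^{[s],\,(a\omega)}$, then applying $\mc{L}^\pm_{n}$ for the appropriate index $n$ (running over $n=s,s-1,\dots,-s+1$) produces, step by step, a solution of the angular ODE with the spin parameter lowered (or raised) by one unit, keeping the \emph{same} eigenvalue $\bm\uplambda$ — this is the spin-weighted analogue of the fact that $\mc{L}^+$ and $\mc{L}^-$ are the raising/lowering operators for spin-weighted spheroidal harmonics. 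Chaining $2s$ such operators $\prod_{k=0}^{2s-1}\mc{L}^\pm_{s-k}$ carries a spin-$(\pm s)$ solution to a spin-$(\mp s)$ solution, and then $\prod_{j=0}^{2s-1}\mc{L}^\mp_{s-j}$ carries it back to spin $\pm s$; since at each stage the operator maps a one-dimensional eigenspace (by the boundary conditions in Proposition~\ref{def:angular-ode}) to a one-dimensional eigenspace with the same $\bm\uplambda$, the full composition \eqref{eq:twice-TS-ang} must act as a scalar on $S_{m,\bm\uplambda}^{[s],\,(a\omega)}$. That scalar is $\mathfrak{B}_s$.

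To actually \emph{compute} $\mathfrak{B}_s$, I would proceed inductively in $|s|\in\{0,\tfrac12,1,\tfrac32,2\}$, or else carry out the finite composition explicitly using the algebra of first-order operators. The key algebraic identity is that for any suitable $n$, $\mc{L}^\mp_{n+1}\mc{L}^\pm_{n}$ acting on a function equals $-\frac{d}{d\theta}(\sin\theta\frac{d}{d\theta})(\,\cdot\,)/\sin\theta$-type second order operator plus lower order terms, so that on a solution of \eqref{eq:angular-ode} it reduces to a polynomial in $\bm\uplambda$, $m$, $a\omega$ and $s$. Iterating and collecting terms yields the closed-form expression for $\mathfrak{B}_s$; the explicit polynomials are classical (for $|s|=1,2$ they appear in \cite{Teukolsky1974,Starobinsky1974,Chandrasekhar} and for general spin in \cite{Kalnins1989}), so I would simply state the result and relegate the bookkeeping, noting that the computation is finite and mechanical. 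The sign and reality claim when $\omega\in\mathbb{R}$ follows by writing $\mathfrak{B}_s$ as $\|\prod_{k=0}^{2s-1}\mc{L}^\pm_{s-k}S_{ml}^{[s],\,(a\omega)}\|_{L^2}^2/\|S_{ml}^{[s],\,(a\omega)}\|_{L^2}^2$ up to a sign determined by integration by parts: the formal adjoint of $\mc{L}^\pm_n$ on $L^2(\sin\theta\, d\theta)$ is $-\mc{L}^\mp_{1-n}$ (or a close variant), so the operator \eqref{eq:twice-TS-ang} is, up to an overall sign $(-1)^{2s}$, a nonnegative self-adjoint operator; hence $\mathfrak{B}_s\ge 0$ for integer $s$ and $\mathfrak{B}_s\le 0$ for half-integer $s$.

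The main obstacle I anticipate is bookkeeping rather than conceptual: getting the indices on the product \eqref{eq:twice-TS-ang} exactly right (the increasing-from-right-to-left convention), verifying that the intermediate functions genuinely satisfy the spin-shifted angular ODE with the \emph{unchanged} eigenvalue and the correct boundary behavior at $\theta=0,\pi$ (so that the one-dimensionality argument applies and no spurious solution branch is picked up), and then carrying the polynomial algebra through to a clean closed form without sign errors. A secondary subtlety is the adjointness computation underpinning the sign of $\mathfrak{B}_s$: one must track carefully the boundary terms in the integration by parts near $\theta=0,\pi$ and confirm they vanish given the smoothness of the spin-weighted harmonics, and also confirm that the relevant pairing identifies $\prod \mc{L}^\mp_{s-j}$ as $\pm$ the adjoint of $\prod \mc{L}^\pm_{s-k}$ with the sign $(-1)^{2s}$. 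Once these points are handled, the proposition follows; I would present the ladder relations and the one-dimensionality argument in detail and state the explicit value of $\mathfrak{B}_s$ with a reference, leaving the elementary verification of the polynomial identity to the reader or an appendix.
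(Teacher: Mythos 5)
Your main route to the existence of $\mathfrak{B}_s$ — a step-by-step ladder in which each single operator $\mc{L}^\pm_n$ carries a solution of the angular ODE~\eqref{eq:angular-ode} of one spin weight to a solution of the spin-shifted ODE ``with the same eigenvalue'', followed by a one-dimensionality argument at each stage — has a genuine gap. For $a\omega\neq 0$ the operators $\mc{L}^\pm_n$ are \emph{not} ladder operators for spin-weighted \emph{spheroidal} harmonics: a single application of $\mc{L}^\pm_n$ to $S_{m,\bm\uplambda}^{[s],\,(a\omega)}$ does not produce a solution of the spin-$(s\mp1)$ equation for any choice of eigenvalue (and even at $a\omega=0$, where the ladder structure does hold, the separation constant shifts by an explicit amount rather than being preserved). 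Only the full $2s$-fold composition $\prod_{k=0}^{2s-1}\mc{L}^\pm_{s-k}$ intertwines the spin-$(\pm s)$ and spin-$(\mp s)$ angular operators; this is the angular analogue of the commutation relation in Lemma~\ref{lemma:TS-commutation} and is precisely the nontrivial content that has to be \emph{proved}, so it cannot be fed into the one-dimensionality argument as an input. Your intermediate functions after $1,\dots,2s-1$ applications simply do not satisfy any of the angular ODEs, and the eigenspace argument cannot be run stage by stage.

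Your fallback route — carrying out the finite composition and repeatedly using \eqref{eq:angular-ode} to eliminate second derivatives, so that the result stays in the module spanned by $S$ and $S'$ and collapses to a constant multiple of $S$ — is the paper's actual proof (a direct, mechanical computation, with the $|s|=1,2$ cases referenced to Chandrasekhar); note only that the adjacent pairs inside \eqref{eq:twice-TS-ang} are of the form $\mc{L}^\pm_{n}\mc{L}^\pm_{n+1}$ within each half, not $\mc{L}^\mp_{n+1}\mc{L}^\pm_{n}$, so the ``key identity'' you cite is not the one that appears. Your sign argument coincides with the paper's: the adjoint relation $\int_0^\pi h\,\mc{L}^\pm_n f\,\sin\theta\,d\theta=-\int_0^\pi f\,\mc{L}^\mp_{-n+1}h\,\sin\theta\,d\theta$ exhibits \eqref{eq:twice-TS-ang} as $(-1)^{2s}$ times a nonnegative operator; for $\omega\in\mathbb{R}$ one takes $S$ real and normalized, the boundary terms vanish by smoothness of the spin-weighted harmonic, and $\mathfrak{B}_s=(-1)^{2s}\int_0^\pi\big(\prod_{k=0}^{2s-1}\mc{L}^\pm_{s-k}S\big)^2\sin\theta\,d\theta$ gives the stated signs. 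So the sign claim is fine as written; the existence claim needs to be rerouted through the direct computation (or through an explicitly proved angular analogue of Lemma~\ref{lemma:TS-commutation} for the full chain).
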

\begin{proof}
Existence of the angular Teukolsky--Starobinsky constant can be shown by incrementally factoring out the differential operators and using the angular ODE \eqref{eq:angular-ode} to replace second derivatives  by first derivatives and zeroth order terms. This yields $\mathfrak{B}_s(\omega,m,l)$ explicitly and one can check that it depends on the spin only through $|s|$. The reader may refer to \cite[Sections 70 and 81]{Chandrasekhar} for a detailed proof for $|s|=1,2$, respectively. 

Recall the integration by parts lemma \cite[Section 68, Lemma 1]{Chandrasekhar} 
\begin{lemma} For $f$ and $h$ sufficiently regular functions of $\theta$,
\begin{align*}
\int_0^\pi h \mc{L}^{\pm}_n f \sin\theta d\theta = -\int_0^\pi f \mc{L}^{\mp}_{-n+1} h \sin\theta d\theta\,.
\end{align*}
\end{lemma}

Assume that $\omega$ is real. Then \eqref{eq:angular-ode} and \eqref{eq:twice-TS-ang} are real. Without loss of generality, let $S^{[\pm s]}$ be  real solutions to \eqref{eq:angular-ode} normalized to have unit $L^2$ norm. Then, by the lemma, assuming existence of the constant,
\begin{align*}
\mathfrak{B}_s &= \int_0^\pi S^{[\pm s]}\prod_{j=0}^{2s-1}\mc{L}^\mp_{s-j}\prod_{k=0}^{2s-1}\mc{L}^\pm_{s-k}S^{[\pm s]} \sin\theta d\theta \\
& = (-1)^{2s} \int_0^\pi \lp(\prod_{k=0}^{2s-1}\mc{L}^\pm_{s-k}S^{[\pm s]}\rp)^2 \sin\theta d\theta\,,
\end{align*}
where the integral on the right hand side is non-negative.
\end{proof}

\begin{proposition}\label{prop:eigenfunctions} Fix $s\in\{0,\frac12,1,\frac32,2\}$, $(\omega,\lambda)\in\mathbb{C}^2$ and an admissible frequency parameter $m$  with respect to $s$. Then, for $R^{[\pm s]}$ solutions of the radial ODE~\eqref{eq:radial-ODE} with spin $\pm s$, we set
\begin{align}
P^{[+s]}:=\Delta^s R^{[+s]}\,,\quad P^{[-s]}:=R^{[-s]}\,. \label{eq:def-P-TS}
\end{align}
$P^{[\pm s]}$ is an eigenfunction of the operator
\begin{align}
\Delta^s\lp(\mc{D}^{\mp}_0\rp)^{2s}\Delta^s\lp(\mc{D}^{\pm}_0\rp)^{2s}\,, \label{eq:twice-TS}
\end{align}
where indices $j,k$ increase from right to left in the product. The eigenvalue, which is called the {\normalfont radial Teukolsky--Starobinsky constant} $\mathfrak{C}_s$, depends explicitly \cite{Kalnins1989} on $|s|$ and the frequency triple $(\omega,m,\lambda)$ only. Moreover, if $\omega$ and $\lambda$ are real, so is $\mathfrak{C}_s$. 
\end{proposition}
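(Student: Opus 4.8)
The plan is to mirror the strategy already used for the angular Teukolsky--Starobinsky constant in Proposition~\ref{prop:eigenfunctions-angular}, replacing the integration-by-parts lemma for $\mc{L}^\pm_n$ by the analogous algebraic manipulation of the operators $\mc{D}^\pm_n$ against the radial ODE \eqref{eq:radial-ODE}. The existence of the constant is an algebraic fact: starting from $P^{[\pm s]}$, one applies $(\mc{D}^{\pm}_0)^{2s}$, then multiplies by $\Delta^s$, then applies $(\mc{D}^{\mp}_0)^{2s}$, and then multiplies by $\Delta^s$ again, and at each stage one uses the radial ODE \eqref{eq:radial-ODE} (with vanishing inhomogeneity, since the identities are only claimed for homogeneous solutions) to rewrite any second-order derivative of $R^{[\pm s]}$ in terms of $R^{[\pm s]}$ and its first derivative, which is in turn absorbed into the $\mc{D}$-operators. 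After $2s$ such reductions one is left with a zeroth-order operator, i.e.\ multiplication by a rational function of $r$; the claim is that this rational function is in fact a constant $\mathfrak{C}_s$ depending only on $|s|$ and $(\omega,m,\lambda)$.

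First I would dispose of the trivial cases $s=0$ (where $\mathfrak{C}_0=1$) and the half-integer cases, noting the symmetry $s\leftrightarrow -s$ is built into the pairing $P^{[+s]}=\Delta^s R^{[+s]}$, $P^{[-s]}=R^{[-s]}$, which is precisely the combination that appears in \eqref{eq:uout-uhor}-type rescalings and makes the operator \eqref{eq:twice-TS} symmetric under $\pm$. Then for $|s|=1,2$ I would simply cite the explicit computations in \cite[Sections 69 and 81]{Chandrasekhar}, where this factorization is carried out in detail, and for general $s\in\frac12\mathbb{Z}$ invoke \cite{Kalnins1989}, where the closed-form expression for $\mathfrak{C}_s$ is derived; the dependence on $s$ only through $|s|$ and on the frequency triple only through $(\omega,m,\lambda)$ is visible in that formula. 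The point worth spelling out is \emph{why} the zeroth-order remainder is genuinely constant in $r$: this follows because $\Delta^s(\mc{D}^{\mp}_0)^{2s}\Delta^s(\mc{D}^{\pm}_0)^{2s}$ commutes, in the appropriate sense, with the radial operator in \eqref{eq:radial-ODE} — equivalently, it maps solutions of the spin-$\pm s$ homogeneous radial ODE to solutions of the same ODE, so it must act on the (one-dimensional, locally) solution space by a scalar, and a scalar operator that is a rational function of $r$ on a nontrivial solution space must be constant. Alternatively, one differentiates the putative identity $\Delta^s(\mc{D}^{\mp}_0)^{2s}\Delta^s(\mc{D}^{\pm}_0)^{2s}P^{[\pm s]} = (\text{rational in }r)\cdot P^{[\pm s]}$ and uses the ODE again to see the rational coefficient has zero derivative.

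For the reality statement, assume $\omega,\lambda\in\mathbb{R}$. Then the coefficients of the radial ODE \eqref{eq:radial-ODE} for spin $+s$ are the complex conjugates of those for spin $-s$ (inspect \eqref{eq:radial-ODE}: conjugation sends $\omega\mapsto\omega$, flips the sign of the $is$ terms, which is exactly the effect of $s\mapsto -s$), so $\overline{R^{[+s]}}$ solves the spin-$(-s)$ radial ODE. Likewise $\overline{\mc{D}^{\pm}_n} = \mc{D}^{\mp}_n$ when $\omega$ is real. Applying complex conjugation to the eigenvalue relation $\Delta^s(\mc{D}^{\mp}_0)^{2s}\Delta^s(\mc{D}^{\pm}_0)^{2s}P^{[\pm s]} = \mathfrak{C}_s P^{[\pm s]}$ and matching it against the spin-$(\mp s)$ version of the same relation shows $\overline{\mathfrak{C}_s} = \mathfrak{C}_s$. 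The main obstacle is not conceptual but bookkeeping: carrying out the iterated reduction for general $s$ and verifying that all $r$-dependence cancels is lengthy, which is exactly why I would lean on \cite{Kalnins1989} for the closed form rather than reproduce it, and restrict the self-contained verification to $|s|\le 2$ following \cite{Chandrasekhar}.
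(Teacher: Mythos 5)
Your proposal is correct and follows essentially the same route as the paper: incrementally apply the $\mc{D}^\pm_0$ operators, use the homogeneous radial ODE to eliminate second derivatives at each stage, and fall back on \cite[Sections 70 and 81]{Chandrasekhar} for $|s|=1,2$ and \cite{Kalnins1989} for the closed form; your conjugation argument for reality is a cleaner alternative to the paper's ``check the explicit formula.'' The only loose point is your parenthetical that the operator acts by a scalar on a ``one-dimensional, locally'' solution space --- the solution space is two-dimensional, so constancy of the zeroth-order remainder should instead be read off directly from the explicit reduction (or from a Wronskian comparison), but this does not affect the validity of the overall argument.
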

\begin{proof}
Existence of the radial Teukolsky--Starobinsky constant can be shown by incrementally factoring out the differential operators and using the radial ODE \eqref{eq:TS-radial} to replace second derivatives  by first derivatives and zeroth order terms. This yields $\mathfrak{C}_s(\omega,m,\lambda)$ explicitly and one can check that it depends on the spin only through $|s|$ and that it is real when $\omega$ is real. The reader may refer to \cite[Sections 70 and 81]{Chandrasekhar} for a detailed proof for $|s|=1,2$. 
\end{proof}

\begin{remark} Though we have stated Propositions~\ref{prop:eigenfunctions-angular} and \ref{prop:eigenfunctions} only for $|s|\leq 2$, we have verified these statements for $|s|\leq 9/2$ in the manner described in the proofs and \cite{Kalnins1989}, for instance, establish Proposition \ref{prop:eigenfunctions} for $|s|\leq 7/2$. Indeed, we expect that that there is an inductive structure would that allow one to obtain these statements for general $s\in\frac12\mathbb{Z}$, though it is not yet present in the literature.
\end{remark}

\begin{remark} \label{rmk:TS-constant-sign} In Sections \ref{sec:superradiance-s} and \ref{sec:scattering}, it will be of interest to understand the sign of $\mathfrak{C}_s$ for $\omega\in\mathbb{R}$. In contrast with the angular case, it is not clear that an integration by parts lemma can resolve this question. One hope would be to instead relate the angular and radial Teukolsky--Starobinsky constants, when we assume that the radial ODE arises specifically from a separation of variables, i.e.\ when we let $\lambda=\bm\uplambda_{ml}^{[s],\,(a\omega)}$ (see Proposition \ref{def:angular-ode}). In this case, in general, one can show that $\mathfrak{B}_s=(-1)^{2s}\mathfrak{C}_s|_{M=0}$ \cite{Kalnins1992}, but this is not enough to determine the sign of $\mathfrak{C}_s$. By explicit computation, however, we can also show that 
\begin{gather}
\begin{gathered}
\mathfrak{C}_{1/2}=-\mathfrak{B}_{1/2}\geq 0\,,\quad \mathfrak{C}_{1}=\mathfrak{B}_{1}\geq 0\,, \quad \mathfrak{C}_{3/2}=-\mathfrak{B}_{3/2}\geq 0\,,\\ \mathfrak{C}_2=\mathfrak{B}_2+144M^2\omega^2\geq 0\,.
\end{gathered}\label{eq:TS-constant-sign}
\end{gather}
Were there frequencies for which there exist nontrivial solutions of the radial ODE~\eqref{eq:radial-ODE} with boundary conditions as in \cref{def:mode-solution}, the inequalities in \eqref{eq:TS-constant-sign} would be strict, as we will show in \cref{lemma:algebraically-special}. However, it is worth pointing out that this is already the case for $\mathfrak{C}_2$: $\mathfrak{C}_2\geq 144M^2\omega^2$ could only vanish for $\omega=0$, but in this case, using the explicit form of the angular eigenvalues, we find that $\mathfrak{C}_2(\omega=0)>0$.

For higher spins, $\mathfrak{C}_s-(-1)^{2s}\mathfrak{B}_s$ will depend on $\bm\uplambda_{ml}^{[s],\,(a\omega)}$ and we cannot expect to prove $\mathfrak{C}_s-(-1)^{2s}\mathfrak{B}_s\geq 0$ without appealing to specific properties of the angular eigenvalues.
\end{remark}

\begin{remark} So as to apply for any $s\in\frac12\mathbb{Z}$, our proofs of Theorems~\ref{thm:mode-stability-subextremal}, \ref{thm:mode-stability-full} and \ref{thm:quantitative-intro} will not rely on Propositions~\ref{prop:eigenfunctions-angular} and \ref{prop:eigenfunctions}. 
These results require the existence only of the radial Teukolsky--Starobinsky constant; however, as we will always be dealing with radial functions with prescribed outgoing boundary conditions for the homogeneous radial ODE~\eqref{eq:radial-ODE} in these theorems, we will define $\mathfrak{C}_s$ using the Teukolsky--Starobinsky identities (see \cref{prop:TS-radial}) from the next section  to avoid appealing to \cref{prop:eigenfunctions} (see Remark~\ref{rmk:def-radial-TS-const}).

In \cref{sec:superradiance-s} and in our application to scattering theory (Corollary~\ref{cor:scattering-intro}), we will be working with solutions to the homogeneous radial ODE~\eqref{eq:radial-ODE} with more general asymptotics. In these two instances, we will have to define the Teukolsky--Starobinsky constants by Propositions \ref{prop:eigenfunctions-angular} and \ref{prop:eigenfunctions}. In \cref{sec:superradiance-s}, we moreover appeal to Remark~\ref{rmk:TS-constant-sign} for information on the the sign of $\mathfrak{C}_s$ when $\lambda$ arises from a separation of variables and $\omega$ is real. 
\end{remark}

\subsubsection{The radial Teukolsky--Starobinsky identities}

The radial Teukolsky--Starobinsky identities are differential identities, which relate solutions of the homogeneous radial ODE~\eqref{eq:radial-ODE} with spin $+s$ and spin $-s$, were introduced in \cite{Teukolsky1974, Starobinsky1974} and extended for general $s$ in \cite{Kalnins1989}. We will present a rigorous statement:
\begin{proposition}[Radial Teukolsky--Starobinsky identities] \label{prop:TS-radial} Fix $M>0$, $|a|\leq M$ and $s\in\frac12\mathbb{Z}_{\geq 0}$. Let $(\omega,m,\lambda)$ be an admissible frequency triple with respect to $s$ and $a$. Dropping most subscripts, let $R^{[\pm s]}$ be solutions to the homogeneous radial ODE~\eqref{eq:radial-ODE} of spin $\pm s$. Consider the representation \eqref{eq:R-general-asymptotics} 
\begin{equation*}
\begin{split}
\swei{R}{\pm s}&= \frac{1}{2Mr_+}\swei{a}{\pm s}_{\mc{H}^+}\swei{R}{\pm s}_{\mc{H}^+}+\frac{1}{2Mr_+}\swei{a}{\pm s}_{\mc{H}^-}\swei{R}{\pm s}_{\mc{H}^-}= \swei{a}{\pm s}_{\mc{I}^+}\swei{R}{\pm s}_{\mc{I}^+}+\swei{a}{\pm s}_{\mc{I}^-}\swei{R}{\pm s}_{\mc{I}^-}\,,
\end{split}
\end{equation*}
if additionally $\omega\neq m\upomega_+$ or, if the solution is outgoing,
\begin{equation*}
\begin{split}
\swei{R}{\pm s}&= \frac{1}{2Mr_+}\swei{a}{\pm s}_{\mc{H}^+}\swei{R}{\pm s}_{\mc{H}^+}= \swei{a}{\pm s}_{\mc{I}^+}\swei{R}{\pm s}_{\mc{I}^+}\,,
\end{split}
\end{equation*}
for some complex $\swei{a}{\pm s}_{\mc{H}^{+}}$, $\swei{a}{\pm s}_{\mc{H}^{-}}$, $\swei{a}{\pm s}_{\mc{I}^{+}}$ and $\swei{a}{\pm s}_{\mc{H}^{-}}$. Then we have
\begin{align}
\begin{split}
\Delta^s \lp(\mc{D}_0^{+}\rp)^{2s}\lp(\Delta^s \swei{R}{+s}\rp)&=\mathfrak{C}_s^{(1)}\swei{a}{+s}_{\mc{I}^{+}}\swei{R}{-s}_{\mc{I}^+}+C_s^{(7)}\swei{a}{+s}_{\mc{I}^{-}}\swei{R}{-s}_{\mc{I}^-}\\
&=\mathfrak{C}_s^{(4)}\frac{1}{2Mr_+}\swei{a}{+s}_{\mc{H}^{+}}\swei{R}{-s}_{\mc{H}^+}+{\mathfrak{C}_s^{(6)}}\frac{1}{2Mr_+}\swei{a}{+s}_{\mc{H}^{-}}\swei{R}{-s}_{\mc{H}^-}\,, \\
\Delta^s \lp(\mc{D}_0^{-}\rp)^{2s}\swei{R}{-s}&=\mathfrak{C}_s^{(3)}\swei{a}{-s}_{\mc{I}^{+}}\Delta^s\swei{R}{+s}_{\mc{I}^+}+{\mathfrak{C}_s^{(5)}}\swei{a}{+s}_{\mc{I}^{-}}\Delta^s\swei{R}{+s}_{\mc{I}^-}\\
&=\mathfrak{C}_s^{(2)}\frac{1}{2Mr_+}\swei{a}{-s}_{\mc{H}^{+}}\Delta^s\swei{R}{+s}_{\mc{H}^+}+\mathfrak{C}_s^{(8)}\frac{1}{2Mr_+}\swei{a}{-s}_{\mc{H}^{-}}\Delta^s\swei{R}{+s}_{\mc{H}^-}\,, 
\end{split}\label{eq:TS-radial}
\end{align}
for some $\mathfrak{C}_s^{(i)}\in\mathbb{C}$, $i=1,...,8$, which depend only on $s$ and $(\omega,m,\lambda)$ and such that $\mathfrak C_0^{(i)}=1$,  $\mathfrak{C}_s^{(1)}=(2i\omega)^{2s}$, ${\mathfrak{C}_s^{(5)}}=(-2i\omega)^{2s}$ and 
\begin{gather*}
\mathfrak{C}_s^{(2)}=
\begin{dcases}
\prod_{j=0}^{2s-1}\lp(2\xi+s-j\rp) &\text{if~}|a|<M \\
(-2\beta)^{2s} &\text{if~}|a|=M
\end{dcases}\,,\quad  \mathfrak{C}_s^{(6)}=
\begin{dcases}
\prod_{j=0}^{2s-1}\lp(-2\xi+s-j\rp) &\text{if~}|a|<M \\
(2\beta)^{2s} &\text{if~}|a|=M
\end{dcases} \,.
\end{gather*}
\end{proposition}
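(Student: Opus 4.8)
The plan is to deduce the identities \eqref{eq:TS-radial} by combining the \emph{differential} Teukolsky--Starobinsky identities at the level of the radial ODE with a leading-order asymptotic analysis at its singular points. Since the operators $\Delta^s(\mc{D}_0^{+})^{2s}\Delta^s$ and $\Delta^s(\mc{D}_0^{-})^{2s}$ are linear, and since $\swei{R}{\pm s}$ decomposes in the bases $\{\swei{R}{\pm s}_{\mc{H}^{\pm}}\}$ and $\{\swei{R}{\pm s}_{\mc{I}^{\pm}}\}$ of \cref{def:uhor-uout} (cf.\ \cref{lemma:R-general-asymptotics} and \cref{rmk:outgoing-radial-solution}), it suffices to show that each of these operators sends each basis solution to a scalar multiple of the corresponding opposite-spin basis solution, and to identify the four scalars that are claimed to be explicit.

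\emph{Step 1 (the image solves the opposite-spin ODE).} The key input, which belongs to the Teukolsky--Starobinsky framework, is the differential identity: if $\swei{R}{+s}$ solves the homogeneous radial ODE \eqref{eq:radial-ODE} with spin $+s$, then $\Delta^s(\mc{D}_0^{+})^{2s}(\Delta^s \swei{R}{+s})$ solves it with spin $-s$; and if $\swei{R}{-s}$ solves it with spin $-s$, then $(\mc{D}_0^{-})^{2s}\swei{R}{-s}$ solves it with spin $+s$. This is proved exactly as in the proof of \cref{prop:eigenfunctions}, by peeling off the factors $\mc{D}_n^{\pm}$ (see \eqref{eq:def-D-pm}) one at a time --- each single step relating solutions of adjacent spins up to a power of $\Delta$ --- and using \eqref{eq:radial-ODE} to eliminate second derivatives; see \cite[Sections 70, 81]{Chandrasekhar} for $|s|\leq 2$ and \cite{Kalnins1989} for $|s|\leq 7/2$. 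Consequently the left-hand sides of \eqref{eq:TS-radial}, applied to any one of the four basis solutions, are genuine opposite-spin solutions of \eqref{eq:radial-ODE}, hence admit a decomposition in the opposite-spin basis of \cref{def:uhor-uout}.

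\emph{Step 2 (asymptotic matching).} For each basis solution it remains to determine the coefficients of its opposite-spin decomposition, which we read off at a single singular point where the two opposite-spin basis solutions have distinct, non-resonant leading behaviors: at $r=\infty$ they are separated by the factors $e^{\pm i\omega r}$; at $r=r_+$ with $|a|<M$ they are separated by their Frobenius exponents, which are preserved because $\swei{R}{\pm s}_{\mc{H}^{\pm}}$ are purely of Frobenius type and the operators have at most a simple pole at $r_+$; at $r=M$ with $|a|=M$ they are separated by the factors $e^{\pm\beta(r-M)^{-1}}$. Tracking leading orders: as $r\to\infty$ one has $\mc{D}_0^{\pm}=\frac{d}{dr}\pm i\omega+O(r^{-1})$, so $\mc{D}_0^{+}$ contributes a factor $2i\omega$ per application on $e^{i\omega r}$ while $\mc{D}_0^{-}$ there cancels the leading term (and conversely on $e^{-i\omega r}$); combined with the $\Delta^s\sim r^{2s}$ prefactors and the asymptotics of \cref{def:uhor-uout}, this yields $\mathfrak{C}_s^{(1)}=(2i\omega)^{2s}$ and $\mathfrak{C}_s^{(5)}=(-2i\omega)^{2s}$. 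Near $r=r_+$ with $|a|<M$, using \eqref{eq:xi-upomega+} one finds $\mc{D}_0^{\pm}=\frac{d}{dr}\mp\frac{\xi}{r-r_+}+O(1)$; acting on the relevant Frobenius exponent of $\Delta^s\swei{R}{+s}_{\mc{H}^{\pm}}$ the operator either cancels the leading term or shifts the exponent by one while multiplying by a factor linear in $\xi$, and iterating $2s$ times gives $\mathfrak{C}_s^{(2)}=\prod_{j=0}^{2s-1}(2\xi+s-j)$ and $\mathfrak{C}_s^{(6)}=\prod_{j=0}^{2s-1}(-2\xi+s-j)$. Near $r=M$ with $|a|=M$, where $\Delta=(r-M)^2$, one finds $\mc{D}_0^{\pm}=\frac{d}{dr}\pm\frac{\beta}{(r-M)^2}+O((r-M)^{-1})$; acting on $(r-M)^{p}e^{-\beta(r-M)^{-1}}$ the operator $\mc{D}_0^{+}$ produces the leading factor $2\beta(r-M)^{-2}$ per application (whereas on $e^{+\beta(r-M)^{-1}}$ the $(r-M)^{-2}$ terms cancel), and conversely for $\mc{D}_0^{-}$; iterating $2s$ times, together with the $\Delta^s=(r-M)^{2s}$ prefactors and \cref{def:uhor-uout}, gives $\mathfrak{C}_s^{(6)}=(2\beta)^{2s}$ and $\mathfrak{C}_s^{(2)}=(-2\beta)^{2s}$. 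The remaining constants $\mathfrak{C}_s^{(3)},\mathfrak{C}_s^{(4)},C_s^{(7)},\mathfrak{C}_s^{(8)}$ are attached precisely to the directions in which the leading terms cancel; they are finite, but computing them requires a subleading analysis we do not need. For $s=0$ all operators are the identity, so $\mathfrak{C}_0^{(i)}=1$ at once.

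\emph{Main obstacle.} The genuinely delicate part is the asymptotic analysis at the degenerate horizon $r=M$ in the extremal case, where $r=M$ is an irregular singular point of rank one: one must verify that, in the non-cancelling directions, the pole $\beta(r-M)^{-2}$ of $\mc{D}_0^{\pm}$ dominates uniformly under the $2s$-fold iteration, so that subleading terms cannot spoil the identification of the leading exponential solution, and that --- in contrast with the subextremal case of Whiting's transformation --- the operator nonetheless maps cleanly into the opposite-spin basis of \cref{def:uhor-uout} despite the change in the character of the singularity. A secondary caveat is that the differential identities of Step 1 are recorded in the literature only for $|s|\leq 7/2$ (we have checked them for $|s|\leq 9/2$), so for larger spins the statement rests on the same expected inductive structure as \cref{prop:eigenfunctions}.
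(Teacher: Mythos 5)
Your proposal is correct and follows essentially the same route as the paper's own proof: the combination of the differential Teukolsky--Starobinsky identity with the commutation lemma to show the image solves the opposite-spin ODE, followed by asymptotic matching on the series expansions of the basis solutions at $r=\infty$, $r=r_+$ (subextremal) and $r=M$ (extremal), computing the constants explicitly only in the non-cancelling directions and leaving $\mathfrak{C}_s^{(3)},\mathfrak{C}_s^{(4)},\mathfrak{C}_s^{(7)},\mathfrak{C}_s^{(8)}$ implicit. The paper additionally records that the implicit constants depend only on $(\omega,m,\lambda,s)$ because the subleading series coefficients are uniquely determined by the leading one and the ODE parameters, a point worth making explicit since it is part of the statement, but your argument supplies it for the same reason.
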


\begin{remark}[Definition of the radial Teukolsky--Starobinsky constant] \label{rmk:def-radial-TS-const}
Assume that $R^{[\pm s]}$ are outgoing solutions of the homogeneous radial ODE, i.e. $\swei{a}{\pm}_{\mc{H}^-}=\swei{a}{\pm}_{\mc{I}^-}=0$ in \eqref{eq:R-general-asymptotics}. Then, \eqref{eq:TS-radial} provide an {\normalfont alternative definition} of the radial Teukolsky--Starobinsky constant from \cref{prop:eigenfunctions} as 
\begin{equation}
\mathfrak{C}_s:=\mathfrak{C}_s^{(1)}\mathfrak{C}_s^{(3)}=\mathfrak{C}_s^{(2)}\mathfrak{C}_s^{(4)}\,. \label{eq:redef-Cs}
\end{equation}

If $R^{[\pm s]}$ are not outgoing, we consider the radial Teukolsky--Starobinsky constant as defined in \cref{prop:eigenfunctions} and conclude 
\begin{align*}
\mathfrak{C}_s^{(1)}\mathfrak{C}_s^{(3)}=\mathfrak{C}_s^{(2)}\mathfrak{C}_s^{(4)}={\mathfrak{C}_s^{(5)}}{\mathfrak{C}_s^{(7)}}={\mathfrak{C}_s^{(6)}}{\mathfrak{C}_s^{(8)}}=\mathfrak{C}_s\,.
\end{align*}

Having defined the radial Teukolsky--Starobinsky constant, we can write \eqref{eq:TS-radial} in terms of this constant and the $\mathfrak{C}_s^{(i)}$ for which we have simple formulas in terms of the frequency parameters:
\begin{align}
\begin{split}
\Delta^s \lp(\mc{D}_0^{+}\rp)^{2s}\lp(\Delta^s \swei{R}{+s}\rp)&=\mathfrak{C}_s^{(1)}\swei{a}{+s}_{\mc{I}^{+}}\swei{R}{-s}_{\mc{I}^+}+\mathfrak{C}_s\lp({\mathfrak{C}_s^{(5)}}\rp)^{-1}\swei{a}{+s}_{\mc{I}^{-}}\swei{R}{-s}_{\mc{I}^-}\\
&=\mathfrak{C}_s\lp(\mathfrak{C}_s^{(2)}\rp)^{-1}\frac{1}{2Mr_+}\swei{a}{+s}_{\mc{H}^{+}}\swei{R}{-s}_{\mc{H}^+}+{\mathfrak{C}_s^{(6)}}\frac{1}{2Mr_+}\swei{a}{+s}_{\mc{H}^{-}}\swei{R}{-s}_{\mc{H}^-}\,, \\
\Delta^s \lp(\mc{D}_0^{-}\rp)^{2s}\swei{R}{-s}&=\mathfrak{C}_s\lp(\mathfrak{C}_s^{(1)}\rp)^{-1}\swei{a}{-s}_{\mc{I}^{+}}\Delta^s\swei{R}{+s}_{\mc{I}^+}+{\mathfrak{C}_s^{(5)}}\swei{a}{+s}_{\mc{I}^{-}}\Delta^s\swei{R}{+s}_{\mc{I}^-}\\
&=\mathfrak{C}_s^{(2)}\frac{1}{2Mr_+}\swei{a}{-s}_{\mc{H}^{+}}\Delta^s\swei{R}{+s}_{\mc{H}^+}+\mathfrak{C}_s\lp({\mathfrak{C}_s^{(6)}}\rp)^{-1}\frac{1}{2Mr_+}\swei{a}{-s}_{\mc{H}^{-}}\Delta^s\swei{R}{+s}_{\mc{H}^-}\,. 
\end{split}\label{eq:TS-radial-2}
\end{align}
\end{remark}

In order to prove Proposition~\ref{prop:TS-radial}, we recall the following two results from \cite{Kalnins1989}:
\begin{lemma} If $P^{[\pm s]}$, $s>0$, are as defined in \cref{prop:TS-radial},  they satisfy the homogeneous radial ODE
\begin{align}
\label{eq:radial-ODE-TS} \lp[\Delta \mc{D}_{1-s}^{\mp} \mc{D}_0^{\pm} \pm 2(2s-1)i\omega r\rp]P^{[\pm s]}(r)=\lp(\lambda^{[s]}+s+a^2\omega^2-2am\omega+|s|\rp)P^{[\pm s]}(r)\,.
\end{align}
\end{lemma}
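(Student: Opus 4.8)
The plan is to verify the two identities directly, using that $R^{[\pm s]}$ solve the homogeneous radial ODE~\eqref{eq:radial-ODE} (with spin $\pm s$ and the common frequency triple $(\omega,m,\lambda)$ of \cref{prop:TS-radial}): substitute $P^{[\pm s]}$ into the operator on the left of~\eqref{eq:radial-ODE-TS}, rewrite it using~\eqref{eq:radial-ODE}, and read off the eigenvalue. The one structural tool I would isolate first is the conjugation identity
\[
\mc{D}^{\pm}_n(\Delta^{p}h)=\Delta^{p}\,\mc{D}^{\pm}_{n+p}h\,,
\]
which is immediate from~\eqref{eq:def-D-pm} and $\tfrac{d}{dr}(\Delta^{p}h)=\Delta^{p}\big(\tfrac{dh}{dr}+\tfrac{2p(r-M)}{\Delta}h\big)$. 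For $P^{[-s]}=R^{[-s]}$ there is nothing to conjugate; for $P^{[+s]}=\Delta^{s}R^{[+s]}$, applying this identity twice (shifting the lower indices $0\mapsto s$ and $1-s\mapsto 1$) gives $\Delta\,\mc{D}^{-}_{1-s}\mc{D}^{+}_0 P^{[+s]}=\Delta^{s+1}\,\mc{D}^{-}_{1}\mc{D}^{+}_{s}R^{[+s]}$, so in both cases the left-hand side is an explicit second-order operator acting on $R^{[\pm s]}$ times a power of $\Delta$.

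Next I would expand these composed operators. Writing $K:=\omega(r^2+a^2)-am$ and using $\Delta'=2(r-M)$, $\Delta''=2$, a short computation shows that the second- and first-order parts of $\Delta\,\mc{D}^{+}_{1-s}\mc{D}^{-}_0$ (resp.\ of $\Delta^{s+1}\mc{D}^{-}_1\mc{D}^{+}_s$, after recombining an overall $\Delta^{s}$ with $R^{[+s]}$ into $P^{[+s]}$) are exactly $\Delta^{-\sigma}\tfrac{d}{dr}\big(\Delta^{\sigma+1}\tfrac{d}{dr}\big)$ with $\sigma=-s$ (resp.\ $\sigma=+s$); the terms quadratic and linear in $K$ collect into $\tfrac{K^2\mp2i\sigma(r-M)K}{\Delta}$; and there is a leftover $\pm2i\omega r$ produced by $\tfrac{i}{\Delta}(K'\Delta-K\Delta')$ with $K'=2\omega r$. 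Adding the correction $\pm2(2s-1)i\omega r$ promotes this leftover to $\pm4i\sigma\omega r=4is\omega r$, which is precisely the term occurring in~\eqref{eq:radial-ODE}. Substituting~\eqref{eq:radial-ODE} to replace the principal part by its zeroth-order terms then leaves only a constant, and one obtains
\[
\big[\Delta\,\mc{D}^{+}_{1-s}\mc{D}^{-}_0-2(2s-1)i\omega r\big]P^{[-s]}=(\lambda+a^2\omega^2-2am\omega)\,P^{[-s]}\,,
\]
\[
\big[\Delta\,\mc{D}^{-}_{1-s}\mc{D}^{+}_0+2(2s-1)i\omega r\big]P^{[+s]}=(\lambda+2s+a^2\omega^2-2am\omega)\,P^{[+s]}\,,
\]
which is~\eqref{eq:radial-ODE-TS} once the ``$s$'' in the stated eigenvalue $\lambda^{[s]}+s+a^2\omega^2-2am\omega+|s|$ is read as the signed spin of the solution at hand: for $P^{[+s]}$ it is $+s$ (giving $\lambda+2s$), and for $P^{[-s]}$ it is $-s$ (giving $\lambda$).

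I expect the only genuine obstacle to be the bookkeeping of the zeroth-order term, and in particular the extra $+2s$ appearing in the $+s$ case but not the $-s$ case. This discrepancy is exactly the ``$s+|s|$'' piece of the eigenvalue, and it originates entirely from the $\Delta^{s}$-rescaling defining $P^{[+s]}$ — equivalently, from the two index shifts $0\mapsto s$, $1-s\mapsto1$ forced by the conjugation identity above. Getting those signs right, together with the sign in front of $2(2s-1)i\omega r$, is the delicate point; everything else is mechanical. I would also note that no case distinction between $|a|<M$ and $|a|=M$ is needed: the operators~\eqref{eq:def-D-pm} and the ODE~\eqref{eq:radial-ODE} are written uniformly, and the potentially dangerous terms proportional to $\tfrac{M^2-a^2}{\Delta}$ in the composed operators drop out for the specific index choices arising here. (This derivation is carried out in~\cite{Kalnins1989}.)
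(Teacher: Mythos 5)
Your verification is correct, and it is worth noting that the paper itself offers no proof of this lemma: it is simply recalled from \cite{Kalnins1989} (the statement is prefaced by ``we recall the following two results from \cite{Kalnins1989}''), so your direct computation supplies an argument the paper deliberately omits. The structure you use — the conjugation identity $\mc{D}^{\pm}_n(\Delta^p h)=\Delta^p\mc{D}^{\pm}_{n+p}h$ to reduce the $P^{[+s]}=\Delta^sR^{[+s]}$ case to $\Delta^{s+1}\mc{D}^-_1\mc{D}^+_s R^{[+s]}$, followed by expansion and substitution of the radial ODE \eqref{eq:radial-ODE} — is exactly the ``incremental factoring'' computation the paper alludes to in its proofs of Propositions~\ref{prop:eigenfunctions-angular} and \ref{prop:eigenfunctions}, and it checks out: expanding $\Delta\mc{D}^+_{1-s}\mc{D}^-_0$ gives $\Delta R''+2(1-s)(r-M)R'+\big[\tfrac{K^2+2is(r-M)K}{\Delta}-2i\omega r\big]R$, and $\Delta^{s+1}\mc{D}^-_1\mc{D}^+_s$ gives $\Delta^s\big(\Delta R''+2(s+1)(r-M)R'+\big[\tfrac{K^2-2is(r-M)K}{\Delta}+2i\omega r+2s\big]R\big)$, so after adding $\mp 2(2s-1)i\omega r$ and invoking \eqref{eq:radial-ODE} for the respective signed spin one lands on eigenvalues $\lambda+a^2\omega^2-2am\omega$ and $\lambda+2s+a^2\omega^2-2am\omega$, precisely your two displayed identities. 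Your reading of the stated eigenvalue $\lambda^{[s]}+s+a^2\omega^2-2am\omega+|s|$ with ``$s$'' as the signed spin is the right one and is consistent with the paper's own usage of $\lambda^{[+s]}+s$ versus $\lambda^{[-s]}-s$ in Section~\ref{sec:superradiance-s}; flagging this is genuinely useful, since it is the one place a reader could go wrong. The only blemish is the intermediate shorthand ``$\tfrac{K^2\mp2i\sigma(r-M)K}{\Delta}$'', whose sign convention does not parse cleanly for both cases simultaneously (it should simply be $-2i\sigma$ with $\sigma=\pm s$ the signed spin), but your final formulas are unambiguous and correct, so this is cosmetic rather than a gap.
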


By induction on $s$, one can also show that:

\begin{lemma}\label{lemma:TS-commutation} The operators $\mc{D}^{\pm}_0$, defined in \eqref{eq:def-D-pm}, satisfy the commutation relation
\begin{align*}
\Delta^s \lp(\mc{D}^{\pm}_0\rp)^{2s} \lp[\Delta \mc{D}^{\pm}_{1-s} \mc{D}^{\mp}_0 \mp 2(2s-1)i\omega r\rp]
= \lp[\Delta \mc{D}^{\mp}_{1-s} \mc{D}^{\pm}_0 \pm 2(2s-1)i\omega r\rp]\Delta^s \lp(\mc{D}^{\mp}_0\rp)^{2s}\,.
\end{align*}
\end{lemma}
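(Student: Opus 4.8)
The plan is to prove the identity by induction on $s\in\frac12\mb{Z}_{\geq0}$, with the step incrementing $s$ by $\frac12$, treating only the upper-sign version; the lower-sign version then follows from the substitution $(\omega,m)\mapsto(-\omega,-m)$, which fixes $\Delta$ and $r$ but sends $\mc{D}^{\pm}_n\mapsto\mc{D}^{\mp}_n$ and flips the sign of every $i\omega r$ term, thereby interchanging the two displayed operators exactly. First I would record the algebraic structure of the $\mc{D}^{\pm}_n$. Writing $W:=\Delta^{-1}(\omega(r^2+a^2)-am)$ and noting $\frac{2n(r-M)}{\Delta}=n\frac{d}{dr}\log\Delta$, one has $\mc{D}^{\pm}_n=\Delta^{-n}\lp(\frac{d}{dr}\pm iW\rp)\Delta^{n}$. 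This yields three facts that drive the whole computation: the conjugation relation $\mc{D}^{\pm}_0\Delta^{k}=\Delta^{k}\mc{D}^{\pm}_{k}$; the index-difference relation $\mc{D}^{\pm}_n-\mc{D}^{\pm}_m=\frac{2(n-m)(r-M)}{\Delta}$, which is a multiplication operator; and the commutator $[\mc{D}^{\pm}_0,r]=1$, coming from $[\frac{d}{dr},r]=1$ together with the fact that $\pm iW$ is multiplication.

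Next I would verify the base case $s=0$. There $(\mc{D}^{\pm}_0)^{0}=\mathrm{Id}$ and $\Delta^{0}=1$, so the claimed identity collapses to the single second-order commutator
\begin{align*}
\Delta\lp(\mc{D}^{\pm}_1\mc{D}^{\mp}_0-\mc{D}^{\mp}_1\mc{D}^{\pm}_0\rp)=\mp 4i\omega r\,.
\end{align*}
Expanding both products with the preliminaries, the second-order and first-order (in $\frac{d}{dr}$) parts coincide, and the difference reduces to $\mp 2i\lp(W'+W\frac{\Delta'}{\Delta}\rp)=\mp 2i\frac{(W\Delta)'}{\Delta}=\mp 2i\frac{2\omega r}{\Delta}$, since $W\Delta=\omega(r^2+a^2)-am$; multiplying by $\Delta$ gives the claim. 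The case $s=\frac12$ is a convenient secondary check, as there $2(2s-1)=0$ and the $i\omega r$ terms drop out, isolating the pure first-order intertwining.

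For the inductive step I would use the factorization $\Delta^{s+1/2}(\mc{D}^{+}_0)^{2s+1}=\Delta^{1/2}\,[\Delta^{s}(\mc{D}^{+}_0)^{2s}]\,\mc{D}^{+}_0$, and likewise $\Delta^{s+1/2}(\mc{D}^{-}_0)^{2s+1}=\Delta^{1/2}\,[\Delta^{s}(\mc{D}^{-}_0)^{2s}]\,\mc{D}^{-}_0$, so that both sides of the level-$(s+\frac12)$ identity exhibit the level-$s$ ladder block sandwiched between elementary factors. Setting $A_s:=\Delta\mc{D}^{+}_{1-s}\mc{D}^{-}_0-2(2s-1)i\omega r$ and $B_s:=\Delta\mc{D}^{-}_{1-s}\mc{D}^{+}_0+2(2s-1)i\omega r$, the hypothesis reads $\Delta^{s}(\mc{D}^{+}_0)^{2s}A_s=B_s\,\Delta^{s}(\mc{D}^{-}_0)^{2s}$. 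I would then commute the outer single factor $\mc{D}^{+}_0$ inward across $\Delta$ (via $\mc{D}^{+}_0\Delta=\Delta\mc{D}^{+}_1$) and across the $i\omega r$ term (via $[\mc{D}^{+}_0,r]=1$), reducing the claim to re-expressing the shifted inner operator $\mc{D}^{+}_0\lp[\Delta\mc{D}^{+}_{1/2-s}\mc{D}^{-}_0-4si\omega r\rp]$ in terms of $A_s$; the hypothesis then replaces $\Delta^{s}(\mc{D}^{+}_0)^{2s}A_s$ by $B_s\,\Delta^{s}(\mc{D}^{-}_0)^{2s}$, and the index-difference relation is used to reorganize the remaining $\mc{D}$'s into the level-$(s+\frac12)$ pattern on the right.

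The main obstacle is the exact bookkeeping of the first-order terms $\mp 2(2s-1)i\omega r$. Passing a single $\mc{D}^{+}_0$ across $i\omega r$ produces, via $[\mc{D}^{+}_0,r]=1$, a pure constant proportional to $i\omega$, and it is precisely this constant, together with the index shift $1-s\mapsto\tfrac12-s$ forced by $\mc{D}^{+}_0\Delta=\Delta\mc{D}^{+}_1$, that must conspire to upgrade the coefficient $2(2s-1)$ to $2(2(s+\tfrac12)-1)=4s$. A naive one-factor intertwining of the inner operators fails by exactly such a leftover constant; the resolution is to carry out the reconciliation only after invoking the full hypothesis, so that the stray constants produced on the two sides cancel against each other rather than needing to vanish individually. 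Verifying this cancellation, and confirming that no spurious $r$-dependence survives, is the computational heart of the step; everything else is assembling the conjugation and index-difference identities of the first paragraph.
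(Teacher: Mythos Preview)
Your proposal is correct and takes exactly the same approach as the paper, which states only ``by induction on $s$'' without any further detail. Your algebraic preliminaries (the conjugation $\mc{D}^{\pm}_0\Delta^{k}=\Delta^{k}\mc{D}^{\pm}_{k}$, the index-difference relation, and $[\mc{D}^{\pm}_0,r]=1$), your verified base case, and your sketch of the inductive step in fact go well beyond what the paper provides.
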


Combining the two lemmas, we see that $\Delta^s\lp(\mc{D}_0^{\pm}\rp)^{2s}P^{[\pm s]}$ is a solution of the homogeneous radial ODE (\ref{eq:radial-ODE-TS}) for spin  $\mp s$. We are now ready to prove Proposition~\ref{prop:TS-radial}:

\begin{proof}[Proof of Proposition~\ref{prop:TS-radial}]
The statement is trivial for $s=0$, since the map referred is just the identity. For $s\neq 0$, we obtain the result by using the asymptotic representations (see \cite[Chapters 5 and 7]{Olver1973} and \cite[Appendix A]{Shlapentokh-Rothman2014}) of $\swei{R}{\pm s }_{\mc{H}^+}$, $\swei{R}{\pm s}_{\mc{I}^+}$, $\swei{R}{\pm s }_{\mc{H}^-}$ and  $\swei{R}{\pm s}_{\mc{I}^-}$, 
\begin{align}
\begin{split} \label{eq:model-solutions-series}
\swei{R}{\pm s }_{\mc{I}^+}&=e^{i\omega r} r^{2iM\omega\mp 2s-1}\lp[\sum_{k=0}^{2s} \swei{c}{\pm s}_{k} r^{-k}+O\lp(r^{-2s-1}\rp)\rp]\,,\\
\swei{R}{\pm s }_{\mc{I}^-}&=e^{-i\omega r} r^{-2iM\omega-1}\lp[\sum_{k=0}^{2s} \swei{c}{\pm s}_{k,2} r^{-k}+O\lp(r^{-2s-1}\rp)\rp]\,,\\
\swei{R}{\pm s }_{\mc{H}^+}&=\begin{cases}
\sum_{k=0}^\infty \swei{b}{\pm s}_{k} (r-r_+)^{\xi\mp s} &\text{if~} |a|<M\\
\sum_{k=0}^\infty \swei{b}{\pm s}_{k} e^{\beta(r-M)^{-1}}(r-M)^{2iM\omega-2s}&\text{if~} |a|=M
\end{cases}\,,\\
\swei{R}{\pm s }_{\mc{H}^-}&=\begin{cases}
\sum_{k=0}^\infty \swei{b}{\pm s}_{k,2} (r-r_+)^{-\xi} &\text{if~} |a|<M\\
\sum_{k=0}^\infty \swei{b}{\pm s}_{k,2} e^{-\beta(r-M)^{-1}}(r-M)^{-2iM\omega}&\text{if~} |a|=M
\end{cases}\,,
\end{split}
\end{align}
 on which we can act directly with the Teukolsky--Starobinsky operators. Without loss of generality, we assume $\swei{c}{+s}_0=\swei{c}{-s}_0$, $\swei{c}{+s}_{0,2}=\swei{c}{-s}_{0,2}$, $\swei{b}{+s}_0=\swei{b}{-s}_0$ and $\swei{b}{+s}_{0,2}=\swei{b}{-s}_{0,2}$

For computations at $r=r_+$, it will be use to recall from the definition of $\xi$ and $\beta$ in (\ref{eq:xi-upomega+}),
\begin{align}
\begin{split}
&\xi(r-r_-)+i\omega(r^2+a^2)-iam \\
&\qquad= \xi(r_+-r_-)+i\omega(r_+^2+a^2)-iam +\xi(r-r_+)+i\omega(r-r_+)(r+r_+)  \\ &\qquad= \xi(r-r_+)+i\omega(r-r_+)(r+r_+)\,,
\end{split}\label{eq:algebraic-cancellation-xi}\\
&\beta+i\omega(r-M)(r+M)=i\omega(r^2+M^2)-iam \,. \label{eq:algebraic-cancellation-beta}
\end{align}

\medskip
\noindent \textit{The cases $\swei{R}{+s}_{\mc{I}^+}$ and $\swei{R}{-s}_{\mc{H}^+}$}. In these cases, the constant arising from application of the Teukolsky--Starobinsky operators to these functions involves only the first term of the series.

We begin at $r\to\infty$, with positive spin. For any $s\geq 1/2$, we will have
\begin{align*}
&\mc{D}_0^{+}\lp(\Delta^s R^{[+s]}_{\mc{I}^+}\rp)=\lp(\frac{d}{dr}+i\omega\frac{r^2+a^2}{\Delta}-\frac{iam}{\Delta}\rp)\lp(\Delta^s R^{[+s]}\rp) \\
&\quad=e^{i\omega r}r^{2iM\omega-1}\lp[\sum_{k=0}^{2s} \swei{c}{+s}_k \lp(2i\omega-\frac{k+1-2iM\omega}{r}+\frac{2Mi\omega r}{\Delta}-\frac{iam}{\Delta}\rp)r^{-k}+O\lp(r^{-2s-1}\rp)\rp]\,.
\end{align*}
Repeating $2s-1$ times, we obtain
\begin{align}
&\Delta^{s}\lp(\mc{D}_0^{+}\rp)^{2s}\lp(\Delta^s R^{[+s]}_{\mc{I}^+}\rp) \nonumber\\
&\quad = e^{i\omega r}r^{2iM\omega+2s-1}\lp[\sum_{k=0}^{2s} \swei{c}{+s}_k \lp[(2i\omega)^{2s}+O\lp(r^{-1}\rp)\rp] r^{-k}+O\lp(r^{-2s-1}\rp)\rp]\,. \label{eq:intermediate-1}
\end{align}
By Lemma \ref{lemma:TS-commutation}, \eqref{eq:intermediate-1} satisfies the spin $-s$ radial ODE, hence, by Definition~\ref{def:uhor-uout}, we conclude
\begin{align*}
&\Delta^{s}\lp(\mc{D}_0^{+}\rp)^{2s}\lp(\Delta^s R^{[+s]}_{\mc{I}^+}\rp)=\mathfrak{C}_{s}^{(1)} R^{[-s]}_{\mc{I}^+}\,, \qquad \mathfrak{C}_s^{(1)}:=(2i\omega)^{2s}\,.
\end{align*}

We now turn to $r=r_+$, and consider solutions with spin $-s$. We start with $|a|<M$; for any $s\geq 1/2$, using \eqref{eq:algebraic-cancellation-xi}, we obtain
\begin{align*}
&\mc{D}_0^{-}\swei{R}{-s}_{\mc{H}^+} = \lp(\frac{d}{dr}-i\omega\frac{r^2+a^2}{\Delta}+\frac{iam}{\Delta}\rp)\swei{R}{-s}
\\
&\quad = \sum_{k=0}^\infty \swei{b}{-s}_k\lp[\lp(\xi+k+s\rp)-\frac{i\omega(r^2+a^2)}{r-r_-}+\frac{iam}{r-r_-}\rp](r-r_+)^{\xi+k+s-1} \\
&\quad =\sum_{k=0}^\infty \swei{b}{-s}_k\lp[\lp(2\xi+k+s\rp)-\frac{(\xi+i\omega(r+r_+))(r-r_+)}{r-r_-}\rp](r-r_+)^{\xi+k+s-1}\,,
\end{align*}
which, repeating $2s-1$ times, yields
\begin{align}
&\Delta^s\lp(\mc{D}_0^{-}\rp)^{2s}\swei{R}{-s}_{\mc{H}^+} =\sum_{k=0}^\infty \swei{b}{-s}_k\lp[\prod_{j=0}^{2s-1}\lp(2\xi+k+s-j\rp)+O(r-r_+)\rp]\Delta^{s}(r-r_+)^{\xi+k-s} \,.\label{eq:intermediate-2}
\end{align}
On the other hand, if $|a|=M$, then for any $s\geq 1/2$, using \eqref{eq:algebraic-cancellation-beta}, we obtain
\begin{align*}
\mc{D}_0^{-}R^{[-s]}_{\mc{H}^+} &= \lp(\frac{d}{dr}-i\omega\frac{r^2+M^2}{(r-M)^2}+\frac{iam}{(r-M)^2}\rp)R^{[-s]}
\\
& = \lp(\frac{d}{dr}-\frac{\beta}{(r-M)^2}-\frac{(r+M)i\omega}{(r-M)}\rp)e^{\beta(r-M)^{-1}}\sum_{k=0}^\infty \swei{b}{-s}_k(r-M)^{-2iM\omega+2s+k} \\
& = \lp[-2\beta+O(r-M)\rp]  e^{\beta(r-M)^{-1}}\sum_{k=0}^\infty \swei{b}{-s}_k(r-M)^{-2iM\omega+2s-2+k}\,,
\end{align*}
which, repeating $2s-1$ times, yields
\begin{align}
&\Delta^s\lp(\mc{D}_0^{-}\rp)^{2s}\swei{R}{-s}_{\mc{H}^+}=\sum_{k=0}^\infty \swei{b}{-s}_k\lp[\lp(-2\beta\rp)^{2s}+O(r-M)\rp]e^{\beta(r-M)^{-1}}(r-M)^{-2iM\omega+k}\,. \label{eq:intermediate-3}
\end{align}

By Lemma \ref{lemma:TS-commutation}, \eqref{eq:intermediate-2} and \eqref{eq:intermediate-3} satisfy the homogeneous spin $+s$ radial ODE \eqref{eq:radial-ODE-TS} for $|a|<M$ and $|a|=M$, respectively, hence, by Definition~\ref{def:uhor-uout}, we conclude
\begin{align*}
&\Delta^s\lp(\mc{D}_0^{-}\rp)^{2s}\swei{R}{-s}_{\mc{H}^+} =\mathfrak{C}_s^{(2)}\Delta^s\swei{R}{+s}_{\mc{H}^+}\,,
\end{align*}
where we have defined
\begin{align*}
\mathfrak{C}_s^{(2)}:=\prod_{j=0}^{2s-1}\lp(2\xi+k+s-j\rp) \text{~if~}|a|<M\,,\qquad \mathfrak{C}_s^{(2)}:=(-2\beta)^{2s} \text{~if~}|a|=M\,.
\end{align*}

\medskip
\noindent \textit{The cases $\swei{R}{+s}_{\mc{H}^+}$ and $\swei{R}{-s}_{\mc{I}^+}$}.  We note that, using \cref{prop:eigenfunctions}, the result for outgoing solutions would now follow. If we do not wish to use \cref{prop:eigenfunctions}, however, we need to take the same series-based approach as in the previous case. As we will see, for $\swei{R}{+s}_{\mc{H}^+}$ and $\swei{R}{-s}_{\mc{I}^+}$, the constant arising from application of the Teukolsky--Starobinsky operators to the series for these functions is more complicated to compute, as it involves up to $2s$ terms of their series expansion.

We begin with spin $-s$ at $r=\infty$. For any $s\geq 1/2$, we have 
\begin{align*}
\mc{D}_0^{-}R^{[-s]}_{\mc{I}^+} &= \lp(\frac{d}{dr}-i\omega\frac{r^2+a^2}{\Delta}+\frac{iam}{\Delta}\rp)\lp[e^{i\omega r} r^{2iM\omega+2s-1}\sum_{k=0}^{2s} \swei{c}{-s}_k r^{-k} +O\lp(r^{-2s-1}\rp)\rp]\\
&=e^{i\omega r} r^{2iM\omega+2s-1}\lp(2iM\omega\frac{a^2-2Mr}{r\Delta}+\frac{iam}{\Delta}+\frac{2s-1-k}{r}\rp)\sum_{k=0}^{2s} \swei{c}{-s}_k r^{-k}\,.\numberthis \label{eq:s-negative-hor}
\end{align*}
The key difficulty, which is not present in the case of spin $+s$, is the cancellation that occurs between the $\omega$ dependence of $\mc{D}_0^-$ and that of the leading order behavior of $R^{[-s]}_{\mc{I}^+}$. For instance, if $s=1/2$, then 
\begin{align*}
\mc{D}_0^{-}R^{[-1/2]}_{\mc{I}^+} &= e^{i\omega r} r^{2iM\omega}\lp[\sum_{k=0}^1\lp(2iM\omega\frac{a^2-2Mr}{r\Delta}+\frac{iam}{\Delta}-\frac{k}{r}\rp)\swei{c}{-1/2}_k r^{-k}+O(r^{-4})\rp]\\
&=\lp[(-4iM^2\omega+iam)\swei{c}{-s}_0-\swei{c}{-1/2}_1+O(r^{-1})\rp] e^{i\omega r} r^{2iM\omega-2}\,,
\end{align*}
so taking $\mc{D}_0^{-}$ lowers the leading order behavior by a factor of $r^{-2}$. Indeed, an application of $(\mc{D}_0^{-})^{2s}$ lowers the leading order polynomial decay of $R^{[-s]}_{\mc{I}^+}$ by a factor of $r^{-4s}$ and the coefficient of the leading order term on the right hand side depends on coefficients $\swei{c}{-s}_{0},...,\swei{c}{-s}_{2s}$ of the asymptotic expansion for $R^{[-s]}_{\mc{I}^+}$:
\begin{align}
\Delta^s\lp(\mc{D}_0^{-}\rp)^{2s}R^{[-s]}_{\mc{I}^+} &=\Delta^s e^{i\omega r} r^{2iM\omega-2s-1}\lp[f_+\lp(\swei{c}{-s}_{0},...,\swei{c}{-s}_{2s},\omega,m\rp)+O(r^{-1})\rp]   \label{eq:intermediate-4}
\end{align}
where $f_+$ is independent of $r$. By standard theory of asymptotic analysis \cite{Olver1973}, $\swei{c}{-s}_{1},...,\swei{c}{-s}_{2s}$ are uniquely determined by $\swei{c}{-s}_{0}$ and the ODE parameters $(\omega,m,\lambda,s)$, so we can define
\begin{align*}
\mathfrak C_{s}^{(3)}(\omega,m,\lambda,s) := f_+\lp(\swei{c}{-s}_{0},...,\swei{c}{-s}_{2s},\omega,m\rp)/\swei{c}{-s}_{0}\,.
\end{align*}
By Lemma \ref{lemma:TS-commutation}, \eqref{eq:intermediate-4} satisfies the spin $-s$ radial ODE. As before, by Definition~\ref{def:uhor-uout}, we conclude
\begin{align*}
\Delta^s\lp(\mc{D}_0^{-}\rp)^{2s}R^{[-s]}_{\mc{I}^+} &=\mathfrak{C}_s^{(3)}\Delta^s R^{[+s]}_{\mc{I}^+} \,.
\end{align*}

Now consider spin $+s$ near $r=r_+$, and recall \eqref{eq:algebraic-cancellation-xi} and \eqref{eq:algebraic-cancellation-beta}. If $|a|<M$, then 
\begin{align*}
\mc{D}_0^{+}\lp(\Delta^s\swei{R}{+s}_{\mc{H}^+}\rp) &= \lp(\frac{d}{dr}+i\omega\frac{r+r_+}{r-r_+}-\frac{\xi}{r-r_+}+\frac{\xi}{r-r_-}\rp)\sum_{k=0}^\infty\swei{b}{+s}_k (r-r_+)^{\xi+k}(r-r_-)^s
\\
&= \sum_{k=0}^\infty\swei{b}{+s}_k\lp[\frac{k}{r-r_+}+\frac{\xi+s +i\omega (r+r_+)}{r-r_-}\rp] (r-r_+)^{\xi+k}(r-r_-)^s\,,
\end{align*}
and if $|a|=M$,
\begin{align*}
\mc{D}_0^{+}\lp(\Delta^s\swei{R}{+s}_{\mc{H}^+}\rp) &= \lp(\frac{d}{dr}+\frac{\beta}{(r-M)^2}+\frac{2Mi\omega}{r-M}+i\omega\rp)\sum_{k=0}^\infty\swei{b}{+s}_k e^{\beta(r-M)^{-1}}(r-M)^{-2iM\omega+k}
\\
&= \sum_{k=0}^\infty\swei{b}{+s}_k\lp[\frac{k}{r-M}+i\omega\rp]  e^{\beta(r-M)^{-1}}(r-M)^{-2iM\omega+k}\,.
\end{align*}
As in the previous case, there is cancellation between the $\xi$ or $\beta$ dependence of the Teukolsky--Starobinsky operators and that of the leading order behavior of $\Delta^s\swei{R}{+s}_{\mc{H}^+}$. Proceeding as above, we note that application of $(\mc{D}_0^{-})^{2s}$ does not change the leading order polynomial decay of $R^{[+s]}_{\mc{H}^+}$ as $r\to r_+$, but makes the coefficient of the resulting leading order term depend on coefficients $\swei{c}{-s}_{0},...,\swei{c}{-s}_{2s}$ of the asymptotic series for $R^{[+s]}_{\mc{H}^+}$. We again appeal to theory of asymptotic analysis \cite{Olver1973} to define $\mathfrak C_{s}^{(4)}(\omega,m,\lambda,s)$ such that
\begin{align*}
\Delta^s\lp(\mc{D}_0^{+}\rp)^{2s}\lp(\Delta^s R^{[+s]}_{\mc{H}^+}\rp) &=\mathfrak{C}_s^{(4)} R^{[-s]}_{\mc{H}^+} \,.
\end{align*}

\medskip
\noindent \textit{The cases $\swei{R}{\pm s}_{\mc{H}^-}$ and $\swei{R}{\pm s}_{\mc{I}^-}$}. Finally by direct inspection of \eqref{eq:model-solutions-series}, for $\swei{R}{\pm s}_{\mc{H}^-}$ and $\swei{R}{\pm s}_{\mc{I}^-}$, we can easily obtain the Teukolsky--Starobinsky identities by analogy with $\swei{R}{\mp s}_{\mc{H}^+}$ and $\swei{R}{\mp s}_{\mc{I}^+}$, respectively.
\end{proof}

\subsubsection{Algebraically special frequencies}

In this section, we consider frequencies for which $\mathfrak{C}_s=0$.

\begin{definition} \label{def:algebraically-special} Fix $M>0$, $|a|\leq M$ and $s\in\mathbb{Z}$. We say an admissible frequency triple $(\omega,m,\lambda)$ is algebraically special if the radial Teukolsky--Starobinsky constant, $\mathfrak{C}_s(\omega,m,\lambda)$, vanishes.
\end{definition}

\begin{lemma} \label{lemma:algebraically-special}
Fix $M>0$, $|a|\leq M$ and $s\in\mathbb{Z}$. Let $(\omega,m,\lambda)$ be an admissible frequency triple which is algebraically special. Let $\swei{R}{\pm s}$ be solutions to  the homogeneous radial ODE \eqref{eq:radial-ODE} of spin $\pm s$.
\begin{enumerate}
\item If $\swei{R}{+s}=\frac{1}{2Mr_+}\swei{a}{+s}_{\mc{H}^+}\swei{R}{+s}_{\mc{H}^+}$, then we have
\begin{align*}
\swei{R}{+s}=\swei{a}{+s}_{\mc{I}^-}\swei{R}{+s}_{\mc{I}^-}=\frac{1}{2Mr_+}\swei{a}{+s}_{\mc{H}^+}\swei{R}{+s}_{\mc{H}^+}\,.
\end{align*}
\item If $\swei{R}{-s}=\swei{a}{-s}_{\mc{I}^+}\swei{R}{-s}_{\mc{I}^+}$, then $\swei{R}{-s}$ is trivial or, if $\omega\neq m\upomega_+$, we have
\begin{align*}
\swei{R}{-s}=\frac{1}{2Mr_+}\swei{a}{-s}_{\mc{H}^-}\swei{R}{-s}_{\mc{H}^-}=\swei{a}{-s}_{\mc{I}^+}\swei{R}{-s}_{\mc{I}^+}\,.
\end{align*}
\end{enumerate}
\end{lemma}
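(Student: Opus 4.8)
The plan is to exploit the radial Teukolsky--Starobinsky identities of Proposition~\ref{prop:TS-radial}, specialized to the algebraically special case $\mathfrak{C}_s = 0$, together with the boundary conditions defining the solutions $\swei{R}{\pm s}_{\mc{H}^\pm}$ and $\swei{R}{\pm s}_{\mc{I}^\pm}$ in Definition~\ref{def:uhor-uout}. The essential observation is that when $\mathfrak{C}_s = \mathfrak{C}_s^{(1)}\mathfrak{C}_s^{(3)} = \mathfrak{C}_s^{(2)}\mathfrak{C}_s^{(4)} = 0$, one of each pair of factors must vanish; since $\mathfrak{C}_s^{(1)} = (2i\omega)^{2s}$ and $\mathfrak{C}_s^{(2)} = \prod_{j=0}^{2s-1}(2\xi + s - j)$ (resp.\ $(-2\beta)^{2s}$ when $|a|=M$), one can check that for admissible frequency triples these do not vanish (for $\omega \neq 0$, $\mathfrak{C}_s^{(1)} \neq 0$; and $\mathfrak{C}_s^{(2)}$ vanishes only at special values of $\xi$ or $\beta$, which can be excluded or handled separately), so it must be $\mathfrak{C}_s^{(3)} = 0$ and $\mathfrak{C}_s^{(4)} = 0$. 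Actually the cleanest route is: whatever the splitting, the point is that $\mathfrak{C}_s = 0$ forces the \emph{image} of the Teukolsky--Starobinsky map applied to a purely outgoing solution to be purely outgoing \emph{and} to have vanishing outgoing coefficient, hence to be trivial — unless the input solution also had a nonzero ingoing component to begin with.

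For item 1: assume $\swei{R}{+s} = \frac{1}{2Mr_+}\swei{a}{+s}_{\mc{H}^+}\swei{R}{+s}_{\mc{H}^+}$, i.e.\ $\swei{a}{+s}_{\mc{H}^-} = 0$. Write also $\swei{R}{+s} = \swei{a}{+s}_{\mc{I}^+}\swei{R}{+s}_{\mc{I}^+} + \swei{a}{+s}_{\mc{I}^-}\swei{R}{+s}_{\mc{I}^-}$. Apply the first identity in \eqref{eq:TS-radial}: the left-hand side $\Delta^s(\mc{D}_0^+)^{2s}(\Delta^s \swei{R}{+s})$ equals, on the one hand, $\mathfrak{C}_s^{(1)}\swei{a}{+s}_{\mc{I}^+}\swei{R}{-s}_{\mc{I}^+} + \mathfrak{C}_s^{(7)}\swei{a}{+s}_{\mc{I}^-}\swei{R}{-s}_{\mc{I}^-}$, and on the other, since $\swei{a}{+s}_{\mc{H}^-}=0$, equals $\mathfrak{C}_s^{(4)}\frac{1}{2Mr_+}\swei{a}{+s}_{\mc{H}^+}\swei{R}{-s}_{\mc{H}^+}$. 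The right-hand expression is a purely outgoing (at the horizon) spin $-s$ solution, so it has no $\swei{R}{-s}_{\mc{H}^-}$ component; comparing with the $\mc{I}^\pm$ representation and using that $\swei{R}{-s}_{\mc{H}^-} \sim \swei{R}{-s}_{\mc{I}^-}$ are linked (a purely outgoing-at-horizon solution which is also a multiple of $\swei{R}{-s}_{\mc{I}^+}$ alone would contradict linear independence unless $\mathfrak{C}_s=0$)… more directly: $\mathfrak{C}_s^{(4)} = 0$ because $\mathfrak{C}_s^{(2)}\mathfrak{C}_s^{(4)} = \mathfrak{C}_s = 0$ with $\mathfrak{C}_s^{(2)}\neq 0$. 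Hence $\Delta^s(\mc{D}_0^+)^{2s}(\Delta^s\swei{R}{+s}) = 0$ identically, so $\mathfrak{C}_s^{(1)}\swei{a}{+s}_{\mc{I}^+}\swei{R}{-s}_{\mc{I}^+} + \mathfrak{C}_s^{(7)}\swei{a}{+s}_{\mc{I}^-}\swei{R}{-s}_{\mc{I}^-} = 0$; by linear independence of $\swei{R}{-s}_{\mc{I}^\pm}$ and since $\mathfrak{C}_s^{(1)} = (2i\omega)^{2s} \neq 0$, we get $\swei{a}{+s}_{\mc{I}^+} = 0$, which is the claim. (One must separately note $\mathfrak{C}_s^{(7)}$: if it also vanished the argument still gives $\swei{a}{+s}_{\mc{I}^+}=0$ from the first term alone.)

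For item 2: symmetrically, assume $\swei{R}{-s} = \swei{a}{-s}_{\mc{I}^+}\swei{R}{-s}_{\mc{I}^+}$, i.e.\ $\swei{a}{-s}_{\mc{I}^-} = 0$, and use the second identity in \eqref{eq:TS-radial}. The left-hand side $\Delta^s(\mc{D}_0^-)^{2s}\swei{R}{-s}$ equals $\mathfrak{C}_s^{(3)}\swei{a}{-s}_{\mc{I}^+}\Delta^s\swei{R}{+s}_{\mc{I}^+}$ from the $\mc{I}$-representation, and equals $\mathfrak{C}_s^{(2)}\frac{1}{2Mr_+}\swei{a}{-s}_{\mc{H}^+}\Delta^s\swei{R}{+s}_{\mc{H}^+} + \mathfrak{C}_s^{(8)}\frac{1}{2Mr_+}\swei{a}{-s}_{\mc{H}^-}\Delta^s\swei{R}{+s}_{\mc{H}^-}$ from the $\mc{H}$-representation. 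Since $\mathfrak{C}_s = \mathfrak{C}_s^{(1)}\mathfrak{C}_s^{(3)} = 0$ with $\mathfrak{C}_s^{(1)}=(2i\omega)^{2s}\neq 0$ (using $\omega\neq 0$, which holds for admissible triples), we get $\mathfrak{C}_s^{(3)} = 0$, so the left-hand side vanishes. Thus $\mathfrak{C}_s^{(2)}\swei{a}{-s}_{\mc{H}^+}\Delta^s\swei{R}{+s}_{\mc{H}^+} + \mathfrak{C}_s^{(8)}\swei{a}{-s}_{\mc{H}^-}\Delta^s\swei{R}{+s}_{\mc{H}^-} = 0$; by linear independence of $\swei{R}{+s}_{\mc{H}^\pm}$ (valid when $\omega\neq m\upomega_+$, which is why that hypothesis appears), and since $\mathfrak{C}_s^{(2)} \neq 0$, we conclude $\swei{a}{-s}_{\mc{H}^+} = 0$, i.e.\ $\swei{R}{-s}$ has no outgoing-at-horizon component, so $\swei{R}{-s} = \frac{1}{2Mr_+}\swei{a}{-s}_{\mc{H}^-}\swei{R}{-s}_{\mc{H}^-}$. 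When $\omega = m\upomega_+$ the solutions $\swei{R}{-s}_{\mc{H}^\pm}$ coincide or are not independently defined, and one argues instead that $\swei{R}{-s}$ being simultaneously proportional to $\swei{R}{-s}_{\mc{I}^+}$ and annihilated under the relevant structure forces triviality.

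\textbf{Main obstacle.} The delicate point is not the algebra but keeping careful track of which $\mathfrak{C}_s^{(i)}$ are guaranteed nonzero for admissible triples. In particular one needs $\mathfrak{C}_s^{(2)} = \prod_{j=0}^{2s-1}(2\xi + s - j) \neq 0$ (subextremal) or $(-2\beta)^{2s}\neq 0$ (extremal): the extremal factor $\beta = 2iM^2(\omega - m\upomega_+)$ is nonzero precisely because admissibility with respect to $a=M$ excludes $\omega = m\upomega_+$, while in the subextremal case one must verify $2\xi + s - j \neq 0$, i.e.\ that $\xi$ avoids the half-integers $-s/2, \dots$; for $\omega$ real this is automatic since $\xi$ is then purely imaginary and nonzero, and for $\Im\omega > 0$ one checks the real part has the wrong sign. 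Assembling these case distinctions cleanly, and correctly invoking the linear independence of $\swei{R}{\pm s}_{\mc{H}^\pm}$ only where $\omega\neq m\upomega_+$, is where the care is needed; the rest follows mechanically from \eqref{eq:TS-radial} and Definition~\ref{def:uhor-uout}.
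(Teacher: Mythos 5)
Your argument is correct and is essentially the paper's own proof, which consists of exactly this ``direct inspection'' of the Teukolsky--Starobinsky identities in the form \eqref{eq:TS-radial-2}: there the coefficients of the $\swei{a}{+s}_{\mc{H}^+}\swei{R}{-s}_{\mc{H}^+}$ and $\swei{a}{-s}_{\mc{I}^+}\Delta^s\swei{R}{+s}_{\mc{I}^+}$ terms already carry an explicit factor of $\mathfrak{C}_s$ (equivalently, your deductions $\mathfrak{C}_s^{(4)}=\mathfrak{C}_s^{(3)}=0$ from $\mathfrak{C}_s=0$ with $\mathfrak{C}_s^{(2)},\mathfrak{C}_s^{(1)}\neq 0$), after which the conclusion follows from $\mathfrak{C}_s^{(1)}=(2i\omega)^{2s}\neq 0$, linear independence, and, for item 2, the hypothesis $\omega\neq m\upomega_+$, exactly as you write. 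Your additional bookkeeping of which $\mathfrak{C}_s^{(i)}$ are guaranteed nonzero is implicit in the paper's formula \eqref{eq:TS-radial-2}, which already divides by $\mathfrak{C}_s^{(2)}$ and $\mathfrak{C}_s^{(5)}$.
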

\begin{proof}
The result follows by direct inspection of \eqref{eq:TS-radial-2}.
\end{proof}

A simple corollary of Lemma~\ref{lemma:algebraically-special} is that, for algebraically special frequencies, there are no nontrivial outgoing solutions to the radial ODE~\eqref{eq:radial-ODE}, hence

\begin{lemma}\label{lemma:TS-radial-boundary-conditions}
The Teukolsky--Starobinsky identities \eqref{eq:TS-radial} map nontrivial outgoing solutions of the radial ODE \eqref{eq:radial-ODE} of spin $+s$ to nontrivial outgoing solutions of the radial ODE \eqref{eq:radial-ODE} of spin $-s$ and vice-versa (see Definition~\ref{def:outgoing-radial-solution}).
\end{lemma}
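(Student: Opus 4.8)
The plan is to reduce everything to the explicit representations furnished by Proposition~\ref{prop:TS-radial}, using two auxiliary facts: that a \emph{nontrivial} outgoing solution forces the relevant connection coefficients to be nonzero, and that, by the corollary of Lemma~\ref{lemma:algebraically-special} recorded just above, no nontrivial outgoing solution of either spin exists when the frequency triple is algebraically special.

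First I would dispose of the algebraically special case: if $\mathfrak{C}_s(\omega,m,\lambda)=0$, there are no nontrivial outgoing solutions of either spin and the assertion is vacuous. Hence I may assume $\mathfrak{C}_s\neq 0$; then, by the relations $\mathfrak{C}_s=\mathfrak{C}_s^{(1)}\mathfrak{C}_s^{(3)}=\mathfrak{C}_s^{(2)}\mathfrak{C}_s^{(4)}$ of Remark~\ref{rmk:def-radial-TS-const}, all four of $\mathfrak{C}_s^{(1)},\mathfrak{C}_s^{(2)},\mathfrak{C}_s^{(3)},\mathfrak{C}_s^{(4)}$ are nonzero. (For $\mathfrak{C}_s^{(1)}=(2i\omega)^{2s}$ this is clear anyway, since $\omega\neq 0$ for admissible frequencies.)

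For the $+s\to -s$ direction, let $R^{[+s]}$ be a nontrivial outgoing solution of spin $+s$; by Remark~\ref{rmk:outgoing-radial-solution} it can be written as $\swei{R}{+s}=\tfrac{1}{2Mr_+}\swei{a}{+s}_{\mc{H}^+}\swei{R}{+s}_{\mc{H}^+}=\swei{a}{+s}_{\mc{I}^+}\swei{R}{+s}_{\mc{I}^+}$, and nontriviality forces $\swei{a}{+s}_{\mc{H}^+}\neq 0$, $\swei{a}{+s}_{\mc{I}^+}\neq 0$. The relevant line of \eqref{eq:TS-radial} in the outgoing case then reads $\Delta^s(\mc{D}_0^{+})^{2s}(\Delta^s\swei{R}{+s})=\mathfrak{C}_s^{(1)}\swei{a}{+s}_{\mc{I}^+}\swei{R}{-s}_{\mc{I}^+}=\mathfrak{C}_s^{(4)}\tfrac{1}{2Mr_+}\swei{a}{+s}_{\mc{H}^+}\swei{R}{-s}_{\mc{H}^+}$. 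By \eqref{eq:radial-ODE-TS} together with the commutation relation of Lemma~\ref{lemma:TS-commutation}, the left-hand side is a classical solution of the homogeneous spin $-s$ radial ODE; by the two expressions on the right, whose coefficients $\mathfrak{C}_s^{(1)}\swei{a}{+s}_{\mc{I}^+}$ and $\mathfrak{C}_s^{(4)}\swei{a}{+s}_{\mc{H}^+}$ are nonzero, it is nontrivial and has precisely the asymptotics at $r=\infty$ and at the horizon ($r=r_+$ if $|a|<M$, $r=M$ if $|a|=M$) that define an outgoing solution of spin $-s$ in Definition~\ref{def:outgoing-radial-solution}. The $-s\to +s$ direction is entirely analogous: for a nontrivial outgoing $\swei{R}{-s}=\tfrac{1}{2Mr_+}\swei{a}{-s}_{\mc{H}^+}\swei{R}{-s}_{\mc{H}^+}=\swei{a}{-s}_{\mc{I}^+}\swei{R}{-s}_{\mc{I}^+}$ with nonzero coefficients, the corresponding line of \eqref{eq:TS-radial} gives $\Delta^s(\mc{D}_0^{-})^{2s}\swei{R}{-s}=\mathfrak{C}_s^{(3)}\swei{a}{-s}_{\mc{I}^+}\Delta^s\swei{R}{+s}_{\mc{I}^+}=\mathfrak{C}_s^{(2)}\tfrac{1}{2Mr_+}\swei{a}{-s}_{\mc{H}^+}\Delta^s\swei{R}{+s}_{\mc{H}^+}$, so that $(\mc{D}_0^{-})^{2s}\swei{R}{-s}=\Delta^{-s}\Delta^s(\mc{D}_0^{-})^{2s}\swei{R}{-s}$ (recall $P^{[+s]}=\Delta^s R^{[+s]}$) is a nontrivial outgoing solution of spin $+s$.

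The one genuinely substantive point, and the step I expect to require the most care, is the nonvanishing of the connection constants $\mathfrak{C}_s^{(i)}$: the formula $\mathfrak{C}_s^{(1)}=(2i\omega)^{2s}$ settles this for $i=1$, but $\mathfrak{C}_s^{(3)}$ and $\mathfrak{C}_s^{(4)}$ are defined only implicitly in Proposition~\ref{prop:TS-radial}, so their nonvanishing has to be imported from $\mathfrak{C}_s\neq 0$ through \eqref{eq:redef-Cs}. This is exactly why the reduction to the non-algebraically-special case, via the corollary of Lemma~\ref{lemma:algebraically-special}, is essential and not merely cosmetic: it is what rules out the degenerate possibility that the Teukolsky--Starobinsky operator would annihilate an otherwise nontrivial outgoing solution.
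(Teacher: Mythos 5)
Your proposal is correct and follows essentially the same route as the paper: the paper's (one-line) proof is precisely the observation that Lemma~\ref{lemma:algebraically-special} rules out nontrivial outgoing solutions at algebraically special frequencies, so that $\mathfrak{C}_s\neq 0$ and hence, via $\mathfrak{C}_s=\mathfrak{C}_s^{(1)}\mathfrak{C}_s^{(3)}=\mathfrak{C}_s^{(2)}\mathfrak{C}_s^{(4)}$, all connection constants in \eqref{eq:TS-radial} are nonzero, making the image of a nontrivial outgoing solution a nonzero multiple of the outgoing basis solution at each end. You have simply written out the details the paper leaves implicit, and you correctly identified the nonvanishing of $\mathfrak{C}_s^{(3)}$ and $\mathfrak{C}_s^{(4)}$ as the point where the reduction to non-algebraically-special frequencies is genuinely needed.
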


\subsection{Energy and superradiance for the radial ODE}
\label{sec:superradiance}

In this section, we will present currents at the level of the radial ODE~\eqref{eq:radial-ODE} which give a notion of energy for real $\omega$ (see Propositions~\ref{prop:T-identity} and \ref{prop:wronskian-identity}). For generality, we will work with solutions of the homogeneous radial ODE~\eqref{eq:u-Schrodinger} with the general asymptotics \eqref{eq:R-general-asymptotics}
\begin{equation}
\begin{split}
\swei{u}{s}&= \swei{a}{s}_{\mc{H}^+}\swei{u}{s}_{\mc{H}^+}+\swei{a}{s}_{\mc{H}^-}\swei{u}{s}_{\mc{H}^-}= \swei{a}{s}_{\mc{I}^+}\swei{u}{s}_{\mc{I}^+}+\swei{a}{s}_{\mc{I}^-}\swei{u}{-s}_{\mc{I}^-}\,,
\end{split}\label{eq:u-general-asymptotics}
\end{equation}
for some complex $\swei{a}{s}_{\mc{H}^\pm}$ and $\swei{a}{s}_{\mc{I}^\pm}$.

For outgoing solutions (compatible with Definition~\ref{def:outgoing-radial-solution}) of the homogeneous radial ODE~\eqref{eq:u-Schrodinger}, we will moreover show that the main obstacle in proving mode stability via the argument for $a=0$ is the presence of a region, including $\mc{H}^\pm$, where $\p_t$ becomes spacelike, leading to superradiance. The integral transformations we will eventually present in \cref{sec:integral-transformation} are designed to specifically circumvent this issue, making \textit{all} frequencies non-superradiant.

\subsubsection{The case \texorpdfstring{$s=0$}{of the wave equation}}
\label{sec:superradiance-wave}

Define the frequency-localized current generated by the Killing vector field $T$ as
\begin{equation}
Q^T[u]:=\Im(u'\overline{\omega u})\,, \quad -\lp(Q^T[u]\rp)' = \Im(\omega)\lp[|u'|^2+|\omega|^2 |u|^2\rp] +\Im (\overline{V}\omega) |u|^2\,, \label{eq:T-current}
\end{equation}
where the last equality is obtained using \eqref{eq:u-Schrodinger}. When $\omega\in\mathbb{R}$, \eqref{eq:T-current} yields the following conservation law
\begin{proposition}[Energy identity for the homogeneous radial ODE, $s=0$] \label{prop:T-identity}
Fix $M>0$ and $|a|\leq M$. Let $(\omega,m,\lambda)$ be an admissible frequency triple with respect to $a$ such that $\omega$ is real. Let $u$ be a solution to \eqref{eq:u-Schrodinger} with $s=0$ given by \eqref{eq:u-general-asymptotics}. Then, dropping the superscripts,
\begin{equation}
\omega^2\lp|a_{\mc{I}^+}\rp|^2+\omega(\omega-m\upomega_+)\lp|a_{\mc{H}^+}\rp|^2=\omega^2\lp|a_{\mc{I}^-}\rp|^2+\omega(\omega-m\upomega_+)\lp|a_{\mc{H}^-}\rp|^2 \label{eq:energy-identity-wave}
\end{equation}
holds for a general $u$ if $\omega\neq m\upomega_+$. The same identity with vanishing right hand side holds for an outgoing, in the sense of Definition~\ref{def:outgoing-radial-solution}, solution $u$.  
\end{proposition}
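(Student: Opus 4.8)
The plan is to run the standard energy-current argument (the one that proves the statement trivially when $a=0$) and to locate where superradiance intervenes. The starting point is that, for $s=0$ and $\omega\in\mathbb{R}$ (hence $\lambda\in\mathbb{R}$, as for an admissible triple), the potential $V$ in \eqref{eq:Teukolsky-potential} is real: the terms in \eqref{eq:Teukolsky-potential} carrying $s$, $s^2$ and $2is$ all drop when $s=0$, and the remaining quantities are real. Therefore, in the conservation law \eqref{eq:T-current} applied to a solution $u$ of the \emph{homogeneous} \eqref{eq:u-Schrodinger} ($H\equiv0$), both terms on the right-hand side vanish, so $Q^T[u]=\Im(\omega u'\overline u)$ is independent of $r^*$. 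The identity \eqref{eq:energy-identity-wave} will then follow by evaluating this constant once as $r^*\to+\infty$ and once as $r^*\to-\infty$.

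To carry out the two endpoint evaluations cleanly, I would use that, since $\omega$ and $V$ are real, $\overline u$ is also a solution of the homogeneous \eqref{eq:u-Schrodinger}, so that for any pair of solutions $f,g$ the Wronskian $W[f,g]:=f'g-fg'$ is constant in $r^*$, and $Q^T[u]=\tfrac{\omega}{2i}W[u,\overline u]$. Inserting the two representations in \eqref{eq:u-general-asymptotics} and expanding by bilinearity of $W$ reduces the problem to the constant Wronskians $W[\swei{u}{0}_{\mc{I}^\pm},\overline{\swei{u}{0}_{\mc{I}^\pm}}]$ and $W[\swei{u}{0}_{\mc{H}^\pm},\overline{\swei{u}{0}_{\mc{H}^\pm}}]$ and the mixed ones $W[\swei{u}{0}_{\mc{I}^+},\overline{\swei{u}{0}_{\mc{I}^-}}]$, $W[\swei{u}{0}_{\mc{H}^+},\overline{\swei{u}{0}_{\mc{H}^-}}]$. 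The mixed Wronskians vanish: for real $\omega$, $\overline{\swei{u}{0}_{\mc{I}^-}}$ has the same leading behaviour at $r^*=+\infty$ as $\swei{u}{0}_{\mc{I}^+}$, and (using that $\xi$ and $\beta$ are purely imaginary when $\omega\in\mathbb{R}$) $\overline{\swei{u}{0}_{\mc{H}^-}}$ has the same leading behaviour at $r^*=-\infty$ as $\swei{u}{0}_{\mc{H}^+}$; in each case the two solutions are proportional, so their Wronskian is zero. The diagonal Wronskians are read off from the normalisations of Definition~\ref{def:uhor-uout}: from $\swei{u}{0}_{\mc{I}^\pm}\sim e^{\pm i\omega r^*}$ one gets $W[\swei{u}{0}_{\mc{I}^\pm},\overline{\swei{u}{0}_{\mc{I}^\pm}}]=\pm2i\omega$, and from $\swei{u}{0}_{\mc{H}^\pm}\sim e^{\mp i(\omega-m\upomega_+)r^*}$ one gets $W[\swei{u}{0}_{\mc{H}^\pm},\overline{\swei{u}{0}_{\mc{H}^\pm}}]=\mp2i(\omega-m\upomega_+)$. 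Hence $Q^T[u]=\omega^2(|a_{\mc{I}^+}|^2-|a_{\mc{I}^-}|^2)$ as $r^*\to+\infty$ and $Q^T[u]=-\omega(\omega-m\upomega_+)(|a_{\mc{H}^+}|^2-|a_{\mc{H}^-}|^2)$ as $r^*\to-\infty$; equating and rearranging gives \eqref{eq:energy-identity-wave}. For an outgoing $u$ one has $a_{\mc{H}^-}=a_{\mc{I}^-}=0$ by Remark~\ref{rmk:outgoing-radial-solution}, so only $\swei{u}{0}_{\mc{H}^+}$ and $\swei{u}{0}_{\mc{I}^+}$ contribute and the same argument yields the version with vanishing right-hand side.

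The step I expect to require the most care is justifying the passage to $r^*=\pm\infty$ in these exactly-constant Wronskians: one must check that the subleading corrections in the asymptotic expansions of Definition~\ref{def:uhor-uout} remain negligible after differentiation and that no oscillatory contribution survives. At $r^*=+\infty$, and at $r^*=-\infty$ in the subextremal case, this is routine after converting powers of $(r-r_+)$ into exponentials in $r^*$ via \eqref{eq:r-star}. The genuinely new point is the extremal horizon: there one must use $r^*\sim-2M^2/(r-M)$ as $r\to M$ together with $\beta=2iM^2(\omega-m\upomega_+)$ being purely imaginary to verify that $e^{\mp\beta(r-M)^{-1}}(r-M)^{\pm2iM\omega}\sim e^{\mp i(\omega-m\upomega_+)r^*}$ up to bounded, slowly varying factors whose derivatives are lower order. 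Finally, $\omega\neq m\upomega_+$ is used so that $\swei{u}{0}_{\mc{H}^+}$ and $\swei{u}{0}_{\mc{H}^-}$ are well defined and asymptotically independent; in the extremal case this is built into admissibility with respect to $a$, while in the subextremal case it is needed exactly for the general (non-outgoing) assertion. I would also emphasise that the coefficient $\omega(\omega-m\upomega_+)$ of $|a_{\mc{H}^+}|^2$ changes sign precisely on the superradiant range \eqref{eq:intro-superrad}, which is why \eqref{eq:energy-identity-wave} alone does not force $u\equiv0$ and motivates the integral transformations of \cref{sec:integral-transformation}.
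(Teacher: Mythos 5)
Your proposal is correct and follows the same route as the paper: the paper's (very terse) proof is exactly "apply the fundamental theorem of calculus to the conserved $T$-current \eqref{eq:T-current} and evaluate $Q^T[u](\pm\infty)$ from the asymptotics in Definition~\ref{def:uhor-uout}," which is what you do, with the helpful extra observation that $Q^T[u]=\tfrac{\omega}{2i}W[u,\overline{u}]$ is an exactly constant Wronskian so the endpoint evaluations reduce to the constant diagonal Wronskians $\pm 2i\omega$ and $\mp 2i(\omega-m\upomega_+)$ and the vanishing mixed ones. Your sign bookkeeping and the treatment of the extremal horizon via $r^*\sim -2M^2/(r-M)$ and $\beta=2iM^2(\omega-m\upomega_+)$ both check out against \eqref{eq:energy-identity-wave}.
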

\begin{proof}
The proof follows by applying the fundamental theorem of calculus to \eqref{eq:T-current} and evaluating $Q^T[u](\pm \infty)$ using the asymptotic behavior of $u$ (recall Definition~\ref{def:uhor-uout}).
\end{proof}

We will now recall why such a current can be used to show Theorems~\ref{thm:mode-stability-subextremal} and \ref{thm:mode-stability-full} in the case $a=0$ but fails to do so whenever $|a|>0$.

Consider an outgoing solution of the homogeneous radial ODE~\eqref{eq:u-Schrodinger} with $\omega$ real. If $a=0$, \eqref{eq:energy-identity-wave} shows that $u(\pm \infty)=0$ and we can apply a unique continuation result (see, for instance, Lemma~\ref{lemma:unique-continuation}) or Lemma~\ref{lemma:R-general-asymptotics} to conclude that $u\equiv 0$. Thus, Theorem~\eqref{thm:mode-stability-real-axis} certainly holds for $a=0$. However, when $|a|>0$, if $\omega\in\mathbb{R}$ and $m$ are \textit{superradiant}, that is
\begin{equation}
\omega(\omega-m\upomega_+)\leq 0\,,\label{eq:superradiant-freqs}
\end{equation}
then \eqref{eq:energy-identity-wave} fails to give an estimate for $|u(\pm \infty)|^2$, so we can no longer infer mode stability.

Now consider $\Im\omega>0$ and $\lambda=\bm\uplambda$ from Proposition~\ref{def:angular-ode}; then $(Q^T)'\neq 0$. On the other hand, the boundary conditions for the radial ODE in Definition \ref{def:outgoing-radial-solution} give us strong decay as $r^*\to \pm \infty$, so an application of the fundamental theorem of calculus yields, recalling \eqref{eq:Teukolsky-potential},
\begin{align}\label{eq:T-estimate-wave-upper-half}
\begin{split}
0&=\int_{-\infty}^\infty \lp(Q^T[u]\rp)' dr^*\\
& \geq \int_{-\infty}^\infty \Im(\omega)\lp[|u'|^2+\lp(|\omega|^2\lp(1-\frac{a^2\Delta}{(r^2+a^2)^2}\rp)-\frac{a^2m^2}{(r^2+a^2)^2}\rp) |u|^2\rp]dr^*
\end{split}
\end{align} 
where we have used the fact that, by statement 1(b) of Proposition~\ref{def:angular-ode}, $\Im (\overline{\bm\uplambda}\omega)\geq 0$. It is clear that Theorem~\ref{thm:mode-stability-upper-half} certainly holds if $a=0$: the coefficient on $|u|^2$ is positive for all $r^*$ and we can thus conclude that $u\equiv 0$. On the other hand, if $|a|>0$, there exist $\omega$ with $\Im\omega>0$ and $m\in\mathbb{Z}$ such that the coefficient on the $|u|^2$ term is negative near $r=r_+$. This negativity for certain frequencies is a manifestation of the fact that, near $r=r_+$, the Killing field $\p_t$ becomes spacelike and prevents us from inferring mode stability from \eqref{eq:T-estimate-wave-upper-half} when $|a|>0$.

\subsubsection{The case \texorpdfstring{$s\neq 0$}{of nontrivial spin}}
\label{sec:superradiance-s}

In the case of nontrivial spin, we cannot hope to be able to use the procedure we just outlined even if $a=0$: as the Teukolsky potential \eqref{eq:Teukolsky-potential} is complex even when $\omega$ is real, the $T$-current \eqref{eq:T-current} is not a conserved quantity for the system. However, recall that, for solutions of the radial ODE \eqref{eq:u-Schrodinger} $\swei{u}{+s}$ and $\swei{u}{-s}$, the Wronskian defined by 
\begin{equation}\label{eq:wronskian-current}
W\lp(\swei{u}{+s},\overline{\swei{u}{-s}}\rp):=\lp(\swei{u}{+s}\rp)'\cdot\overline{\swei{u}{-s}}-\swei{u}{+s}\cdot\lp(\overline{\swei{u}{-s}}\rp)'
\end{equation}
satisfies
\begin{align*}
&\frac{d}{dr^*}\lp[W\lp(\swei{u}{+s},\overline{\swei{u}{-s}}\rp)\rp]\\
&\qquad=\lp[4i\Im \omega\Re\omega+\frac{\Delta(\lambda^{[+s]}+s-\overline{\lambda^{[-s]}-s})}{(r^2+a^2)^2}\rp]\swei{u}{+s}\cdot\overline{\swei{u}{-s}}\\
&\qquad\qquad+2i\Im\omega\lp[-\frac{2a^2\Delta\Re\omega+4Mamr)}{(r^2+a^2)^2}+\frac{2s(r^3-3Mr^2+a^2r+a^2M}{(r^2+a^2)^2}\rp] \swei{u}{+s}\cdot\overline{\swei{u}{-s}}\,.
\end{align*}
Clearly, $W'=0$ when $\omega$ is real, yielding a conservation law which substitutes the $T$-current \eqref{eq:T-current} we had for $s=0$.

\begin{proposition}[Energy identity for the homogeneous radial ODE, $s\neq 0$]\label{prop:wronskian-identity} Fix $M>0$, $|a|\leq M$ and $s\in\frac12\mathbb{Z}_{> 0}$. Let $(\omega,m,\lambda)$ be an admissible frequency triple with respect to $s$ and $a$ such that $\omega$ is real. Let $\swei{u}{\pm s}$ be solutions to \eqref{eq:u-Schrodinger} given by \eqref{eq:u-general-asymptotics}.

Define $\mathfrak{C}_s$ by \eqref{eq:redef-Cs} if $\swei{u}{\pm s}$ are outgoing in the sense of Definition~\ref{def:outgoing-radial-solution} and by Proposition~\ref{prop:eigenfunctions} otherwise. Then, the following identities hold for a general $\swei{u}{\pm s}$ if $\omega\neq m\upomega_+$ and, with vanishing right hand side, for an outgoing $\swei{u}{\pm s}$.

For spin $+s$, if $|a|=M$,
\begin{gather}
\begin{gathered}
2^{4s}\omega^2[2M^2(\omega-m\upomega_+)]^{2s}\omega^{4s}\lp|\swei{a}{+s}_{\mc{I}^+}\rp|^2+\omega(\omega-m\upomega_+)\omega^{2s}\mathfrak{C}_s\lp|\swei{a}{+s}_{\mc{H}^+}\rp|^2 \\
=\omega^2[2M^2(\omega-m\upomega_+)]^{2s}\mathfrak{C}_s\lp|\swei{a}{+s}_{\mc{I}^-}\rp|^2\\
+2^{4s}\omega(\omega-m\upomega_+)\omega^{2s}[2M^2(\omega-m\upomega_+)]^{4s}\lp|\swei{a}{+s}_{\mc{H}^-}\rp|^2\,, 
\end{gathered}\label{eq:energy-identity-extremal}
\end{gather}
if $|a|<M$ and $s$ is an integer,
\begin{gather*}
\begin{gathered}
(2\omega)^{2s}\mathfrak{C}_s\lp|\swei{a}{+s}_{\mc{H}^+}\rp|^2+4\omega(\omega-m\upomega_+)\prod_{j=1}^{s-1}\lp[4|\xi|^2+(s-j)^2\rp](2\omega)^{4s}\lp|\swei{a}{+s}_{\mc{I}^+}\rp|^2\\
=4(2\omega)^{2s}|\xi|^2(4|\xi|^2+s^2)\prod_{j=1}^{s-1}\lp[4|\xi|^2+(s-j)^2\rp]^2\lp|\swei{a}{+s}_{\mc{H}^-}\rp|^2\\
+4\omega(\omega-m\upomega_+)\prod_{j=1}^{s-1}\lp[4|\xi|^2+(s-j)^2\rp]\mathfrak{C}_s\lp|\swei{a}{+s}_{\mc{I}^-}\rp|^2\,, 
\end{gathered} \label{eq:energy-identity-sub-integer}
\end{gather*}
taking the products denoted by $\Pi$ to be the identity if $s=1$; and, finally, if $|a|<M$ and $s$ is a half-integer,
\begin{equation}
\begin{gathered}
(2\omega)^{4s}\frac{2Mr_+}{r_+-r_-}\prod_{j=1}^{\lfloor s\rfloor}\lp[4|\xi|^2+(s-j)^2\rp]\lp|\swei{a}{+s}_{\mc{I}^+}\rp|^2+(2\omega)^{2s-1}\mathfrak{C}_s\lp|\swei{a}{+s}_{\mc{H}^+}\rp|^2\\
=\frac{2Mr_+}{r_+-r_-}\prod_{j=1}^{\lfloor s\rfloor}\lp[4|\xi|^2+(s-j)^2\rp]\mathfrak{C}_s\lp|\swei{a}{+s}_{\mc{I}^-}\rp|^2\\
+(2\omega)^{2s-1}(4|\xi|^2+s^2)\prod_{j=1}^{\lfloor s \rfloor}\lp[4|\xi|^2+(s-j)^2\rp]^2\lp|\swei{a}{+s}_{\mc{H}^-}\rp|^2\,, \end{gathered}\label{eq:energy-identity-sub-half}
\end{equation}
taking the products denoted by $\Pi$ to be the identity if $s=1/2$. For spin $-s$, the above identities hold, replacing $\swei{a}{+s}_{\mc{H}^\pm}$ with $\swei{a}{-s}_{\mc{H}^\mp}$ and $\swei{a}{+s}_{\mc{I}^\pm}$ with $\swei{a}{-s}_{\mc{I}^\mp}$.
\end{proposition}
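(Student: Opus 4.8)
The plan is to integrate the conservation law for the Wronskian current \eqref{eq:wronskian-current} and to read off the boundary terms using the Teukolsky--Starobinsky identities of Proposition~\ref{prop:TS-radial}. Since $\omega\in\mathbb{R}$, the computation preceding the statement gives $\frac{d}{dr^*}W\lp(\swei{u}{+s},\overline{\swei{u}{-s}}\rp)=0$ for any solutions $\swei{u}{+s}$, $\swei{u}{-s}$ of the spin $+s$, respectively spin $-s$, radial ODEs \eqref{eq:u-Schrodinger}, so the fundamental theorem of calculus on $r^*\in(-\infty,\infty)$ yields
\begin{equation*}
\lim_{r^*\to\infty}W\lp(\swei{u}{+s},\overline{\swei{u}{-s}}\rp)=\lim_{r^*\to-\infty}W\lp(\swei{u}{+s},\overline{\swei{u}{-s}}\rp)\,.
\end{equation*}
To produce the identity involving $\swei{a}{+s}_\bullet$ alone, I would take $\swei{u}{+s}$ to be a general solution with connection coefficients as in \eqref{eq:u-general-asymptotics} and choose $\swei{u}{-s}$ to be the corresponding rescaling (as in \eqref{eq:uout-uhor}) of $\Delta^s\lp(\mc{D}_0^+\rp)^{2s}\lp(\Delta^s\swei{R}{+s}\rp)$: by Proposition~\ref{prop:TS-radial} this is a spin $-s$ solution whose connection coefficients are those of $\swei{u}{+s}$ multiplied by the explicit constants $\mathfrak{C}_s^{(1)}=(2i\omega)^{2s}$, $\mathfrak{C}_s\lp(\mathfrak{C}_s^{(2)}\rp)^{-1}$, $\mathfrak{C}_s^{(6)}$ and $\mathfrak{C}_s\lp(\mathfrak{C}_s^{(5)}\rp)^{-1}$, where $\mathfrak{C}_s^{(1)}\mathfrak{C}_s^{(3)}=\mathfrak{C}_s^{(2)}\mathfrak{C}_s^{(4)}=\mathfrak{C}_s$ by Remark~\ref{rmk:def-radial-TS-const}.

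Next, evaluate the two boundary Wronskians by inserting the asymptotic expansions \eqref{eq:model-solutions-series} of the model solutions into each limit. The cross terms, such as $W\lp(\swei{u}{\pm s}_{\mc{I}^+},\overline{\swei{u}{\mp s}_{\mc{I}^-}}\rp)$, vanish in the limit: the Wronskian of two model solutions whose leading exponential phases coincide is an oscillatory factor times a prefactor that decays, so only the diagonal pairings survive. At $r^*\to\infty$ these contribute a multiple of $\omega$ times products of the normalization constants fixed in Definition~\ref{def:uhor-uout} and the relevant $\mathfrak{C}_s^{(i)}$; at $r^*\to-\infty$ they contribute a multiple of $\omega-m\upomega_+$ (the effective frequency at the horizon) times analogous products. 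The horizon contributions differ in the two regimes: for $|a|<M$ the regular-singular behavior $(r-r_+)^{\pm\xi}$ produces the combinatorial factors $\prod_j\lp(2\xi+s-j\rp)$, $4|\xi|^2+(s-j)^2$ and $4|\xi|^2+s^2$, together with the factor $\tfrac{2Mr_+}{r_+-r_-}$ coming from the normalization of the $(r-r_+)^{-\xi}$-type solution --- which is exactly what separates the half-integer from the integer subextremal identity --- whereas for $|a|=M$ the irregular behavior $e^{\pm\beta(r-M)^{-1}}$ produces the factors $(\mp 2\beta)^{2s}$, with $\beta=2iM^2(\omega-m\upomega_+)$. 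Substituting the relations expressing each $\swei{a}{-s}_\bullet$ as $\swei{a}{+s}_\bullet$ times the appropriate $\mathfrak{C}_s^{(i)}$, clearing the denominators carried by the $\mathfrak{C}_s^{(i)}$, and using that $\omega$ and $\lambda$ are real (so that conjugating the $\mathfrak{C}_s^{(i)}$ only affects signs), one reads off \eqref{eq:energy-identity-extremal} in the extremal case and the two stated subextremal identities.

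For the spin $-s$ identity I would repeat the argument with the roles reversed, taking $\swei{u}{-s}$ general and $\swei{u}{+s}$ the corresponding rescaling of $\Delta^s\lp(\mc{D}_0^-\rp)^{2s}\swei{R}{-s}$; keeping track of how $\mc{D}_0^-$ maps the $\mc{I}^\pm$ and $\mc{H}^\pm$ data of spin $-s$ onto those of spin $+s$ produces precisely the relabeling $\swei{a}{+s}_{\mc{H}^\pm}\mapsto\swei{a}{-s}_{\mc{H}^\mp}$, $\swei{a}{+s}_{\mc{I}^\pm}\mapsto\swei{a}{-s}_{\mc{I}^\mp}$ in the statement. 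For an outgoing solution, Remark~\ref{rmk:outgoing-radial-solution} gives $\swei{a}{\pm s}_{\mc{H}^-}=\swei{a}{\pm s}_{\mc{I}^-}=0$, so both boundary Wronskians reduce to a single term, $\mathfrak{C}_s$ is taken as in \eqref{eq:redef-Cs}, and the identities hold with vanishing right-hand side. The main obstacle is entirely computational: pinning down the exact constants in the two boundary Wronskians --- threading the normalizations of Definition~\ref{def:uhor-uout}, the polynomial corrections in \eqref{eq:model-solutions-series}, and the correct $\mathfrak{C}_s^{(i)}$ --- separately in the extremal and the subextremal (integer and half-integer) settings; the structural steps (conservation, fundamental theorem of calculus, Teukolsky--Starobinsky substitution) are routine.
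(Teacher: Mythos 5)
Your proposal is correct and follows essentially the same route as the paper's proof: conservation of the Wronskian current \eqref{eq:wronskian-current} for real $\omega$, evaluation of its limits at $r^*\to\pm\infty$ via the expansions \eqref{eq:model-solutions-series} (with only the ``diagonal'' pairings surviving), and substitution of $\swei{R}{-s}:=\Delta^s\lp(\mc{D}_0^+\rp)^{2s}\lp(\Delta^s\swei{R}{+s}\rp)$ (resp.\ its counterpart for spin $-s$) so that the $\swei{a}{-s}_\bullet$ become explicit multiples of the $\swei{a}{+s}_\bullet$ through the $\mathfrak{C}_s^{(i)}$, followed by the factorization of $\mathfrak{C}_s^{(2)}$ that distinguishes the integer, half-integer, and extremal cases. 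The remaining work is, as you say, purely the bookkeeping of constants already carried out in Proposition~\ref{prop:TS-radial}.
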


\begin{proof}
We first notice that we can rewrite the Wronskian as (dropping most subscripts)
\begin{align*}
W\lp(\swei{u}{+s},\overline{\swei{u}{-s}}\rp)&=W\lp(\Delta^{-s/2}(r^2+a^2)^{1/2}\Delta^s\swei{R}{+s},\Delta^{-s/2}(r^2+a^2)^{1/2}\overline{\swei{R}{-s}}\rp)\\
&=\Delta^{-s}\lp(\Delta\frac{d}{dr}\lp(\Delta^s\swei{R}{+s}\rp)\overline{\swei{R}{-s}}-\lp(\Delta^s\swei{R}{+s}\rp)\Delta\frac{d}{dr}\overline{\swei{R}{-s}}\rp)\,.
\end{align*}
Recalling the asymptotic expansions for  $\swei{R}{\pm s}_{\mc{I}^+}$, $\swei{R}{\pm s}_{\mc{I}^-}$, $\swei{R}{\pm s}_{\mc{H}^+}$ and $\swei{R}{\pm s}_{\mc{H}^-}$ in \eqref{eq:model-solutions-series}, we have
\begin{align*}
W\lp[\swei{u}{+s},\overline{\swei{u}{-s}}\rp]&=2i\omega\Delta\lp(\swei{a}{+s}_{\mc{I}^+}\overline{\swei{a}{-s}_{\mc{I}^+}}\swei{R}{+s}_{\mc{I}^+}\overline{\swei{R}{-s}_{\mc{I}^+}}-\swei{a}{+s}_{\mc{I}^-}\overline{\swei{a}{-s}_{\mc{I}^-}}\swei{R}{+s}_{\mc{I}^-}\overline{\swei{R}{-s}_{\mc{I}^-}}\rp)\\
&=2i\omega\lp(\swei{a}{+s}_{\mc{I}^+}\overline{\swei{a}{-s}_{\mc{I}^+}}-\swei{a}{+s}_{\mc{I}^-}\overline{\swei{a}{-s}_{\mc{I}^-}}\rp)+O(r^{-1})\,, \\
W\lp[\swei{u}{+s},\overline{\swei{u}{-s}}\rp]
&=
\begin{dcases}
(2\xi-s)\frac{r_+-r_-}{2Mr_+}\lp(\swei{a}{+s}_{\mc{H}^+}\overline{\swei{a}{-s}_{\mc{H}^+}}-\swei{a}{+s}_{\mc{H}^-}\overline{\swei{a}{-s}_{\mc{H}^-}}\rp)+O(r-r_+) &\text{if~}|a|<M\\
-2i(\omega-m\upomega_+)\lp(\swei{a}{+s}_{\mc{H}^+}\overline{\swei{a}{-s}_{\mc{H}^+}}-\swei{a}{+s}_{\mc{H}^-}\overline{\swei{a}{-s}_{\mc{H}^-}}\rp)+O(r-r_+) &\text{if~}|a|=M
\end{dcases}\,.
\end{align*}

In order to reduce these expressions to identities for just one of the signs of spin, we use the Teukolsky--Starobinksy identities \eqref{eq:TS-radial-2}. For instance, to obtain the identity for spin $+ s$, we impose that
\begin{align*}
\swei{R}{-s}:=\Delta^s\lp(\mc{D}_0^+\rp)^{2s}\lp(\Delta^s\swei{R}{+s}\rp)\,,
\end{align*}
which, noting  $\omega\in\mathbb{R}$ implies $\mathfrak{C}_s^{(5)}=\overline{\mathfrak{C}_s^{(1)}}$ and $\mathfrak{C}_s^{(6)}=\overline{\mathfrak{C}_s^{(2)}}$, yields
\begin{align*}
\swei{a}{-s}_{\mc{I}^+}=\mathfrak{C}_s^{(1)}\,,\quad \swei{a}{-s}_{\mc{I}^-}=\mathfrak{C}_s\lp(\overline{\mathfrak{C}_s^{(1)}}\rp)^{-1}\,,\quad
\swei{a}{-s}_{\mc{H}^+}=\mathfrak{C}_s\lp(\mathfrak{C}_s^{(2)}\rp)^{-1}\,,\quad \swei{a}{-s}_{\mc{H}^-}=\overline{\mathfrak{C}_s^{(2)}}\,.
\end{align*}
Hence, we find that conservation of the Wronskian gives
\begin{gather}\label{eq:general-conservation-plus}
\begin{gathered}
\frac{2i\omega}{\mathfrak{C}_s^{(1)}}\lp(\lp|\mathfrak{C}_s^{(1)}\rp|^2\lp|\swei{a}{+s}_{\mc{I}^+}\rp|^2 -\mathfrak{C}_s\lp|\swei{a}{+s}_{\mc{I}^-}\rp|^2 \rp)\\
=\lp\{\begin{array}{lr}
\lp(2\xi-s\rp)\frac{r_+-r_-}{2Mr_+} & \text{if } |a|<M\\
-2i(\omega-m\upomega_+) & \text{if } |a|=M
\end{array}\rp\}\lp(\overline{\mathfrak{C}_s^{(2)}}\rp)^{-1}\lp(\mathfrak{C}_s\lp|\swei{a}{+s}_{\mc{H}^+}\rp|^2 -|\mathfrak{C}_s^{(2)}|^2\lp|\swei{a}{+s}_{\mc{H}^-}\rp|^2 \rp)\,.
\end{gathered}
\end{gather} 

Similarly, defining
\begin{align*}
\swei{R}{+s}:=\lp(\mc{D}_0^+\rp)^{2s}\swei{R}{-s}\,,
\end{align*}
is equivalent, by \eqref{eq:TS-radial-2}, to setting
\begin{align*}
\swei{a}{+s}_{\mc{I}^+}:=\mathfrak{C}_s\lp(\mathfrak{C}_s^{(1)}\rp)^{-1}\,,\quad \swei{a}{+s}_{\mc{I}^-}:=\overline{\mathfrak{C}_s^{(1)}}\,,\quad
\swei{a}{+s}_{\mc{H}^+}:=\mathfrak{C}_s^{(2)}\,,\quad \swei{a}{+s}_{\mc{H}^-}:=\mathfrak{C}_s\lp(\overline{\mathfrak{C}_s^{(2)}}\rp)^{-1}\,,
\end{align*}
which yields the conservation law
\begin{gather}\label{eq:general-conservation-minus}
\begin{gathered}
\frac{2i\omega}{\mathfrak{C}_s^{(1)}}\lp(|\mathfrak{C}_s^{(1)}|^2\lp|\swei{a}{-s}_{\mc{I}^-}\rp|^2 -\mathfrak{C}_s\lp|\swei{a}{-s}_{\mc{I}^+}\rp|^2 \rp)\\
=\lp\{\begin{array}{lr}
\lp(2\xi-s\rp)\frac{r_+-r_-}{2Mr_+} & \text{if } |a|<M\\
-2i(\omega-m\upomega_+) & \text{if } |a|=M
\end{array}\rp\}\lp(\overline{\mathfrak{C}_s^{(2)}}\rp)^{-1}\lp(\mathfrak{C}_s\lp|\swei{a}{-s}_{\mc{H}^-}\rp|^2 -|\mathfrak{C}_s^{(2)}|^2\lp|\swei{a}{-s}_{\mc{H}^+}\rp|^2 \rp)\,.
\end{gathered}
\end{gather}

We note that, if $|a|<M$ and $s$ is an integer,
\begin{align*}
\mathfrak{C}_s^{(2)} &= \prod_{j=0}^{2s-1}(2\xi+s-j)=(2\xi+s)(2\xi+s-1)\cdots 2\xi\cdots(2\xi+1-s)\\
&=2\xi(2\xi+s)\prod_{j=1}^{s-1}(2\xi+s-j)(2\xi+j-s)=-2\xi(2\xi+s)(-1)^s\prod_{j=1}^{s-1}\lp[4|\xi|^2+(s-j)^2\rp]\,,
\end{align*}
but if $s$ is a half integer,
\begin{align*}
\mathfrak{C}_s^{(2)} &= \prod_{j=0}^{2s-1}(2\xi+s-j)=(2\xi+s)(2\xi+s-1)(2\xi+s-2)\cdots(2\xi+2-s)(2\xi+1-s)\\
&=(2\xi+s)\prod_{j=1}^{\lfloor s \rfloor}(2\xi+s-j)(2\xi+j-s)=i^{2s+1}(2\xi+s)\prod_{j=1}^{\lfloor s \rfloor}\lp[4|\xi|^2+(s-j)^2\rp]\,,
\end{align*}

The result now follows by substituting in \eqref{eq:general-conservation-plus} and \eqref{eq:general-conservation-minus} the explicit constants $\mathfrak{C}_s^{(2)}$ and $\mathfrak{C}_s^{(1)}$, as given above and in Proposition~\ref{prop:TS-radial}.
\end{proof}

As in the previous section, we will now recall how such a current can be used to show Theorems~\ref{thm:mode-stability-subextremal} and \ref{thm:mode-stability-full} for real $\omega$ in the case $a=0$ but fails to do so whenever $|a|>0$.  

Consider solutions to the homogeneous radial ODE~\eqref{eq:u-Schrodinger} compatible with the outgoing condition from Definition~\ref{def:outgoing-radial-solution}. If $a=0$, then $\upomega_+=0$ and, by explicit computation of the radial Teukolsky--Starobinsky constant, we find that, if $\lambda=\bm\uplambda$ from Proposition~\ref{def:angular-ode}, then $\mathfrak{C}_s(\omega,m,\bm\uplambda)>0$. Hence, equations \eqref{eq:energy-identity-extremal} and \eqref{eq:energy-identity-sub-integer} yield $\swei{a}{\pm s}_{\mc I^+}=\swei{a}{\pm s}_{\mc H^+}=0$ we can again appeal to Lemma~\ref{lemma:R-general-asymptotics} to obtain mode stability. 

An interesting remark is that such a proof of Theorem~\ref{thm:mode-stability-real-axis} in the whole range $|a|\leq M$ would also hold for half-integer spin, assuming that one could show $\mathfrak{C}_s(\omega,m,\bm\uplambda)>0$ for any admissible frequency triple $(\omega,m,l)$ for which there could be nontrivial mode solutions. Indeed, consistently with \cref{rmk:TS-constant-sign}, the absence of superradiance for $|s|=1/2$ and $|s|=3/2$ was first shown in \cite{Unruh1973} and \cite{Mason1998}, respectively, who used a spinor formalism (instead of the Teukolsky equation) and interpreted the Wronskian conservation law we obtain in Proposition~\ref{prop:wronskian-identity} as a conservation law for the number of particles.

For integer spin and $|a|>0$, on the other hand, assuming $\mathfrak{C}_s(\omega,m,\lambda)>0$ for any admissible frequency triple $(\omega,m,\lambda)$ for which there could be nontrivial solutions to the homogeneous radial ODE~\eqref{eq:u-Schrodinger} compatible with the outgoing condition (this must hold if $\lambda=\bm\uplambda$ at least for $0<|s|\leq 2$, by \cref{rmk:TS-constant-sign}), \eqref{eq:energy-identity-extremal} and \eqref{eq:energy-identity-sub-integer} fail to give an estimate for $u(\pm \infty)$ if the frequency parameters are superradiant \eqref{eq:superradiant-freqs}. Hence, we cannot deduce mode stability on the real axis from Proposition~\ref{proposition:estimates-u-tilde} for $|a|>0$ and integer $s$.

\begin{remark}
For the integral transformations we will present in Section \ref{sec:integral-transformation}, we can obtain estimates which show, via a $T$-current \eqref{eq:T-current} (rather than the Wronskian current presented in this section), that the transformed quantities vanish identically, even when $s\neq 0$ (see Section~\ref{sec:proof}).
\end{remark}

\section{Integral transformations for \texorpdfstring{$s\leq 0$}{non-positive spin}}

In this section, we will define integral transformations for the radial ODE \eqref{eq:radial-ODE} for extremal and subextremal Kerr backgrounds. In \cref{sec:integral-transformation}, we consider an extremal Kerr solution and introduce a novel integral transformation in \cref{prop:ode-u-tilde}. On subextremal Kerr spacetimes, we consider Whiting's transformation and extend the results in \cite{Whiting1989,Shlapentokh-Rothman2015} to  \cref{prop:ode-u-tilde-sub} in \cref{sec:integral-transformation-subextremal}. 

We remark that, in the interest of presenting a unified picture for the full range $|a|\leq M$, the structure of \cref{sec:integral-transformation-subextremal} is very similar to that of \cref{sec:integral-transformation}.

\subsection{An integral radial transformation for extremal Kerr \texorpdfstring{($|a|=M$)}{}}
\label{sec:integral-transformation}

 We begin by rewriting the radial ODE \eqref{eq:radial-ODE} as
\begin{equation}
\begin{gathered}
\frac{d}{dr}\lp[(r-M)^{2}\frac{d}{dr}\rp]R(r)+2s(r-M)\frac{d}{dr} R(r)\\
-\lp[\frac{[\beta -\alpha(r-M)-\gamma  (r-M)^2]^2}{(r-M)^2}+\frac{2s[\beta -\alpha(r-M)-\gamma  (r-M)^2]}{(r-M)}\rp] R(r)\\
-\lp[4s\gamma (r-M)+2s\alpha+L\rp]R(r)=(r-M)^2 \hat{F}(r)\,,
\end{gathered}\label{eq:radial-ODE-alpha-beta-gamma}
\end{equation}
where $\hat{F}(r)$ is a smooth inhomogeneity compactly supported away from $r=r_+$ and $r=\infty$ and we have defined
\begin{align}
\alpha := -2iM\omega \,,\quad  \beta := 2iM^2(\omega-m\omega_+)\,,\quad \gamma= -i\omega\,, \quad L:= \lambda+a^2\omega^2-2am\omega \,. \label{eq:def-alpha-beta-gamma}
\end{align}

While in the subextremal case the radial ODE has a regular singularity at $r_+$ (see Section~\ref{sec:integral-transformation-subextremal}), in (\ref{eq:radial-ODE-alpha-beta-gamma}) we have an irregular singularity of rank 1 (see for instance \cite{Olver1973}), so Whiting's transformation cannot be defined in the extremal case. Indeed, to apply the strategy of \cite{Whiting1989,Shlapentokh-Rothman2015} one requires a fundamentally novel transformation. We will consider the following:

\begin{proposition} \label{prop:ode-u-tilde} Fix $M>0$, $|a|=M$ and $s\in \frac12 \mathbb{Z}_{\leq 0}$. Let $(\omega,m,\lambda)$ be admissible frequency parameters with respect to $a$ and $s$ (see Definition~\ref{def:admissible-freqs}). Let $R_{m\lambda}^{[s],(a\omega)}$ be an outgoing solution to the radial ODE~\eqref{eq:radial-ODE-alpha-beta-gamma}  as in Definition~\ref{def:outgoing-radial-solution}. Dropping subscripts, define $\tilde{u}$ as the integral transformation
\begin{align}
\begin{split}
\tilde{u}(x) &:=\lim_{y\to 0}(x^2+2M^2)^{1/2}(x-M)^{-s}(x-2M)^{\alpha}  \times\\
&\qquad\times \int_{M}^\infty e^{-\frac{2\gamma}{M}(x+iy-M)(r-M)}(r-M)^{\alpha}e^{\beta(r-M)^{-1}}e^{-\gamma r}R(r) dr\,,
\end{split} \label{eq:extremal-transformation}
\end{align}
with $\alpha$, $\beta$ and $\gamma$ as in \eqref{eq:def-alpha-beta-gamma}, where the limit is taken in the function space $L^2_x\lp([2M,\infty)\rp)$. Introduce a new coordinate $x^*(x):(2M,+\infty)\to (-\infty,\infty)$ by
\begin{align*}
\frac{dx^*}{dx}= \frac{x^2+2M^2}{(x-M)(x-2M)}\,,\quad x^*(3M)=0\,.
\end{align*}
Then the following hold:
\begin{enumerate}
\item $\tilde{u}(x)$ is in fact smooth for $x\in(2M,+\infty)$;
\item $\tilde{u}$ satisfies the ODE
\begin{align}
\tilde{u}''+\tilde{V}\tilde{u}=\frac{(x-2M)(x-M)}{x^2+2M^2}\tilde{H}\,, \label{eq:ode-u-tilde}
\end{align}
where the inhomogeneity $\tilde{H}$ is given by
\begin{align}
\begin{split}
\tilde{H}&:=(x^2+2M^2)^{1/2}(x-M)^{-s}(x-2M)^{\alpha}  \times\\
&\qquad\times \int_{M}^\infty e^{-\frac{2\gamma}{M}(x-M)(r-M)}(r-M)^{\alpha}e^{\beta(r-M)^{-1}}e^{-\gamma r}\hat{F}(r) dr\,,
\end{split}\label{eq:H-tilde}
\end{align}
and the potential is
\begin{align*}
\tilde{V}(x)&:= \frac{M (x-M)^2 [3(2x-3M)+(x-M)^2+(x-2M)^2]}{\left(x^2+2M^2\right)^2}\omega^2 -\frac{(x-2M)^2}{(x^2+2M^2)^2}s^2\\
&\qquad-\frac{(x-M) (x-2M)}{\left(x^2+2M^2\right)^2}(\lambda+s)  -\frac{2 m \omega  (x-M) (x-2M) (2 x-3M)}{\left(x^2+2M^2\right)^2}\numberthis \label{eq:V-tilde} \\ 
&\qquad-\frac{(x-M)(x-2M)}{(x^2+2M^2)^4}\lp[2M^2 (x-M)(x-2M)+3Mx(x^2-2M^2)\rp]\,.
\end{align*}
\item $\tilde{u}$ and $\tilde{u}'$ are bounded for $x^*\in\mathbb{R}$;
\item $\tilde{u}$ and $\tilde{u}'$ satisfy the boundary conditions
\begin{enumerate}[label=(\alph*)]
\item  if $\Im\omega>0$ and $\hat{F}\equiv 0$, then 
\begin{enumerate}
\item $\tilde{u}',\tilde{u}=O\lp((r-r_+)^{2M\Im\omega}\rp)$ as $x\to r_+$,
\item $|\tilde{u}'\tilde{u}|= O(|\tilde{u}|^2 x^{-1/2})=o(1)$ as $x\to\infty$,
\end{enumerate}
\item if $\omega\in\mathbb{R}\backslash\{0\}$ and $\hat{F}$ is compactly supported in $(M,\infty)$, then
\begin{enumerate}
\item $\tilde{u}'+\frac13 i\omega \tilde{u}=O(r-M)$ as $x\to M$,
\item if additionally $\omega(\omega-m\upomega_+)>0$,  
\begin{gather*}
x^{1/4}\lp(\tilde{u}'-4i\,\mr{sign}\,\omega\sqrt{2M\omega(\omega-m\upomega_+)}x^{-1/2}\tilde{u}\rp)=O(x^{-1/2})\,,
\end{gather*}
as $x\to\infty$, and
\begin{gather*}
\lp|(x^{-1/4}\tilde{u})(+\infty)\rp|^2 = \frac{\pi}{4M|\omega|}\lp|\frac{2M^3(\omega-m\upomega_+)}{\omega}\rp|^{1/2-2s}\lp|(\Delta^{s/2}u)(-\infty)\rp|^2\,,
\end{gather*}

\item if additionally $\omega(\omega-m\upomega_+)<0$,  
\begin{align*}
&\exp\lp(4\sqrt{-2M\omega(\omega-m\upomega_+)}x^{1/2}\rp)x^{1/4}\times \\
&\qquad\times\lp(\tilde{u}'-2\sqrt{2M\omega(\omega-m\upomega_+)}x^{-1/2}\tilde{u}\rp)
=O(x^{-1/2})\,,
\end{align*}
as $x\to\infty$, and
\begin{align*}
&\lp|\lp[\exp\lp(-4\sqrt{-2M\omega(\omega-m\upomega_+)}x^{1/2}\rp)x^{-1/4}\tilde{u}\rp](+\infty)\rp|^2 \\
&\qquad= \frac{\pi}{4M|\omega|}\lp|\frac{2M^3(\omega-m\upomega_+)}{\omega}\rp|^{1/2-s}\lp|(\Delta^{s/2}u)(-\infty)\rp|^2\,;
\end{align*}

\end{enumerate} 
\end{enumerate}
\item the integral transformation \eqref{eq:extremal-transformation} defines an injective map $R\mapsto \tilde{u}$: if $\tilde{u}$ vanishes identically, then $R$ must also vanish identically.
\end{enumerate}
\end{proposition}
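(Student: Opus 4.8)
The plan is to treat the transformation \eqref{eq:extremal-transformation} as a Laplace-type (for $\Im\omega>0$) or Fourier-type (for $\omega\in\mathbb{R}$) transform in the variable $\rho:=r-M$, regularized by the limit $y\to0$, and to exploit the fact that the kernel has been engineered so that the second-order radial operator acting in $r$ is intertwined with a second-order operator acting in $x$. Write $\tilde u(x)=P(x)\,I(x)$, where $P(x):=(x^2+2M^2)^{1/2}(x-M)^{-s}(x-2M)^{\alpha}$ and
\[
I(x)=\int_M^\infty K(x,r)\,R(r)\,dr,\qquad K(x,r):=e^{-\frac{2\gamma}{M}(x-M)(r-M)}(r-M)^{\alpha}e^{\beta(r-M)^{-1}}e^{-\gamma r},
\]
with $\alpha,\beta,\gamma$ as in \eqref{eq:def-alpha-beta-gamma}. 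First I would establish convergence, which underlies item~1: near $r=M$ the outgoing condition of Definition~\ref{def:outgoing-radial-solution} gives $R\sim(r-M)^{\alpha-2s}e^{\beta(r-M)^{-1}}\times(\text{smooth})$, so the integrand behaves like $(r-M)^{-2s}$ times a factor that is bounded (for real $\omega$) or decaying (for $\Im\omega>0$), hence integrable since $s\le0$; near $r=\infty$ the outgoing condition gives $e^{-\gamma r}R\sim e^{2i\omega r}r^{\dots}$, and the factor $e^{-\frac{2\gamma}{M}(x+iy-M)(r-M)}$ supplies decay, absolute for $\Im\omega>0$ and conditional via the $y\to0$ regularization for real $\omega$. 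Since $P$ is smooth and nonvanishing on $(2M,\infty)$, full smoothness of $\tilde u$ there follows \emph{a posteriori} from item~2, as any $L^2$ distributional solution of a linear second-order ODE with smooth coefficients is smooth.

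The core of the proposition is item~2. The computation is to produce a second-order operator $\mathcal{L}_x$ in $x$ (which the coordinate change $dx^*/dx=(x^2+2M^2)/[(x-M)(x-2M)]$ brings to the Schr\"odinger form $\partial_{x^*}^2+\tilde V$ of \eqref{eq:ode-u-tilde}, \eqref{eq:V-tilde}) together with the radial operator $\mathcal{L}_r$ of \eqref{eq:radial-ODE-alpha-beta-gamma}, related through the \emph{intertwining identity} $\mathcal{L}_x\big(P(x)K(x,r)\big)=P(x)\,\mathcal{O}_r K(x,r)$, where $\mathcal{O}_r$ is the formal adjoint of $\mathcal{L}_r$. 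I would obtain this by direct differentiation of $K$ in $x$ and in $r$ and by matching coefficients, using the algebraic relations among $\alpha,\beta,\gamma$; the exponents in $P$ and $K$ are exactly what make the $x$-derivatives of the kernel reproduce, up to total $r$-derivatives, the $r$-derivatives in $\mathcal{L}_r$. Then
\[
\mathcal{L}_x\tilde u=P\int_M^\infty(\mathcal{O}_r K)\,R\,dr=P\int_M^\infty K\,(\mathcal{L}_r R)\,dr+[\text{boundary terms}],
\]
and $\mathcal{L}_r R=(r-M)^2\hat F$ by the radial ODE, which reproduces the inhomogeneity $\tilde H$ of \eqref{eq:H-tilde}. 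The boundary terms at $r=M$ and $r=\infty$ must be shown to vanish: this is precisely where the outgoing boundary conditions and the chosen exponents $\alpha,\beta$ enter, the decaying or oscillatory factors annihilating both endpoint contributions.

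With the ODE established, items~3 and~4 follow from asymptotic analysis of \eqref{eq:ode-u-tilde} at the two ends of $x^*$, noting that $x=2M\leftrightarrow x^*=-\infty$ and $x=\infty\leftrightarrow x^*=+\infty$, so $x=2M$ plays the role of a transformed horizon. As $x\to2M$, $\tilde V$ tends to a constant whose square root furnishes the indicial exponents; the regular/outgoing exponential is selected by inspecting $I(x)$ near $x=2M$, yielding the conditions $4$(a)(i) and $4$(b)(i). As $x\to\infty$ the decisive tool is the method of steepest descent applied to $I(x)$: the phase $2\beta(r-M)^{-1}-\frac{2\gamma}{M}(x-M)(r-M)$ has a saddle at $(r-M)_*\sim x^{-1/2}$ with critical value $\sim4\sqrt{-2M\omega(\omega-m\upomega_+)\,x}$, purely imaginary when $\omega(\omega-m\upomega_+)>0$ (oscillatory asymptotics, item~$4$(b)(ii)) and real when $\omega(\omega-m\upomega_+)<0$ (exponential growth/decay, item~$4$(b)(iii)); here the admissibility requirement $\omega\neq m\upomega_+$ of Definition~\ref{def:admissible-freqs} is what keeps this saddle nondegenerate. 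Evaluating the Gaussian saddle contribution produces the explicit constants relating $(x^{-1/4}\tilde u)(+\infty)$ to $(\Delta^{s/2}u)(-\infty)$, and the saddle width $x^{-1/2}$ gives the $o(1)$ decay of $4$(a)(ii) for $\Im\omega>0$. Global boundedness of $\tilde u,\tilde u'$ on $x^*\in\mathbb{R}$ (item~3) then follows by combining these endpoint behaviours with the conserved $T$-current of the transformed equation, the structural gain being that $\tilde V$ carries no ergoregion, so the current controls $\tilde u$ uniformly.

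Finally, for injectivity (item~5): since $P(x)\neq0$ on $(2M,\infty)$, $\tilde u\equiv0$ forces $I(x)\equiv0$, i.e.\ the transform of $g(r):=(r-M)^{\alpha}e^{\beta(r-M)^{-1}}e^{-\gamma r}R(r)$ against $e^{-\frac{2\gamma}{M}(x-M)(r-M)}$ vanishes identically in $x$. For $\Im\omega>0$ this is a genuine one-sided Laplace transform in $\rho=r-M$, whose injectivity gives $g\equiv0$ and hence $R\equiv0$; for real $\omega$ I would first note that $I$ extends analytically to $\Im x>0$ (the role of the $y\to0$ regularization), apply Laplace injectivity there, and conclude by uniqueness of analytic functions. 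The main obstacle is the intertwining step of item~2: unlike Whiting's subextremal transformation, this kernel does \emph{not} map the radial ODE to one of the same type, so $\mathcal{L}_x$ and the potential $\tilde V$ must be discovered and verified through an intricate coefficient match, and the irregular (rank~$1$) singular point at $r=M$ makes the vanishing of the boundary term there the most delicate point of the argument.
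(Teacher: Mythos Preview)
Your outline follows the same architecture as the paper: an intertwining identity on the kernel, integration by parts to throw the $r$-operator onto $R$, steepest descent for the large-$x$ asymptotics, and a Fourier/Paley--Wiener argument for injectivity. Two points deserve comment, one cosmetic and one a genuine gap.

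The paper organizes the intertwining slightly differently from your $\mathcal L_x(PK)=P\,\mathcal O_rK$. It first absorbs the weights into $R$, setting $g(r)=(r-M)^{-\alpha+2s}e^{-\beta(r-M)^{-1}}e^{-\gamma r}R(r)$ and $\tilde g(z)=\int_M^\infty e^{A(z-M)(r-M)}(r-M)^{2\alpha-2s}e^{2\beta(r-M)^{-1}}e^{2\gamma r}g(r)\,dr$ with $A=-2\gamma/M$, so that the intertwining acts on the \emph{bare} exponential: $\tilde{\mathcal T}_x e^{A(z-M)(r-M)}=\mathcal T_r e^{A(z-M)(r-M)}$, where $\mathcal T_r$ is the double confluent Heun operator satisfied by $g$ and $\tilde{\mathcal T}_x$ is a reduced confluent Heun operator. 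This buys you that $\mathcal T_r$ is formally self-adjoint against the measure $(r-M)^{2\alpha-2s}e^{2\beta(r-M)^{-1}}e^{2\gamma r}\,dr$, so the boundary terms in the adjoint integration by parts are explicit Wronskian-type expressions and their vanishing at $r=M$ and $r=\infty$ is straightforward to verify from the boundary behaviour of $g$. Your formulation is equivalent but the self-adjoint structure is less visible.

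The gap is in your route to item~3. The $T$-current is conserved only for real $\omega$, and even then it relates endpoint values rather than producing an interior $L^\infty$ bound; for $\Im\omega>0$ it is not conserved at all. The paper instead proves boundedness of $\tilde g$ and $\partial_x\tilde g$ \emph{before} deriving the ODE, by rewriting the defining integral after $1-2s$ (and, for derivatives, more) integrations by parts in $r$, using that $e^{A(z-M)(r-M)}$ is an exact $r$-derivative divided by $A(z-M)$. These integrated-by-parts formulas are uniform in $y$ and give $|\tilde g(x+iy)|+|\partial_x\tilde g(x+iy)|\lesssim x^{-1}$; they simultaneously supply the $H^2_x$ and hence $C^{1,1/2}_x$ convergence of the $y\to0$ limit. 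This is the mechanism your phrase ``conditional via the $y\to0$ regularization'' needs to be made precise, and without it you have neither the existence of the limit for real $\omega$, nor boundedness, nor the $C^{1,1/2}$ convergence used to pass from the weak ODE at $y\neq0$ to the classical ODE at $y=0$ via elliptic regularity. The remaining steps (steepest descent with saddle at $r-M\sim x^{-1/2}$, injectivity via the half-line support of $R$) match the paper.
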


\begin{remark} \label{rmk:prop-potential}
For our proof in \cref{sec:proof-s-negative}, it will be useful to highlight the following properties of $\tilde{V}(x)$ for $x\in(2M,\infty)$:
\begin{enumerate}[label=(\roman*)]
\item $\omega^2$ has a positive coefficient;
\item $\lambda$ has a non-positive coefficient;
\item the $(\omega,m,l)$-independent part of $\tilde{V}$ is non-negative;
\item $\tilde{V}$ and is real whenever $\omega$ is real;
\end{enumerate}
These properties follow easily from \eqref{eq:V-tilde}.
\end{remark}

To prove Proposition~\ref{prop:ode-u-tilde}, it will be useful to break up \eqref{eq:extremal-transformation} into smaller pieces which we will analyze separately. With this in mind, we define the auxiliary function
\begin{align}
g(r):=(r-M)^{-\alpha+2s} e^{-\beta (r-M)^{-1}}e^{-\gamma r} R(r)\,, \label{eq:def-g}
\end{align}
so that we have factored out the oscillatory behavior at the horizon but reinforced it as $r\to\infty$. Indeed, if $R(r)$ satisfies the boundary conditions in \cref{def:mode-solution}, we have
\begin{align}
\begin{alignedat}{3} \label{eq:g-bdry}
g(r) &= \sum_{k=0}^\infty b_k (r-M)^k &\text{~ as } r\to M\,, \\
g(r) &= e^{-2\gamma r}r^{-\alpha-1}\lp[\sum_{k=0}^N c_k r^{-k}+O\lp(r^{-N-1}\rp)\rp] &\text{~ as } r\to \infty\,. 
\end{alignedat}
\end{align}

We also define the following weighted integral of $g$: for $z=x+iy$ with $(x,y)\in[2M,\infty)\times\{y\in[-1,1]\colon y\Re\omega\geq 0 \text{~or~} y\omega>0\}$,
\begin{align}
\begin{split}
\tilde{g}(z) &:= \int_{M}^\infty e^{A(z-M)(r-M)}(r-M)^{2\alpha-2s} e^{2\beta(r-M)^{-1}}e^{2\gamma r}g(r) dr \\
&=\int_{M}^\infty e^{A(z-M)(r-M)}(r-M)^{\alpha}e^{\beta(r-M)^{-1}}e^{\gamma r}R(r) dr \,,
\end{split}\label{eq:def-g-tilde}
\end{align}
where $A=-2\gamma/M$. With these definitions, the integral transformation \eqref{eq:extremal-transformation} becomes simply
\begin{align}
\tilde{u}(x)=(x^2+2M^2)^{1/2}(x-M)^{-s}(x-2M)^{\alpha}\lim_{y\to 0}\tilde{g}(x+iy)\,, \label{eq:extremal-transformation-with-g-tilde}
\end{align}
where the limit is, \textit{a priori}, taken in the function space $L^2_x([2M,\infty))$.

We will prove \cref{prop:ode-u-tilde} over the next sections: first, in \cref{sec:integral-transformation-well-defined}, we show that $\tilde{g}(x+iy)$ admits a pointwise $C_x^{1,1/2}([2M,+\infty))$ extension as $y\to 0$; then, in \cref{sec:ode-auxiliary}, we show that it is in fact a smooth solution to a second order ODE. We also obtain precise asymptotics for $\tilde{g}(x)$ in Section \ref{sec:asymptotics-g-tilde}. Finally, in \cref{sec:integral-transformation-smooth} we put these together to prove \cref{prop:ode-u-tilde}. The structure of the section is based on the approach to Whiting's subextremal radial transformation in \cite{Shlapentokh-Rothman2015}.

\subsubsection{Defining the integral transformation for real \texorpdfstring{$\omega$}{time frequency}}
\label{sec:integral-transformation-well-defined}

In this section we will show that \eqref{eq:extremal-transformation-with-g-tilde} is well-defined, by understanding the limit of $\tilde{g}(x+iy)$ as $y\to 0$ with $x\in[2M,\infty)$. 

Clearly, when $\Im \omega >0$, the terms $e^{\beta(r-M)^{-1}}$ and $e^{A (x-M)(r-M)}$ in the integrand contribute with exponential decay as $r$ approaches $M$ and $\infty$, respectively, so we can easily take $y=0$ in \eqref{eq:def-g-tilde} and obtain a convergent integral, regardless of the sign of $s$. On the other hand, if $\omega$ is real, we no longer have this exponential decay at both ends when $y=0$. The integrand is at worst $O((r-M)^{-2s})$ as $r\to M$, which is integrable even when $y=0$, and and $O(e^{-\Im \omega x-\Re\omega yr}r^{-1-2s})$ as $r\to \infty$, which is not integrable if $y=\Im\omega=0$. Hence, to define $\tilde{g}$ properly in the limit $y\to 0$, we will integrate by parts to produce more decay near $r=\infty$:

\begin{lemma} \label{lemma:g-tilde-IBP}  Fix $s\leq 0$ and $\omega$ such that $\Im\omega>0$ or $\omega\in\mathbb{R}\backslash\{0\}$. Let $z:=x+iy$ where $(x,y)\in[2M,\infty)\times\{y\in[-1,1]\colon y\Re\omega\geq 0 \text{~or~} y\omega>0\}$. Let $\varepsilon>0$ be arbitrary; we have
\begin{align*} 
\tilde{g}(z)&=\int_{M}^{M+\varepsilon} e^{A(z-M)(r-r_-)}(r-M)^{2(\alpha-s)}e^{2\beta(r-M)^{-1}}e^{2\gamma r} g(r) dr \numberthis\label{eq:g-tilde-IBP}\\
&\qquad +\frac{1}{\lp[A(z-M)\rp]^{1-2s}}\int_{M+\varepsilon}^{\infty} \lp\{e^{A(z-M)(r-M)}\times\rp.\\
&\qquad\qquad\qquad\qquad\qquad\qquad \lp. \times\lp(\frac{d}{dr}\rp)^{1-2s}\lp((r-M)^{2(\alpha-s)}e^{2\beta(r-M)^{-1}}e^{2\gamma r} g(r)\rp)\rp\} dr \\
&\qquad+\sum_{k=1}^{2-2s} \frac{(-1)^k}{[A(z-M)]^k} \lp[e^{A(z-M)(r-M)}\rp.\times\\
&\qquad\qquad\qquad\qquad\qquad\qquad\lp.\times\lp(\frac{d}{dr}\rp)^{k}\lp((r-M)^{2(\alpha-s)}e^{2\beta(r-M)^{-1}}e^{2\gamma r} g(r)\rp)\rp]_{r=M+\varepsilon}\,,
\end{align*}
and similarly, for $j=1,2,3$,
\begin{align*} 
&\frac{1}{A^j}\frac{\p^j}{\p x^j}\tilde{g}(z)\\
&\quad=\int_{M}^{M+\varepsilon} e^{A(z-M)(r-r_-)}(r-M)^{2(\alpha-s)+j}e^{2\beta(r-M)^{-1}}e^{2\gamma r} g(r) dr \numberthis\label{eq:derivative-g-tilde-IBP}\\
&\quad\qquad +\frac{1}{\lp[A(z-M)\rp]^{1+j-2s}}\int_{M+\varepsilon}^{\infty} \lp\{e^{A(z-M)(r-M)}\times\rp.\\
&\quad\qquad\qquad\qquad\qquad\qquad\qquad \lp. \times\lp(\frac{d}{dr}\rp)^{1+j-2s}\lp((r-M)^{2(\alpha-s)+j}e^{2\beta(r-M)^{-1}}e^{2\gamma r} g(r)\rp)\rp\} dr \\
&\quad\qquad+\sum_{k=1}^{j-2s} \frac{(-1)^k}{[A(z-M)]^k} \lp[e^{A(z-M)(r-M)}\rp.\times\\
&\quad\qquad\qquad\qquad\qquad\qquad\qquad\lp.\times\lp(\frac{d}{dr}\rp)^{k}\lp((r-M)^{2(\alpha-s)+j}e^{2\beta(r-M)^{-1}}e^{2\gamma r} g(r)\rp)\rp]_{r=M+\varepsilon}\,,
\end{align*}

\noindent Moreover, $\tilde{g}(x+iy)$ admits a unique extension to $(x,y)\in[2M,\infty)\times\{y\in[-1,1]\colon y\Re\omega\geq 0 \text{~or~} y\omega\geq 0\}$ such that
\begin{enumerate}[label=(\roman*)]
\item $\tilde{g}(x+iy)\to \tilde{g}(x)$ in $H^2_x([2M,+\infty))$ and pointwise in $C^{1,1/2}_x([2M,+\infty)$ as $y\to 0$;
\item $\tilde{g}(x)$ and its weak derivative, $\p_x\tilde{g}(x)$, are $L^\infty_{x}([2M,+\infty))$.
\end{enumerate}
\end{lemma}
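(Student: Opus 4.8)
The plan is to establish the integration-by-parts identities \eqref{eq:g-tilde-IBP} and \eqref{eq:derivative-g-tilde-IBP} first, and then use the decay they expose to pass to the limit $y\to 0$. To derive \eqref{eq:g-tilde-IBP}, I would split the integral defining $\tilde g(z)$ at $r=M+\varepsilon$. On $[M,M+\varepsilon]$ the integrand is $O((r-M)^{-2s})$, which is integrable (since $s\le 0$), so that piece needs no further manipulation. On $[M+\varepsilon,\infty)$ I would repeatedly integrate by parts, using $e^{A(z-M)(r-M)} = \frac{1}{A(z-M)}\frac{d}{dr}e^{A(z-M)(r-M)}$, and move $1-2s$ derivatives onto the polynomial-exponential factor $(r-M)^{2(\alpha-s)}e^{2\beta(r-M)^{-1}}e^{2\gamma r}g(r)$. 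Each integration by parts produces a boundary term at $r=M+\varepsilon$ (the ones in the sum $\sum_{k=1}^{2-2s}$) — the boundary term at $r=\infty$ vanishes because once $y>0$ there is genuine exponential decay from $e^{A(z-M)(r-M)}$, and the factor $\frac{d^{k}}{dr^k}(\cdots)$ grows at most polynomially-times-$e^{2\gamma r}$ against which $e^{A(z-M)(r-M)}=e^{-2\gamma(z-M)(r-M)/M}$ wins for $x\ge 2M$. The choice of exactly $1-2s$ derivatives is dictated by the asymptotics \eqref{eq:g-bdry}: after $1-2s$ differentiations the integrand on $[M+\varepsilon,\infty)$ behaves like $e^{2\gamma r}\cdot e^{A(z-M)(r-M)}$ times something decaying like $r^{-2s-2}$ — wait, more carefully, like $r^{-\alpha - 1}$ reinforced to make the tail $O(r^{-2})$ in the relevant combination, which is what makes the limit $y\to 0$ integrable. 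The derivative identities \eqref{eq:derivative-g-tilde-IBP} follow the same way after first noting $\frac{1}{A}\p_x e^{A(z-M)(r-M)} = (r-M)e^{A(z-M)(r-M)}$, so differentiating $j$ times in $x$ just inserts a factor $(r-M)^j$ into the integrand; then one integrates by parts $1+j-2s$ times.

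Once the identities hold for $y>0$ (or $\Im\omega>0$), I would read off the $y\to 0$ limit. In the representation \eqref{eq:g-tilde-IBP}, the first term is an integral over the compact interval $[M,M+\varepsilon]$ of a function continuous in $(x,y)$ jointly down to $y=0$, hence converges uniformly on $x\in[2M,\infty)$; the boundary terms at $r=M+\varepsilon$ are explicit and continuous in $z$ down to the real axis; and in the remaining integral over $[M+\varepsilon,\infty)$ the integrand is now, by \eqref{eq:g-bdry}, bounded by $e^{-\Re\omega\, y\, r}\cdot(\text{something in }L^1_r)$ uniformly for $x\ge 2M$ — here is where having moved enough derivatives pays off, because $(\frac{d}{dr})^{1-2s}$ applied to $(r-M)^{2(\alpha-s)}e^{2\gamma r}g(r)$ decays like $r^{-2}$ after factoring out $e^{2\gamma r}$, which against $e^{A(z-M)(r-M)}$ at $x=2M$, $y=0$ gives $r^{-2}\in L^1_r([M+\varepsilon,\infty))$. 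By dominated convergence the limit exists pointwise in $x$. The same analysis on \eqref{eq:derivative-g-tilde-IBP} with $j=1$ gives convergence of $\p_x\tilde g$, with $j=2$ the $H^2$ statement, and the $C^{1,1/2}$ Hölder bound comes from a standard interpolation/direct estimate on the difference quotient of $\p_x\tilde g$ using the $j=2$ integral representation (the half-power is the generic loss when one controls a second derivative in $L^\infty_x$ only through an $L^1_r$-dominated integral, or equivalently via Morrey-type embedding $H^2\hookrightarrow C^{1,1/2}$ in one dimension on the relevant scale). Uniqueness of the extension is immediate since $\tilde g$ is already determined on the open half-plane and we are only adding the boundary.

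The $L^\infty$ statement (ii) requires a little extra care at the two ends $x=2M$ and $x=\infty$. Near $x=2M$, potential trouble would come from negative powers of $(z-M)$ in the prefactors $[A(z-M)]^{-k}$ in \eqref{eq:g-tilde-IBP}; but $z=x+iy$ with $x\ge 2M$ keeps $z-M$ bounded away from $0$, so these are harmless, and one simply bounds each term. Near $x=\infty$, one uses that $A(z-M)(r-M) = -\frac{2\gamma}{M}(x-M)(r-M)$ with $\Re\gamma = \Im\omega\ge 0$: for real $\omega$, $\gamma$ is purely imaginary, so $e^{A(x-M)(r-M)}$ is purely oscillatory and the boundedness comes entirely from the $r^{-2}$ decay of the differentiated integrand and the explicit boundary terms, all of which are $O(1)$ in $x$ (indeed the prefactor $[A(z-M)]^{-k}=O(x^{-k})$ helps). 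So $\tilde g$ and $\p_x\tilde g$ are bounded uniformly on $[2M,\infty)$.

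The main obstacle I expect is bookkeeping rather than conceptual: choosing $\varepsilon$ uniformly, tracking which derivative count ($1-2s$ versus $1+j-2s$) is needed so that the tail integrand lands in $L^1_r$ uniformly in $x$ down to $y=0$, and verifying that no boundary term at $r=\infty$ survives in the limit — this last point is the subtle one, since at $y=0$ with real $\omega$ the factor $e^{A(z-M)(r-M)}$ is merely oscillatory and does not by itself kill a boundary term, so one genuinely needs the polynomial decay of $(\frac{d}{dr})^{k}((r-M)^{2(\alpha-s)}e^{2\beta(r-M)^{-1}}e^{2\gamma r}g(r))\cdot e^{-2\gamma r}$ to close it off for every $k$ up to $2-2s$. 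One must check that the number of integrations by parts chosen ($1-2s$ in the surviving integral, hence boundary terms for $k=1,\dots,2-2s$) is exactly enough that the final boundary term, which carries the most derivatives, still vanishes at infinity — this follows from \eqref{eq:g-bdry} but should be stated explicitly.
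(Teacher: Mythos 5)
Your proposal is correct and follows essentially the same route as the paper: split the integral at $M+\varepsilon$, integrate by parts $1-2s$ (resp.\ $1+j-2s$) times on the tail to gain $O(r^{-2})$ decay in the integrand, note the boundary terms at infinity vanish for $y\neq 0$, and then pass to the limit $y\to 0$ using the resulting uniform $O(x^{-1})$ bounds, with the $C^{1,1/2}_x$ statement obtained via Morrey's embedding. The only cosmetic difference is that you take the limit by dominated convergence on the integral representations, whereas the paper phrases it as uniform continuity in $y$ of the $H^2_x$-valued map $y\mapsto\tilde{g}(\cdot+iy)$ (controlling $\partial_y$ by $\partial_x$ via holomorphy); these are equivalent given the bounds you establish.
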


\begin{proof}

In \eqref{eq:def-g-tilde}, we split the integration range at $M+\varepsilon$ and integrate by parts under the integral ranging over over $[M+\varepsilon,+\infty)$, noting that
\begin{align*}
&e^{-A(z-M)(r-M)} \frac{d}{dr}\lp(e^{A(z-M)(r-M)} \rp) =A(z-M)\,.
\end{align*}
We obtain
\begin{align*}
& \int_{M+\varepsilon}^{\infty} \frac{d}{dr}\lp(e^{A(z-M)(r-M)}\rp) (r-M)^{2(\alpha-s)}e^{2\beta (r-M)^{-1}}e^{2\gamma r} g(r) dr\\
&=\lp[e^{A(z-M)(r-M)}(r-M)^{2(\alpha-s)} e^{2\beta (r-M)^{-1}}e^{2\gamma r} g(r)\rp]_{M+\varepsilon}^{\infty}  \\
&\quad -\int_{M+\varepsilon}^\infty \lp\{e^{A(z-M)(r-M)} (r-M)^{2\alpha-2s-1}e^{2\beta(r-M)^{-1}}e^{2\gamma r} g(r)\rp. \\
&\quad\quad\quad\quad \lp.\times \lp[2(\alpha-s)-\frac{2\beta}{r-M}+2\gamma(r-M) +\frac{(r-M)}{g}\frac{dg}{dr}\rp]\rp\} dr\,.
\end{align*}
If either $\Im\omega>0$ or $y\omega>0$, the upper boundary term vanishes, because it is exponentially decaying as $r\to \infty$. We are left with an integral on the right hand side whose integrand is at worst $O(r^{-2-2s})$; if $s=0$, we are done, but otherwise, we need to repeat this integration by parts procedure $-2s$ times more. In the end, we obtain an integrand which is $O(r^{-2})$, hence integrable.

For the $j$th derivative of $\tilde{g}$ we note that, if either $\Im\omega>0$ or $y\omega>0$, the integral is absolutely convergent, so we can differentiate inside the integral; this produces an $j$ extra powers of $(r-M)$ in the integrand. This is helpful at the horizon, but not so as $r\to \infty$. Thus, to obtain the same decay at infinity for the integrand in the $j$th derivative of $\tilde{g}$, we need to apply the integration by parts procedure $j$ more times, i.e. $-2s+j$ times in total. The boundary terms at infinity will vanish when either $\Im\omega>0$ or $y\omega>0$ due to the exponential decay that was similarly present in the analogous procedure for $\tilde{g}$. Iterating this procedure, we find that if either $\Im\omega>0$ or $y\omega>0$, $\tilde{g}(z)$ is actually holomorphic.

We have now obtained formulas \eqref{eq:g-tilde-IBP} and \eqref{eq:derivative-g-tilde-IBP}. Since $|z|\geq x \geq 2M>M$, the right hand side of these formulas is bounded and $O(x^{-1})$ as $x\to \infty$ for any $y$; hence,
\begin{align}
\begin{split}\label{eq:bound-tilde-g}
&|\tilde{g}(z)|^2+|\p_y\tilde{g}(z)|^2+|\p_x\tilde{g}(z)|^2+|\p_y\p_x\tilde{g}(z)|^2+|\p_x^2\tilde{g}(z)|^2+|\p_y\p_x^2\tilde{g}(z)|^2\\
&\qquad\lesssim |\tilde{g}(z)|^2+|\p_x\tilde{g}(z)|^2+|\p_x^2\tilde{g}(z)|^2+|\p_x^3\tilde{g}(z)|^2\lesssim \frac{1}{x^2}\,,
\end{split}
\end{align}
for any $(x,y)\in[2M,\infty)\times\{y\in[-1,1]\colon y\Re\omega\geq 0 \text{~or~} y\omega>0\}$. We now have two cases
\begin{enumerate}[noitemsep]
\item If $\Im\omega>0$, $\tilde{g}(z)$ is holomorphic for $(x,y)\in[2M,\infty)\times\{y\in[-1,1]\colon y\Re\omega\geq 0 \}$, so (i) must hold. Moreover, (ii) follows from \eqref{eq:bound-tilde-g}. 
\item If $\omega\in\mathbb{R}\backslash\{0\}$, then without loss of generality, we can take $\omega>0$ and $y\in(0,1]$. We define the extension of $\tilde{g}(x+iy)$ to $H^2_x([2M,+\infty))$ as follows. Let $\mathfrak{g}_x\colon (0,1]\to H^2_x([2M,+\infty))$ be given by $\mathfrak{g}_x(y)=\tilde{g}(x+iy)$. By \eqref{eq:bound-tilde-g}, $\mathfrak{g}_x$ is uniformly continuous for $y\in(0,1]$ and hence admits a unique limit as $y\to 0$; we call this limit $\tilde{g}(x) \in H_x^2([2M,\infty))$, thus proving (i). By Morrey's inequality, in fact $\tilde{g}(x+iy)\to \tilde{g}(x)$ in $C^{1,1/2}_x([2M,+\infty))$. Thus,  $\tilde{g}(x)$ and $\p_x \tilde{g}(x)$ are continuous $L^2_x$ functions and, hence, bounded, as stated in (ii).
\end{enumerate}
This concludes the proof.
\end{proof}

\subsubsection{Differential equations for the auxiliary functions}
\label{sec:ode-auxiliary}

Let $\mc{T}_r$ be the double confluent Heun operator \cite{Ronveaux1995}\footnote{Comparing to the analogous operator (\ref{eq:confluent-operator}) which is considered for the subextremal case, we find that the two distinct regular singularities $r_\pm$ in that confluent Heun operator have merged to form an irregular one in our \textit{double} confluent Heun operator, \eqref{eq:double-confluent-operator}.} given by
\begin{align}
\begin{split}
\mc{T}_r &:= (r-M)^2\frac{d^2}{dr^2}+2\lp[(\alpha-s+1)(r-M)-\beta+\gamma(r-M)^2\rp]\frac{d}{dr}\\
&\qquad+\alpha -2s-2s\alpha-L+2\gamma(1-2s)(r-M)\,.
\end{split}\label{eq:double-confluent-operator}
\end{align}

Given the definition of $g$ in \eqref{eq:def-g}, since $R(r)$ is a solution to the radial ODE~\eqref{eq:radial-ODE-alpha-beta-gamma} with inhomogeneity $\hat{F}$, we find that $g$ satisfies $\mc{T}_r g =G$ where
\begin{align*}
G(r):=(r-M)^{-\alpha+2s}e^{-\beta(r-M)^{-1}}e^{\gamma r} \hat{F}(r)
\end{align*}
and $\alpha$, $\beta$ and $\gamma$ are defined in \eqref{eq:def-alpha-beta-gamma}.  

As we will see in \cref{sec:integral-transformation-smooth-sub}, for the subextremal transformation,  one can show that the analogous $g$ (\ref{eq:def-g-sub}) and $\tilde{g}$ (\ref{eq:def-g-tilde-sub}) satisfy differential equations of the same type but with different parameters. The same cannot be true of $g$ (\ref{eq:def-g}) and $\tilde{g}$ (\ref{eq:def-g-tilde}) for the extremal transformation, introduced in \cref{prop:ode-u-tilde}, which we are now considering; however the kernel $e^{A(z-M)(r-M)}$, $z=x+iy$ with $x,y$ as before, can be used to produce a solution to a \textit{reduced confluent equation} \cite{Kazakov1996,Schmidt1995}:
\begin{align*}
&e^{-A(z-M)(r-M)}\mc{T}_r e^{A(z-M)(r-M)}\\
&\quad = (x-B)(x-M)A^2(r-M)^2+[2\gamma+(B-M)A](x-M)(r-M)^2 \\
&\quad\qquad+\lp[2(\alpha-s+1)(x-M)+\frac{2\gamma}{A}(1-2s)\rp]A(r-M) \\
&\quad\qquad +\alpha(1-2s) -2s-L -2\beta A (x-M)\,, 
\end{align*}
so if we choose $B=2M$ and $A=-2\gamma/M$, we have $\tilde{\mc{T}}_x e^{A(z-M)(r-M)} = \mc{T}_r e^{A(z-M)(r-M)}$, where
\begin{align}
\begin{split}
\tilde{\mc{T}}_x &:= (x-2M)(x-M)\frac{d^2}{dx^2}+ [(2\alpha+1)(x-M)+(1-2s)(x-2M)]\frac{d}{dx}\\
&\quad+\alpha(1-2s)-2s-L+\frac{4\beta\gamma}{M}(x-M)
\end{split} \label{eq:reduced-confluent-operator}
\end{align}
is a type of confluent Heun operator with a half-integer singularity at infinity ($\epsilon=0$ in the notation of \cite{Olver2018}). In essence, using the kernel $e^{A(z-M)(r-M)}$ allows us to split the extremal horizon, of multiplicity two, into two ``horizons'' at the expense of losing the usual asymptotics as $r\to \infty$.

\begin{lemma}\label{lemma:reduced-Heun-z} Suppose  $\Im\omega>0$ or $\omega\in\mathbb{R}\backslash\{0\}$. Let $z=x+iy$ with $(x,y)\in[2M,\infty)\times\{y\in[-1,1]\colon y\Re\omega\geq 0 \text{~or~} y\omega>0\}$.  For $y\neq 0$, we have $\tilde{\mc{T}}_x\tilde{g}(x+iy)=\tilde{G}(x+iy)$, where $\tilde{\mc{T}}_x$ is defined by \eqref{eq:reduced-confluent-operator} and 
\begin{align*}
\tilde{G}(x+iy) := \int_{M}^\infty e^{A(z-M)(r-M)}(r-M)^{2\alpha-2s} e^{2\beta(r-M)^{-1}}e^{2\gamma r}G(r) dr\,.
\end{align*}
Moreover, if $y=0$, $\tilde{g}(x)$ is smooth for $x\in(2M,+\infty)$ and satisfies
\begin{equation}
\tilde{\mc{T}}_x\tilde{g}(x)= \tilde{G}(x) \label{eq:ode-g-tilde}
\end{equation}
classically.
\end{lemma}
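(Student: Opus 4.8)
The plan is to establish \cref{lemma:reduced-Heun-z} in two stages, first for $y\neq 0$ where everything is classical by absolute convergence, and then passing to the limit $y\to 0$ using the convergence already obtained in \cref{lemma:g-tilde-IBP}.

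\textbf{The case $y\neq 0$.} For $\Im\omega>0$, or for real $\omega$ with $y\omega>0$, the integral defining $\tilde{g}(z)$ in \eqref{eq:def-g-tilde} is absolutely convergent together with all its $x$-derivatives, as was shown in the proof of \cref{lemma:g-tilde-IBP} (the kernel $e^{A(z-M)(r-M)}$ decays exponentially as $r\to\infty$ and $e^{2\beta(r-M)^{-1}}$ controls the behaviour at $r=M$). Hence I may differentiate under the integral sign freely, and $\tilde{\mc{T}}_x$ acts on $\tilde{g}(x+iy)$ by acting on the kernel:
\begin{align*}
\tilde{\mc{T}}_x\tilde{g}(x+iy)=\int_M^\infty\lp(\tilde{\mc{T}}_x e^{A(z-M)(r-M)}\rp)(r-M)^{2\alpha-2s}e^{2\beta(r-M)^{-1}}e^{2\gamma r}g(r)\,dr\,.
\end{align*}
By the algebraic identity derived just before the lemma, $\tilde{\mc{T}}_x e^{A(z-M)(r-M)}=\mc{T}_r e^{A(z-M)(r-M)}$ once we set $B=2M$, $A=-2\gamma/M$. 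I then substitute this, so the integrand becomes $\lp(\mc{T}_r e^{A(z-M)(r-M)}\rp)(r-M)^{2\alpha-2s}e^{2\beta(r-M)^{-1}}e^{2\gamma r}g(r)$, and integrate by parts in $r$ twice to move the operator $\mc{T}_r$ off the kernel and onto the remaining factors. The boundary terms at $r=M$ and $r=\infty$ vanish: at infinity because of the exponential decay of the kernel (valid precisely because $y\neq 0$, or $\Im\omega>0$), and at $r=M$ because $\mc{T}_r$ has its singularity there of the right order — one checks using the indicial structure of the double confluent Heun operator \eqref{eq:double-confluent-operator} and the expansion \eqref{eq:g-bdry} of $g$ near $r=M$ that no boundary contribution survives. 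What is left is $\int_M^\infty e^{A(z-M)(r-M)}(r-M)^{2\alpha-2s}e^{2\beta(r-M)^{-1}}e^{2\gamma r}(\mc{T}_r g)(r)\,dr$, and since $\mc{T}_r g=G$, this is exactly $\tilde{G}(x+iy)$. This proves $\tilde{\mc{T}}_x\tilde{g}=\tilde{G}$ for $y\neq 0$, and in particular $\tilde{g}(x+iy)$ is smooth in $x$ there.

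\textbf{Passing to $y\to 0$.} By \cref{lemma:g-tilde-IBP}(i), $\tilde{g}(x+iy)\to\tilde{g}(x)$ in $H^2_x([2M,\infty))$, and the same argument applied to the integration-by-parts formula \eqref{eq:derivative-g-tilde-IBP} gives convergence of all higher $x$-derivatives locally (the formulas \eqref{eq:derivative-g-tilde-IBP} exhibit $\p_x^j\tilde{g}$ as a sum of integrals and boundary terms each of which converges as $y\to 0$, uniformly on compact $x$-intervals bounded away from $x=2M$). Hence $\tilde{g}(x)$ is smooth on $(2M,\infty)$ and $\tilde{\mc{T}}_x\tilde{g}(x+iy)\to\tilde{\mc{T}}_x\tilde{g}(x)$ pointwise for $x\in(2M,\infty)$; here one uses that the coefficients of $\tilde{\mc{T}}_x$ in \eqref{eq:reduced-confluent-operator} are smooth on $(2M,\infty)$, in particular the coefficient $(x-2M)(x-M)$ of the second derivative is strictly positive there, so no degeneration occurs in the interior. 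Likewise $\tilde{G}(x+iy)\to\tilde{G}(x)$ (the inhomogeneity $\hat F$, hence $G$, is compactly supported away from $r=M$, so the integral defining $\tilde{G}$ converges trivially and depends continuously on $z$ down to $y=0$; if $\hat F\equiv 0$ this is vacuous). Combining the two convergences with the identity $\tilde{\mc{T}}_x\tilde{g}(x+iy)=\tilde{G}(x+iy)$ from the first stage yields \eqref{eq:ode-g-tilde} classically on $(2M,\infty)$.

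\textbf{Main obstacle.} The delicate point is the vanishing of the boundary terms at $r=M$ when integrating $\mc{T}_r$ by parts. Near $r=M$ the weight $(r-M)^{2\alpha-2s}e^{2\beta(r-M)^{-1}}$ and the operator $\mc{T}_r$, which has an irregular singularity of rank $1$ there, conspire in a way that must be checked carefully — one needs the precise cancellation built into the choice of $g$ in \eqref{eq:def-g} and the resulting regularity \eqref{eq:g-bdry}, namely that $g$ is a convergent power series in $(r-M)$ at $r=M$, so that after carrying the two derivatives in $\mc{T}_r$ and the extra factors the product still tends to a finite limit and the exponential $e^{2\beta(r-M)^{-1}}$ together with whatever is multiplying it produces a genuinely vanishing (or convergent and then cancelling) boundary contribution. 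I would isolate this as a short computational lemma, working with the leading terms of the series \eqref{eq:g-bdry}. The convergence bookkeeping as $y\to 0$ is routine given \cref{lemma:g-tilde-IBP}, and the algebraic identity $\tilde{\mc{T}}_x e^{A(z-M)(r-M)}=\mc{T}_r e^{A(z-M)(r-M)}$ has already been verified in the text.
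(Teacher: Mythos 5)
Your first stage ($y\neq 0$) is essentially the paper's argument: differentiate under the (absolutely convergent) integral, use the kernel identity $\tilde{\mc{T}}_x e^{A(z-M)(r-M)}=\mc{T}_r e^{A(z-M)(r-M)}$, and transfer $\mc{T}_r$ onto the remaining factors via the Lagrange/self-adjointness identity for $\mc{T}_r$ against the weight $(r-M)^{2\alpha-2s}e^{2\beta(r-M)^{-1}}e^{2\gamma r}$, with the boundary term at $r=M$ killed by the extra factor $(r-M)^{2}$ in the exponent $2(\alpha-s+1)$ together with the regularity \eqref{eq:g-bdry} of $g$ — exactly the cancellation you flag as the delicate point. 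Where you genuinely diverge from the paper is the passage to $y=0$. The paper does \emph{not} try to pass to the limit in the classical equation: it only uses the $C^{1,1/2}_x$ convergence of \cref{lemma:g-tilde-IBP}, tests the equation against smooth compactly supported $\varphi$, takes $y\to 0$ in the weak formulation, and then upgrades the weak $H^1_x$ solution to a smooth classical one by elliptic regularity of $\tilde{\mc{T}}_x$ (using that $\tilde{G}$ is smooth). You instead claim uniform-on-compacts convergence of $\p_x^j\tilde{g}(x+iy)$ for \emph{every} $j$ and conclude smoothness and the classical equation directly. This is viable — the integration-by-parts formulas \eqref{eq:derivative-g-tilde-IBP} extend to arbitrary $j$ by iterating $1+j-2s$ times, each resulting term is dominated uniformly in the allowed $y$ (since $\lvert e^{A(z-M)(r-M)}\rvert\leq 1$ and $\lvert z-M\rvert\geq M$), and dominated convergence gives the limits — but note that the paper only writes those formulas for $j\leq 3$, so your route requires this (routine but unstated) extension plus the standard fact that uniform convergence of derivatives commutes limit and differentiation. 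The trade-off: your argument is more elementary and self-contained (no appeal to elliptic regularity), at the cost of heavier bookkeeping; the paper's argument needs only the one-and-a-half derivatives already controlled in \cref{lemma:g-tilde-IBP} and outsources the smoothness to regularity theory. Both are correct.
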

\begin{proof}
If either $\Im\omega>0$ and $y\Re\omega\geq 0$ or $y\omega>0$, the integral is absolutely convergent and we can thus differentiate under the integral to obtain
\begin{align*}
\tilde{\mc{T}}_x\tilde{g}(z) &= \int_M^\infty \tilde{\mc{T}}_x \lp(e^{A(z-M)(r-M)}\rp) (r-M)^{2\alpha-2s} e^{2\beta(r-M)^{-1}}e^{2\gamma r}g(r) dr \\
&=\int_M^\infty \mc{T}_r \lp(e^{A(z-M)(r-M)}\rp) (r-M)^{2\alpha-2s} e^{2\beta(r-M)^{-1}}e^{2\gamma r}g(r) dr \\
&= \tilde{G}(z)+ \lp[\lp(A(x-M)g-\frac{dg}{dr}\rp)(r-M)^{2(\alpha-s+1)}e^{2\beta(r-M)^{-1}}e^{2\gamma r}e^{A(z-M)(r-M)}\rp]_0^\infty\,,
\end{align*}
where we have made use of the properties of the kernel, the equation for $g$ and the lemma
\begin{lemma} Let $\mc{T}_r$ be a differential operator as defined in \eqref{eq:double-confluent-operator}. Then, for sufficiently regular functions $f,h$, we have
\begin{align*}
&\int_{A_1}^{A_2} \lp(h\mc{T}_r f-f\mc{T}_r h \rp)(r-M)^{2\alpha-2s}e^{2\beta(r-M)^{-1}}e^{2\gamma r} dr \\
&\quad=\lp[\lp(h\frac{df}{dr}-f\frac{dh}{dr}\rp)(r-M)^{2(\alpha-s+1)}e^{2\beta(r-M)^{-1}}e^{2\gamma r}\rp]_{r=A_1}^{r=A_2}\,,
\end{align*}
\end{lemma}
\noindent obtained by using \eqref{eq:double-confluent-operator} and integrating by parts. This lemma justifies the introduction of weights $(r-M)^{2\alpha-2s}e^{2\beta(r-M)^{-1}}e^{2\gamma r}$: when evaluated against such a measure, $\mc{T}_r$ is self-adjoint. 

We still need to show that the boundary terms vanish. This is clear at $r=\infty$, due to the decay of $g$ and $dg/dr$, together with the exponential decay brought of either $\Im\omega>0$ or $y\omega>0$. At the horizon, the extra factor of $(r-M)^2$ also make the boundary term term vanish, due to the boundary conditions (\ref{eq:g-bdry}) that $g$ satisfies.

To consider the case $y=0$, first note that $\tilde{G}(x+iy)$ is smooth even for $y=0$, by compact support\footnote{Note that $\tilde{G}(x+iy)$ can be rewritten as the Fourier transform of a compactly supported function; see the proof of statement 5 of Proposition~\ref{prop:ode-u-tilde} in Section~\ref{sec:integral-transformation-smooth}.} of $\hat{F}$. 
We have shown that $\tilde{g}(x+iy)$, for $y\neq 0$, satisfies the ODE classically, hence, it also satisfies the ODE  weakly as an $H^1_x([2M,\infty))$ function. Thus, for any smooth $\varphi$ compactly supported in $(2M,\infty)$,
\begin{gather}
\begin{gathered}\label{eq:weak-formulation}
\int_{2M}^\infty \frac{d}{dx}\tilde{g}(x)\lp[\lp(2\alpha(x-M)-2s(x-2M)\rp)\varphi(x)
+ (x-M)(x-2M)\frac{d\varphi}{dx}(x)\rp]dx\\
+\int_{2M}^\infty g(x)\lp[\alpha(1-2s)-2s-L+\frac{4\beta\gamma}{M}(x-M)\rp]\varphi(x)dx
-\int_{2M}^\infty \tilde{G}(x)\varphi(x)dx \\
=\int_{2M}^\infty \lp[\frac{d}{dx}\tilde{g}(x)-\frac{d}{dx}\tilde{g}(x+iy)\rp]\lp[2\alpha(x-M)-2s(x-2M)\rp]\varphi(x)dx\\
+\int_{2M}^\infty \lp[\frac{d}{dx}\tilde{g}(x)-\frac{d}{dx}\tilde{g}(x+iy)\rp](x-M)(x-2M)\frac{d\varphi}{dx}(x)dx\\
+\int_{2M}^{\infty} \lp[g(x)-g(x+iy)\rp]\lp[\alpha(1-2s)-2s-L+\frac{4\beta\gamma}{M}(x-M)\rp]\varphi(x)dx\\
-\int_{2M}^{\infty}\lp[\tilde{G}(x)-\tilde{G}(x+iy)\rp]\varphi(x)dx\,.
\end{gathered}
\end{gather}
As the left hand side is independent of $y$, we can take a limit as $y\to 0$ of the right hand side. Recall that $\tilde{g}(x+iy)\to\tilde{g}(x)$ in $C^{1,1/2}_x([2M,+\infty))$, by Lemma~\ref{lemma:g-tilde-IBP}, with uniform convergence in the compact support of the test function $\varphi$. Hence, we can exchange the limit with the integral, which shows that $\tilde{g}(x)$ is a weak $H^1_x([2M,+\infty))$ solution to the ODE as well. As $\tilde{G}$ is smooth, by elliptic regularity of $\tilde{\mc{T}}_x$ we conclude that $\tilde{g}(x)$ is in fact smooth for $x\in(2M,+\infty)$ and satisfies the ODE classically.
\end{proof}

\subsubsection{Asymptotics of \texorpdfstring{$\tilde{g}$}{tilded g} for large \texorpdfstring{$x$}{x}}
\label{sec:asymptotics-g-tilde}

By Lemma~\ref{lemma:g-tilde-IBP}, $\tilde{g}(x)$ is bounded and $C^1_x([2M,\infty))$, which uniquely determines the behavior at $r=2M$: 
\begin{align*}
\tilde{g}(x)&= \sum_{k=0}^\infty b_k (x-r_+)^{k} \text{~ as } x\to 2M\,
\end{align*}
On the other hand, $\tilde{g}$ satisfies \eqref{eq:ode-g-tilde} by Lemma~\ref{lemma:reduced-Heun-z}. If the inhomogeneity is compactly supported away from $x=\infty$ or vanishes, by asymptotic analysis, we find that $\tilde{g}$ must be a superposition of, for $N\in\mathbb{N}$,
\begin{gather*}
\exp\lp[-4\sqrt{-\beta\gamma} \lp(\frac{x}{M}\rp)^{1/2}\rp] x^{-3/4+s-\alpha}\lp[\sum_{k=0}^N c_k x^{-k/2}+O\lp(x^{-N/2-1/2}\rp)\rp]\,,\\
  \exp\lp[4\sqrt{-\beta\gamma} \lp(\frac{x}{M}\rp)^{1/2}\rp] x^{-3/4+s-\alpha}\lp[\sum_{k=0}^N c_k x^{-k/2}+O\lp(x^{-N/2-1/2}\rp)\rp] \,.
\end{gather*}
If $\Im\omega>0$, then unless $-\beta\gamma$ is real and negative, one of the two solutions provides exponential growth an the other provides exponential decay. Since $-\beta\gamma$ is real if and only if $-\beta\gamma = 2M^2|\omega|^2>0$, boundedness of $\tilde{g}$ restricts us to the case of exponential decay, and hence to a unique choice between the two previous solutions. The same is true for $\omega$ on the real axis if $\beta\gamma<0$, but, if $\beta\gamma>0$, we cannot \textit{a priori} rule out one of the behaviors.  Moreover, we cannot relate the amplitude, i.e.\ the first coefficients in the expansion, to the solution to the radial ODE, $R$, we began with.

With more effort, we can obtain precise asymptotics for $\tilde{g}$ when $\omega$ is real, even if the inhomogeneity is not trivial. We need to use slightly different strategies depending on the sign of  $\omega(\omega-m\upomega_+)$. The next lemma summarizes the technical core of our asymptotic analysis:
\begin{lemma}\label{lemma:technical-core-mode-stability-extremal} Let $h$ be a smooth function on $[M,+\infty)$ such that: if $\mu\nu>0$, $h$ vanishes on $[M+2,\infty)$; if $\mu\nu<0$, $|h(r)|=O(r^{-2-2s})$ as $r\to\infty$. Then, we have
\begin{align*}
Z(\nu,\mu)&:=\int_{M}^\infty e^{i\nu(r-M)}e^{i\mu(r-M)^{-1}}(r-M)^{2\alpha-2s}h(r)dr\\
& = i^{1/2}\sqrt{\frac{\pi}{|\nu|}}\lp(\frac{\mu}{\nu}\rp)^{1/4-s+\alpha}e^{2\sigma\sqrt{|\mu \nu|}}\lp[h(M)+O\lp(|\nu|^{-1/2}\rp)\rp]\,,
\end{align*}
where $\sigma = i\,\mr{sign}\,\nu$ if $\mu\nu>0$ and $\sigma = -1$ if $\mu\nu<0$.
\end{lemma}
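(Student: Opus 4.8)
\textit{Proposed proof.} I will treat $Z(\nu,\mu)$ as an oscillatory integral and extract its leading behaviour, as $|\nu|\to\infty$ with $\mu$ fixed (the regime in which this lemma is applied, and the only one in which the error term $O(|\nu|^{-1/2})$ is meaningful), from the critical point of the phase $\psi(t):=\nu t+\mu t^{-1}$, which solves $\psi'(t)=\nu-\mu t^{-2}=0$, i.e.\ $t_c^2=\mu/\nu$. Substituting $t=r-M$ turns $Z$ into $\int_0^\infty e^{i\psi(t)}t^{2\alpha-2s}h(M+t)\,dt$; because $\mu$ is fixed while $|\nu|\to\infty$, the real saddle $t_c=\sqrt{\mu/\nu}$ (when $\mu\nu>0$), respectively the relevant complex saddle $t_c=i\,\mathrm{sign}(\cdot)\sqrt{|\mu/\nu|}$ (when $\mu\nu<0$), collapses onto the endpoint $r=M$. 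This is the structural reason only $h(M)$ survives at leading order. Accordingly I first write $Z=h(M)Z_0+Z_{\mathrm{err}}$, where
$Z_0:=\int_M^\infty e^{i\nu(r-M)+i\mu(r-M)^{-1}}(r-M)^{2\alpha-2s}\,dr$
is the model integral with constant amplitude and $Z_{\mathrm{err}}$ has amplitude $h(r)-h(M)=O(r-M)$ near $r=M$; the hypotheses on $h$ (compact support when $\mu\nu>0$, polynomial decay when $\mu\nu<0$) ensure convergence of all integrals involved.

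\textit{Asymptotics of the model integral.} Rescaling $t=\sqrt{|\mu/\nu|}\,\tau$ presents $Z_0$ as $|\mu/\nu|^{(2\alpha-2s+1)/2}$ times the canonical integral $\int_0^\infty e^{i\Lambda(\tau+\varepsilon/\tau)}\tau^{2\alpha-2s}\,d\tau$ with large parameter $\Lambda:=\sqrt{|\mu\nu|}\sim|\mu|^{1/2}|\nu|^{1/2}$ and $\varepsilon=\mathrm{sign}(\mu\nu)$ (an overall factor $e^{\pm\cdots}$ keeping track of $\mathrm{sign}\,\nu$ must be carried along). For $\varepsilon=+1$ this integral has a single nondegenerate stationary point at $\tau=1$, with $\psi(1)=2$ and $\psi''(1)=2$, and classical stationary phase gives $i^{1/2}\sqrt{\pi/\Lambda}\,e^{2i\Lambda}\bigl(1+O(\Lambda^{-1})\bigr)$. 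For $\varepsilon=-1$ there is no real saddle, so I deform the contour $(0,\infty)$ — legitimate here, since the amplitude $\tau^{2\alpha-2s}$ extends holomorphically off the real axis and both $e^{i\Lambda\tau}$ (in the upper half plane) and $e^{-i\Lambda/\tau}$ (near $\tau=0$ in the upper half plane) decay — onto the steepest descent path through $\tau=i$, producing $\sqrt{\pi/\Lambda}\,e^{-2\Lambda}$ up to an explicit phase. (Equivalently, and as a check on all the constants, the canonical integral equals $2K_{2\alpha-2s+1}(\mp 2i\Lambda)$ or $2K_{2\alpha-2s+1}(2\Lambda)$ according to $\varepsilon=\pm1$, by the standard Basset-type integral representation of the modified Bessel function extended to imaginary argument by contour rotation; its large-argument asymptotics $K_\rho(z)\sim\sqrt{\pi/2z}\,e^{-z}$ reproduce the same.) Reassembling the factors $|\mu/\nu|^{(2\alpha-2s+1)/2}$, $\Lambda^{-1/2}$ and the phase, and selecting the branches so that $\sqrt{\mu\nu}$ is evaluated at the genuine saddle, recovers exactly $i^{1/2}\sqrt{\pi/|\nu|}\,(\mu/\nu)^{1/4-s+\alpha}e^{2\sigma\sqrt{|\mu\nu|}}$ with $\sigma=i\,\mathrm{sign}\,\nu$ in the non-stationary-phase-obstructed case $\mu\nu>0$ and $\sigma=-1$ in the case $\mu\nu<0$.

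\textit{The error term.} After the same rescaling, $Z_{\mathrm{err}}$ acquires the extra factor $h(M+\sqrt{|\mu/\nu|}\,\tau)-h(M)=O(\sqrt{|\mu/\nu|})=O(|\nu|^{-1/2})$, uniformly on the bounded $\tau$-region responsible for the leading contribution (i.e.\ near $\tau=1$, resp.\ $\tau=i$); on the complementary region $|\psi'|$ is bounded below, and repeated integration by parts — using the smoothness of $h$ and, when $\mu\nu>0$, its compact support, or, when $\mu\nu<0$, its polynomial decay at infinity — shows that region contributes at strictly lower order. Hence $Z_{\mathrm{err}}=O(|\nu|^{-1/2})\cdot(\text{leading term})$, which together with the previous paragraph yields the stated expansion.

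\textit{Main obstacle.} The delicate part is the branch bookkeeping: one must pin down the prefactors $i^{1/2}$ and $(\mu/\nu)^{1/4-s+\alpha}$ and the sign $\sigma=i\,\mathrm{sign}\,\nu$ uniformly in the signs of $\mu$ and $\nu$. I handle this by always evaluating the phase at the true critical point $t_c=\sqrt{\mu/\nu}$, where $\psi(t_c)=2\,\mathrm{sign}(\nu)\sqrt{|\mu\nu|}$ unambiguously when $\mu\nu>0$, and — in the case $\mu\nu<0$ — by inheriting the regularization $\nu\mapsto\nu\mp i0$ from the limit $y\to0^{\pm}$ already present in Lemma~\ref{lemma:g-tilde-IBP}, which fixes which of the two complex saddles $\pm i\sqrt{|\mu/\nu|}$ is picked up and hence the decaying sign $\sigma=-1$. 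A secondary technical nuisance is that when $\mu\nu<0$ and $s<0$ the amplitude $(r-M)^{2\alpha-2s}$ grows at infinity, so the error integral is only conditionally convergent and iterated integration by parts stalls; there the relevant smallness must be extracted from the vanishing of $h(r)-h(M)$ at $r=M$, i.e.\ near where the saddle sits, rather than from decay at infinity, supplemented by a single careful integration by parts exploiting that $\psi'$ is monotone and nonvanishing on $(M,\infty)$ in that sign regime.
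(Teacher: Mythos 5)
Your proposal is correct and follows essentially the same route as the paper: rescale to expose the large parameter $\sqrt{|\mu\nu|}$, apply stationary phase at the real saddle when $\mu\nu>0$, and deform to the imaginary-axis saddle and apply Laplace's method/steepest descent when $\mu\nu<0$, with the sign of $\nu$ selecting the half-plane. The only differences are organizational (you split off $h(M)Z_0$ and treat a constant-amplitude model integral, whereas the paper keeps $h$ in the integrand, evaluates it at the saddle, and Taylor-expands afterwards), plus your Bessel-function cross-check, none of which changes the substance.
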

\begin{proof}
Defining $w=|\nu|^{1/2}(r-M)$, we rewrite $Z(\nu,\mu)$ as
\begin{align*}
Z(\nu,\mu)&=|\nu|^{-1/2+s-\alpha}\int_{0}^\infty e^{ i\mr{sign}\,\nu|\nu|^{1/2}\lp(w+\mr{sign}(\mu\nu)|\mu| w^{-1}\rp)}w^{2\alpha-2s}h(|\nu|^{-1/2}w+M)dw
\end{align*}
The phase of $Z(\nu,\mu)$ is now the function $f$ satisfying
\begin{gather*}
f(w)=\mr{sign}\,\nu\lp(w+\mr{sign}(\mu\nu)\frac{|\mu|}{w}\rp)\,,\quad f'(w)=\mr{sign}\,\nu\lp(1-\mr{sign}(\mu\nu)\frac{|\mu|}{w^2}\rp)\,,\\ f''(w)=\mr{sign}\,\nu\lp(\frac{2\mr{sign}(\mu\nu)|\mu|}{w^3}\rp)\,.
\end{gather*}
The only critical points of $f$ are at $w_0=\pm\sqrt{\mr{sign}(\mu\nu)|\mu|}$, where
\begin{align*}
f(w_0)&=\begin{dcases} \pm 2i\, \mr{sign}\,\nu |\mu|^{1/2} &\text{~if~}\nu\mu<0 \\
\pm 2\,\mr{sign}\,\nu |\mu|^{1/2} &\text{~if~}\nu\mu>0
\end{dcases}\,,\quad
f''(w_0)=\begin{dcases} \mp 2i\, \mr{sign}\,\nu |\mu|^{-1/2} &\text{~if~}\nu\mu<0 \\
\pm 2\,\mr{sign}\,\nu |\mu|^{-1/2} &\text{~if~}\nu\mu>0
\end{dcases}
\end{align*}

Consider the case $\mu\nu>0$ first. In this case, the only stationary point of $f$ which is in the integration range is $w_0=+|\mu|^{1/2}$. Assuming compact support of $h$, we can apply the stationary phase approximation with large parameter $|\nu|^{1/2}$ (see e.g.\ \cite[Lemma 2.8]{Tao2018}) to obtain 
\begin{align*}
Z(\nu,\mu)& = i^{1/2}\sqrt{\frac{\pi}{|\nu|}}\lp(\frac{\mu}{\nu}\rp)^{1/4-s+\alpha}e^{2i\,\mr{sign}\,\nu\sqrt{|\mu \nu|}}\lp[h\lp(M+\sqrt{|\mu/\nu|}\rp)+O\lp(|\nu|^{-1/2}\rp)\rp]\,.
\end{align*}
Finally, by expanding $h$ with a Taylor series, we get the statement in the case $\mu\nu>0$.

If $\mu\nu<0$, then the critical points of $f$ are in the complex plane, so application of the stationary phase lemma would could only give us at most arbitrarily fast polynomial decay in $|\nu|$. Instead, we want to deform the integral in $w\in[0,\infty)$ into one on the complex plane passing through the critical points and apply the method of steepest descent. Considering Figure~\ref{fig:contour-integration}, we see that, by Cauchy's theorem, since $e^{ i|\nu|^{1/2}f(z)}z^{2\alpha-2s}h(|\nu|^{-1/2}z+M)$ is holomorphic for $z$ in the region contained by $\gamma_1\cup\gamma_2\cup\gamma_3\cup \gamma_4$, we have
\begin{align*}
&\int_{0}^\infty = \lim_{\varepsilon\to 0} \int_{\gamma_2} +\lim_{\varepsilon\to 0,\,R\to \infty}\int_{\gamma_3} +\lim_{R\to \infty}\int_{\gamma_3}\,,
\end{align*}
where we have suppressed the integrand.  To determine which of the two points is approached by $\gamma_3$, we look at the integral over $\gamma_2=\{z=\varepsilon e^{i\theta},\,\theta\in(0,\pi/2)\text{~or~}\theta\in(0,-\pi/2)\}$:
\begin{align*}
&\lp|\int_{\gamma_2}e^{ i|\nu|^{1/2}f(z)}z^{2\alpha-2s}h(|\nu|^{-1/2}z+M)dz \rp| \\
&\qquad\leq \varepsilon \int_{0}^{\pm \pi/2} e^{-\mr{sign}\nu\,|\nu|^{1/2}\lp(\varepsilon+|\mu|\varepsilon^{-1}\rp)\sin\theta}\varepsilon^{-2s} e^{2i\alpha\theta} |h(|\nu|^{-1/2}\varepsilon e^{i\theta}+M)|d\theta \to 0\,,
\end{align*}
as $\varepsilon\to 0$ if and only if $\mr{sign}\,\nu \sin\theta >0$ in the integration range. We thus pick contour (a) in Figure~\ref{fig:contour-integration} for $\nu>0$ and (b) for $\nu<0$. For the integral over $\gamma_4=\{z=Re^{i\theta},\,\theta\,\mr{sign}\,\nu\in(\pi/2,0)\}$, we can show that
\begin{align*}
&\lp|\int_{\gamma_4}e^{ i|\nu|^{1/2}f(z)}z^{2\alpha-2s}h(|\nu|^{-1/2}z+M)dz \rp| \\
&\qquad\leq  \int_{\pi/2}^{0} e^{-|\nu|^{1/2}\lp(R+|\mu|R^{-1}\rp)\sin\theta} e^{2i\alpha\theta} R^{1-2s}|h(|\nu|^{-1/2}R e^{i\theta}+M)|d\theta \,,
\end{align*}
which goes to zero as $R\to \infty$ as long as $h(r)=O(r^{2-2s})$. 

\begin{figure}[htbp]
    \centering
    \begin{subfigure}[t]{0.5\textwidth}
        \centering
 \includegraphics[scale=1]{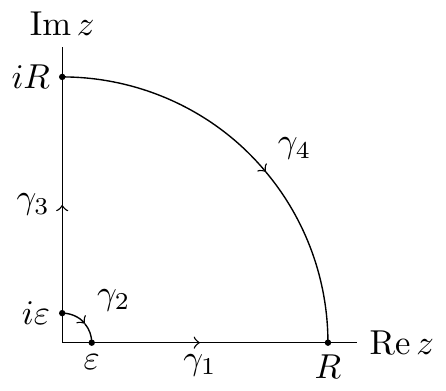}
        \caption{}
    \end{subfigure}%
    ~ 
    \begin{subfigure}[t]{0.45\textwidth}
        \centering
\includegraphics[scale=1]{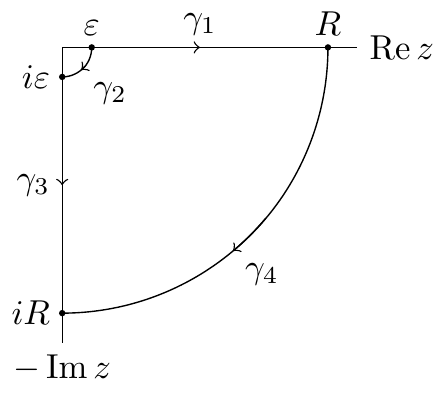}
        \caption{}
    \end{subfigure}
    \caption{Contour integrals for \cref{lemma:technical-core-mode-stability-extremal} in the case $\mu\nu<0$: (a) is applied if $\nu>0$ and (b) when $\nu<0$.}
    \label{fig:contour-integration}
\end{figure}

The only remaining contribution is over $\gamma_3=\{z=i\,\mr{sign}\,\nu y,\, y\in(\varepsilon,R)\}$, along which we have
\begin{align*}
i\,\mr{sign}\,\nu|\nu|^{1/2}f(z) = -|\nu|^{1/2}\lp(y+\frac{|\mu|}{y}\rp) = |\nu|^{1/2}S(y)\,,
\end{align*}
where 
\begin{align*}
S(y)= -\lp(y+\frac{|\mu|}{y}\rp)\,,\quad S'(y)=-\lp(1-\frac{|\mu|}{y^2}\rp)\,,\quad S''(y)=-\frac{2|\mu|}{y^3}
\end{align*}
has a unique maximum at $y_0=|\mu|^{1/2}$ which is in the integration range if $\varepsilon$ is sufficiently small and $R$ is sufficiently large. Moreover, as $S(y)\leq 0$, the integrand is exponentially decaying as $y\to \infty$, so we can apply the Laplace's method (a version of the method of steepest descent) \cite[Theorem 7.1]{Olver2018} to conclude:
\begin{align*}
&|\nu|^{-1/2+s-\alpha}\int_{\gamma_3} e^{ i|\nu|^{1/2}f(z)}z^{2\alpha-2s}h(|\nu|^{-1/2}z+M)dz \\
&\qquad =i^{2\alpha-2s+1}|\nu|^{-1/2+s-\alpha}\int_{\varepsilon}^R e^{ |\nu|^{1/2}S(y)}y^{2\alpha-2s}h(|\nu|^{-1/2}iy+M)dy \\
&\qquad= i^{1/2}\sqrt{\frac{\pi}{|\nu|}}\lp(\frac{\mu}{\nu}\rp)^{1/4-s+\alpha}e^{-2\sqrt{|\mu \nu|}}\lp[h\lp(M+\sqrt{|\mu/\nu|}\rp)+O\lp(|\nu|^{-1/2}\rp)\rp]\,.
\end{align*} 
Expanding $h$ as a Taylor series, we obtain the result in the statement.
\end{proof}

Applying Lemma~\ref{lemma:technical-core-mode-stability-extremal} to $\tilde{g}$, we obtain 
\begin{lemma} \label{lemma:g-tilde-asymptotics-superrad} For any $\omega$ real such that $\omega\neq 0$ and $(\omega-m\upomega_+)\neq 0$, we have
\begin{align*}
\tilde{g}(x)&= e^{4\sigma\sqrt{-\beta\gamma}\lp(x/M\rp)^{1/2}}x^{-3/4-\alpha+s}\times \\
&\qquad \times \lp( \sqrt{\frac{ i \pi M}{2|\omega|}}\lp(\frac{2M^3(\omega-m\upomega_+)}{\omega}\rp)^{1/4-s+\alpha}e^{-2iM\omega}\lp[e^{-\frac{\beta}{r-M}}(r-M)^{-\alpha+2s} R\rp]_{r=M}\rp.\\
&\qquad \lp.\qquad +O(x^{-1/2})\rp) \text{~as~} x\to\infty\,,
\end{align*}
where $\sigma=+1$ if $\beta,\,\gamma<0$ and $\sigma=-1$ otherwise.
\end{lemma}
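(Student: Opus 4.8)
The plan is to extract the large-$x$ asymptotics of $\tilde{g}(x)$ directly from its integral representation \eqref{eq:def-g-tilde} by applying the stationary-phase/steepest-descent estimate of Lemma~\ref{lemma:technical-core-mode-stability-extremal}. First I would put the integrand into a form to which that lemma applies: since $R$ is outgoing, its boundary condition at $r=M$ (Definition~\ref{def:outgoing-radial-solution}) shows that $\phi(r):=e^{-\beta(r-M)^{-1}}(r-M)^{-\alpha+2s}R(r)$ extends smoothly to $r=M$, and inserting this into \eqref{eq:def-g-tilde} yields
\begin{align*}
\tilde{g}(x)=\int_M^\infty e^{A(x-M)(r-M)}\,(r-M)^{2\alpha-2s}\,e^{2\beta(r-M)^{-1}}\,e^{\gamma r}\phi(r)\,dr .
\end{align*}
Matching this against $Z(\nu,\mu)$ of Lemma~\ref{lemma:technical-core-mode-stability-extremal}, one absorbs $e^{A(x-M)(r-M)}$ into the fast phase $e^{i\nu(r-M)}$, so that $\nu=2\omega(x-M)/M\sim 2\omega x/M$; one sets $\mu=4M^2(\omega-m\upomega_+)$ from $e^{2\beta(r-M)^{-1}}=e^{i\mu(r-M)^{-1}}$; and the amplitude is $h(r)=e^{\gamma r}\phi(r)$, which is smooth at $r=M$ with $h(M)=e^{\gamma M}\phi(M)$, and slowly varying as $r\to\infty$ — by \eqref{eq:g-bdry} it has an asymptotic expansion in inverse powers of $r$ with $|h(r)|=O(r^{-1})$. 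Note that $\mr{sign}(\mu\nu)=\mr{sign}\big(\omega(\omega-m\upomega_+)\big)$ for $x$ large, so the two cases $\mu\nu>0$ and $\mu\nu<0$ of Lemma~\ref{lemma:technical-core-mode-stability-extremal} are exactly the two cases of the sign $\sigma$ in the statement.

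Lemma~\ref{lemma:technical-core-mode-stability-extremal} requires $h$ compactly supported when $\mu\nu>0$, and $h=O(r^{-2-2s})$ when $\mu\nu<0$, and $h=e^{\gamma r}\phi$ satisfies neither in general (it decays only like $r^{-1}$). So I would introduce a smooth cutoff $\chi\equiv1$ near $r=M$ with $\supp\chi\subset[M,M+\varepsilon]$, split $\tilde{g}=\tilde{g}_\chi+\tilde{g}_{1-\chi}$, and apply Lemma~\ref{lemma:technical-core-mode-stability-extremal} to $\tilde{g}_\chi$ with amplitude $\chi h$; since $\chi(M)=1$, this produces exactly the leading term claimed. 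The explicit coefficient follows by substituting $\nu\sim2\omega x/M$, $\mu=4M^2(\omega-m\upomega_+)$ and $h(M)=e^{\gamma M}\big[e^{-\beta(r-M)^{-1}}(r-M)^{-\alpha+2s}R\big]_{r=M}$ into the conclusion of Lemma~\ref{lemma:technical-core-mode-stability-extremal} and using $-\beta\gamma=-2M^2\omega(\omega-m\upomega_+)$, $\alpha=-2iM\omega$ to rewrite the $(\mu/\nu)^{1/4-s+\alpha}$ and $e^{2\sigma\sqrt{|\mu\nu|}}$ factors in the variables of the statement; the $O(1/x)$ differences between $x$, $x-M$ and $x-2M$ are absorbed into the $O(x^{-1/2})$ remainder, and $\sigma$ is fixed by $\mr{sign}(\mu\nu)$, by $\mr{sign}\,\omega$, and by boundedness of $\tilde{g}$, in accordance with the statement. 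The relative error $O(x^{-1/2})$ is the expected one, the effective large parameter of the rescaled oscillatory integral being $|\nu|^{1/2}\sim x^{1/2}$ and the critical point satisfying $r_\ast-M\sim x^{-1/2}$.

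The main obstacle is to show that $\tilde{g}_{1-\chi}$ is of strictly lower order than the leading term $\sim x^{-3/4+s}\,|e^{2\sigma\sqrt{|\mu\nu|}}|$, compounded by the fact that for $s\le-\tfrac12$ the integral \eqref{eq:def-g-tilde} is only conditionally convergent, so $\tilde{g}$ and its pieces must be treated through the integration-by-parts representation \eqref{eq:g-tilde-IBP} of Lemma~\ref{lemma:g-tilde-IBP}. The guiding fact is that the critical point $r_\ast$ of the phase $A(x-M)(r-M)+2\beta(r-M)^{-1}$ migrates to $r=M$ at rate $x^{-1/2}$, so for $x$ large it lies inside $\{\chi\equiv1\}$ and the phase is non-stationary on $\supp(1-\chi)$, where $|\psi'|\gtrsim x$ and repeated integration by parts in $r$ gains $x^{-1}$ at each step. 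The subtlety is that the boundary terms generated both by this procedure and by \eqref{eq:g-tilde-IBP} are only $O(x^{-1})$, which is \emph{not} smaller than the leading term once $s\le-\tfrac12$; one has to verify that these $O(x^{-1})$ terms cancel against the endpoint contribution of the truncated near-field integral at $r=M+\varepsilon$ — as they must, since $\tilde{g}(x)$ does not depend on the truncation parameter — leaving only the stationary-phase term and genuinely lower-order corrections. Once $\tilde{g}_{1-\chi}(x)=o\big(x^{-3/4+s}\,|e^{2\sigma\sqrt{|\mu\nu|}}|\big)$ is in hand the lemma follows; one may additionally cross-check the exponent $x^{-3/4-\alpha+s}$ and the exponential factor against the asymptotic analysis of the reduced confluent Heun equation \eqref{eq:ode-g-tilde} for $\tilde{g}$ (Lemma~\ref{lemma:reduced-Heun-z}).
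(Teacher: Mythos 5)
Your setup coincides with the paper's: the identifications $\nu=2\omega(x-M)/M$, $\mu=4M^2(\omega-m\upomega_+)$, the regularity of $\phi=e^{-\beta(r-M)^{-1}}(r-M)^{-\alpha+2s}R$ at $r=M$ from the outgoing condition, the smooth cutoff splitting, and the application of Lemma~\ref{lemma:technical-core-mode-stability-extremal} to the near piece are exactly what the paper does. In the non-superradiant case $\omega(\omega-m\upomega_+)>0$ your argument is essentially complete: the main term has modulus $\sim x^{-3/4+s}$, and the far piece — handled as in \eqref{eq:polynomial-decay} by integrating by parts the amplitude containing the factor $1-\chi$, which is smooth and vanishes identically near $r=M$, so that \emph{no} boundary terms arise (your worry about cancelling $O(x^{-1})$ endpoint contributions at $r=M+\varepsilon$ only materialises if one cuts the integration range sharply rather than cutting the integrand smoothly) — is $O(x^{-2+2s})$, which lies below the error threshold $x^{-5/4+s}$ for all $s\le 0$.

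The genuine gap is in the superradiant case $\omega(\omega-m\upomega_+)<0$, i.e.\ $\mu\nu<0$. There the asserted leading term carries the factor $e^{2\sigma\sqrt{|\mu\nu|}}=e^{-4\sqrt{-\beta\gamma}(x/M)^{1/2}}$, exponentially small in $\sqrt{x}$. Your control of $\tilde g_{1-\chi}$ rests on non-stationary-phase integration by parts, which gains a fixed power of $x$ per step; this yields $O(x^{-N})$ for every $N$ but can never produce the bound $o\big(x^{-3/4+s}e^{-2\sqrt{|\mu\nu|}}\big)$ that your scheme requires. With a cutoff supported at fixed distance from $r=M$ the far piece is not exponentially small, and it would dominate the claimed asymptotics, so the step ``once $\tilde g_{1-\chi}=o(\cdot)$ is in hand the lemma follows'' cannot be closed as written. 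The paper avoids this by placing the transition region of $\chi$ at the scale of the critical point, $r-M\sim\sqrt{M|\beta|/|\gamma|}\,x^{-1/2}$, so that \emph{both} pieces see the saddle, and by applying the $\mu\nu<0$ (contour-deformation/Laplace) branch of Lemma~\ref{lemma:technical-core-mode-stability-extremal} to each of them — the far piece only after enough integrations by parts to give its amplitude the decay $O(r^{-2-2s})$ demanded by that lemma; the two exponentially small contributions then reassemble the value of the amplitude at the saddle and give the stated constant. Your closing cross-check against the reduced confluent Heun equation \eqref{eq:ode-g-tilde} does show that boundedness forces the exponentially decaying branch when $-\beta\gamma>0$, but it does not by itself recover the coefficient, which is the actual content of the lemma.
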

\begin{proof} 
The method of steepest descent suggests that the greatest contribution to the integral will come from the stationary point of the phase in the oscillatory integral defining $Z$, which will be very close to $r=M$ as $x\to\infty$.

Let $\chi(r)$ be a cutoff function that is identically one close to $r=M$ and vanishes for $[M+2,\infty)$. We allow $\chi$ to depend on $x$, but require $1\leq |\p_r\chi|\leq 2$. We can split the integral in  the definition of $\tilde{g}$ as follows:
\begin{align*}
\tilde{g}(x)&=e^{-2i\omega M}\int_{M}^\infty e^{Ax(r-M)}(r-M)^{2\alpha-2s}e^{2\beta(r-M)^{-1}} \lp(e^{-\frac{\beta}{r-M}}(r-M)^{-\alpha+2s}e^{\gamma r}\chi R\rp) dr \\
&\qquad +\lim_{y\to 0}e^{-2i\omega M}\int_{M}^\infty e^{A(x+iy)(r-M)}(r-M)^{2\alpha-2s}e^{2\beta(r-M)^{-1}} \times \numberthis \label{eq:splitting-asymptotics}\\
&\qquad\qquad\qquad\qquad \times \lp(e^{-\frac{\beta}{r-M}}(r-M)^{-\alpha+2s}e^{\gamma r}(1-\chi) R\rp) dr\,, 
\end{align*}
for $(x,y)\in[2M,\infty)\times\{y\in[-1,1]\colon y\Re\omega\geq 0 \text{~or~} y\omega>0\}$. For the second integral, we appeal to the integration by parts argument of Lemma~\ref{lemma:g-tilde-IBP} to write, for $y\neq 0$ and $z:=x+iy$,
\begin{align}
\begin{split}
&\int_{M}^{\infty} e^{A(z-M)(r-M)}(r-M)^{\alpha}e^{\beta(r-M)^{-1}}e^{\gamma r}R\lp(1-\chi\rp) dr \\
&\qquad =\frac{1}{\lp[A(z-M)\rp]^{k}}\int_{M}^{\infty} e^{A(z-M)(r-M)}\frac{d^k}{dr^k}\lp((r-M)^{\alpha}e^{\beta(r-M)^{-1}}e^{\gamma r}R\lp(1-\chi\rp)\rp) dr\,.
\end{split} \label{eq:polynomial-decay}
\end{align}
The right hand side of \eqref{eq:polynomial-decay} is certainly a convergent integral if $k\geq 2-2s$, in which case it is $O(x^{-2+2s})$ as $x\to \infty$. Moreover, as in Lemma \ref{lemma:reduced-Heun-z} we have shown that $\tilde{g}(x+iy)$ admits a $C_x^\infty([2M,\infty))$ extension, $\tilde{g}(x)$, \eqref{eq:polynomial-decay} must hold for $y=0$. We note that the integrand is $O(r^{-1-2s-k})$ for any $k$.

Set $i\nu=Ax$, $i\mu= 2\beta$. For the non-superradiant frequencies, the result now follows from Lemma~\ref{lemma:technical-core-mode-stability-extremal} by taking $\chi$ to be identically one in $[M,M+1]$ and letting $h(r)$ as the function in brackets in the first integral of \eqref{eq:splitting-asymptotics}. For the superradiant frequencies, we set $\chi$ to be identically one in $[M, M+\sqrt{(M|\beta|)/|\gamma|}x^{-1/2}/2]$ so that both integrals in \eqref{eq:splitting-asymptotics} will include the critical point of the phase; then we apply Lemma~\ref{lemma:technical-core-mode-stability-extremal} for the integral in \eqref{eq:splitting-asymptotics} and the integral \eqref{eq:polynomial-decay} with $h$ being the functions in brackets (note that these satisfy the decay assumptions of Lemma~\ref{lemma:technical-core-mode-stability-extremal} as one is compactly supported and the other has as much polynomial decay as we want by the previous considerations).
\end{proof}

\subsubsection{Proof of Propositions \ref{prop:ode-u-tilde}}
\label{sec:integral-transformation-smooth}

We are now ready to prove \cref{prop:ode-u-tilde}:
\begin{proof}[Proof of \cref{prop:ode-u-tilde}]
Recall that out integral transformation $\tilde{u}$ is given by (\ref{eq:extremal-transformation-with-g-tilde})
\begin{align*}
\tilde{u}(x)=(x^2+2M^2)^{1/2}(x-M)^{-s}(x-2M)^{\alpha}\tilde{g}(x)\,,
\end{align*}
with derivative
\begin{align} \label{eq:derivative-u-tilde}
\tilde{u}'=\frac{(x-2M)(x-M)}{x^2+2M^2}\lp[\frac{x}{x^2+2M^2}-\frac{s}{x-M}+\frac{\alpha}{x-2M}+\frac{1}{\tilde{g}}\frac{d\tilde{g}}{dx}\rp]\tilde{u}\,.
\end{align}
By \cref{lemma:reduced-Heun-z}, $\tilde{g}$ is a smooth function for $x\in(2M,+\infty)$, so $\tilde{u}$ must also be smooth in $x\in(2M,+\infty)$. The ODE \eqref{eq:ode-u-tilde} for $\tilde{u}$ can be computed directly from the ODE \eqref{eq:ode-g-tilde} for $\tilde{g}$. This shows statements 1 and 2.

By \cref{lemma:g-tilde-IBP}, $\tilde{g}(x)$ and $\p_x \tilde{g}(x)$  are bounded functions on $x\in[2M,+\infty)$, so $\tilde{u}$ and $\tilde{u}'$ must be bounded as well, which shows statement 3. In the previous section, we have computed the asymptotics for $\tilde{g}$ which, combined with \eqref{eq:derivative-u-tilde}, yield the boundary conditions for $\tilde{u}$ as in statement 4.

Finally, we must show injectivity; we will proceed as in \cite[proof of Theorems 1.5, 1.6]{Shlapentokh-Rothman2015}. Extend $R(r)$ to $r\in \mathbb{R}$ by 0.
\begin{itemize}
\item If $\Im\omega> 0$, we write the Fourier transform of  $(r-M)^{\alpha}e^{\beta(r-M)^{\gamma}}e^{\gamma r}R(r)$ as
\begin{align*}
\hat{R}(z):=\frac{1}{2|\omega|^2}\int_{-\infty}^\infty e^{2i|\omega|^2 z(r-M)}(r-M)^{\alpha}e^{\beta(r-M)^{\gamma}}e^{\gamma r}R(r)dr\,,
\end{align*}
which is a holomorphic function in the upper half plane (by the Paley--Wiener theorem, due to $R$ being an $L^2$ function of $r\in(M,\infty)\subseteq \mathbb{R}_+$). However, since
\begin{align*}
\int_{M}^\infty e^{\frac{2i\omega}{M}(z-M)(r-M)}(r-M)^{\alpha}e^{\beta(r-M)^{\gamma}}e^{\gamma r}R(r)dr \equiv 0\,, 
\end{align*}
for all $x\in(2M,+\infty)$, we conclude that $\hat{R}(z)$ vanishes along the line $\{z\in\mathbb{C}\,:\, z=y/\overline{\omega},\, y\in(2,+\infty)\}$. By analyticity, $\hat{R}$ and hence $R$ vanish identically.

\item If $\omega\in\mathbb{R}\backslash\{0\}$, then we write the Fourier transform of $(r-M)^{\alpha}e^{\beta(r-M)^{\gamma}}e^{\gamma r}R(r)$ as the $L^2_{y}(\mathbb{R})$ function given by
\begin{align*}
\hat{R}(y):=\frac{1}{2\omega}\int_{-\infty}^\infty e^{2i\omega y (r-M)}(r-M)^{\alpha}e^{\beta(r-M)^{\gamma}}e^{\gamma r}R(r)dr\,.
\end{align*}
Since
\begin{align*}
\int_{M}^\infty e^{\frac{2i\omega}{M}(z-M)(r-M)}(r-M)^{\alpha}e^{\beta(r-M)^{\gamma}}e^{\gamma r}R(r)dr \equiv 0\,,
\end{align*}
for all $x\in(2M,+\infty)$, we conclude that $\hat{R}(y)$ vanishes along the line $y\in(2,+\infty)$. However, the Fourier transform of a non-trivial $L^2$ function supported in $\mathbb{R}_+$ cannot vanish. Since $R$ is an $L^2$ function in its support contained in $(M,+\infty)\subseteq\mathbb{R}_+$, we conclude that $R$ vanishes identically.
\end{itemize}
Hence \eqref{eq:extremal-transformation} defines an injective map $R \mapsto \tilde{u}$, as in statement 5.
\end{proof}

\subsection{Whiting's integral radial transformation for subextremal Kerr \texorpdfstring{($|a|\leq M$)}{}}
\label{sec:integral-transformation-subextremal}

For a \textit{subextremal} Kerr background, $|a|<M$, we rewrite the radial ODE \eqref{eq:radial-ODE} as, dropping subscripts
\begin{equation}
\begin{gathered}
\frac{d}{dr}\lp[(r-r_+)(r-r_-)\frac{d}{dr}\rp]R(r)+s(r-r_++r-r_-)\frac{d}{dr}R(r)\\
+ \gamma \lp[(s-1)(r-r_++r-r_-)-2\eta(r-r_+)-2\xi(r-r_-)\rp]R(r)\\ + 2\gamma(1-s)(r-r_-)+\gamma(1-s)(r_+-r_-)-L-2\eta\xi\\
+\frac{\eta(s-\eta)(r-r_+)^2+\xi(s-\xi)(r-r_-)^2}{(r-r_-)(r-r_+)}-\gamma^2(r-r_+)(r-r_-)R(r)
=\Delta \hat{F}(r)\,,
\end{gathered}\label{eq:radial-ODE-eta-xi-gamma}
\end{equation}
where $\hat{F}(r)$ is a smooth inhomogeneity compactly supported away from $r=r_+$ and $r=\infty$ and where we have defined
\begin{align}
\xi := i\frac{am -2M r_+\omega}{r_+-r_-} \,,\quad  \eta := i\frac{2Mr_-\omega -am}{r_+-r_-}\,,\quad \gamma= -i\omega\,, \quad L:= \lambda+a^2\omega^2-2am\omega \,. \label{eq:def-eta-xi-gamma}
\end{align}

We will use Whiting's integral transformation for the radial ODE. The following proposition combines results in \cite{Whiting1989,Shlapentokh-Rothman2015} and extends them:
\begin{proposition} \label{prop:ode-u-tilde-sub} Fix $|a|< M$, $s\in \frac12 \mathbb{Z}_{\leq 0}$ and some frequency parameters $(\omega,m,\lambda)\in\mathbb{C}\times \mathbb{Z}\times \mathbb{C}$ with $\Im\omega> 0$ or $(\omega,m,\lambda)\in\mathbb{R}\backslash\{0\}\times \mathbb{Z}\times \mathbb{R}$. Let $R_{ml}^{[s],(a\omega)}$ be a solution to the radial ODE~\eqref{eq:radial-ODE-eta-xi-gamma} with boundary conditions as in Definition~\ref{def:mode-solution}. Dropping subscripts, define $\tilde{u}$ via the integral transform 
\begin{align}
\begin{split}
\tilde{u}(x) &:=\lim_{y\to 0}(x^2+a^2)^{1/2}(x-r_-)^{-s}(x-r_+)^{\xi+\eta}  \times\\
&\qquad\times \int_{r_+}^\infty e^{-\frac{2\gamma}{r_+-r_-}(x+iy-r_-)(r-r_-)}(r-r_-)^{\eta}(r-r_+)^\xi e^{-\gamma r}R(r) dr\,,
\end{split} \label{eq:subextremal-transformation}
\end{align}
with $\xi$, $\eta$ and $\gamma$ as in \eqref{eq:def-eta-xi-gamma}, where the limit is taken in the function space $L^2_x([r_+,+\infty))$. Introduce a new coordinate $x^*(x):(r_+,+\infty)\to (-\infty,\infty)$ by
\begin{align*}
\frac{dx^*}{dx}= \frac{x^2+a^2}{(x-r_-)(x-r_+)}\,,\quad x^*(3M)=0\,.
\end{align*}
Then the following hold:
\begin{enumerate}
\item $\tilde{u}(x)$ is in fact smooth for $x\in(r_+,+\infty)$;
\item $\tilde{u}$ satisfies the ODE
\begin{align}
\tilde{u}''+\tilde{V}\tilde{u}=\frac{(x-r_+)(x-r_-)}{x^2+a^2}\tilde{H}\,, \label{eq:ode-u-tilde-sub}
\end{align}
where the inhomogeneity $\tilde{H}$ is given by
\begin{align}
\begin{split}
\tilde{H}&:=(x^2+a^2)^{1/2}(x-r_-)^{-s}(x-r_+)^{\xi+\eta}  \times\\
&\qquad\times \int_{r_+}^\infty e^{-\frac{2\gamma}{r_+-r_-}(x-r_-)(r-r_-)}(r-r_-)^{\eta}(r-r_+)^\xi e^{\gamma r}\hat{F}(r) dr\,,
\end{split}\label{eq:H-tilde-sub}
\end{align}
and the potential is
\begin{align*}
\tilde{V}(x)&:= \frac{(x-r_-)(x-r_+)}{(x^2+a^2)^2}\lp[(x-r_+)(x-r_-)+\frac{4M^2(x-r_-)}{r_+-r_-}+a^2\rp]\omega^2\\
&\qquad+\frac{4M^2(x-r_-)^2}{(x^2+a^2)^2}\omega^2 -\frac{(x-r_-)(x-r_+)}{(x^2+a^2)^2}s^2\\
&\qquad-\frac{(x-r_-) (x-r_+)}{\left(x^2+a^2\right)^2}(\lambda+s)  -\frac{4 m \omega  (x-M) (x-r_+) (x-r_-)}{\left(x^2+a^2\right)^2}\numberthis \label{eq:V-tilde-sub} \\ 
&\qquad-\frac{(x-r_+)(x-r_-)}{(x^2+a)^4}\lp[2M^2 (x-r_-)(x-r_+)+2Mx(x^2-a^2)\rp]\,.
\end{align*}
\item $\tilde{u}$ and $\tilde{u}'$ are bounded for $x^*\in\mathbb{R}$;
\item $\tilde{u}$ and $\tilde{u}'$ satisfy the boundary conditions
\begin{enumerate}[label=(\alph*)]
\item  if $\Im\omega>0$, then 
\begin{enumerate}
\item $\tilde{u}',\tilde{u}=O\lp((r-r_+)^{2M\Im\omega}\rp)$ as $x\to r_+$,
\item $\tilde{u}',\tilde{u}=O\lp(e^{-\Im \omega\, x}\rp)$ as $x\to \infty$,
\end{enumerate}
\item if $\omega\in\mathbb{R}\backslash\{0\}$, then
\begin{enumerate}
\item $\tilde{u}'+ i\omega \frac{r_+-r_-}{r_+} \tilde{u}=O(r-r_+)$ as $x\to r_+$,
\item $\tilde{u}'-i\omega \tilde{u}= O(x^{-1})$ as $x\to\infty$ with
$$|\tilde{u}(\infty)|^2 = \frac{|\Gamma(2\xi-s+1)|^2(r_+-r_-)^2}{8Mr_+\omega^2}\lp(\frac{(r_+-r_-)^2}{2|\omega|}\rp)^{2-2s}\lp|(\Delta^{s/2}u)(-\infty)\rp|^2\,,$$
where $\Gamma(z):=\int_0^\infty e^{-t}t^{z-1}$
is the Gamma function;
\end{enumerate} 
\end{enumerate}
\item the integral transformation \eqref{eq:subextremal-transformation} defines an injective map $R\mapsto \tilde{u}$: if $\tilde{u}$ vanishes identically, then $R$ must also vanish identically.
\end{enumerate}
\end{proposition}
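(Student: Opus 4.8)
The plan is to transcribe, \emph{mutatis mutandis}, the four-part proof of \cref{prop:ode-u-tilde} from the extremal case, the only structural change being that $r=r_+$ is now a \emph{regular} singular point of the radial ODE~\eqref{eq:radial-ODE-eta-xi-gamma}, so the kernel $e^{-\frac{2\gamma}{r_+-r_-}(z-r_-)(r-r_-)}$ carries \eqref{eq:radial-ODE-eta-xi-gamma} into another confluent Heun equation rather than into the reduced confluent equation~\eqref{eq:reduced-confluent-operator} of the extremal treatment. Consequently $\tilde u$ will behave at $x\to\infty$ as a genuine exponential (not as $\exp(c\sqrt x)$), with no superradiant/non-superradiant dichotomy, and its amplitude will be governed by a $\Gamma$-function factor coming from the fractional power $(r-r_+)^{\xi}$ in \eqref{eq:subextremal-transformation}.

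Concretely, I would set $g(r):=(r-r_-)^{s-\eta}(r-r_+)^{s-\xi}e^{-\gamma r}R(r)$, the analogue of \eqref{eq:def-g}; using $\eta+\xi=-2iM\omega$ one checks that the boundary conditions of \cref{def:mode-solution} force $g$ to extend analytically across $r=r_+$ (a convergent power series there) and to obey an expansion $g(r)=e^{-2\gamma r}r^{\varrho}\big[\sum_k c_kr^{-k}+O(r^{-N-1})\big]$ as $r\to\infty$, for a fixed exponent $\varrho$. Then one introduces $\tilde g(z):=\int_{r_+}^\infty e^{-\frac{2\gamma}{r_+-r_-}(z-r_-)(r-r_-)}(r-r_-)^{2\eta-s}(r-r_+)^{2\xi-s}e^{2\gamma r}g(r)\,dr$, which by construction equals the integral in \eqref{eq:subextremal-transformation}, so that $\tilde u(x)=(x^2+a^2)^{1/2}(x-r_-)^{-s}(x-r_+)^{\xi+\eta}\lim_{y\to0}\tilde g(x+iy)$. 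When $\Im\omega>0$ both endpoints furnish exponential decay and $\tilde g(z)$ is holomorphic in the relevant half-strip, so the $y\to0$ limit is immediate; when $\omega\in\mathbb R\setminus\{0\}$ I would integrate by parts $-2s$ times on the tail $r\ge r_++\varepsilon$, exactly as in \cref{lemma:g-tilde-IBP}, using $e^{-A(z-r_-)(r-r_-)}\frac{d}{dr}e^{A(z-r_-)(r-r_-)}=A(z-r_-)$ with $A=-2\gamma/(r_+-r_-)$, to obtain an $O(r^{-2})$ integrand and a formula for $\tilde g(z)$ that is uniformly $O(x^{-1})$ and converges as $y\to0$ in $H^2_x([r_+,\infty))$, hence by Morrey in $C^{1,1/2}_x$. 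This yields statement~3 (and is the input to statement~1). The ODE is then obtained as in \cref{lemma:reduced-Heun-z}: the substitution $g\mapsto R$ shows $g$ solves a confluent Heun equation $\mathcal{T}_r g=G$; a self-adjointness identity (integration by parts against the weight $(r-r_-)^{2\eta-s}(r-r_+)^{2\xi-s}e^{2\gamma r}$, with boundary terms killed by the power-series behaviour of $g$ at $r_+$ and by the exponential factors at $\infty$) shows $\tilde g(x+iy)$ solves the dual confluent Heun equation for $y\ne0$; passing to the weak formulation and using the $C^{1,1/2}$ convergence gives that $\tilde g(x)$ is a weak, hence — by elliptic regularity of the equation — classical and smooth solution on $(r_+,\infty)$ (statement~1), and a direct change of variables produces \eqref{eq:ode-u-tilde-sub}, \eqref{eq:H-tilde-sub} and the potential \eqref{eq:V-tilde-sub} (statement~2).

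For the boundary conditions (statement~4): boundedness together with $C^1$-regularity at $x=r_+$ forces a power series $\tilde g(x)=\sum_k b_k(x-r_+)^k$, whence the $x\to r_+$ conditions follow from the derivative identity (the subextremal analogue of \eqref{eq:derivative-u-tilde}); when $\Im\omega>0$ the exponential $x$-decay of the kernel gives $\tilde u,\tilde u'=O(e^{-\Im\omega x})$ at infinity. When $\omega$ is real the phase $-\frac{2\gamma}{r_+-r_-}(x-r_-)(r-r_-)$ has no interior stationary point, so after the cutoff splitting of \cref{lemma:g-tilde-asymptotics-superrad} the leading large-$x$ contribution to $\tilde g$ comes from the endpoint $r=r_+$, where the integrand is $\mathrm{const}\cdot(r-r_+)^{2\xi-s}e^{-\lambda(x)(r-r_+)}\times[\text{smooth}]$ with $\lambda(x)\sim\frac{2\gamma}{r_+-r_-}x$; evaluating this by Watson's lemma (equivalently a Hankel-contour deformation, the analogue of \cref{lemma:technical-core-mode-stability-extremal}) yields $\tilde g(x)\sim c\,\Gamma(2\xi-s+1)\,\lambda(x)^{-(2\xi-s+1)}g(r_+)$, a pure exponential times a power of $x$; substituting into the prefactor and the derivative identity gives $\tilde u'-i\omega\tilde u=O(x^{-1})$ and the explicit value of $|\tilde u(\infty)|^2$ with its $|\Gamma(2\xi-s+1)|^2$ factor. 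Since $\Re(2\xi)=0$ for $\omega\in\mathbb R$, all exponents occurring have non-negative real part and the Gamma factor is finite for every admissible frequency, superradiant or not, so no case split is needed. Finally, injectivity (statement~5) is verbatim the argument of \cref{prop:ode-u-tilde}: extending $R$ by zero, $\tilde g(z)$ is, up to a linear change of the complex variable, the Fourier--Laplace transform of the $L^2$ function $(r-r_-)^{\eta}(r-r_+)^{\xi}e^{-\gamma r}R(r)$ supported in $(r_+,\infty)\subset\mathbb R_+$; $\tilde u\equiv0$ forces this transform to vanish on a ray (if $\Im\omega>0$) or an interval (if $\omega\in\mathbb R$), whence by analyticity/Paley--Wiener $R\equiv0$.

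The step I expect to be the main obstacle is the precise large-$x$ asymptotic analysis: isolating the $r\to r_+$ endpoint contribution with its fractional weight $(r-r_+)^{2\xi-s}$, controlling all error terms uniformly through the $y\to0$ limit once the inhomogeneity $\hat F$ is switched on, and bookkeeping the normalisations so that the constant in $|\tilde u(\infty)|^2$ comes out exactly as $\frac{|\Gamma(2\xi-s+1)|^2(r_+-r_-)^2}{8Mr_+\omega^2}\big(\frac{(r_+-r_-)^2}{2|\omega|}\big)^{2-2s}|(\Delta^{s/2}u)(-\infty)|^2$. Everything else is either a literal transcription of the extremal argument or a finite computation.
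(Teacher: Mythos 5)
Your proposal is correct and follows essentially the same route as the paper, which itself builds the subextremal proof as a step-by-step transcription of the extremal one: the same auxiliary functions $g$ and $\tilde g$, the same integration-by-parts definition of the $y\to 0$ limit and $H^2_x$/$C^{1,1/2}_x$ convergence, the same self-adjointness--weak-formulation--elliptic-regularity argument showing $\tilde g$ solves a transformed confluent Heun equation, the same endpoint (Gamma-function) asymptotics obtained by rotating the contour at $r=r_+$, and the same Paley--Wiener injectivity. The only slip is a count: the tail integrand is $O(r^{-1-2s})$, so reaching an $O(r^{-2})$ integrand requires $1-2s$ integrations by parts rather than $-2s$ (for $s=0$ your count would perform none at all).
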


\begin{remark} \label{rmk:prop-potential-sub}
For our proof in \cref{sec:proof-s-negative}, it will be useful to highlight the following properties of $\tilde{V}(x)$ for $x\in(r_+,\infty)$:
\begin{enumerate}[label=(\roman*)]
\item $\omega^2$ has a positive coefficient;
\item $\lambda+s$ has a non-positive coefficient;
\item the $(\omega,m,\lambda)$-independent part of $\tilde{V}$ is non-negative;
\item $\tilde{V}$ is real whenever $\omega$ is real;
\item $\tilde{V}=\omega^2+O(x^{-2})$ as $x\to\infty$ and $\tilde{V}=\omega^2(r_+-r_-)^2/r_+^2+O(x-r_+)$ as $x\to r_+$\,.
\end{enumerate}
These properties follow easily from \eqref{eq:V-tilde-sub}.
\end{remark}

To prove Proposition~\ref{prop:ode-u-tilde-sub}, we will follow the same steps as for the identical result, Proposition~\ref{prop:ode-u-tilde}, for $|a|=M$. We break up \eqref{eq:subextremal-transformation} into smaller pieces which we will analyze separately: we define the auxiliary function
\begin{align}
g(r):=(r-r_-)^{-\eta+s} (r-r_+)^{-\xi+s}e^{-\gamma r} R(r)\,, \label{eq:def-g-sub}
\end{align}
such that, if $R(r)$ satisfies the outgoing boundary conditions in Definition \ref{def:outgoing-radial-solution}, we have
\begin{align}
\begin{alignedat}{3} \label{eq:g-bdry-sub}
g(r) &= \sum_{k=0}^\infty b_k (r-r_+)^k &\text{~ as } r\to r_+\,, \\
g(r) &= e^{-2\gamma r}r^{\xi+\eta-1}\lp[\sum_{k=0}^N c_k r^{-k}+O\lp(r^{-N-1}\rp)\rp] &\text{~ as } r\to \infty\,, 
\end{alignedat}
\end{align}
and the weighted integral of $g$
\begin{align}
\begin{split}
\tilde{g}(z) &:= \int_{r_+}^\infty e^{A(z-r_-)(r-r_-)}(r-r-)^{2\eta-s} (r-r_-)^{2\xi-s}e^{2\gamma r}g(r) dr \\
&=\int_{r_+}^\infty e^{A(z-r_-)(r-r_-)}(r-r_-)^{\eta}(r-r_+)^\xi e^{\gamma r}R(r) dr \,,
\end{split}\label{eq:def-g-tilde-sub}
\end{align}
where $A(r_+-r_-)=-2\gamma=i\omega$ and $z=x+iy$, with $z=x+iy$ for $(x,y)\in[r_+,\infty)\times\{y\in[-1,1]\colon y\Re\omega\geq 0 \text{~or~} y\omega>0\}$. With these definitions, the integral transformation \eqref{eq:subextremal-transformation} becomes simply
\begin{align}
\tilde{u}(x)=(x^2+a^2)^{1/2}(x-r_-)^{-s}(x-r_+)^{\xi+\eta}\lim_{y\to 0}\tilde{g}(x+iy)\,, \label{eq:subextremal-transformation-with-g-tilde}
\end{align}
where the limit is, \textit{a priori}, to be taken in the function space $L^2_x([r_+,+\infty))$.

We will prove \cref{prop:ode-u-tilde-sub} over the next sections: first, in \cref{sec:integral-transformation-well-defined-sub}, we show that $\tilde{g}(x)$ is well-defined and that it is $C^{1,1/2}_x([r_+,\infty))$ and bounded; then, in \cref{sec:ode-auxiliary-sub}, we show that it is in fact a smooth solution to a second order ODE. We also obtain precise asymptotics for $\tilde{g}(x)$ in Section~\ref{sec:asymptotics-g-tilde-sub}. Finally, in \cref{sec:integral-transformation-smooth-sub} we put these together to prove \cref{prop:ode-u-tilde-sub}. 

\subsubsection{Defining the integral transformation for real \texorpdfstring{$\omega$}{time frequency}}
\label{sec:integral-transformation-well-defined-sub}

In this section we will show that \eqref{eq:subextremal-transformation-with-g-tilde} is well-defined, by understanding the limit of $\tilde{g}(x+iy)$ as $y\to 0$ for $x\in[r_+,\infty)$. Note that the integrand in \eqref{eq:subextremal-transformation-with-g-tilde} is $O((r-r_+)^{-s})$ as $r\to r_+$, hence integrable (as $s<0$); as $r\to \infty$, since the term $e^{2\gamma r}$ exactly matches the exponential in the boundary condition of $g$ at $r=\infty$ in \eqref{eq:g-bdry-sub}, the integrand is $O(e^{-\Re \omega\,y r-\Im \omega\, x r}r^{-1-2s})$, which is not integrable if $\Im\omega=y=0$. To define $\tilde{g}$ properly in the limit $y\to 0$, we will integrate by parts to produce more decay near $r=\infty$:

\begin{lemma} \label{lemma:g-tilde-IBP-sub} Fix $s\leq 0$ and  $\omega\in\mathbb{C}$ such that $\omega\in\mathbb{R}\backslash\{0\}$ or $\Im\omega>0$. Let  $z:=x+iy$, where $z=x+iy$ for $(x,y)\in[r_+,\infty)\times\{y\in[-1,1]\colon y\Re\omega\geq 0 \text{~or~} y\omega>0\}$. Let $\varepsilon>0$ be arbitrary; we have
\begin{align*} 
\tilde{g}(z)&=\int_{r_+}^{r_++\varepsilon} e^{A(z-r_-)(r-r_-)}(r-r_-)^{2\eta-s}(r-r_+)^{2\xi-s}e^{2\gamma r} g(r) dr \numberthis\label{eq:g-tilde-IBP-sub}\\
&\qquad +\frac{1}{\lp[A(z-r_-)\rp]^{1-2s}}\int_{r_++\varepsilon}^{\infty} \lp\{e^{A(z-r_-)(r-r_-)}\times\rp.\\
&\qquad\qquad\qquad\qquad\qquad\qquad \lp. \times\lp(\frac{d}{dr}\rp)^{1-2s}\lp((r-r_-)^{2\eta-s}(r-r_+)^{2\xi-s}e^{2\gamma r} g(r)\rp)\rp\} dr \\
&\qquad+\sum_{k=1}^{1-2s} \frac{(-1)^k}{[A(z-r_-)]^k} \lp[e^{A(z-r_-)(r-r_-)}\rp.\times\\
&\qquad\qquad\qquad\qquad\qquad\qquad\lp.\times\lp(\frac{d}{dr}\rp)^{k-1}\lp((r-r_-)^{2\eta-s}(r-r_+)^{2\xi-s}e^{2\gamma r} g(r)\rp)\rp]_{r=r_++\varepsilon}\,,
\end{align*}
and similarly, for $j=1,2$,
\begin{align*} 
&\frac{1}{A^j}\frac{\p^j}{\p z^j}\tilde{g}(z)\\
&\quad= \int_{r_+}^{r_++\varepsilon} e^{A(z-r_-)(r-r_-)}(r-r_-)^{2\eta-s+j}(r-r_+)^{2\xi-s}e^{2\gamma r} g(r) dr \numberthis \label{eq:derivative-g-tilde-IBP-sub} \\
&\quad\qquad +\frac{1}{\lp[A(z-r_-)\rp]^{1+j-2s}}\int_{r_++\varepsilon}^{\infty}\lp\{ e^{A(z-r_-)(r-r_-)}\times\rp.\\
&\quad\qquad\qquad\qquad\qquad\qquad\qquad \lp. \times \lp(\frac{d}{dr}\rp)^{2+j-2s}\lp[(r-r_-)^{2\eta-s+j}(r-r_+)^{2\xi-s}e^{2\gamma r} g(r)\rp]\rp\} dr \\
&\quad\qquad+\sum_{k=1}^{1+j-2s} \frac{(-1)^k}{[A(z-r_-)]^k} \lp[e^{A(z-r_-)(r-r_-)}\times\rp.\\
&\quad\qquad\qquad\qquad\qquad\qquad\qquad \lp.\times \lp(\frac{d}{dr}\rp)^{k-1}\lp((r-r_-)^{2\eta-s+j}(r-r_+)^{2\xi-s}e^{2\gamma r} g(r)\rp)\rp]_{r=r_++\varepsilon}\,.
\end{align*}

\noindent Moreover, $\tilde{g}(x+iy)$ admits a unique extension to $(x,y)\in[r_+,\infty)\times\{y\in[-1,1]\colon y\Re\omega\geq 0\}$ such that
\begin{enumerate}[label=(\roman*)]
\item $\tilde{g}(x+iy)\to \tilde{g}(x)$ in $H^2_x([r_+,+\infty))$ and pointwise in $C^{1,1/2}_x([r_+,+\infty))$ as $y\to 0$;
\item $\tilde{g}(x)$ and its weak derivative, $\p_x\tilde{g}(x)$, are bounded.
\end{enumerate}
\end{lemma}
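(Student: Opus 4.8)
The plan is to follow the argument of \cref{lemma:g-tilde-IBP} essentially verbatim: the only structural change is that the subextremal radial ODE has two \emph{regular} singular points at $r=r_\pm$ rather than the single irregular one at $r=M$, which if anything only simplifies the analysis near the horizon.

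First I would fix $\varepsilon>0$, split the integral \eqref{eq:def-g-tilde-sub} defining $\tilde{g}(z)$ at $r=r_++\varepsilon$, and repeatedly integrate by parts on the tail $[r_++\varepsilon,\infty)$ using the kernel identity $e^{-A(z-r_-)(r-r_-)}\tfrac{d}{dr}e^{A(z-r_-)(r-r_-)}=A(z-r_-)$. Each integration by parts produces a boundary term at $r=r_++\varepsilon$ and replaces the tail integrand by one with an extra factor $(r-r_-)^{-1}$ of polynomial decay; the boundary contributions at $r=\infty$ vanish because, when $\Im\omega>0$ or $y\omega>0$, the factor $e^{A(z-r_-)(r-r_-)}$ decays exponentially there, while the weights $(r-r_-)^{2\eta-s}(r-r_+)^{2\xi-s}e^{2\gamma r}$ combine with the asymptotics \eqref{eq:g-bdry-sub} of $g$ to leave only a polynomially growing factor, which for real $\omega$ is $O(r^{-2s-1})$. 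After $1-2s$ such steps the tail integrand is $O(r^{-2})$, hence absolutely convergent even at $\Im\omega=y=0$; this yields \eqref{eq:g-tilde-IBP-sub}. For the derivatives I would first observe that when $\Im\omega>0$ or $y\omega>0$ the integral is absolutely convergent and holomorphic in $z$, so each $\p_z$ may be taken under the integral at the cost of a factor $A(r-r_-)$; absorbing the powers of $A$ and integrating by parts $1+j-2s$ times then produces \eqref{eq:derivative-g-tilde-IBP-sub}. On the complementary piece $[r_+,r_++\varepsilon]$ no integration by parts is needed: the integrand there is $O((r-r_+)^{-s})$, which is integrable since $s\le 0$, with modulus bounded uniformly in $z$.

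Next I would read off from \eqref{eq:g-tilde-IBP-sub} and \eqref{eq:derivative-g-tilde-IBP-sub} that, since $|A(z-r_-)|\gtrsim|z-r_-|\ge x-r_-\ge r_+-r_->0$, the right-hand sides and the analogous expressions for $\p_y^j\p_x^k\tilde{g}$ are bounded by a constant times $1/|z-r_-|^2$, uniformly over the region $(x,y)\in[r_+,\infty)\times\{y\in[-1,1]\colon y\Re\omega\ge 0\ \text{or}\ y\omega>0\}$; this is the exact analogue of \eqref{eq:bound-tilde-g}. I would then split into two cases. If $\Im\omega>0$, $\tilde{g}(z)$ is holomorphic on $[r_+,\infty)\times\{y\in[-1,1]\colon y\Re\omega\ge 0\}$, and the uniform bound yields (i) and (ii) at once. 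If $\omega\in\mathbb{R}\backslash\{0\}$, then without loss of generality $\omega>0$ and $y\in(0,1]$; I would set $\mathfrak{g}_x\colon(0,1]\to H^2_x([r_+,\infty))$, $\mathfrak{g}_x(y):=\tilde{g}(x+iy)$, use the uniform bounds to see that $\mathfrak{g}_x$ is uniformly continuous, hence extends continuously to $y=0$ with a limit $\tilde{g}(x)\in H^2_x([r_+,\infty))$; the one-dimensional Morrey inequality $H^2_x\hookrightarrow C^{1,1/2}_x$ upgrades the convergence to $C^{1,1/2}_x([r_+,+\infty))$, proving (i), and (ii) follows since $\tilde{g}(x)$ and $\p_x\tilde{g}(x)$ are then continuous $L^2$ functions.

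The work is bookkeeping rather than anything conceptual: one must pin down the precise number of integrations by parts and keep the weights $(r-r_-)^{2\eta-s}(r-r_+)^{2\xi-s}e^{2\gamma r}$ straight so that the boundary terms at $r=\infty$ really do vanish for every admissible $z$ and the surviving integrand decays like $r^{-2}$. The half-integer Hölder regularity $C^{1,1/2}_x$ is exactly what the $1/|z-r_-|^2$ decay buys through Morrey's inequality, so nothing sharper is (or needs to be) claimed.
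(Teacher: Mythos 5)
Your proposal is correct and follows essentially the same route as the paper: split the integral at $r_++\varepsilon$, integrate by parts $1-2s$ (resp.\ $1+j-2s$) times on the tail using the kernel identity, discard the boundary terms at infinity via the exponential decay available when $\Im\omega>0$ or $y\omega>0$, deduce the uniform bounds from $|z-r_-|\geq r_+-r_->0$, and then conclude via holomorphy for $\Im\omega>0$ or via uniform continuity of $y\mapsto\tilde{g}(\cdot+iy)$ in $H^2_x$ plus Morrey's inequality for real $\omega$. This matches the paper's proof, which itself defers to the extremal case (Lemma~\ref{lemma:g-tilde-IBP}) for the details.
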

\begin{proof}
To obtain formulas \eqref{eq:g-tilde-IBP-sub} and \eqref{eq:derivative-g-tilde-IBP-sub}, we proceed in the same way as in the proof of Proposition~\ref{prop:ode-u-tilde} for $|a|=M$. In \eqref{eq:def-g-tilde-sub}, we split the integration range at $r_++\varepsilon$, obtaining a sum of an integral on $[r_++\varepsilon]$ and an integral on $[r_++\varepsilon,+\infty)$. The latter can be integrated by parts, noting that
\begin{align*}
&e^{-A(z-r_-)(r-r_-)} \frac{d}{dr}\lp(e^{A(z-r_-)(r-r_-)} \rp)=A(z-r_-)\,.
\end{align*}
The boundary terms generated by the integration by parts at infinity will vanish when either $\Im\omega>0$ or $y\omega >0$ due to the exponential decay that was similarly present in the identical procedure for $\tilde{g}$. One can repeat this procedure for $\Im\omega>0$ or $y\omega >0$ for any number of derivatives by first differentiating under the integral. 

We have now obtained formulas \eqref{eq:g-tilde-IBP-sub} and \eqref{eq:derivative-g-tilde-IBP-sub}. In these formulas,  since $|z|\geq x \geq r_+>r_-$, the right hand side is bounded and $O(x^{-1})$ as $x\to \infty$ for any $y\in\{y\in[-1,1]\colon y\Re\omega\geq 0\}$. We also note that $\tilde{g}$ is holomorphic for $y\neq 0$ in this range, so we can write
\begin{align}
\begin{split}\label{eq:bound-tilde-g-sub}
&|\tilde{g}(z)|^2+|\p_y\tilde{g}(z)|^2+|\p_x\tilde{g}(z)|^2+|\p_y\p_x\tilde{g}(z)|^2+|\p_x^2\tilde{g}(z)|^2+|\p_y\p_x^2\tilde{g}(z)|^2\\
&\qquad\lesssim |\tilde{g}(z)|^2+|\p_x\tilde{g}(z)|^2+|\p_x^2\tilde{g}(z)|^2+|\p_x^3\tilde{g}(z)|^2\lesssim \frac{1}{x^2}\,,
\end{split}
\end{align}
for any $(x,y)\in[r_+,\infty)\times\{y\in[-1,1]\colon y\Re\omega\geq 0\}$. By the same argument as in Proposition~\ref{prop:ode-u-tilde}, we can now infer (i) and (ii) from \eqref{eq:bound-tilde-g-sub}.
\end{proof}

\begin{remark} In our treatment of the subextremal case, we have chosen to interpret \eqref{eq:subextremal-transformation} as an oscillatory integral, the approach followed by \cite{Shlapentokh-Rothman2015}. An alternative subsequent approach is to substitute integration along the real axis by integration along a suitable contour in the complex plane, as in \cite{Andersson2017}; this then enables the authors to define the transformation for $s>0$ via integration along a (different) suitable complex path. 

As one of our goals is to provide a quantitative mode stability statement (Theorem~\ref{thm:quantitative-intro}) in the spirit of that in \cite{Shlapentokh-Rothman2015}, our choice in defining \eqref{eq:subextremal-transformation} is motivated by the desire to present a unified picture for the entire black hole parameter range $|a|\leq M$ that can be directly compared with the latter work.
\end{remark}

\subsubsection{Differential equations for the auxiliary functions}
\label{sec:ode-auxiliary-sub}

Let $\mc{T}_r$ be the confluent Heun operator \cite{Ronveaux1995} given by
\begin{align}
\begin{split}
\mc{T}_r &:= (r-r_+)(r-r_-)\frac{d^2}{dr^2}+2\gamma(1-2s)(r-r_-)+2\gamma (1-s)r_- -2s-L\\
&\qquad+\lp[(2\eta+1-s)(r-r_+)+(2\xi+1-s)(r-r_-) +2\gamma(r-r_+)(r-r_-)\rp]\frac{d}{dr}\,.
\end{split}\label{eq:confluent-operator}
\end{align}

Given the definition of $g$ in \eqref{eq:def-g-sub}, since $R(r)$ is a solution to the radial ODE~\eqref{eq:radial-ODE-eta-xi-gamma} with inhomogeneity $\hat{F}$, we find that $g$ satisfies $\mc{T}_r g =G$ where
\begin{align*}
G(r):=(r-r_-)^{-\eta+s}(r-r_+)^{-\xi+s}e^{\gamma r} \hat{F}(r)
\end{align*}
and $\xi$, $\eta$ and $\gamma$ are defined in \eqref{eq:def-eta-xi-gamma}. 

Following \cite{Shlapentokh-Rothman2015}, we would now like to show that $\tilde{g}$ also satisfies a confluent Heun equation with some new parameters (see also \cite{Kazakov1996,Schmidt1995}). Since
\begin{align*}
&e^{-A(z-r_-)(r-r_-)}\mc{T}_r e^{A(z-M)(r-M)}\\
&\quad = (z-r_+)(z-r_-)A^2 (r-r_-)^2\\
&\quad\qquad +[A(r_+-r_-)+p][A(r-r_+)(r-r_-)(z-r_-)+(1-2s)2\gamma] \\
&\quad\qquad+\lp[(1-2s)(z-r_+)+(1+2\xi+2\eta)(z-r_-)\rp]A(r-r_-)\\
&\quad\qquad -A(r_+-r_-)(z-r_+)(z-r_-)A(r-r_-) \\
&\quad\qquad-A(r_+-r_-)(2\eta+1-s)(z-r_-) +2\gamma(1-s)r_--2s-L
\end{align*}
if we pick $A=-2\gamma/(r_+-r_-)$, we have $\tilde{\mc{T}}_x e^{A(z-r_-)(r-r_-)} = \mc{T}_r e^{A(z-r_-)(r-r_-)}$, where
\begin{align}
\begin{split}
\tilde{\mc{T}}_x &:= (r-r_+)(r-r_-)\frac{d^2}{dx^2} +2\gamma(2\eta+1-s)(x-r_-)+2\gamma (1-s)r_- -2s-L\\
&\qquad+\lp[(1-2s)(x-r_+)+(1+2\xi+2\eta)(x-r_-) +2\gamma(x-r_+)(x-r_-)\rp]\frac{d}{dx} 
\end{split} \label{eq:confluent-operator-tilde}
\end{align}
is another confluent Heun operator with different parameters.

\begin{lemma}\label{lemma:Heun-z} Suppose  $\Im\omega>0$ or $\omega\in\mathbb{R}\backslash\{0\}$. Let $z=x+iy$ with $(x,y)\in[2M,\infty)\times\{y\in[-1,1]\colon y\Re\omega\geq 0 \text{~or~} y\omega>0\}$.  For $y\neq 0$, we have $\tilde{\mc{T}}_x\tilde{g}(z)=\tilde{G}(z)$, where $\tilde{\mc{T}}_x$ is defined by \eqref{eq:confluent-operator-tilde} and 
\begin{align*}
\tilde{G}(z) := \int_{r_+}^\infty e^{A(z-r_-)(r-M)}(r-r_-)^{2\eta-s} (r-r_+)^{2\xi-s}e^{2\gamma r}G(r) dr\,.
\end{align*}
Moreover, if $y=0$, $\tilde{g}(x)$ is smooth for $x\in(r_+,+\infty)$ and satisfies
\begin{equation}
\tilde{\mc{T}}_x\tilde{g}(x)= \tilde{G}(x) \label{eq:ode-g-tilde-sub}
\end{equation}
classically.
\end{lemma}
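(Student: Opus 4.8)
The plan is to follow, \emph{mutatis mutandis}, the proof of the extremal analogue \cref{lemma:reduced-Heun-z}, with the double confluent Heun operators there replaced by the confluent Heun operators \eqref{eq:confluent-operator} and \eqref{eq:confluent-operator-tilde} and the corresponding weights adjusted. First I would treat $y\neq 0$. In the stated range of $z=x+iy$ one has $\Re[A(z-r_-)]<0$, so the kernel $e^{A(z-r_-)(r-r_-)}$ decays exponentially as $r\to\infty$, and together with the bounds \eqref{eq:g-bdry-sub} on $g$ this makes the integral \eqref{eq:def-g-tilde-sub} defining $\tilde{g}(z)$, as well as all its $z$- and $r$-derivatives, absolutely convergent. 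I would therefore differentiate under the integral sign and apply $\tilde{\mc{T}}_x$; by the identity $\tilde{\mc{T}}_x e^{A(z-r_-)(r-r_-)}=\mc{T}_r e^{A(z-r_-)(r-r_-)}$ established above, this transfers the operator onto the kernel. I would then invoke the integration-by-parts identity, analogous to the one used in the proof of \cref{lemma:reduced-Heun-z}: for sufficiently regular $f,h$,
\begin{align*}
&\int_{A_1}^{A_2}\lp(h\,\mc{T}_r f-f\,\mc{T}_r h\rp)(r-r_-)^{2\eta-s}(r-r_+)^{2\xi-s}e^{2\gamma r}\,dr\\
&\qquad=\lp[\lp(h\frac{df}{dr}-f\frac{dh}{dr}\rp)(r-r_-)^{2\eta-s+1}(r-r_+)^{2\xi-s+1}e^{2\gamma r}\rp]_{r=A_1}^{r=A_2}\,,
\end{align*}
obtained by expanding $\mc{T}_r$ from \eqref{eq:confluent-operator} and integrating by parts once; this is exactly what motivates the weights $(r-r_-)^{2\eta-s}(r-r_+)^{2\xi-s}e^{2\gamma r}$ in \eqref{eq:def-g-tilde-sub}, against which $\mc{T}_r$ is formally self-adjoint. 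Taking $f=e^{A(z-r_-)(r-r_-)}$, $h=g$ and using $\mc{T}_r g=G$ yields
\begin{align*}
\tilde{\mc{T}}_x\tilde{g}(z)&=\tilde{G}(z)\\
&\quad+\lp[\lp(A(z-r_-)g-\frac{dg}{dr}\rp)e^{A(z-r_-)(r-r_-)}(r-r_-)^{2\eta-s+1}(r-r_+)^{2\xi-s+1}e^{2\gamma r}\rp]_{r_+}^{\infty}\,.
\end{align*}
The boundary term at $r=\infty$ vanishes by exponential decay of the kernel against the at-most-polynomial growth of $g$ and $dg/dr$ from \eqref{eq:g-bdry-sub}, while the one at $r=r_+$ vanishes since $g$ and $dg/dr$ are bounded there and $\Re(2\xi-s+1)\geq 1>0$ for every admissible $\omega$ (for real $\omega$, $\xi$ is purely imaginary and $-s\geq 0$; for $\Im\omega>0$, $\Re\xi>0$). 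This gives $\tilde{\mc{T}}_x\tilde{g}(z)=\tilde{G}(z)$ for all $y\neq 0$.

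For $y=0$ I would argue exactly as in \cref{lemma:reduced-Heun-z}. First, $\tilde{G}(x+iy)$ extends smoothly to $y=0$ because $\hat{F}$, and hence $G$, is compactly supported in $(r_+,\infty)$, so $\tilde{G}(\cdot)$ is, up to an algebraic prefactor, the Fourier transform of a compactly supported function. The relation just proved shows that for $y\neq 0$ the function $\tilde{g}(x+iy)$ solves $\tilde{\mc{T}}_x\tilde{g}=\tilde{G}$ classically, hence also weakly as an element of $H^1_x([r_+,\infty))$. Testing the weak formulation against an arbitrary $\varphi\in C^\infty_c((r_+,\infty))$, separating the $y$-independent combination of $\tilde{g}(x)$, $\tilde{G}(x)$ and $\varphi$ from a remainder bounded by quantities controlled by $\tilde{g}(x)-\tilde{g}(x+iy)$ and $\tilde{G}(x)-\tilde{G}(x+iy)$, I would pass to the limit $y\to 0$: by \cref{lemma:g-tilde-IBP-sub}, $\tilde{g}(x+iy)\to\tilde{g}(x)$ in $C^{1,1/2}_x([r_+,\infty))$ with uniform convergence on $\supp\varphi$, so the remainder vanishes. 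Hence $\tilde{g}(x)$ is a weak $H^1$ solution of $\tilde{\mc{T}}_x\tilde{g}=\tilde{G}$; since $\tilde{G}$ is smooth and $\tilde{\mc{T}}_x$ is a non-degenerate second-order elliptic operator on $(r_+,\infty)$, elliptic regularity upgrades $\tilde{g}$ to a smooth classical solution, giving \eqref{eq:ode-g-tilde-sub}.

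Because every step is a routine transcription of the extremal argument, I do not expect a genuinely new difficulty. The only points requiring care are the vanishing of the boundary term at $r=r_+$ — which here must be read off from $\Re(2\xi-s+1)>0$ rather than from an explicit factor $(r-M)^2$ as in the extremal proof — and the passage $y\to 0$, which rests entirely on the $C^{1,1/2}_x$ convergence furnished by \cref{lemma:g-tilde-IBP-sub} together with elliptic regularity of $\tilde{\mc{T}}_x$.
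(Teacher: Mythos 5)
Your proposal is correct and follows essentially the same route as the paper's proof: differentiation under the integral, transfer of the operator via the kernel identity, the self-adjointness/integration-by-parts lemma, vanishing of the boundary terms, and then the weak-formulation limit $y\to 0$ combined with elliptic regularity. Your explicit verification that the horizon boundary term vanishes because $\Re(2\xi-s+1)>0$ is a slightly more careful reading of what the paper compresses into "the extra factor of $(r-r_+)$," but it is the same argument.
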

\begin{proof}
If $\Im\omega>0$ with $y\Re\omega\geq 0$ or $y\omega>0$, the integral is absolutely convergent and we can thus differentiate under the integral to obtain
\begin{align*}
\tilde{\mc{T}}_x\tilde{g} &= \int_{r_+}^\infty \tilde{\mc{T}}_x \lp(e^{A(z-r_-)(r-r_-)}\rp) (r-r_-)^{2\eta-s} (r-r_+)^{2\xi-s}e^{2\gamma r}g(r) dr \\
&=\int_{r_+}^\infty \mc{T}_r \lp(e^{A(x-r_-)(r-r_-)}\rp) (r-r_-)^{2\eta-s} (r-r_-)^{2\xi-s}e^{2\gamma r}g(r) dr \\
&= \tilde{G}(z)+ \lp[\lp(A(x-r_-)g-\frac{dg}{dr}\rp)(r-r_-)^{2\eta-s+1}(r-r_+)^{2\xi-s+1}e^{2\gamma r}e^{A(z-r_-)(r-r_-)}\rp]_0^\infty\,,
\end{align*}
where we have made use of the properties of the kernel, the equation for $g$ and the lemma
\begin{lemma} Let $\mc{T}_r$ be a differential operator as defined in \eqref{eq:confluent-operator}. Then, for sufficiently regular functions $f$ and $h$, after integration by parts,
\begin{align*}
&\int_{A_1}^{A_2} \lp(h\mc{T}_r f-f\mc{T}_r h \rp)(r-r_-)^{2\eta-s}(r-r_+)^{2\xi-s}e^{2\gamma r} dr \\
&\quad=\lp[\lp(h\frac{df}{dr}-f\frac{dh}{dr}\rp)(r-r_-)^{2\eta-s+1}(r-r_+)^{2\xi-s+1}e^{2\gamma r}\rp]_{r=A_1}^{r=A_2}\,.
\end{align*}
\end{lemma}
\noindent This lemma justifies the introduction of weights $(r-r_-)^{2\eta-s}(r-r_+)^{2\xi-s}e^{2\gamma r}$: when evaluated against such a measure, $\mc{T}_r$ is self-adjoint. 

We still need to show that the boundary terms vanish. This is clear at $r=\infty$, due to the decay of $g$ and $dg/dr$, together with the exponential decay brought in by either $\Re\omega y>0$ or $\Im \omega >0$. At the horizon, the extra factor of $(r-r_+)$ makes the boundary term vanish, due to the boundary conditions (\ref{eq:g-bdry-sub}) of $g$.

To consider the case $y=0$, first note that $\tilde{G}(x+iy)$ is smooth even for $y=0$, by compact support\footnote{Note that $\tilde{G}(x+iy)$ can be rewritten as the Fourier transform of a compactly supported function; see the proof of statement 5 of Proposition~\ref{prop:ode-u-tilde-sub} in Section~\ref{sec:integral-transformation-smooth-sub}.} of $\hat{F}$. 
We have shown that $\tilde{g}(x+iy)$, for $y\neq 0$, satisfies the ODE classically, hence, it also satisfies the ODE  weakly as an $H^1_x([r_+,\infty))$ function. Thus, for any smooth $\varphi$ compactly supported in $(r_+,\infty)$,
\begin{gather*}
\begin{gathered}
\int_{r_+}^\infty \frac{d}{dx}\tilde{g}(x)\lp[2(\xi+\eta)(x-r_-)-2s(x-r_+)+2\gamma(x-r_+)(x-r_-)\rp]\varphi(x)dx\\
+\int_{r_+}^\infty \frac{d}{dx}\tilde{g}(x)(x-r_-)(x-r_+)\frac{d\varphi}{dx}(x)dx\\
+\int_{r_+}^\infty g(x)\lp[2\gamma(1-s)r_--2s-L+2\gamma(2\eta+1-s)(x-r_-)\rp]\varphi(x)dx
-\int_{r_+}^\infty \tilde{G}(x)\varphi(x)dx \\
=\int_{r_+}^\infty \lp[\frac{d}{dx}\tilde{g}(x)-\frac{d}{dx}\tilde{g}(x+iy)\rp]\lp[2(\xi+\eta)(x-r_-)-2s(x-r_+)\rp]\varphi(x)dx\\
+\int_{r_+}^\infty \lp[\frac{d}{dx}\tilde{g}(x)-\frac{d}{dx}\tilde{g}(x+iy)\rp]\lp[(x-r_+)(x-r_-)\frac{d\varphi}{dx}(x)+2\gamma(x-r_+)(x-r_-)\varphi(x)\rp]dx\\
+\int_{r_+}^\infty \lp[g(x)-g(x+iy)\rp]\lp[2\gamma(1-s)r_--2s-L+2\gamma(2\eta+1-s)(x-r_-)\rp]\varphi(x)dx\\
-\int_{r_+}^{\infty}\lp[\tilde{G}(x)-\tilde{G}(x+iy)\rp]\varphi(x)dx\,.
\end{gathered}
\end{gather*}
As the left hand side is independent of $y$, we can take a limit as $y\to 0$ of the right hand side. Recall that $\tilde{g}(x+iy)\to\tilde{g}(x)$ in $C^{1,1/2}_x([r_+,+\infty))$, by Lemma~\ref{lemma:g-tilde-IBP-sub}, with uniform convergence in the compact support of the test function $\varphi$. Hence, we can exchange the limit with the integral, which shows that $\tilde{g}(x)$ is a weak $H^1_x([r_+,+\infty))$ solution to the ODE as well. As $\tilde{G}$ is smooth, by elliptic regularity of $\tilde{\mc{T}}_x$ we conclude that $\tilde{g}(x)$ is in fact smooth for $x\in(r_+,+\infty)$ and satisfies the ODE classically.
\end{proof}

\subsubsection{Asymptotics of \texorpdfstring{$\tilde{g}$}{tilded g} for large \texorpdfstring{$x$}{x}}
\label{sec:asymptotics-g-tilde-sub}

By Lemma~\ref{lemma:g-tilde-IBP-sub}, $\tilde{g}(x)$ is bounded and  $C^2_x([r_+,\infty))$, which uniquely determines the behavior at $r=r_+$:
\begin{align*}
\tilde{g}(x)&= \sum_{k=0}^\infty b_k (x-r_+)^k \text{~ as } x\to r_+\,.
\end{align*}
As $r\to \infty$, since $\tilde{g}$ satisfies \eqref{eq:ode-g-tilde-sub} by Lemma~\ref{lemma:Heun-z}, if the inhomogeneity vanishes, $\tilde{g}$ must be superposition of, for $N\in\mathbb{N}$,
\begin{gather*}
\exp\lp(-\gamma x\rp) x^{-1+s-2\xi}\lp[\sum_{k=0}^N c_kx^{-k} +O(x^{-N-1})\rp] \text{~ as } r\to \infty\,,\\
  x^{-\frac{2\gamma(2\eta+1-s)}{4|\omega|^2}}\lp[\sum_{k=0}^N c_kx^{-k}+O\lp(x^{-N-1}\rp)\rp]  \text{~ as } r\to \infty\,.
\end{gather*}
Boundedness of $\tilde{g}$ can be used to rule out the second behavior. However, the asymptotics at this end can be derived in several other ways for which we do not require the assumption $\tilde{H}\equiv 0$. For instance, following \cite[Lemma 4.4]{Shlapentokh-Rothman2015}, one can use the formulas in Lemma \ref{lemma:g-tilde-IBP-sub} to show that $\tilde{g}\sim O(x^{-1+s})$ as $x\to\infty$ and
\begin{align*}
\frac{d}{dx}\tilde{g}-A(r_+-r_-)\tilde{g} =  O(x^{-2+s}) \text{~as~}x\to\infty\,.
\end{align*}
Alternatively, on the real axis, we can generalize \cite[Lemma 4.10]{Shlapentokh-Rothman2015} to obtain the precise asymptotics for spin $s$:
\begin{lemma} \label{lemma:sub-g-tilde-asymptotics} As $x\to\infty$, if $\omega\in\mathbb{R}\backslash\{0\}$, we have
\begin{align*}
\tilde{g}(x)&=\Gamma(2\xi-s+1)\exp\lp(\frac{i\pi}{2}(2\xi-s+1)\rp)\lp(\frac{2|\omega|}{r_+-r_-}\rp)^{-2\xi+s-1}e^{-i\omega (M+r_-)}\\
&\quad\times(r_+-r_-)^\eta\lp((r-r_+)^{-\xi+s}R(r)\rp)\Big|_{r=r_+}e^{-2\gamma x}x^{-2\xi+s-1} +O\lp(x^{-2+s}\rp)\,.
\end{align*}
\end{lemma}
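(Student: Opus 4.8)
The plan is to generalize \cite[Lemma~4.10]{Shlapentokh-Rothman2015} to arbitrary $s\in\frac12\mathbb{Z}_{\leq 0}$, along the same lines as the proof of Lemma~\ref{lemma:g-tilde-asymptotics-superrad}. The essential structural difference from the extremal case is that here $r=r_+$ is a \emph{regular} singular point of the radial ODE: in \eqref{eq:def-g-tilde-sub} the integrand carries only the algebraic singularity $(r-r_+)^{2\xi-s}$ at $r=r_+$, rather than the essential singularity $e^{2\beta(r-M)^{-1}}$, and the phase $A(x-r_-)(r-r_-)$ has no interior critical point in $r$. Hence the leading contribution to $\tilde g(x)$ as $x\to\infty$ comes entirely from the endpoint $r=r_+$, and the stationary-phase / steepest-descent input of Lemma~\ref{lemma:technical-core-mode-stability-extremal} is replaced by an endpoint (Watson-type) asymptotic expansion of an oscillatory integral. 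Throughout, $\omega\in\mathbb{R}\backslash\{0\}$ is fixed, so $\xi$ and $\eta$ are purely imaginary, $\Re(2\xi-s+1)=1-s>0$, and $\Gamma(2\xi-s+1)$ is finite; I also take $\omega>0$, the case $\omega<0$ being entirely analogous.

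First I would dispose of the tail. Splitting the integral defining $\tilde g(x)$ at a fixed $r=r_++\delta$, the contribution of $[r_++\delta,\infty)$ is negligible: since $r$ is bounded away from $r_+$ there, one may integrate by parts against the kernel $e^{A(x-r_-)(r-r_-)}$ exactly as in Lemma~\ref{lemma:g-tilde-IBP-sub}, but now \emph{as many times as desired} (all boundary terms at $r=\infty$ vanish), each step producing a factor $[A(x-r_-)]^{-1}=O(x^{-1})$ together with an extra power of $r^{-1}$; hence this part of $\tilde g(x)$ is $O(x^{-N})$ for every $N$, in particular $o(x^{-2+s})$.

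For the remaining integral over $[r_+,r_++\delta]$ I would use the outgoing boundary condition of Definition~\ref{def:outgoing-radial-solution}: since $(r-r_+)^{s-\xi}R(r)$ is smooth at $r=r_+$,
\begin{align*}
\psi(r):=(r-r_-)^{\eta}e^{\gamma r}\lp[(r-r_+)^{s-\xi}R(r)\rp]
\end{align*}
is smooth near $r_+$, with $\psi(r_+)=(r_+-r_-)^{\eta}e^{\gamma r_+}\big[(r-r_+)^{s-\xi}R(r)\big]_{r=r_+}$. Writing $r-r_-=(r-r_+)+(r_+-r_-)$ in the exponent and pulling out the constant $e^{A(x-r_-)(r_+-r_-)}=e^{-2\gamma(x-r_-)}$, this part of $\tilde g(x)$ equals $e^{-2\gamma(x-r_-)}\int_{r_+}^{r_++\delta}e^{A(x-r_-)(r-r_+)}(r-r_+)^{2\xi-s}\psi(r)\,dr$. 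Substituting $\phi=r-r_+$ and writing $A(x-r_-)=i\Lambda$ with $\Lambda=\tfrac{2\omega}{r_+-r_-}(x-r_-)>0$, one Taylor expands $\psi$ about $r_+$ and is reduced to understanding $\int_0^{\delta}e^{i\Lambda\phi}\phi^{2\xi-s}\,d\phi$ and its lower-order corrections as $\Lambda\to\infty$. Because $\Re(2\xi-s+1)=1-s\geq1$, the associated Fourier integral $\int_0^\infty e^{i\Lambda\phi}\phi^{2\xi-s}\,d\phi$ is not absolutely convergent and must be regularised: integrating by parts a number of times depending on $|s|$ (with integer and half-integer $s$ requiring slightly different bookkeeping) lowers the exponent until its real part lies in $(-1,0)$, where the Fourier integral converges and is evaluated by a rotation of contour, contributing the phase factor $\exp(\tfrac{i\pi}{2}(2\xi-s+1))$ together with $\Gamma(2\xi-s+1)\Lambda^{-(2\xi-s+1)}$. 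One then reassembles
\begin{align*}
\int_{r_+}^{r_++\delta}e^{i\Lambda\phi}\phi^{2\xi-s}\psi(r_++\phi)\,d\phi &= \exp\lp(\frac{i\pi}{2}(2\xi-s+1)\rp)\Gamma(2\xi-s+1)\,\Lambda^{-(2\xi-s+1)}\psi(r_+)\\
&\qquad{}+O\lp(\Lambda^{s-2}\rp)\,,
\end{align*}
the contributions of the artificial endpoint $\phi=\delta$ being $O(x^{-N})$ and cancelling the corresponding boundary terms from the tail estimate. Finally I would collect the elementary prefactors — $e^{-2\gamma(x-r_-)}$, $e^{\gamma r_+}$, $(r_+-r_-)^{\eta}$, the value of $\psi(r_+)$, the identity $r_++r_-=2M$, and $\Lambda^{-(2\xi-s+1)}=\big(\tfrac{2|\omega|}{r_+-r_-}\big)^{-2\xi+s-1}x^{-2\xi+s-1}(1+O(x^{-1}))$ — to obtain the stated expansion, with the remainder absorbed into $O(x^{-2+s})$, since the leading term has modulus of size $x^{s-1}$.

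The hard part is this last step, i.e.\ making the Watson-type asymptotic expansion of the oscillatory integral over $[r_+,r_++\delta]$ rigorous and exact at the level of constants: one must carry out the repeated integration-by-parts reduction, correctly identify the branch and phase produced when the reduced Fourier integral is evaluated by a contour rotation, and handle the case $\omega<0$. This is precisely the generalization of \cite[Lemma~4.10]{Shlapentokh-Rothman2015} to $s\leq 0$; the remaining steps are routine.
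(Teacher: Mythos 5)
Your proposal is correct and follows essentially the same strategy as the paper: the tail is killed by repeated integration by parts against the kernel, and the leading term is the endpoint contribution at $r=r_+$, extracted by rotating the contour of the model Fourier integral $\int_0^\infty e^{i\Lambda\phi}\phi^{2\xi-s}\,d\phi$ to produce $\Gamma(2\xi-s+1)$ and the phase $e^{i\pi(2\xi-s+1)/2}$ (the paper packages this as a single integration by parts against an explicit antiderivative $l_{-1}$ defined by a contour to $r\pm i\infty$, inside its Lemma~\ref{lemma:technical-core-mode-stability}). The only practical difference is that the paper uses a smooth cutoff $\chi$, which makes the boundary terms at the splitting point vanish identically rather than requiring the cancellation you invoke for a sharp cutoff at $r_++\delta$; note also that your phase bookkeeping, $e^{\gamma(2r_-+r_+)}$, agrees with what the paper's own computation yields.
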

\begin{proof} Let $\chi(r)$ be a cutoff function that is identically one in $[r_+,r_++1]$ and vanishes for $[r_++2,\infty)$. We can split the integral in  the definition of $\tilde{g}$ as follows:
\begin{align}
\tilde{g}(x)&=\int_{r_+}^\infty e^{Axr}(r-r_+)^{2\xi-s} \lp(e^{-A(r-r_-)r_-}e^{-Axr_-}e^{-i\omega r}(r-r_-)^\eta(r-r_+)^{-\xi+s}R(r)\chi(r)\rp) dr \nonumber\\
&\quad +\lim_{y\to 0}\int_{r_+}^\infty e^{A(z-r_-)(r-r_-)}(r-r_-)^\eta(r-r_+)^{\xi} e^{-i\omega r}R(r)\lp(1-\chi(r)\rp) dr \label{eq:g-split-sub}
\end{align}

For the second integral, we appeal to the argument of Lemma~\ref{lemma:g-tilde-IBP-sub} to write, for $\Im\omega>0$ or $y\Re\omega>0$,
\begin{align*}
&\int_{r_+}^{\infty} e^{A(z-r_-)(r-r_-)}(r-r_-)^{\eta}(r-r_+)^{\xi}e^{-i\omega r} R(r)(1-\chi(r)) dr \\
&\qquad =\frac{1}{\lp[A(z-r_-)\rp]^{k}}\int_{r_+}^{\infty} e^{A(z-r_-)(r-r_-)}\frac{d^k}{dr^k}\lp((r-r_-)^{\eta}(r-r_+)^{\xi}e^{-i\omega r} R(r)(1-\chi(r))\rp) dr\,,
\end{align*}
where, in the limit $y\to 0$, the right hand side is certainly a convergent integral if $k= 2-2s$; it is also of $O(x^{-2+2s})$ as $x\to \infty$.  

For the first integral, consider the following lemma:
\begin{lemma}\label{lemma:technical-core-mode-stability} Let $h$ be a smooth function on $[r_+,\infty)$ which vanishes on $[r_++2,\infty)$. Then, for $\nu\in\mathbb{R}\backslash\{0\}$ or $\Im\nu>0$,
\begin{align*}
Z(\nu)&=\int_{r_+}^\infty e^{i\nu r}(r-r_+)^{2\xi-s} h(r)dr \\
&=\Gamma(2\xi-s+1)\exp\lp(\frac{i\pi}{2}(2\xi-s+1)\rp)h(r_+)e^{i\nu r_+}\nu^{-2\xi+s-1}+O\lp(|\nu|^{-1+s}\rp)\,,
\end{align*} 
as $\nu\to\infty$.
\end{lemma}
\begin{proof} We want to compute $Z(\nu)$ by  integrating by parts, so it will be useful to obtain a formula for the anti-derivative of $e^{i\nu r}(r-r_+)^{2\xi-s}$. In order to do this for, we extend $r$ to $w\in\mb{C}\backslash (-\infty,r_+]$ and let 
\begin{align*}
l_0(r,\nu)&:=e^{i\nu w}(w-r_+)^{2\xi-s}
\end{align*}
where $(w-r_+)^{2\xi-s}=\exp\lp[(2\xi-s)\log(w-r_+)\rp]$ and we are taking the principal branch of the logarithm. Clearly, $\exp\lp[(2\xi-s)\log(w-r_+)\rp]$ is uniformly bounded for $\{w:\Re w\in[r_+,r_++2)\}$. Assume $\nu>0$ first; the exponential decay from $e^{i\nu w}$ as $w\to r+i\infty$ means we can define the anti-derivative of $l_0$ as
\begin{align*}
l_{-1}(r,\nu)&:=-\int_r^{r+i\infty}e^{i\nu w}(w-r_+)^{2\xi-s}dw\,,
\end{align*}
on any contour connecting $r$ and $r+i\infty$ which avoids the branch cut (and thus passes only through points where the integrand is holomorphic), by Cauchy's theorem. Take the contour $w(t)=r+it$ for $t\in[0,\infty)$; then we have
\begin{align*}
l_{-1}(r,\nu)&=-ie^{i\nu r}\int_0^{\infty}e^{-\nu t}(r-r_++it)^{2\xi-s}dt \text{~ integrating along~}t\mapsto r+it\\
&=-ie^{i\nu r}\nu^{-1}\int_0^{\infty}e^{-t}\lp(r-r_++\frac{it}{\nu}\rp)^{2\xi-s}dt\,. \numberthis \label{eq:l-n-1}
\end{align*}
For $\nu<0$, one repeats this argument along a path finishing at $r-i\infty$ instead, so as to obtain the exponential decay that allows for convergence of the integral.


Now, integration by parts once allows us to rewrite $Z$ as
\begin{align*}
Z(\nu) = S_0(\nu) +\mc{S}_1(\nu)\,, \quad
S_0(\nu)=- h(r_+)l_{-1}(r_+)\,, ~~
S_1(\nu)=\int_{r_+}^\infty h'(r) l_{0}(r) dr\,.
\end{align*}
By (\ref{eq:l-n-1}), at $r=r_+$,
\begin{gather*}
l_{-1}(r_+,\nu)=-i^{2\xi-s+1}e^{i\nu r_+}\nu^{-2\xi+s-1}\Gamma(2\xi-s+1)\\
\Rightarrow s_0 =\Gamma(2\xi-s+1)i^{2\xi-s+1}e^{i\nu r_+}|\nu|^{-2\xi+s-1} h(r_+)\,.
\end{gather*}
For the remainder term, we note that 
\begin{align*}
& \int_{r_+}^\infty \frac{d}{dr}h(r)\lp(\nu(r-r_+)+it\rp)^{2\xi-s}dr \\
&\qquad= \int_{r_+}^{r_++\nu^{-1}} \frac{d}{dr}h(r)\lp(\nu(r-r_+)+it\rp)^{2\xi-s} dr\\
 &\qquad\qquad +(i\nu)^{-1}\lp(1+it\rp)^{2\xi-s}h(r_++\nu^{-1}) \\
 &\qquad\qquad-(i\nu)^{-1} \int_{r_++\nu^{-1}}^\infty \lp(\frac{d}{dr}\rp)\lp[\lp(\nu(r-r_+)+it\rp)^{2\xi-s}h(r)\rp]dr\,,
\end{align*}
which is $O(|\nu|^{-1})$ by compact support of $h$. Thus, after Fubini
\begin{align*}
|S_1(\nu)|
&\leq |\nu|^{-1+s}\int_0^\infty e^{-t} \lp|\int_{r_+}^\infty h'(r)\lp(\nu(r-r_+)+it\rp)^{2\xi-s}dr\rp|dt\lesssim |\nu|^{-2+s}\,,
\end{align*}
for $\nu\in\mathbb{R}\{0\}$ and similarly for $\Im\omega>0$.
\end{proof}

Applying this lemma with $i\nu=Ax$ and $h(r)$ as the function in brackets in the first line of \eqref{eq:g-split-sub} yields the statement.
\end{proof}

\subsubsection{Proof of Proposition \ref{prop:ode-u-tilde-sub}}
\label{sec:integral-transformation-smooth-sub}

We are now ready to prove \cref{prop:ode-u-tilde-sub}:
\begin{proof}[Proof of \cref{prop:ode-u-tilde-sub}]
Recall that out integral transformation $\tilde{u}$ is given by (\ref{eq:subextremal-transformation-with-g-tilde})
\begin{align*}
\tilde{u}(x)=(x^2+a^2)^{1/2}(x-r_-)^{-s}(x-r_+)^{\xi+\eta}\tilde{g}(x)\,,
\end{align*}
with derivative
\begin{align} \label{eq:derivative-u-tilde-sub}
\tilde{u}'=\frac{(x-r_+)(x-r_-)}{x^2+a^2}\lp[\frac{x}{x^2+a^2}-\frac{s}{x-r_-}+\frac{\xi+\eta}{x-r_+}+\frac{1}{\tilde{g}}\frac{d\tilde{g}}{dx}\rp]\tilde{u}\,.
\end{align}
By \cref{lemma:Heun-z}, $\tilde{g}(x)$ is a smooth function for $x\in(r_+,\infty)$, so $\tilde{u}$ must also be smooth in $x\in(r_+,\infty)$. The ODE \eqref{eq:ode-u-tilde-sub} for $\tilde{u}$ can be computed directly from the ODE \eqref{eq:ode-g-tilde-sub} for $\tilde{g}$. This shows statements 1 and 2.

By \cref{lemma:g-tilde-IBP-sub}, $\tilde{g}(x)$ and $\p_x \tilde{g}(x)$  are bounded functions on $x\in[r_+,\infty)$, so $\tilde{u}$ and $\tilde{u}'$ must be bounded as well, which shows statement 3. In \cref{sec:asymptotics-g-tilde-sub}, we computed the asymptotics of $\tilde{g}$, from which, together with \eqref{eq:derivative-u-tilde-sub}, the boundary conditions as stated in statement 4 follow.

Finally, we must show injectivity; we will proceed as in \cite[proof of Theorems 1.5, 1.6]{Shlapentokh-Rothman2015}. Extend $R(r)$ to $r\in \mathbb{R}$ by 0.
\begin{itemize}
\item If $\Im\omega> 0$, we write the Fourier transform of  $(r-r_-)^{\eta}(r-r_+)^\xi e^{\gamma r}R(r)$ as
\begin{align*}
\hat{R}(z):=\frac{1}{2|\omega|^2}\int_{-\infty}^\infty e^{2i|\omega|^2 z(r-r_-)}(r-r_-)^{\eta}(r-r_+)^\xi e^{\gamma r}R(r)dr\,,
\end{align*}
which is a holomorphic function in the upper half plane (by the Paley--Wiener theorem, due to $R$ being an $L^2$ function of $r\in(M,\infty)\subseteq \mathbb{R}_+$). However, since
\begin{align*}
\int_{r_+}^\infty e^{\frac{2i\omega}{r_+-r_-}(z-r_-)(r-r_-)}(r-r_-)^{\eta}(r-r_+)^\xi e^{\gamma r}R(r)dr \equiv 0\,, 
\end{align*}
for all $x\in(r_+,+\infty)$, we conclude that $\hat{R}(z)$ vanishes along the line $\{z\in\mathbb{C}\,:\, z=y/\overline{\omega},\, y\in(1,+\infty)\}$. By analyticity, $\hat{R}$ and hence $R$ vanish identically.

\item If $\omega\in\mathbb{R}\backslash\{0\}$, then we write the Fourier transform of $(r-r_-)^{\eta}(r-r_+)^\xi e^{\gamma r}R(r)$ as the $L^2_{y}(\mathbb{R})$ function given by
\begin{align*}
\hat{R}(y):=\frac{1}{2\omega}\int_{-\infty}^\infty e^{2i\omega y (r-r_-)}(r-r_-)^{\eta}(r-r_+)^\xi e^{\gamma r}R(r)dr\,.
\end{align*}
Since
\begin{align*}
\int_{r_+}^\infty e^{\frac{2i\omega}{r_+-r_-}(z-r_-)(r-r_-)}(r-r_-)^{\eta}(r-r_+)^\xi e^{\gamma r}R(r)dr \equiv 0\,,
\end{align*}
for all $x\in(r_+,+\infty)$, we conclude that $\hat{R}(y)$ vanishes along the line $y\in(1,\infty)$. However, the Fourier transform of a non-trivial $L^2$ function supported in $\mathbb{R}_+$ cannot vanish. Since $R$ is $L^2$ in its support contained in $(r_+,+\infty)\subseteq\mathbb{R}_+$, we conclude that $R$ vanishes identically.
\end{itemize}
Hence \eqref{eq:subextremal-transformation} defines an injective map $R \mapsto \tilde{u}$, as in statement 5.
\end{proof}

\section{Proof of mode stability on Kerr backgrounds}
\label{sec:proof}

In this section, we will prove mode stability for $\omega$ in the upper half-plane and on the real axis for Kerr black hole spacetimes, including extremality. Let us begin by giving the precise statements we will prove. For convenience in the later sections, we separate the case of $\omega$ real and $\omega$ in the upper half-plane.

\begin{theorem}\label{thm:mode-stability-real-axis}  Fix $M>0$, $|a|\leq M$ and $s\in\frac12\mathbb{Z}$. There are no nontrivial outgoing solutions to the homogeneous radial ODE~\eqref{eq:radial-ODE}, in the sense of Definition~\ref{def:outgoing-radial-solution}, for any admissible frequency triple $(\omega,m,\lambda)$ with respect to $s$ and $a$ where $\omega$ is real.
\end{theorem}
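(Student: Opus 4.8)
The plan is to reduce the statement to non-positive spin via the Teukolsky--Starobinsky identities, and then to run a Whiting-type energy argument on the integral transforms constructed in \cref{sec:integral-transformation}, treating $|a|<M$ and $|a|=M$ uniformly.

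\emph{Reduction to $s\le 0$.} Suppose $s>0$ and let $\swei{R}{s}$ be a nontrivial outgoing solution of the homogeneous radial ODE for an admissible triple $(\omega,m,\lambda)$ with $\omega\in\mathbb{R}\setminus\{0\}$. Since $R$ is outgoing and $\omega\ne 0$, its amplitude $\swei{a}{s}_{\mc{I}^+}$ is nonzero (\cref{rmk:outgoing-radial-solution}), so by \cref{prop:TS-radial} the function $\Delta^s(\mc{D}_0^{+})^{2s}(\Delta^s\swei{R}{s})$ equals $(2i\omega)^{2s}\swei{a}{s}_{\mc{I}^+}\swei{R}{-s}_{\mc{I}^+}\ne 0$, and by \cref{lemma:TS-radial-boundary-conditions} it is an outgoing solution of the radial ODE of spin $-s$ for the same triple (the admissibility conditions being insensitive to the sign of $s$). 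Hence it suffices to rule out nontrivial outgoing solutions for $s\in\tfrac12\mathbb{Z}_{\le 0}$.

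\emph{The transform and its conserved current.} Fix such an $R$ with $s\le 0$, set $u=(r^2+a^2)^{1/2}\Delta^{s/2}R$, and apply \cref{prop:ode-u-tilde} if $|a|=M$ or \cref{prop:ode-u-tilde-sub} if $|a|<M$ to produce $\tilde u$: a smooth, bounded (together with $\tilde u'$) solution of the \emph{homogeneous} equation $\tilde u''+\tilde V\tilde u=0$ on $x^*\in\mathbb{R}$ with $\tilde V$ real (\cref{rmk:prop-potential,rmk:prop-potential-sub}). Since $\tilde V$ and $\omega$ are real, the current $Q^T[\tilde u]:=\Im(\tilde u'\,\overline{\omega\tilde u})$ obeys $(Q^T[\tilde u])'=\omega\,\Im(\tilde u''\overline{\tilde u})=-\omega\,\Im(\tilde V|\tilde u|^2)=0$, so it is constant in $x^*$. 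I would then evaluate it at the two ends using part~4 of the relevant proposition. At the horizon end the boundary condition selects a single oscillatory branch and, crucially, the transform has removed the ergoregion, so $Q^T[\tilde u]$ equals a strictly negative multiple of $|\tilde u(-\infty)|^2$. At null infinity, in the regime $\omega(\omega-m\upomega_+)>0$ — automatic for $|a|<M$ after Whiting, and covering $|a|=M$ off the superradiant range — the boundary condition makes $Q^T[\tilde u]$ a nonnegative multiple of $|\tilde u(+\infty)|^2$ (of $|(x^{-1/4}\tilde u)(+\infty)|^2$ in the extremal case). Constancy then forces $Q^T[\tilde u]\equiv 0$, so that limit vanishes, and the amplitude identity in part~4 gives $(\Delta^{s/2}u)(-\infty)=0$, i.e.\ $\swei{a}{s}_{\mc{H}^+}=0$, whence $R\equiv 0$ by \cref{rmk:outgoing-radial-solution}. (Equivalently, a solution of $\tilde u''+\tilde V\tilde u=0$ that decays at an oscillatory end must vanish identically, and then $R\equiv 0$ by the injectivity of the transform in part~5.)

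\emph{The superradiant extremal range, and the main difficulty.} The one case not handled by the energy identity is $|a|=M$ with $\omega(\omega-m\upomega_+)<0$: here the transformed equation has no oscillatory regime at infinity but rather an exponentially growing and an exponentially decaying branch, so there is no flux to balance. Instead I would use that $\tilde u$ is bounded (part~3): by part~4(b)(iii) the coefficient of the growing branch of $\tilde u$ at infinity is a positive constant times $(\Delta^{s/2}u)(-\infty)$, so boundedness forces $(\Delta^{s/2}u)(-\infty)=0$, hence $\swei{a}{s}_{\mc{H}^+}=0$ and $R\equiv 0$. I expect this last regime to be the principal obstacle: for $|a|<M$ Whiting's transform removes the ergoregion and a conservation law closes the argument at once, whereas in the extremal superradiant window the transform produces no genuine wave equation, and one must instead rely on the a priori boundedness of $\tilde u$ — itself the output of the delicate integration-by-parts and stationary-phase/steepest-descent estimates in \cref{lemma:g-tilde-IBP,lemma:technical-core-mode-stability-extremal,lemma:g-tilde-asymptotics-superrad}.
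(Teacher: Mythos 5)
Your proposal is correct and follows essentially the same route as the paper: reduction of $s>0$ to $s<0$ via the Teukolsky--Starobinsky identities and Lemma~\ref{lemma:TS-radial-boundary-conditions}, followed by the conserved $\tilde{Q}^T$-current for the transformed quantity $\tilde{u}$, whose boundary terms all carry the same sign precisely because the transformation removes superradiance. The only (harmless) deviation is in how you close each sub-case --- reading off $(\Delta^{s/2}u)(-\infty)=0$ from the amplitude identities in statement~4 and invoking Remark~\ref{rmk:outgoing-radial-solution}, rather than first deducing $\tilde u\equiv 0$ via the unique continuation Lemma~\ref{lemma:unique-continuation} and then using injectivity of $R\mapsto\tilde u$ as the paper does --- but both conclusions rest on the same estimates from Propositions~\ref{prop:ode-u-tilde} and \ref{prop:ode-u-tilde-sub}.
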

\begin{theorem}\label{thm:mode-stability-upper-half}  Fix $M>0$, $|a|\leq M$ and $s\in\frac12\mathbb{Z}$. There are no nontrivial outgoing solutions to the homogeneous radial ODE~\eqref{eq:radial-ODE}, in the sense of Definition~\ref{def:outgoing-radial-solution}, for any admissible frequency triple $(\omega,m,\lambda)$ with respect to $s$ and $a$ where $\Im\omega>0$.
\end{theorem}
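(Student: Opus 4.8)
The plan is to reduce everything, via the Teukolsky--Starobinsky identities, to the case $s\le 0$, and then to run an energy argument on the transformed radial function produced by Proposition~\ref{prop:ode-u-tilde} (when $|a|=M$) or Proposition~\ref{prop:ode-u-tilde-sub} (when $|a|<M$). So suppose first that $s>0$ and that $R^{[s]}$ is a nontrivial outgoing solution of the homogeneous radial ODE~\eqref{eq:radial-ODE}, in the sense of Definition~\ref{def:outgoing-radial-solution}, for some admissible triple $(\omega,m,\lambda)$ with respect to $s$ and $a$ and with $\Im\omega>0$. By Lemma~\ref{lemma:TS-radial-boundary-conditions}, the radial Teukolsky--Starobinsky identities~\eqref{eq:TS-radial} then produce a nontrivial outgoing solution $R^{[-s]}$ of spin $-s$, for the same admissible triple (the algebraically special frequencies, where $\mathfrak{C}_s=0$, being harmless by Lemma~\ref{lemma:algebraically-special}, which shows there are no nontrivial outgoing solutions there to begin with). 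Hence it suffices to rule out nontrivial outgoing solutions for $s\le 0$.

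Now fix $s\le 0$ and assume, for contradiction, that $R=R^{[s]}_{m\lambda}$ is a nontrivial outgoing solution for some admissible triple $(\omega,m,\lambda)$ with $\Im\omega>0$; recall that admissibility (Definition~\ref{def:admissible-freqs}, item 4) forces $\Im(\lambda\overline{\omega})<0$. Setting $u=(r^2+a^2)^{1/2}\Delta^{s/2}R$ and applying Proposition~\ref{prop:ode-u-tilde} (if $|a|=M$) or Proposition~\ref{prop:ode-u-tilde-sub} (if $|a|<M$) with $\hat F\equiv 0$, we obtain a function $\tilde u$ which is smooth for $x^*\in\mathbb{R}$, solves the homogeneous transformed equation $\tilde u''+\tilde V\tilde u=0$, and --- by statement 4(a) of those propositions --- satisfies $\tilde u,\tilde u'\to 0$ as $x^*\to-\infty$ (polynomially, like $(r-r_+)^{2M\Im\omega}$) and $\tilde u\,\tilde u'\to 0$ as $x^*\to+\infty$ (exponentially for $|a|<M$, and $|\tilde u'\tilde u|=o(1)$ for $|a|=M$).

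The heart of the argument is then the frequency-localized $T$-current $Q^T[\tilde u]:=\Im(\tilde u'\overline{\omega\tilde u})$ of~\eqref{eq:T-current}, applied to the transformed equation: a direct computation (as in the derivation of~\eqref{eq:T-current}, with the effective potential $\omega^2-\tilde V$) gives $-(Q^T[\tilde u])'=\Im\omega\,|\tilde u'|^2+\Im(\tilde V\overline{\omega})\,|\tilde u|^2$. Integrating over $x^*\in\mathbb{R}$ and using that the boundary terms $Q^T[\tilde u](\pm\infty)$ vanish by the decay just described, we obtain
\[
0=\int_{-\infty}^{\infty}\lp(\Im\omega\,|\tilde u'|^2+\Im(\tilde V\overline{\omega})\,|\tilde u|^2\rp)dx^*\,.
\]
The key point is that, thanks to the sign properties of $\tilde V$ recorded in Remark~\ref{rmk:prop-potential} (resp.\ Remark~\ref{rmk:prop-potential-sub}) --- $\omega^2$ enters with a non-negative coefficient, $\lambda$ with a non-positive coefficient, and the remaining frequency-independent part is non-negative --- together with the admissibility inequality $\Im(\lambda\overline{\omega})<0$, the integrand above is pointwise non-negative on $(r_+,\infty)$ (resp.\ $(2M,\infty)$). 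This is exactly the statement that the transformation has removed the obstruction discussed in Section~\ref{sec:superradiance}: the transformed radial potential has no superradiant regime, in contrast with the original potential~\eqref{eq:Teukolsky-potential}. Since $\Im\omega>0$, we conclude $\tilde u'\equiv 0$, hence $\tilde u\equiv 0$. By the injectivity of the transformation $R\mapsto\tilde u$ (statement 5 of Proposition~\ref{prop:ode-u-tilde}/\ref{prop:ode-u-tilde-sub}), this forces $R\equiv 0$, contradicting nontriviality, and completing the proof.

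I expect the main obstacle to be the verification of the coercivity claim in the displayed identity: one must check, from the explicit formulas~\eqref{eq:V-tilde}/\eqref{eq:V-tilde-sub}, that $\Im\omega\,|\tilde u'|^2+\Im(\tilde V\overline{\omega})\,|\tilde u|^2\ge 0$ for all $x$ in the relevant range --- in particular that the frequency-independent and $s$-dependent pieces of $\tilde V$ do not spoil the sign after multiplication by $\overline{\omega}$ --- possibly after also pairing the transformed equation with $\overline{\tilde u}$ to trade $\int|\tilde u'|^2$ against $\int\Re\tilde V\,|\tilde u|^2$. As an alternative route, at least for the extremal case $|a|=M$, one can instead deduce Theorem~\ref{thm:mode-stability-upper-half} from Theorem~\ref{thm:mode-stability-subextremal} in the $\Im\omega>0$ regime by a continuity-in-$a$ argument for the relevant Wronskian (extending the analysis of \cite{Hartle1974}), and we shall present this variant as well.
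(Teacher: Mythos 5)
Your proposal follows the paper's own proof essentially verbatim: the reduction of $s>0$ to $s\le 0$ via the Teukolsky--Starobinsky identities (Lemma~\ref{lemma:TS-radial-boundary-conditions}), then the $T$-current applied to the transformed function $\tilde u$ produced by Proposition~\ref{prop:ode-u-tilde} or~\ref{prop:ode-u-tilde-sub}, with vanishing boundary terms, the sign properties of $\tilde V$ from Remarks~\ref{rmk:prop-potential}/\ref{rmk:prop-potential-sub} combined with the admissibility condition $\Im(\lambda\overline{\omega})<0$, and finally the injectivity of $R\mapsto\tilde u$. The alternative continuity-in-$a$ argument you sketch for $|a|=M$ is likewise exactly the paper's Section~\ref{sec:continuity-argument}.
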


We note that Theorems \ref{thm:mode-stability-real-axis}  and \ref{thm:mode-stability-upper-half} do not require $\lambda$ to be a separation constant arising from separating the angular and radial components of a mode solution. However, whereas for mode stability on the real axis, $\lambda$ can take any real value, for mode stability in the upper half-plane we require the more restrictive condition of $\Im(\lambda\overline{\omega})<0$. We recall that both these conditions are fulfilled by an eigenvalue of the angular ODE \eqref{eq:angular-ode}, by statement 1 of Proposition~\ref{def:angular-ode}. Hence,  Theorems \ref{thm:mode-stability-real-axis} and \ref{thm:mode-stability-upper-half} imply Theorems~\ref{thm:mode-stability-subextremal} and \ref{thm:mode-stability-full}, discussed in the introduction. 

The structure of this section is as follows. In \cref{sec:unique-continuation-lemma}, we introduce a unique continuation lemma in the style of \cite[Lemma 6.1]{Shlapentokh-Rothman2015}. This plays a key role in \cref{sec:proof-s-negative}, where we show mode stability for $s\leq 0$. In \cref{sec:proof-s-positive}, we prove that mode stability for $s>0$ follows from the mode stability for $s<0$ via the Teukolsky--Starobinsky identities introduced in \cref{sec:teukolsky-starobinsky}. Finally, in \cref{sec:continuity-argument}, we give an alternative proof of Theorem  \ref{thm:mode-stability-upper-half} in the extremal Kerr ($|a|=M$) case.

\subsection{A unique continuation lemma}
\label{sec:unique-continuation-lemma}
For the proof of mode stability for $s\leq 0$ on the real axis, it will be useful to introduce the following unique continuation lemma, inspired by that of \cite[Lemma 5.1]{Shlapentokh-Rothman2015}:
\begin{lemma}[Unique continuation lemma]\label{lemma:unique-continuation} Suppose that we have a solution $u(r^*):(-\infty,\infty)\to \mathbb{C}$ to the ODE
\begin{align*}
u''+V u =0
\end{align*}
such that $u\in L^\infty(-\infty,\infty)$ with $(|u'|+|u|)(-\infty)=0$, $V$ is real and bounded with either
\begin{enumerate}
\item $V(\infty)=\omega^2\in\mathbb{R}\backslash\{0\}$ and $V-\omega^2$ decaying at an integrable rate as $r^*\to \infty$, or
\item $V(-\infty)=\omega_0^2\in\mathbb{R}\backslash\{0\}$ and $\hat{V}:=\omega_0^2-V$ decaying at an integrable rate as $r^*\to -\infty$.
\end{enumerate}
Then $u$ is identically zero.
\end{lemma}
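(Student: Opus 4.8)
\textbf{Proof proposal for Lemma~\ref{lemma:unique-continuation}.}

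The plan is to treat the two cases separately and, in each, to leverage the decay of $u$ (or of a suitably rescaled version of $u$) at the end where $V$ is asymptotically constant, combined with the fact that the other endpoint condition $(|u'|+|u|)(-\infty)=0$ forces an energy to vanish. For case (1), I would first establish the precise asymptotic behaviour of $u$ as $r^*\to+\infty$. Since $V-\omega^2$ decays at an integrable rate, a standard perturbative ODE argument (Levinson-type asymptotics, or a Volterra integral equation comparing to the constant-coefficient equation $u''+\omega^2 u=0$) shows that any solution can be written as $u = c_+ e^{i\omega r^*} + c_- e^{-i\omega r^*} + o(1)$, with $u' = i\omega(c_+ e^{i\omega r^*} - c_- e^{-i\omega r^*}) + o(1)$, for some constants $c_\pm\in\mathbb{C}$. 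Then I would consider the microlocal/Wronskian-type current $Q[u] := \Im(u'\overline{u})$; differentiating and using the ODE gives $(Q[u])' = \Im(u'\overline{u})' = \Im(u''\overline{u} + |u'|^2) = \Im(-V|u|^2) = 0$ since $V$ is real. Hence $Q[u]$ is constant on $(-\infty,\infty)$. Evaluating at $-\infty$ using $(|u'|+|u|)(-\infty)=0$ gives $Q[u]\equiv 0$; evaluating at $+\infty$ using the asymptotics gives $0 = \omega(|c_+|^2 - |c_-|^2)$.

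This alone is not enough — it only says $|c_+| = |c_-|$, not that both vanish. So the second ingredient I would bring in is an estimate on $\int |u|^2$ or a second conserved-type quantity. Here is where I would use the boundedness $u\in L^\infty$ together with a more careful argument: one option is to note that if $c_+$ and $c_-$ are both nonzero then $|u|$ oscillates but stays bounded below away from a discrete set, which is compatible with $L^\infty$; so instead I would argue via a \emph{unique continuation / ODE uniqueness} principle run from $-\infty$. Since $(|u|+|u'|)(-\infty)=0$ and $V$ is bounded, one can set up the Volterra integral equation $u(r^*) = -\int_{-\infty}^{r^*}(r^*-t)V(t)u(t)\,dt$ (valid because the data vanish at $-\infty$ and $V$ is bounded, so the integral converges for $r^*$ sufficiently negative), and a Grönwall estimate on a half-line $(-\infty, R_0]$ forces $u\equiv 0$ there; then standard ODE uniqueness (Picard--Lindelöf, $V$ bounded hence locally Lipschitz coefficients) propagates $u\equiv 0$ to all of $\mathbb{R}$. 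Actually this last argument makes case (1) essentially immediate and does not even need the asymptotics at $+\infty$ — so I would present it as the main line, with the $Q[u]$ computation as a remark. The subtlety is making the Volterra setup rigorous: one needs $\int_{-\infty}^{R_0}(1+|t|)|V(t)|\,dt<\infty$ on some half-line, which does \emph{not} follow from boundedness of $V$ alone. So in case (1) I would instead run Grönwall directly on the first-order system $(u,u')$: from $|u(r^*)| + |u'(r^*)| \leq \int_{-\infty}^{r^*}(1 + \|V\|_\infty)(|u(t)|+|u'(t)|)\,dt$ — but again the left endpoint being $-\infty$ makes the integral potentially divergent. The correct fix: pick any $R_0$ and any $\epsilon>0$; by the vanishing at $-\infty$ choose $R_1 < R_0$ with $|u|+|u'| < \epsilon$ on $(-\infty, R_1]$; then Grönwall on $[R_1, R_0]$ gives $|u|+|u'| \leq \epsilon\, e^{(1+\|V\|_\infty)(R_0-R_1)}$ — but $R_1\to-\infty$ as $\epsilon\to 0$ is not forced, so this does not close either. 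The genuinely correct statement is that \emph{either} $u$ decays at $-\infty$ fast enough \emph{or} we genuinely need the structure at $+\infty$.

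Reassessing: the honest proof uses \emph{both} ends. In case (1): the current $Q[u]=\Im(u'\bar u)$ is conserved and equals $0$ by the $-\infty$ condition, giving $|c_+|=|c_-|$ at $+\infty$. To upgrade, I would use a second conservation law — multiply the equation by $\overline{u'}$ and take real parts: $(|u'|^2 + V|u|^2)' = V'|u|^2$, so $|u'|^2 + V|u|^2$ is \emph{not} conserved unless $V$ is constant, but since $V-\omega^2$ is integrable we get $|u'(+\infty)|^2 + \omega^2|u(+\infty)|^2$ exists; comparing with $-\infty$ where it is $0$, and with $\int V'|u|^2$ absolutely convergent, one deduces after integration by parts the needed rigidity. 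More cleanly: in case (1), if $u$ is a bounded nonzero solution it is asymptotic to $c_+e^{i\omega r^*}+c_-e^{-i\omega r^*}$ at $+\infty$, and the \emph{other} solution branch data is determined; running the unique-continuation argument of Shlapentokh-Rothman backwards from $+\infty$ using the decay of $V-\omega^2$ and matching to the vanishing at $-\infty$ forces $c_\pm=0$, hence $u\to 0$ and $u'\to 0$ at $+\infty$ as well, and then $Q\equiv 0$ plus the ODE forces $u\equiv 0$ by ODE uniqueness from $+\infty$. Case (2) is entirely analogous with the roles of $\pm\infty$ swapped and $\omega$ replaced by $\omega_0$: here the data vanish at $-\infty$ which is now \emph{also} the end where $V\to\omega_0^2$, so the Volterra/Grönwall argument on $(-\infty, R_0]$ genuinely works — the vanishing data at $-\infty$ together with $\hat V = \omega_0^2 - V$ integrable lets one write $u = c_+ e^{i\omega_0 r^*} + c_- e^{-i\omega_0 r^*} + (\text{correction})$ near $-\infty$ with $c_\pm$ extracted from the data, forcing $c_\pm = 0$ because the data are zero; then $u\equiv 0$ on a half-line and ODE uniqueness finishes.

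\textbf{Main obstacle.} The delicate point is case (1): the vanishing condition is at $-\infty$ but the only structural information ($V$ asymptotically constant with integrable error) is at $+\infty$, so one cannot simply Grönwall from the good end. The resolution is to run the argument from $+\infty$: use the integrable decay of $V-\omega^2$ to get the two-exponential asymptotics, use conservation of $Q[u]=\Im(u'\bar u)$ (which holds because $V$ is \emph{real}) evaluated at $-\infty$ to kill $Q$, and then use a Volterra integral representation on a half-line $[R_0,+\infty)$ — where $\int_{R_0}^\infty |V-\omega^2|\,dr^* < \infty$ by hypothesis — to show that a solution with $Q=0$ and bounded must have $c_+ = c_- = 0$, i.e. decay to zero at $+\infty$; combined with $u\to 0$ at $-\infty$ and uniqueness for the linear ODE this gives $u\equiv 0$. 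Getting the bookkeeping of this two-ended argument exactly right, so that the integrability hypothesis is used precisely where needed, is the crux; everything else is routine ODE asymptotics.
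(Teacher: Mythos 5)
Your treatment of case (2) is essentially correct, though it takes a different route from the paper: you extract the coefficients $c_\pm$ of the Levinson-type asymptotic basis $e^{\pm i\omega_0 r^*}$ at $r^*\to-\infty$ (legitimate there, since $\hat V\in L^1$ near $-\infty$ and $\omega_0\neq 0$) and observe that $(|u|+|u'|)(-\infty)=0$ forces $c_\pm=0$, whence $u\equiv 0$ on a half-line and ODE uniqueness finishes. The paper instead proves case (2) by a weighted multiplier ("$y$-current") identity: with $y=-\exp(-C\int_{-\infty}^{r^*}\zeta)$, $\zeta>0$ integrable like $|\hat V|$ at $-\infty$, one has $Q^y:=y|u'|^2+yV|u|^2$ vanishing at both ends, and the fundamental theorem of calculus gives $\int(y'|u'|^2+y'\omega_0^2|u|^2)\leq 0$ after the error term $(y\hat V)'|u|^2$ is integrated by parts and absorbed for $C$ large. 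Your version is fine but leans on standard asymptotic ODE theory where the paper's is self-contained and, importantly, is the same estimate that gets reused quantitatively later.

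Case (1), however, has a genuine gap, and it is the case where all the difficulty sits (the paper defers it to Shlapentokh-Rothman's Lemma 5.1). Your final mechanism is: conservation of $Q=\Im(u'\overline{u})$ plus the vanishing at $-\infty$ gives $Q\equiv 0$, and then "a bounded solution with $Q=0$ must have $c_+=c_-=0$" via a Volterra argument on $[R_0,\infty)$. That implication is false: for the model equation $u''+\omega^2u=0$ the solution $u=\cos(\omega r^*)$ is bounded, satisfies $Q\equiv 0$, and has $c_\pm=\tfrac12$; the perturbed equation has solutions asymptotic to any such standing wave. You in fact notice this yourself midway through ($Q=0$ only gives $|c_+|=|c_-|$), but the proposal circles back to the same insufficient claim. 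No argument localized at $+\infty$ can rule out $c_\pm\neq 0$; the vanishing at $-\infty$ must enter beyond merely setting $Q=0$, and it carries no rate and no structural information about $V$ there. The missing idea is again the weighted virial identity, now with $y=-\exp(-C\int_{r^*}^{\infty}\zeta)$ where $\zeta\equiv 1$ near $-\infty$ and decays like $|V-\omega^2|$ near $+\infty$: then $y(-\infty)=0$, $y(+\infty)=-1$, $y'>0$ everywhere, the boundary term at $+\infty$ is $-\bigl(|u'|^2+\omega^2|u|^2\bigr)(+\infty)\leq 0$, the boundary term at $-\infty$ vanishes, and the error is absorbed exactly as in case (2) because $|V-\omega^2|/\zeta$ is globally bounded. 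Positivity of $y'(|u'|^2+\omega^2|u|^2)$ then forces $u\equiv 0$. Your "second conservation law" attempt $(|u'|^2+V|u|^2)'=V'|u|^2$ is the $y\equiv 1$ version of this and fails precisely because the error cannot be absorbed without the exponential weight calibrated to $\zeta$.
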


\begin{proof}[Proof of Lemma~\ref{lemma:unique-continuation}]
For a continuous, piecewise continuously differentiable function $y$, we define the $y$-current as 
\begin{align*}
Q^y(r^*):=y|u'|^2+y V|u|^2\,,\quad (Q^y)'(r^*) = y'|u'|^2+(y V)'|u|^2 \,.
\end{align*}

Note that statement 1 is precisely \cite[Lemma 5.1]{Shlapentokh-Rothman2015}. As the proof is very similar to that of statement 2, we omit it here. For statement 2, define
\begin{align*}
y(r^*) := -\exp\lp(-C\int_{-\infty}^{r^*} \zeta dr^*\rp)\,,
\end{align*}
where $\zeta$ is a fixed positive function which is identically 1 near $r=\infty$ and is set to have the same (integrable) decay as $|\hat{V}|$ as $r^*\to-\infty$. In particular, $y'>0$ in $(-\infty,\infty)$, $y(+\infty)=0$ and $y(-\infty)=-1$. It is clear from the assumptions on $y$ and $u$ that $Q^y(\pm \infty)=0$, so, writing $V(r^*)=\omega_0^2-\hat{V}(r^*)$, the fundamental theorem of calculus implies
\begin{align*}
\int_{-\infty}^\infty \lp(y'|u'|^2+y'\omega_0^2-(y\hat{V})'|u|^2\rp)dr^* =0\,.
\end{align*}
The only term which can threaten this estimate is  $(y\hat{V})'|u|^2$; however,
\begin{align*}
\lp|\int_{-\infty}^\infty (y\hat{V})'|u|^2 dr^* \rp|&\leq \int_{-\infty}^\infty  2|y||u'||u| dr^* \leq \frac12 \int_{-\infty}^\infty \lp[y'|u'|^2+y'\omega_0^2 \lp(\frac{2\hat{V}}{C\omega_0 \zeta}\rp)^2|u|^2 \rp]dr^* \\
&\leq \frac12 \int_{-\infty}^\infty \lp(y'|u'|^2+y'\omega_0^2|u|^2 \rp)dr^* 
\end{align*}
if we make $C=C(\omega_0,\lVert \hat{V}\zeta^{-1}\rVert_\infty)$ sufficiently large. Thus, this term cam be absorbed into the left hand side of the previous identity and the conclusion follows from an application of the fundamental theorem of calculus.
\end{proof}

\begin{remark} We note that an analogue of Lemma~\ref{lemma:R-general-asymptotics}, based solely on an asymptotic analysis, could be used in lieu of Lemma~\ref{lemma:unique-continuation}. However, the latter is slightly stronger in that it does not require knowledge of the explicit  potential, but only its decay properties at one end. Thus, whereas for the original Teukolsky radial ODE~\eqref{eq:radial-ODE} we have sketched how to apply Lemma~\ref{lemma:R-general-asymptotics} to establish non-existence of solutions for non-superradiant frequencies (see Section~\ref{sec:superradiance}), in the following section we demonstrate how Lemma~\ref{lemma:unique-continuation} can be used to derive non-existence of solutions to the transformed equations \eqref{eq:ode-u-tilde} and \eqref{eq:ode-u-tilde-sub} for any admissible frequencies.
\end{remark}

\subsection{Mode stability \texorpdfstring{for $s\leq 0$}{for non-positive spin}}
\label{sec:proof-s-negative}

\begin{proof}[Proof of Theorems \ref{thm:mode-stability-real-axis} and \ref{thm:mode-stability-upper-half} for $s\leq 0$]

Fix $M>0$, $|a|\leq M$ and $s\in\frac12\mathbb{Z}_{\leq 0}$. Fix an admissible frequency triple $(\omega,m,\lambda)$.

Let $R(r):=R_{m\lambda}^{[s],\,(a\omega)}(r)$ be an outgoing solution to the radial ODE \eqref{eq:radial-ODE}  (see Definition~\ref{def:outgoing-radial-solution}) with vanising inhomogeneity. Let $\tilde{u}$ be defined by \eqref{eq:extremal-transformation} if $|a|=M$ and by \eqref{eq:subextremal-transformation} if $|a|<M$. Then $\tilde{u}$ is smooth in its domain  and satisfies an ODE of the form 
\begin{align}
\tilde{u}''+\tilde{V}\tilde{u}=0\,, \label{eq:model-u-tilde}
\end{align}
with respect to $x^*\in(-\infty,\infty)$, where $\tilde{V}$ was explicitly computed in Propositions \ref{prop:ode-u-tilde} and \ref{prop:ode-u-tilde-sub}, respectively for $|a|=M$ and $|a|<M$. Define the $T$-current
\begin{align}
\tilde{Q}^T := \Im \lp(\tilde{u}'\overline{\omega \tilde{u}}\rp)\,, \quad -\lp(\tilde{Q}^T\rp)' := \Im \omega|\tilde{u}'|^2 -\Im(\overline{\omega}\tilde{V})|\tilde{u}|^2\,, \label{eq:T-current-tilde}
\end{align}
where the last equality is obtained using \eqref{eq:model-u-tilde}.

For $\omega$ in the upper half-plane and on the real axis, we will show that $\tilde{u}$ vanishes identically. By the injectivity of the transformation $R\mapsto\tilde{u}$ (statement 5 in Propositions \ref{prop:ode-u-tilde-sub} and \ref{prop:ode-u-tilde}), $R$ must vanish identically as well. This is in contradiction with our \cref{def:mode-solution}.

\medskip
\noindent \textit{The case $\Im\omega >0$ (Theorem~\ref{thm:mode-stability-upper-half})}. From the boundary conditions listed in statement 4 of Propositions \ref{prop:ode-u-tilde} and \ref{prop:ode-u-tilde-sub}, it follows that $\tilde{Q}^T(\pm \infty)=0$. An application of the fundamental theorem of calculus yields
\begin{align*}
0&=\int_{-\infty}^\infty \lp(-\tilde{Q}^T\rp)'dx^* \geq \Im\omega \int_{-\infty}^\infty \lp(|\tilde{u}'|^2+\tilde{V}_{\omega^2}|\omega|^2|\tilde{u}|^2 \rp)dx^*\,,
\end{align*}
where the inequality is obtained using properties (ii) and (iii) of $\tilde{V}$ in Remarks \ref{rmk:prop-potential} and  \ref{rmk:prop-potential-sub} and the assumption that $\Im(\omega\overline{\lambda})\geq 0$. Here, $\tilde{V}_{\omega^2}$  denotes the coefficient of $\omega^2$ in $\tilde{V}$; as, by property (i) of $\tilde{V}$ in Remarks \ref{rmk:prop-potential-sub} and \ref{rmk:prop-potential}, $\tilde{V}_{\omega^2}>0$ for $x^*\in(-\infty,\infty)$ and $\Im\omega>0$ by assumption, it follows that $\tilde{u}$ vanishes identically. From injectivity of the map $R\mapsto \tilde{u}$, we conclude that $R$ must also vanish identically, which contradicts the assumption that it is the radial component of a (nontrivial) mode solution.

\medskip
\noindent \textit{The case $\omega\in\mathbb{R}\backslash\{0\}$ and, if $|a|=M$, $\omega-m\upomega_+\neq 0$ (Theorem~\ref{thm:mode-stability-real-axis})}. In this case, $\tilde{V}$ is  real (see property (iv) of Remarks \ref{rmk:prop-potential} and \ref{rmk:prop-potential-sub}), so the $\tilde{Q}^T$ current defined in \eqref{eq:T-current-tilde} is conserved. 

In both the subextremal and extremal cases, an application of the fundamental theorem of calculus yields $0= \tilde{Q}^T(-\infty)-\tilde{Q}^T(+\infty)$. Thus, if $|a|<M$
\begin{align*}
0&= \tilde{Q}^T(-\infty)-\tilde{Q}^T(+\infty)\\
&=\frac12\lp( \frac{r_+}{r_+-r_-}|\tilde{u}'(-\infty)|^2+\frac{r_+-r_-}{r_+}\omega^2 |\tilde{u}(-\infty)|^2\rp)+\frac12\lp( |\tilde{u}'(+\infty)|^2+\omega^2 |\tilde{u}(+\infty)|^2\rp)\,,
\end{align*}
by the boundary conditions of $\tilde{u}$ obtained in statement 4 of Proposition \ref{prop:ode-u-tilde-sub}, and, for $|a|=M$, if additionally $\omega(\omega-m\upomega_+)<0$
\begin{align*}
0&= \tilde{Q}^T(-\infty)-\tilde{Q}^T(+\infty)= \frac12\lp(3|\tilde{u}'(-\infty)|^2+\frac{1}{3}\omega^2 |\tilde{u}(-\infty)|^2\rp)\,,\end{align*}
or if $\omega(\omega-m\upomega_+)>0$
\begin{align*}
0&= 2\tilde{Q}^T(-\infty)-2\tilde{Q}^T(+\infty)\\
&= \frac{1}{4|\omega|\sqrt{2M\omega(\omega-m\upomega_+)}}|(x^{1/4}\tilde{u}')(+\infty)|^2+4|\omega|\sqrt{2M\omega(\omega-m\upomega_+)}|(x^{-1/4}\tilde{u})(+\infty)|^2\\
&\qquad+3|\tilde{u}'(-\infty)|^2+\frac{1}{3}\omega^2 |\tilde{u}(-\infty)|^2\,,
\end{align*}
 by the boundary conditions of $\tilde{u}$ obtained in statement 4 of Proposition \ref{prop:ode-u-tilde}. Since $\tilde{u}$ and $\tilde{u}'$ are bounded functions (statement 3 of Propositions \ref{prop:ode-u-tilde} and \ref{prop:ode-u-tilde-sub}), we can apply \cref{lemma:unique-continuation} to conclude that $\tilde{u}$ vanishes identically. From injectivity of the map $R\mapsto \tilde{u}$, it follows that $R$ must vanish identically, which contradicts the assumption that it is the radial component of a (nontrivial) mode solution.
\end{proof}

\subsection{Mode stability for \texorpdfstring{$s>0$}{positive spin}}
\label{sec:proof-s-positive}

\begin{proof}[Proof of Theorems \ref{thm:mode-stability-real-axis} and \ref{thm:mode-stability-upper-half} for $s>0$]
Fix $M>0$, $|a|\leq M$ and $s\in\frac12 \mathbb{Z}_{> 0}$. Fix an admissible frequency triple $(\omega,m,\lambda)$. 

Let $R_{m\lambda}^{[s],\,(a\omega)}(r)$ be a nontrival outgoing solution to the radial ODE \eqref{eq:radial-ODE} with inhomogeneity $\hat{F}_{m\lambda}^{[s],\,(a\omega)}\equiv 0$ (see Definition~\ref{def:outgoing-radial-solution}). Then, we can generate a nonzero solution to the radial ODE of opposite sign spin via the Teukolsky--Starobinsky identities (\ref{eq:TS-radial}), according to \cref{lemma:TS-radial-boundary-conditions}. This contradicts the mode stability result for $s< 0$ obtained in \cref{sec:proof-s-negative}.
\end{proof}

\subsection{An alternative proof of mode stability \texorpdfstring{for $\Im\omega>0$}{in the upper half plane} for extremal Kerr}
\label{sec:continuity-argument}

Indeed, in this section, we will show that Theorem \ref{thm:mode-stability-upper-half} for $|a|<M$ implies Theorem~\ref{thm:mode-stability-upper-half} for $|a|=M$, circumventing the need, in this case, for the novel integral transformation we have previously introduced. We note that the the continuity argument we present makes precise the argument suggested at the end of \cite[Section VIII-B]{Casals2019}.

\begin{proof}[Alternative proof of Theorem~\ref{thm:mode-stability-upper-half} for $|a|=M$]
Fix $M>0$ and $s\in\frac12\mathbb{Z}$. Let $(\omega,m,\lambda)$ be an admissible frequency triple with $\Im\omega>0$ and $R:=R_{m\lambda}^{[s],\,(a\omega)}$ be a nontrivial solution to the radial ODE~\eqref{eq:radial-ODE} with the boundary condition at the horizon in Definition~\ref{def:outgoing-radial-solution}. 

By \eqref{eq:R-general-asymptotics}, for fixed $s$ and $m$, dropping superscripts,
\begin{align*}
R(r)= \underline{Z}(a,\omega,\lambda)R_{\mc{I}^+}(r)+Z(a,\omega,\lambda)R_{\mc{I}^-}(r)
\end{align*}
for some complex $Z$ and $\underline{Z}$. From the proofs given in \cite{Hartle1974}, which can be trivially extended to yield statements in the full range $|a|\leq M$, it follows that
\begin{theorem}[\cite{Hartle1974}] \label{thm:hartle-Z} Fix $M>0$, $s\in\frac12\mathbb{Z}$ and an admissible frequency parameter $m$. Then, for $(a,\omega,\lambda)\in[-M,M]\times\mathbb{C}\times\mathbb{C}$ such that
\begin{align*}
2M\lp(M^2+\sqrt{M^2-a^2}\rp)\omega\neq am -i|s-1|\sqrt{M^2-a^2}-it\,,\text{~for any~} t\in[0,\infty)\,,
\end{align*}
$Z$ is analytic in $\lambda$ (fixing $a$ and $\omega$), analytic in $\omega$ (fixing $a$ and $\lambda$) and continuous in $a$ (fixing $\lambda$ and $\omega$).
\end{theorem}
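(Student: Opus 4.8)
Looking at this, the final "statement" I need to prove is Theorem~\ref{thm:hartle-Z}, an analyticity/continuity result for the coefficient $Z$ (the "reflection amplitude" on past null infinity) attributed to Hartle--Wilkins \cite{Hartle1974}, extended to the full range $|a| \leq M$.

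\textbf{Proof proposal for Theorem~\ref{thm:hartle-Z}.}

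The plan is to realize $Z$ and $\underline{Z}$ as Wronskians of solutions to the radial ODE~\eqref{eq:radial-ODE} that are constructed so as to depend analytically (resp.\ continuously) on the parameters, and then invoke the analytic/continuous dependence of solutions of linear ODEs on parameters, being careful only at the two irregular singular points $r = r_+$ (for $|a|=M$) and $r=\infty$. First I would write $R_{\mc{I}^+}$ and $R_{\mc{I}^-}$ using the Cramer-type formulas: with $\mathfrak{W}_{\mc I} := \Delta^{1+s}[R_{\mc I^+} \tfrac{d}{dr} R_{\mc I^-} - R_{\mc I^-}\tfrac{d}{dr}R_{\mc I^+}]$ the (constant) Wronskian of the two solutions at infinity, and $R_{\mc H^+}$ the horizon-normalized solution of Definition~\ref{def:uhor-uout}, one has
\begin{align*}
Z(a,\omega,\lambda) = \frac{\Delta^{1+s}\lp[R_{\mc H^+}\tfrac{d}{dr}R_{\mc I^+} - R_{\mc I^+}\tfrac{d}{dr}R_{\mc H^+}\rp]}{\mathfrak{W}_{\mc I}(a,\omega,\lambda)}\,,
\end{align*}
up to the normalization constants fixed in Definition~\ref{def:uhor-uout}. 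So the task reduces to (a) showing $R_{\mc H^+}$ and $R_{\mc I^\pm}$ can be chosen jointly analytic in $(\omega,\lambda)$ and continuous in $a$, and (b) showing the denominator $\mathfrak{W}_{\mc I}$ is nonvanishing and has the same regularity, so the quotient inherits it.

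For step (a) at $r=\infty$: the asymptotic series in \eqref{eq:model-solutions-series} for $R_{\mc I^\pm}$ have coefficients that are rational functions of $(\omega,m,\lambda)$ and $a$, hence analytic/continuous where defined; by standard results on irregular singular points of rank $1$ (see \cite[Chapter 7]{Olver1973}), the actual solution differs from the truncated series by a remainder controlled uniformly on compact parameter sets via a Volterra integral equation, and this remainder depends analytically on parameters entering analytically and continuously on those entering continuously. At $r=r_+$ the argument bifurcates: for $|a|<M$ this is a regular singular point with indicial exponents $\pm\xi$ (mod $s$), and the Frobenius solution $R_{\mc H^+}$ is analytic in $(\omega,\lambda)$ provided the exponents do not collide badly — which is where the condition $2M(M^2+\sqrt{M^2-a^2})\omega \neq am - i|s-1|\sqrt{M^2-a^2} - it$ enters: it excludes the resonance set where $\xi$ (or a shifted version) produces logarithms, making the Frobenius solution fail to be analytic. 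For $|a|=M$ the point $r=r_+$ becomes irregular of rank $1$ (as emphasized in Section~\ref{sec:integral-transformation}), and $R_{\mc H^+}$ is again built from the exponential-times-series ansatz in \eqref{eq:model-solutions-series} with a Volterra remainder; continuity in $a$ as $a \uparrow M$ is exactly the merging of the two regular singular points $r_\pm$ into one irregular one, so one must verify that the Frobenius solution and its horizon normalization pass to the confluent limit continuously — this is the natural extension of Hartle--Wilkins' argument referred to in the statement, and it is the step I expect to be the main obstacle.

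For step (b), nonvanishing of $\mathfrak{W}_{\mc I}$ is automatic since $R_{\mc I^+}$ and $R_{\mc I^-}$ are linearly independent by construction (their leading asymptotics $e^{\pm i\omega r^*}$ differ); that $\mathfrak{W}_{\mc I}$ depends analytically/continuously on parameters follows because a Wronskian of two parameter-regular solutions is parameter-regular, and it can be computed in the limit $r\to\infty$ directly from the leading coefficients of \eqref{eq:model-solutions-series}. Putting (a) and (b) together, $Z$ is a ratio of a parameter-regular Wronskian and a nonvanishing parameter-regular function, hence analytic in $\lambda$ (with $a,\omega$ fixed), analytic in $\omega$ (with $a,\lambda$ fixed), and continuous in $a$ (with $\lambda,\omega$ fixed), on the parameter set specified. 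The only genuinely delicate point, beyond bookkeeping, is handling the confluence of $r_\pm$ at extremality uniformly — I would isolate it as a lemma comparing the subextremal Frobenius solution near $r_+$ with the extremal rank-$1$ solution, using a uniform Volterra estimate in a neighborhood of $a=M$ after rescaling $(r-r_+)$ appropriately, and this is precisely the "trivial extension of the arguments in \cite{Hartle1974}" that the theorem statement promises.
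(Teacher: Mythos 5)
A preliminary caveat on the comparison: the paper does not actually prove Theorem~\ref{thm:hartle-Z} --- it imports it from \cite{Hartle1974} with the one-line remark that the arguments there extend ``trivially'' to the full range $|a|\leq M$. Your proposal can therefore only be measured against what that reference does and against what the present paper uses the theorem for. Your skeleton --- write $Z$ as the Wronskian quotient $W[R_{\mc{H}^+},R_{\mc{I}^+}]/W[R_{\mc{I}^-},R_{\mc{I}^+}]$, construct the normalized solutions of Definition~\ref{def:uhor-uout} by Frobenius/Volterra schemes that are analytic in $(\omega,\lambda)$ and continuous in $a$, and note that the denominator is a nonvanishing parameter-regular constant read off from the leading asymptotics --- is the correct and standard one, and it is essentially the Hartle--Wilkins strategy. (One point you use implicitly and could make explicit: for $\Im\omega>0$ the dominant solution $R_{\mc{I}^-}$ is only determined modulo multiples of $R_{\mc{I}^+}$ by its asymptotic expansion, but the coefficient $Z$, unlike $\underline{Z}$, is insensitive to this ambiguity, so $Z$ is well defined before one discusses its regularity.)

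The two places where the proposal stops short are exactly the two places where the content of the theorem lives. First, the continuity in $a$ up to and including $a=M$ --- the confluence of the regular singular points $r_\pm$ into a single rank-one irregular point --- is the only part of the statement not already contained in \cite{Hartle1974}, and it is precisely what Section~\ref{sec:continuity-argument} and Corollary~\ref{cor:hartle} require. You correctly flag it as the main obstacle and propose a uniform Volterra estimate after rescaling $r-r_+$, but you do not carry it out; as written, the proposal establishes only the portion of the theorem that is already in the literature. Second, your identification of the excluded set with ``the resonance set where the Frobenius solution produces logarithms'' cannot be right as stated: logarithmic resonances of the indicial exponents at $r_+$ (where $2\xi-s\in\mathbb{Z}$, in the notation of \eqref{eq:xi-upomega+}) form a discrete set, whereas the set excluded in the theorem is parametrized by $t\in[0,\infty)$ and is a half-line in the $\omega$-plane for each fixed $a$, $m$, $s$ --- i.e., a branch cut. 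It marks where the integral-representation construction of the horizon-normalized solution in \cite{Hartle1974} degenerates, not where a Frobenius series acquires a logarithm; since the theorem is exactly a statement about where $Z$ \emph{is} regular, getting the mechanism of failure right is not optional. Neither objection says your route is wrong --- it is the right route --- but both steps must be supplied before this counts as a proof.
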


For $R$ to be the radial component of a mode solution (in particular, for it to be an outgoing solution of the radial ODE), $a$ and $\omega$ must be such that $Z$ vanishes. Suppose there is a zero of $Z_{a_0}$ for $|a_0|=M$ at $(\omega,\lambda)=(\omega_0,\lambda_0)$ with $\Im\omega_0>0$ and $\Im(\overline{\lambda_0}\omega_0)>\sigma$. In any compact set of parameters $(a,\omega)$ for which the conclusion of Theorem~\ref{thm:hartle-Z} holds, $Z$ is a uniformly continuous function of the two variables: thus, for any $\varepsilon>0$ fixed independently of $a$ and $\omega$,
\begin{align*}
\exists \delta>0\,\,\, \text{such that}\,\, |a-a_0|<\delta \text{~and~} \omega\in B_{|\omega_0|/2}(\omega_0) \text{~implies~} |Z_{a_0}(\omega)-Z_{a}(\omega)|< \varepsilon\,.
\end{align*}
Now, since $Z_{a_0}(\omega)$ is an analytic function of $\omega$, its zeros are isolated points. Thus, we can certainly find some $\sigma' < |\omega_0|/2$ such that $|Z_{a_0}|>\varepsilon>|Z_{a_0}-Z_{a}|$ for $\omega\in\p B_{\sigma'}(\omega_0)$. We further tighten the ball by ensuring that $(\omega,m,\lambda_0)$ is still an admissible frequency triple for any $\omega\in\p B_{\sigma'}(\omega_0)$:
\begin{align*}
\Im\lp(\overline{\lambda_0}\omega\rp)>\sigma-|\lambda_0|\sigma'>0\,.
\end{align*}
 By Rouché's theorem (see e.g.\ \cite{Fischer2012}), $Z_{a}$ will also have a zero in this ball. We conclude that, if there is a mode in the upper half plane for $|a_0|=M$, then there must exist a mode in the upper half plane for $|a|=M-\delta/2$, which contradicts the statement of Theorem \ref{thm:mode-stability-upper-half} in the case $|a|<M$ (proven in \cite{Whiting1989}).  
\end{proof}

\begin{remark} Note that in the previous proof, we have made use of the fact that the mode stability results we can prove hold for more general frequency parameters $\lambda$ than the angular eigenvalue arising from the separation constant. Hence, we have not shown that, for $\Im\omega\geq 0$, Theorem~\ref{thm:mode-stability-subextremal} implies Theorem~\ref{thm:mode-stability-full}. Such an implication would appear to be significantly more difficult to obtain, given that, for complex $\omega$, the angular eigenvalues, and hence $Z$, become multi-valued complex functions of $a$ and $\omega$.
\end{remark}

\section{Quantitative mode stability on the real axis and a scattering theory for the Teukolsky equation}
\label{sec:quantitative}

As in \cite{Shlapentokh-Rothman2015}, mode stability for the Teukolsky equation can be made into a quantitative statement, which is the goal of this section.  In Section~\ref{sec:quantitative-statement}, we state the main result, Theorem~\ref{thm:quantitative}, which we prove in Section~\ref{sec:quantitative-proof}, with the aid of some estimates, given in Section~\ref{sec:estimates-u-tilde}, on the ODEs for the previously introduced integral tranformations. We also provide an application in the context of scattering theory in Section~\ref{sec:scattering}. 

\subsection{Statement of the main theorem}
\label{sec:quantitative-statement}

Recall Definition~\ref{def:uhor-uout} and \eqref{eq:uout-uhor} for $\swei{u}{s}_{\mc{H}^+}$ and $\swei{u}{s}_{\mc{I}^+}$, which are uniquely defined solutions of the homogeneous radial ODE~\eqref{eq:u-Schrodinger}. 

\begin{definition} \label{def:Wronskian} Fix $M>0$, $|a|\leq M$ and $s\in\frac12\mathbb{Z}$. For an admissible frequency triple $(\omega,m,\lambda)$ with real $\omega$, we define $\swei{\mathfrak{W}}{s}(\omega,m,\lambda)$ to be the Wronskian of $\swei{u}{s}_{\mc{H}^+}$ and $\swei{u}{s}_{\mc{I}^+}$ 
\begin{equation} 
\swei{\mathfrak{W}}{s}(\omega,m,\lambda):=\lp(\swei{u}{s}_{\mc{I}^+}\rp)'\cdot\lp(\swei{u}{s}_{\mc{H}^+}\rp)- \lp(\swei{u}{s}_{\mc{H}^+}\rp)'\cdot \lp(\swei{u}{s}_{\mc{I}^+}\rp)\,.
\end{equation}
\end{definition}

\begin{remark} Note that $\swei{\mathfrak{W}}{s}$ in Definition~\ref{def:Wronskian} is independent of $r^*$ due to the fact that $\swei{u}{s}_{\mc{I}^+}$ and $\swei{u}{s}_{\mc{H}^+}$ satisfy the same second order ODE which does not involve first order derivatives in $r^*$. 
\end{remark}

There is  a non-trivial solution to the homogeneous radial ODE (\ref{eq:u-Schrodinger}) (i.e.\ $u_{\mc{H}^+}$ and $u_{\mc{I}^+}$ are linearly dependent) if and only if $\mathfrak{W}=0\Leftrightarrow \lp|\mathfrak{W}^{-1}\rp|=\infty$. The results shown in the previous section immediately imply
\begin{corollary}[of Theorem \ref{thm:mode-stability-real-axis}] \label{cor:nonexplicit-quantitative-mode-stability} Fix $M>0$, $|a|\leq M$ and $s\in\frac12\mathbb{Z}$. For any admissible frequency triple $(\omega,m,\lambda)$ with respect to $s$ and $a$,
$$\swei{\mathfrak{W}}{s}(\omega,m,\lambda)\neq 0\,.$$
Consequently, on any compact set of frequencies where Theorem \ref{thm:mode-stability-real-axis} holds, there is a real number $G>0$ such that
$$\lp|\swei{\mathfrak{W}}{s}\rp|^{-1}\leq G< \infty\,.$$
\end{corollary}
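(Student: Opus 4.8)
The plan is to deduce both assertions directly from Theorem~\ref{thm:mode-stability-real-axis}, combined with elementary properties of the Wronskian and the continuous dependence of the distinguished solutions of the radial ODE on the frequency parameters.

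First I would prove the pointwise statement $\swei{\mathfrak{W}}{s}(\omega,m,\lambda)\neq 0$. Recall that $\swei{u}{s}_{\hor}$ and $\swei{u}{s}_{\scri}$ are both nontrivial solutions of the homogeneous form of the second-order ODE~\eqref{eq:u-Schrodinger}, which contains no first-order term; hence $\swei{\mathfrak{W}}{s}$ is independent of $r^*$ and vanishes if and only if $\swei{u}{s}_{\hor}$ and $\swei{u}{s}_{\scri}$ are linearly dependent. Suppose, for contradiction, that $\swei{\mathfrak{W}}{s}(\omega,m,\lambda)=0$ for some admissible triple with $\omega$ real. Then $\swei{u}{s}_{\hor}=c\,\swei{u}{s}_{\scri}$ for some $c\in\mb{C}\backslash\{0\}$, so, undoing the rescaling $\swei{u}{s}=(r^2+a^2)^{1/2}\Delta^{s/2}\swei{R}{s}$ from \eqref{eq:uout-uhor}, the radial function $\swei{R}{s}_{\hor}$ satisfies the boundary condition at $r=r_+$ (respectively $r=M$ when $|a|=M$) prescribed in Definition~\ref{def:uhor-uout}, which is precisely the horizon boundary condition of Definition~\ref{def:outgoing-radial-solution}, and, being proportional to $\swei{R}{s}_{\scri}$, it also exhibits the behavior $\swei{R}{s}_{\hor}\sim e^{i\omega r}r^{2iM\omega-1-2s}$ as $r\to\infty$ (up to the constant $c$, which is permitted by that definition). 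Thus $\swei{R}{s}_{\hor}$ would be a nontrivial outgoing solution of the homogeneous radial ODE~\eqref{eq:radial-ODE}, contradicting Theorem~\ref{thm:mode-stability-real-axis}. Hence $\swei{\mathfrak{W}}{s}\neq 0$ for every admissible triple $(\omega,m,\lambda)$ with $\omega$ real.

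Next I would upgrade this to the uniform bound on a compact set. The key input is that, with $M$, $a$, $s$ fixed and $m$ ranging over the discrete admissible set, the function $(\omega,\lambda)\mapsto |\swei{\mathfrak{W}}{s}(\omega,m,\lambda)|$ is continuous (indeed real-analytic) on the open set of admissible real frequency parameters. This follows from the standard theory of linear ODEs depending on parameters: the coefficients of \eqref{eq:radial-ODE}, equivalently of \eqref{eq:u-Schrodinger}, are analytic in $(\omega,\lambda)$, while $\swei{u}{s}_{\hor}$ and $\swei{u}{s}_{\scri}$ are singled out (up to a unimodular factor, which does not affect $|\swei{\mathfrak{W}}{s}|$) by their prescribed asymptotics at the endpoints $r=r_+$, $r=M$ in the extremal case, and $r=\infty$. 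By the parameter-dependent asymptotic analysis near the regular singular point $r=r_+$ and the rank-one irregular singular points at $r=\infty$ and (for $|a|=M$) $r=M$ — as in \cite[Chapters 5 and 7]{Olver1973}, see also \cite[Appendix A]{Shlapentokh-Rothman2015} and \cite{Hartle1974}, and compare the quantitative estimates of Section~\ref{sec:estimates-u-tilde} — the values of $\swei{u}{s}_{\hor}$, $\swei{u}{s}_{\scri}$ and their $r^*$-derivatives at any fixed interior point, say $r=3M$, depend continuously on $(\omega,\lambda)$; since $\swei{\mathfrak{W}}{s}$ is built from these quantities, it is continuous. Finally, any compact set $\mc{A}$ of frequency parameters meets only finitely many values of $m$, and on each of the corresponding compact slices in $(\omega,\lambda)$ the continuous, nowhere-vanishing function $|\swei{\mathfrak{W}}{s}|$ attains a positive minimum by the extreme value theorem; taking $G$ to be the reciprocal of the least of these finitely many minima gives $|\swei{\mathfrak{W}}{s}|^{-1}\leq G<\infty$ on $\mc{A}$.

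The main obstacle is the continuity claim for $\swei{\mathfrak{W}}{s}$ — more precisely, the locally uniform control in $(\omega,\lambda)$ of the coefficients and error terms in the asymptotic expansions defining $\swei{u}{s}_{\hor}$ and $\swei{u}{s}_{\scri}$ near the singular endpoints. Everything else, namely the Wronskian argument and the compactness/extreme-value step, is routine. However, this parameter-dependent asymptotic analysis is exactly what is invoked elsewhere in the paper (and refined in Section~\ref{sec:estimates-u-tilde}), so no new ideas are needed — only care in tracking the dependence on $\omega$ and $\lambda$ through the constructions of \cite{Olver1973,Shlapentokh-Rothman2015,Hartle1974}.
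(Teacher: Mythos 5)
Your proposal is correct and follows essentially the same route as the paper, which treats the corollary as immediate: the vanishing of $\swei{\mathfrak{W}}{s}$ is equivalent to $\swei{u}{s}_{\mc{H}^+}$ and $\swei{u}{s}_{\mc{I}^+}$ being linearly dependent, i.e.\ to the existence of a nontrivial outgoing solution, which Theorem~\ref{thm:mode-stability-real-axis} forbids, and the uniform bound then follows from continuity plus compactness. The only difference is one of emphasis: you spell out the continuity of $(\omega,\lambda)\mapsto\swei{\mathfrak{W}}{s}$ via parameter-dependent asymptotics at the singular endpoints, whereas the paper leaves this implicit and later supplies it through Theorem~\ref{thm:hartle-Z} (noting $\swei{\mathfrak{W}}{s}=2i\omega Z$).
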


Quantitative mode stability consists of showing that, in a bounded frequency regime, the constant $G$ can be explicitly determined in terms of the compact set of frequencies, the spin $s$ characterizing the Teukolsky equation, and the black hole parameters. For subextremal Kerr black hole spacetimes ($|a|< M$), such a bound was attained in \cite{Shlapentokh-Rothman2015}, with a constant that degenerated as $|a|\to M$. A careful manipulation of the integral transformations introduced in Section \ref{sec:integral-transformation} will enable us to upgrade Theorem~\ref{thm:mode-stability-real-axis} to a quantitative statement which is uniform in the specific angular momentum of a Kerr black hole:

\begin{theorem} Fix $M>0$ and $s\in\frac12\mathbb{Z}$. Let $\mc{A}$  be a set of frequency parameters $(\omega,m,\lambda)$ admissible with respect to $s$ such that $\omega$ is real and
\begin{align*}
C_\mc{A}&:=\sup_{(a,\omega,m,l)\in[-M,M]\times\mc{A}}\lp(|\omega|+|\omega|^{-1}+\lp|\omega-\frac{a m}{2M^2}\rp|^{-1}\delta_{|a|,M}+|m|+\lp|\lambda\rp|\rp)<\infty\,,
\end{align*}
where $\delta_{|a|,M}$ is the Kronecker delta. Then
\begin{align}
\sup_{(\omega,\,m,\,\lambda)\in\mc{A}}\lp|\swei{\mathfrak{W}}{s}\rp|^{-1}\leq G(C_\mc{A},M,|s|)<\infty\,, \label{eq:Wronskian-bounded}
\end{align}
where $G$ will be given in an explicitly computable manner by \eqref{eq:Wronskian-explicit-bound}, respectively.
\label{thm:quantitative}
\end{theorem}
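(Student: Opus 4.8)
The plan is to trace the bound $|\swei{\mathfrak{W}}{s}|^{-1}$ through the chain of transformations, exploiting that the Wronskian is transformation-invariant up to explicit factors. First I would reduce to the case $s\leq 0$: by the Teukolsky--Starobinsky identities (Proposition~\ref{prop:TS-radial}, in the form \eqref{eq:TS-radial-2}), $\swei{u}{s}_{\mc{H}^+}$ and $\swei{u}{s}_{\mc{I}^+}$ for $s>0$ are obtained by applying the explicit differential operators $\Delta^s(\mc{D}_0^\pm)^{2s}$ to the $-s$ counterparts, and the relevant constants $\mathfrak{C}_s^{(1)}=(2i\omega)^{2s}$, $\mathfrak{C}_s^{(2)}$ (products of $2\xi+s-j$ or powers of $2\beta$) are explicit in the frequency parameters; so $\swei{\mathfrak{W}}{s}$ for $s>0$ differs from $\swei{\mathfrak{W}}{-s}$ by a computable factor bounded below on $\mc{A}$ by a function of $C_{\mc A}$ and $M$. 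Then I would fix $s\leq 0$ and relate $\swei{\mathfrak{W}}{s}$ to the Wronskian of the transformed solutions $\tilde{u}_{\mc H^+}$, $\tilde{u}_{\mc I^+}$ obtained by applying the integral transformation of Proposition~\ref{prop:ode-u-tilde} (for $|a|=M$) or Proposition~\ref{prop:ode-u-tilde-sub} (for $|a|<M$) to $\swei{R}{s}_{\mc H^+}$ and $\swei{R}{s}_{\mc I^+}$ separately. Since $\tilde u$ satisfies a Schr\"odinger-type ODE \eqref{eq:ode-u-tilde} with no first-order term, the transformed Wronskian $\tilde{\mathfrak{W}}:=\tilde u_{\mc I^+}'\tilde u_{\mc H^+}-\tilde u_{\mc H^+}'\tilde u_{\mc I^+}$ is constant in $x^*$, and by evaluating at $x^*\to\pm\infty$ using the precise boundary asymptotics in statement~4 of those propositions one gets $|\tilde{\mathfrak{W}}|$ expressed in terms of the $L^2$-normalized boundary data of $u_{\mc H^+}$ and $u_{\mc I^+}$ — which, by Definition~\ref{def:uhor-uout}, is normalized to unit modulus. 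Concretely, $\swei{\mathfrak W}{s}$ and $\tilde{\mathfrak W}$ are connected by explicit Gamma-function and power-of-frequency factors appearing in the amplitude relations like $|\tilde u(+\infty)|^2=\tfrac{\pi}{4M|\omega|}|2M^3(\omega-m\upomega_+)/\omega|^{1/2-2s}|(\Delta^{s/2}u)(-\infty)|^2$.

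The core estimate, which is where the real work lies, is a lower bound on $|\tilde{\mathfrak W}|$ itself — equivalently, showing that $\tilde u_{\mc H^+}$ and $\tilde u_{\mc I^+}$ are quantitatively linearly independent. Here I would run a variant of the energy argument from Section~\ref{sec:proof-s-negative}: build the $T$-current $\tilde Q^T[\tilde u]=\Im(\tilde u'\overline{\omega\tilde u})$ which is conserved since $\tilde V$ is real for real $\omega$, and for a general solution $\tilde u=c_{\mc H^+}\tilde u_{\mc H^+}+c_{\mc I^+}\tilde u_{\mc I^+}$ evaluate the conservation law at both ends. Using properties (i)--(iv) of $\tilde V$ in Remarks~\ref{rmk:prop-potential} and \ref{rmk:prop-potential-sub} (positive $\omega^2$-coefficient, non-positive $\lambda$-coefficient, non-negative frequency-independent part), one derives an \emph{a priori} energy estimate $\int(|\tilde u'|^2+\tilde V_{\omega^2}|\omega|^2|\tilde u|^2)\,dx^*\lesssim$ (boundary flux), which, combined with a quantitative unique-continuation estimate in the style of Lemma~\ref{lemma:unique-continuation} made effective (tracking the constant $C=C(\omega_0,\|\hat V\zeta^{-1}\|_\infty)$ explicitly in terms of $C_{\mc A}$ and $M$), yields $|c_{\mc H^+}|^2+|c_{\mc I^+}|^2\lesssim$ (transformed boundary data), hence $|\tilde{\mathfrak W}|^{-1}\leq$ an explicit function of $C_{\mc A}$, $M$, $|s|$. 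I would treat the three regimes of $\tilde V$'s far asymptotics separately — subextremal (where $\tilde V\to\omega^2$), extremal non-superradiant $\omega(\omega-m\upomega_+)>0$ (oscillatory $x^{-1/4}$-type asymptotics), and extremal superradiant $\omega(\omega-m\upomega_+)<0$ (exponentially growing/decaying $e^{\pm 4\sqrt{-2M\omega(\omega-m\upomega_+)}x^{1/2}}$ asymptotics) — since in each the identification of $\tilde{\mathfrak W}$ in terms of normalized data picks up a different explicit prefactor, and in the superradiant case one must be careful that the growing and decaying modes pair to give a finite, nonzero Wronskian.

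The main obstacle is making every constant in this chain uniform in $a\in[-M,M]$, in particular across the transition $|a|\to M$. Whiting's subextremal transformation (Proposition~\ref{prop:ode-u-tilde-sub}) degenerates as $|a|\to M$ — the regular singularity at $r_+$ merges into an irregular one — so the subextremal amplitude relations (with their $\Gamma(2\xi-s+1)$ factors and $(r_+-r_-)$-powers) blow up, while the genuinely extremal transformation (Proposition~\ref{prop:ode-u-tilde}) gives bounds that are uniform down to $|a|=M$ but only apply there. The resolution is to prove the quantitative bound separately: for $|a|$ bounded away from $M$ via a careful bookkeeping of \cite{Shlapentokh-Rothman2015}'s argument (tracking the $\delta_{|a|,M}$-free part of $C_{\mc A}$), and for $|a|=M$ via the new extremal transformation, then observe — as in the continuity argument of Section~\ref{sec:continuity-argument} and via the continuity of the Wronskian $\swei{\mathfrak W}{s}(a,\omega,m,\lambda)$ in all parameters on $[-M,M]$ (extending \cite{Hartle1974}) — that $|\swei{\mathfrak W}{s}|^{-1}$ is a continuous function on the compact set $[-M,M]\times\mc A$, hence bounded. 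Assembling the explicit pieces then gives the formula \eqref{eq:Wronskian-explicit-bound}; the delicate point throughout is that the $|\omega-m\upomega_+|^{-1}$ factor (present in $C_{\mc A}$ only in the extremal case via $\delta_{|a|,M}$) appears in the extremal amplitude relations to the power $1/2-2s$, so one must verify the exponents work out so that the $|\omega-m\upomega_+|^{-1}$-dependence of the final bound is finite on any admissible $\mc A$ with $\omega\neq m\upomega_+$.
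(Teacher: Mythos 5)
Your proposal diverges from the paper's proof in a way that leaves a genuine gap at its core. The paper's argument does \emph{not} transform $\swei{u}{s}_{\mc{H}^+}$ and $\swei{u}{s}_{\mc{I}^+}$ separately and compare Wronskians. Instead it uses the Green's function representation (Lemma~\ref{lemma:Greens-function}): one chooses an explicit smooth inhomogeneity $\swei{H}{-s}$ compactly supported in $[R_1,R_2]$, writes the resulting outgoing inhomogeneous solution via the Green's function, and reads off the identity \eqref{eq:Wronskian-identity-Green}, which expresses $\lp(\swei{\mathfrak{W}}{\mp s}\rp)^{-2}$ as the ratio of the horizon amplitude $|(\Delta^{-s/2}u^{[-s]})(-\infty)|^2$ to an explicitly computable integral of $u^{[-s]}_{\mc{I}^+}H^{[-s]}$. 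The quantitative content then comes entirely from Proposition~\ref{proposition:estimates-u-tilde}, which bounds the transformed boundary amplitude by the transformed inhomogeneity $\tilde{H}$, combined with the amplitude relations 4(b) of Propositions~\ref{prop:ode-u-tilde} and \ref{prop:ode-u-tilde-sub}. The paper explicitly flags why your route is blocked: the representations of $\swei{u}{s}_{\mc{H}^+}$ and $\swei{u}{s}_{\mc{I}^+}$ are merely asymptotic at one end each, so Definition~\ref{def:Wronskian} cannot be exploited directly. Concretely, the integral transformations are constructed and analyzed only for \emph{outgoing} solutions: the convergence of \eqref{eq:extremal-transformation} at $r=\infty$ via the integration by parts of Lemma~\ref{lemma:g-tilde-IBP}, the stationary-phase asymptotics of Lemma~\ref{lemma:g-tilde-asymptotics-superrad}, and the amplitude relations all use the expansions \eqref{eq:g-bdry} at both ends. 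Applied to $\swei{u}{s}_{\mc{H}^+}$, which generically carries an $\swei{R}{s}_{\mc{I}^-}$ component at infinity, none of that analysis applies, and your claim that $\swei{\mathfrak{W}}{s}$ and $\tilde{\mathfrak{W}}$ differ by "explicit Gamma-function and power-of-frequency factors" is unsubstantiated — establishing it would require redoing the entire asymptotic analysis for non-outgoing data, which is precisely what the Green's function device avoids.

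Two further problems. First, your "core estimate" — an energy identity plus an effective unique-continuation bound applied to a general superposition $c_{\mc{H}^+}\tilde u_{\mc{H}^+}+c_{\mc{I}^+}\tilde u_{\mc{I}^+}$ — does not by itself produce a lower bound on a Wronskian: conservation of $\tilde{Q}^T$ relates boundary fluxes but says nothing about quantitative linear independence of the two solutions in the interior; the standard way to convert such estimates into a Wronskian lower bound is exactly via the resolvent, i.e.\ the inhomogeneous problem you omit. Second, your closing step — continuity of $\swei{\mathfrak{W}}{s}$ in $(a,\omega,\lambda)$ on the compact set $[-M,M]\times\mc{A}$, hence boundedness of the inverse — yields only the non-explicit statement already contained in Corollary~\ref{cor:nonexplicit-quantitative-mode-stability}. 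The entire point of Theorem~\ref{thm:quantitative}, as the introduction stresses, is that $G$ is \emph{explicitly computable} from $C_{\mc{A}}$, $M$ and $|s|$; a compactness argument cannot deliver that, and splitting $[-M,M]$ into "$|a|$ bounded away from $M$" and "$|a|=M$" with a soft gluing step reintroduces an uncontrolled constant. (Your reduction of $s>0$ to $s\le 0$ via the Teukolsky--Starobinsky identities is in the right spirit, though the paper implements it at the level of the inhomogeneity and of the identity \eqref{eq:Wronskian-identity-Green} rather than as a prefactor relating the two Wronskians.)
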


Together with Theorem \ref{thm:hartle-Z}, as $\swei{\mathfrak{W}}{s}=2i\omega Z$, we can infer that, away from any of the problematic frequencies already identified, the Wronskian is continuous in the extremal limit:
\begin{corollary}[of Theorem \ref{thm:hartle-Z} and \ref{thm:quantitative}]\label{cor:hartle} Fix $M>0$, $|a|\leq M$ and $s\in\frac12\mathbb{Z}$. Then, for each real $m$ and $\lambda$ admissible with respect to $s$, the Wronskian $\swei{\mathfrak{W}}{s}(\omega,m,\lambda)$ is continuous in $\omega$ for every $\omega\in\mathbb{R}$ admissible with respect to $a$.
\end{corollary}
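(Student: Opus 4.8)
The goal is to prove Corollary~\ref{cor:hartle}: that the Wronskian $\swei{\mathfrak{W}}{s}(\omega,m,\lambda)$ is continuous in $\omega$ for every real $\omega$ admissible with respect to $a$, for fixed real $m$ and $\lambda$ admissible with respect to $s$, and fixed $|a| \leq M$.

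The plan is to combine the analyticity/continuity statement of Theorem~\ref{thm:hartle-Z} with the quantitative (and hence non-vanishing) bound of Theorem~\ref{thm:quantitative}. The key algebraic relation is $\swei{\mathfrak{W}}{s} = 2i\omega Z$, already noted in the paragraph preceding the corollary. Here $Z = Z(a,\omega,\lambda)$ is the connection coefficient appearing in $R = \underline{Z} R_{\mc{I}^+} + Z R_{\mc{I}^-}$ for the solution $R$ with the horizon boundary condition, as in the proof of Theorem~\ref{thm:mode-stability-upper-half} for $|a|=M$ in Section~\ref{sec:continuity-argument}; one must first record that $Z$ (rather than just its zero set) is tied to $\swei{\mathfrak{W}}{s}$ by this proportionality, which follows by computing the Wronskian of $\swei{u}{s}_{\mc{H}^+}$ and $\swei{u}{s}_{\mc{I}^+}$ using the asymptotic normalizations in Definition~\ref{def:uhor-uout} and Lemma~\ref{lemma:R-general-asymptotics}.

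The steps are then as follows. First, fix $|a| \leq M$, fix real admissible $m$ and $\lambda$, and fix $\omega_0 \in \mathbb{R}$ admissible with respect to $a$ (so $\omega_0 \neq 0$ and, if $|a|=M$, $\omega_0 \neq m\upomega_+$). Second, observe that the condition in the hypothesis of Theorem~\ref{thm:hartle-Z},
\begin{align*}
2M\lp(M^2+\sqrt{M^2-a^2}\rp)\omega \neq am - i|s-1|\sqrt{M^2-a^2} - it\,, \quad t \in [0,\infty)\,,
\end{align*}
holds for all real $\omega$ when $|a|<M$ (the left side is real, the right side has strictly negative imaginary part unless $s=1$ and $t=0$, in which case one checks the remaining real equality fails for $\omega$ away from a single value — this needs a small case check), and when $|a|=M$ it holds away from $\omega = am/(2M^3) = m\upomega_+$, which is excluded by admissibility with respect to $a$. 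Hence by Theorem~\ref{thm:hartle-Z}, $Z(a,\omega,\lambda)$ is analytic — in particular continuous — in $\omega$ on a neighbourhood of $\omega_0$ in $\mathbb{R}$. Third, multiply by the entire function $2i\omega$: continuity of $\swei{\mathfrak{W}}{s}(\omega,m,\lambda) = 2i\omega Z(a,\omega,\lambda)$ in $\omega$ near $\omega_0$ is immediate. Since $\omega_0$ was an arbitrary admissible point, this gives continuity on the whole admissible set. (Theorem~\ref{thm:quantitative} is only needed in the sense that it, together with Theorem~\ref{thm:mode-stability-real-axis}, guarantees $\swei{\mathfrak{W}}{s} \neq 0$, so that the continuous function never degenerates; strictly, for the bare continuity statement of the corollary one only needs Theorem~\ref{thm:hartle-Z}, but citing both is consistent with the surrounding exposition.)

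The main obstacle is the bookkeeping around the exceptional locus of Theorem~\ref{thm:hartle-Z}: one must verify carefully that for fixed $|a|\leq M$ every real admissible $\omega$ avoids the excluded set, paying particular attention to the borderline spin value $|s-1|=0$, i.e.\ $s=1$, where the imaginary obstruction disappears and one is left with a single forbidden real value of $\omega$; one should check this forbidden value is either not real or not admissible, or argue continuity separately there by a limiting argument. A secondary, more cosmetic point is to make explicit the identity $\swei{\mathfrak{W}}{s} = 2i\omega Z$ by matching the normalizations of $\swei{u}{s}_{\mc{H}^+}$, $\swei{u}{s}_{\mc{I}^+}$ against the asymptotic solutions used to define $Z$; this is routine but should be stated so the proof is self-contained. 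No new analytic input is required beyond what Theorems~\ref{thm:hartle-Z}, \ref{thm:quantitative} and \ref{thm:mode-stability-real-axis} already provide.
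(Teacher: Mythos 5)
Your proposal is correct and is essentially the paper's own argument: the paper derives the corollary in one line from the identity $\swei{\mathfrak{W}}{s}=2i\omega Z$ together with the analyticity/continuity of $Z$ from Theorem~\ref{thm:hartle-Z}, exactly as you do, with Theorem~\ref{thm:quantitative} playing only the auxiliary role you identify. Your extra care about the exceptional locus of Theorem~\ref{thm:hartle-Z} at $s=1$ (where the excluded value $\omega=m\upomega_+$ is real and, for $|a|<M$ with $am\neq 0$, admissible) is a point the paper glosses over, so flagging that a separate limiting argument is needed there is a genuine improvement rather than a deviation.
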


\subsection{Application to scattering theory}
\label{sec:scattering}

A simple corollary from Theorem \ref{thm:quantitative} concerns the transmission and reflection coefficients for the radial ODE. We begin with a lemma to define these objects:
\begin{lemma}[Transmission and reflection coefficients] \label{def:t-r-coeffs} For admissible frequency triples $(\omega,m,\lambda)$ with $\omega\in\mathbb{R}\backslash\{0, m\upomega_+\}$ for which $\swei{\mathfrak{W}}{s}(\omega,m,\lambda)\neq 0$, there exists a unique set of complex numbers $\mathfrak{T}^{[s]}(\omega,m,\lambda)$, $\mathfrak{\tilde{T}}^{[s]}(\omega,m,\lambda)$, $\mathfrak{R}^{[s]}(\omega,m,\lambda)$ and $\mathfrak{\tilde{R}}^{[s]}(\omega,m,\lambda)$ such that, if $s\geq 0$
\begin{equation} \label{eq:def-t-r-1}
\begin{split}
\frac{\swei{\mathfrak{T}}{s}}{-i(\omega-m\upomega_+)}\swei{u}{s}_{\mc{H}^+}&=\frac{\swei{\mathfrak{R}}{s}}{i\omega}\swei{u}{s}_{\mc{I}^+} +\frac{1}{i\omega}\swei{u}{s}_{\mc{I}^-}\,, \\
\frac{\swei{\mathfrak{\tilde{T}}}{s}}{i\omega}\swei{u}{s}_{\mc{I}^+}&=\frac{\swei{\mathfrak{\tilde{R}}}{s}}{-i(\omega-m\upomega_+)}\swei{u}{s}_{\mc{H}^+} +\frac{1}{-i(\omega-m\upomega_+)}\swei{u}{s}_{\mc{H}^-}\,,
\end{split}
\end{equation}
and, if $s<0$,
\begin{equation} \label{eq:def-t-r-1-+}
\begin{split}
\frac{\swei{\tilde{\mathfrak{T}}}{s}}{-i(\omega-m\upomega_+)}\swei{u}{s}_{\mc{H}^+}&=\frac{\swei{\tilde{\mathfrak{R}}}{s}}{i\omega}\swei{u}{s}_{\mc{I}^+} +\frac{1}{i\omega}\swei{u}{s}_{\mc{I}^-}\,, \\
\frac{\swei{\mathfrak{{T}}}{s}}{i\omega}\swei{u}{s}_{\mc{I}^+}&=\frac{\swei{\mathfrak{{R}}}{s}}{-i(\omega-m\upomega_+)}\swei{u}{s}_{\mc{H}^+} +\frac{1}{-i(\omega-m\upomega_+)}\swei{u}{s}_{\mc{H}^-}\,.
\end{split}
\end{equation}
We say that $\mathfrak{{R}}^{[s]}$ and $\mathfrak{\tilde{R}}^{[s]}$ are {\normalfont reflection coefficients} and that $\mathfrak{{T}}^{[s]}$ and $\mathfrak{\tilde{T}}^{[s]}$ are {\normalfont transmission coefficients}.
\end{lemma}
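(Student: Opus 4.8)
The plan is to read the statement as a change-of-basis computation for the two-dimensional solution space of the homogeneous radial ODE~\eqref{eq:u-Schrodinger} (with $H\equiv 0$). Since $\omega\in\mathbb{R}\backslash\{0,m\upomega_+\}$ and $(\omega,m,\lambda)$ is admissible with respect to $a$, all four of $\swei{u}{s}_{\mc{H}^+}$, $\swei{u}{s}_{\mc{H}^-}$, $\swei{u}{s}_{\mc{I}^+}$, $\swei{u}{s}_{\mc{I}^-}$ from Definition~\ref{def:uhor-uout} and \eqref{eq:uout-uhor} are well-defined classical solutions of the same second-order ODE on $r^*\in(-\infty,\infty)$, whose solution space is two-dimensional. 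Crucially, \eqref{eq:u-Schrodinger} has no first-order term, so the Wronskian $W(f,g):=f'g-g'f$ of any two solutions is constant in $r^*$, and $\swei{\mathfrak{W}}{s}=W(\swei{u}{s}_{\mc{I}^+},\swei{u}{s}_{\mc{H}^+})$ by Definition~\ref{def:Wronskian}.

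First I would record that, under the hypothesis $\swei{\mathfrak{W}}{s}\neq 0$, the pair $\{\swei{u}{s}_{\mc{H}^+},\swei{u}{s}_{\mc{I}^+}\}$ is linearly independent and hence a basis of the solution space. Therefore any other solution $v$ of the homogeneous radial ODE admits a unique representation $v=c_1\swei{u}{s}_{\mc{I}^+}+c_2\swei{u}{s}_{\mc{H}^+}$, the coordinates being given by Cramer's rule, $c_1=W(v,\swei{u}{s}_{\mc{H}^+})/\swei{\mathfrak{W}}{s}$ and $c_2=-W(v,\swei{u}{s}_{\mc{I}^+})/\swei{\mathfrak{W}}{s}$. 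This is the step that encodes the assumption on the Wronskian and that furnishes both existence and uniqueness of the expansion coefficients.

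Then I would apply this with $v=\swei{u}{s}_{\mc{I}^-}$ and with $v=\swei{u}{s}_{\mc{H}^-}$. For $s\geq 0$: expanding $\swei{u}{s}_{\mc{I}^-}$ in the basis and comparing with the first line of \eqref{eq:def-t-r-1}, one reads off $\swei{\mathfrak{R}}{s}=i\omega\,c_1$ and $\swei{\mathfrak{T}}{s}=-i(\omega-m\upomega_+)\,c_2$; expanding $\swei{u}{s}_{\mc{H}^-}$ and comparing with the second line of \eqref{eq:def-t-r-1} yields $\swei{\mathfrak{\tilde{T}}}{s}$ and $\swei{\mathfrak{\tilde{R}}}{s}$ in the same way. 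The divisions by $i\omega$ and by $-i(\omega-m\upomega_+)$ appearing in \eqref{eq:def-t-r-1} are legitimate precisely because $\omega\neq 0$ and $\omega\neq m\upomega_+$, which is exactly why those two frequencies are excluded from the statement. The case $s<0$ in \eqref{eq:def-t-r-1-+} is identical after the obvious relabeling of the four coefficients, since the argument never used the sign of $s$.

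I do not expect a serious obstacle: the content is linear algebra in a two-dimensional space together with the standard fact that $\{\swei{u}{s}_{\mc{H}^+},\swei{u}{s}_{\mc{I}^+}\}$ is a basis if and only if $\swei{\mathfrak{W}}{s}\neq 0$. The only points that each merit a line of care are (i) observing the absence of a first-order term in \eqref{eq:u-Schrodinger}, so that $W(f,g)$ and hence $\swei{\mathfrak{W}}{s}$ is independent of $r^*$ and the Cramer formulas make sense globally, and (ii) checking that the normalizing prefactors in Definition~\ref{def:uhor-uout} are nonvanishing, so that $\swei{u}{s}_{\mc{I}^-}$ and $\swei{u}{s}_{\mc{H}^-}$ are genuine nontrivial solutions to be expanded. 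One may also remark that these expansion formulas give explicit Wronskian expressions for the four coefficients, which will be convenient in the subsequent discussion of scattering and of Corollary~\ref{cor:scattering-intro}.
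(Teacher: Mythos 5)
Your proposal is correct and follows essentially the same route as the paper: the paper likewise solves the defining relations for the four coefficients via Wronskians (its formulas in \eqref{eq:def-t-r-2} are exactly your Cramer's-rule expressions, simplified using $W[\swei{u}{s}_{\mc{I}^-},\swei{u}{s}_{\mc{I}^+}]=2i\omega$ and $W[\swei{u}{s}_{\mc{H}^-},\swei{u}{s}_{\mc{H}^+}]=-2i(\omega-m\upomega_+)$ computed from the asymptotic expansions), with nonvanishing of $\swei{\mathfrak{W}}{s}$ supplying existence and uniqueness and the absence of a first-order term guaranteeing $r^*$-independence. No gaps.
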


\begin{proof}
If $\swei{\mathfrak{W}}{s}\neq 0$, we can solve these equations for the transmission and reflection coefficients using the definition of $\swei{\mathfrak{W}}{s}$. We obtain
\begin{equation}\label{eq:def-t-r-2}
\begin{split}
\swei{\mathfrak{T}}{s}(\omega,m,\lambda)&=-\frac{\omega-m\upomega_+}{\omega}\frac{1}{\swei{\mathfrak{W}}{s}}\,W\lp[\swei{u}{s}_{\mc{I}^-},\swei{u}{s}_{\mc{I}^+}\rp]=-\frac{2i(\omega-m\upomega_+)}{\swei{\mathfrak{W}}{s}}\,,\\
\swei{\tilde{\mathfrak{T}}}{s}(\omega,m,\lambda)&=\frac{\omega}{\omega-m\upomega_+}\frac{1}{\swei{\mathfrak{W}}{s}}\,W\lp[\swei{u}{s}_{\mc{H}^-},\swei{u}{s}_{\mc{H}^+}\rp]=-\frac{2i\omega}{\swei{\mathfrak{W}}{s}}\,,\\
\swei{\mathfrak{R}}{s}(\omega,m,\lambda)&=\frac{1}{\swei{\mathfrak{W}}{s}}\,W\lp[\swei{u}{+s}_{\mc{I}^-},\swei{u}{s}_{\mc{H}^+}\rp]=\frac{1}{\swei{\mathfrak{W}}{s}}\,W\lp[\overline{\swei{u}{-s}_{\mc{I}^+}},\swei{u}{s}_{\mc{H}^+}\rp]\,,\\
\swei{\tilde{\mathfrak{R}}}{s}(\omega,m,\lambda)&=-\frac{1}{\swei{\mathfrak{W}}{s}}\,W\lp[\swei{u}{s}_{\mc{H}^-},\swei{u}{s}_{\mc{I}^+}\rp]=-\frac{1}{\swei{\mathfrak{W}}{s}}\,W\lp[\overline{\swei{u}{-s}_{\mc{H}^+}},\swei{u}{s}_{\mc{I}^+}\rp]\,,
\end{split}
\end{equation}
where $W[y_1,y_2]:=y_1y_2'-y_2y_1'$ denotes a Wronskian, for $s\geq 0$. For $s<0$, the same hold interchanging the tilded with non-tilded coefficients. We have further simplified these expressions by computing the Wronskians $W$ in the definitions of the transmission coefficients directly from the asymptotic expansions of $\swei{u}{s}_{\mc{I}^+}$ and $\swei{u}{s}_{\mc{H}^+}$ and Definition~\ref{def:uhor-uout}:
\begin{align*}
W\lp[u^{[s]}_{\mc{I}^-},\swei{u}{s}_{\mc{I}^+}\rp]&= 2i\omega\,, \\
W\lp[u^{[s]}_{\mc{H}^-},\swei{u}{s}_{\mc{H}^+}\rp]&= -2i(\omega-m\upomega_+)\,.
\end{align*}

By the considerations in Section \ref{sec:superradiance-s}, $W\lp[\overline{\swei{u}{-s}_{\mc{I}^+}},\swei{u}{s}_{\mc{H}^+}\rp]$ and $W\lp[\overline{\swei{u}{-s}_{\mc{H}^+}},\swei{u}{s}_{\mc{I}^+}\rp]$ are conserved in $r^*$. Hence, the transmission and reflection coefficients are independent of $r^*$.
\end{proof}

We recall that, by Theorem~\ref{thm:mode-stability-real-axis}, there is a lower bound on $\mathfrak{W}$ in a compact range of frequencies where the theorem applies, hence from \eqref{eq:def-t-r-2} one can infer immediately that there is an upper bound for the transmission and reflection coefficients in such a frequency range. The goal of this section is to make the latter bound \textit{explicit} in the frequency range and the black hole parameters:

\begin{corollary}[of Theorem~\ref{thm:quantitative}]\label{cor:bddness-scattering} Fix $M>0$ and $s\in \lp\{0,\pm \frac12, \pm 1, \pm \frac32, \pm 2\rp\}$. Let $\mc{B}\subset\mathcal{A}$ and $\tilde{\mc{B}}\subset\mathcal{A}$ be sets of frequency parameters $(\omega,m,\lambda)$ which are admissible with respect to $s$ such that $\omega$ is real, $\mathfrak{C}_{s}(\omega,m,\lambda)>0$ and
\begin{align*}
C_\mc{B}&:=\sup_{(a,\omega,m,l)\in[-M,M]\times\mathcal{B}}\lp(|\omega|+|\omega|^{-1}+|\omega-m\upomega_+|^{-1}+|m|+|\lambda|+|\mathfrak{C}_{s}(\omega,m,\lambda)|^{-1}\rp)<\infty\,,\\
C_{\tilde{\mc{B}}}&:=\sup_{(a,\omega,m,l)\in[-M,M]\times\tilde{\mathcal{B}}}\lp(|\omega|+|\omega|^{-1}+|\omega-m\upomega_+|^{-1}+|m|+|\lambda|\rp)<\infty\,,
\end{align*}
where $\mathfrak{C}_s(\omega,m,\lambda)$ is the radial Teukolsky--Starobinsky constant defined by Proposition~\ref{prop:eigenfunctions}. Then, 
\begin{align*}
\lp|\swei{\mathfrak{{T}}}{s}\rp|^2+\lp|\swei{\mathfrak{R}}{s}\rp|^2\leq G(C_\mc{B},M,|s|)<\infty\,,\\
\lp|\swei{\tilde{\mathfrak{T}}}{s}\rp|^2+\lp|\swei{\tilde{\mathfrak{R}}}{s}\rp|^2\leq G(C_{\tilde{\mc{B}}},M,|s|)<\infty\,,
\end{align*}
where $G$ can be given explicitly. For $|a|=0$ or $s$ half-integer, $G\equiv 1$.
\end{corollary}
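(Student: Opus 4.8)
## Proof proposal for Corollary~\ref{cor:bddness-scattering}

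The plan is to reduce the bound on the scattering coefficients to the quantitative mode stability bound of Theorem~\ref{thm:quantitative}, using the explicit formulas \eqref{eq:def-t-r-2} together with conservation laws for the relevant Wronskians. First I would observe that, by \eqref{eq:def-t-r-2}, the transmission coefficients are simply $\swei{\mathfrak{T}}{s}=-2i(\omega-m\upomega_+)/\swei{\mathfrak{W}}{s}$ and $\swei{\tilde{\mathfrak{T}}}{s}=-2i\omega/\swei{\mathfrak{W}}{s}$ (and likewise with tildes swapped for $s<0$), so that the hypothesis $C_\mc{B}<\infty$, which controls $|\omega|$, $|\omega-m\upomega_+|$ and, crucially, $\lvert\swei{\mathfrak{W}}{s}\rvert^{-1}$ via Theorem~\ref{thm:quantitative}, immediately yields $\lvert\swei{\mathfrak{T}}{s}\rvert^2\leq G(C_\mc{B},M,|s|)$ and analogously for $\swei{\tilde{\mathfrak{T}}}{s}$. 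The only subtlety is that Theorem~\ref{thm:quantitative} is stated for the set $\mc{A}$ with its own constant $C_\mc{A}$; since $\mc{B}\subset\mc{A}$ with $C_\mc{B}\geq C_\mc{A}|_{\mc B}$, the bound \eqref{eq:Wronskian-bounded} applies verbatim on $\mc{B}$.

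For the reflection coefficients I would use a $T$-energy-type identity. The point is that $\swei{\mathfrak{R}}{s}$ and $\swei{\tilde{\mathfrak{R}}}{s}$ cannot be bounded purely from \eqref{eq:def-t-r-2} without control of $\swei{u}{s}_{\mc{I}^+}$ and $\swei{u}{s}_{\mc{H}^+}$ in the interior, so instead I would combine the $T$-current conservation law of Proposition~\ref{prop:wronskian-identity} with the definitions \eqref{eq:def-t-r-1}--\eqref{eq:def-t-r-1-+}. Concretely, applying Proposition~\ref{prop:wronskian-identity} to the solution $\swei{u}{s}$ obtained by expressing $\swei{u}{s}_{\mc{I}^-}$ (or, near the horizon, $\swei{u}{s}_{\mc{H}^-}$) through \eqref{eq:def-t-r-1} gives, after dividing by the appropriate powers of $\omega$ and $\omega-m\upomega_+$, a conservation law of the schematic form
\begin{align*}
\lvert\swei{\mathfrak{T}}{s}\rvert^2\,\frac{\omega-m\upomega_+}{\omega}\,\mathfrak{C}_s^{-1}\cdot(\text{known powers})
= 1 - \lvert\swei{\mathfrak{R}}{s}\rvert^2\,(\text{known powers})\,,
\end{align*}
where the ``known powers'' are the explicit frequency-dependent factors appearing in \eqref{eq:energy-identity-extremal}--\eqref{eq:energy-identity-sub-half} and in Remark~\ref{rmk:TS-constant-sign}. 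Since all these factors are controlled by $C_\mc{B}$ (using $\mathfrak{C}_s>0$ and $\lvert\mathfrak{C}_s\rvert^{-1}$ bounded for the non-tilded coefficients, and no such factor for the tilded ones — which explains why $C_{\tilde{\mc{B}}}$ need not control $\mathfrak{C}_s^{-1}$), the identity expresses $\lvert\swei{\mathfrak{R}}{s}\rvert^2$ as $1$ minus a nonnegative multiple of $\lvert\swei{\mathfrak{T}}{s}\rvert^2$ plus explicitly bounded error terms, from which $\lvert\swei{\mathfrak{R}}{s}\rvert^2\leq G(C_\mc{B},M,|s|)$ follows. The restriction $s\in\{0,\pm\frac12,\pm1,\pm\frac32,\pm2\}$ enters here because Proposition~\ref{prop:wronskian-identity} and the sign information \eqref{eq:TS-constant-sign} on $\mathfrak{C}_s$ are only established for $|s|\leq 2$.

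For the final assertion $G\equiv 1$ when $a=0$ or $s$ is a half-integer, I would invoke the \emph{exact} energy identity rather than its quantitative weakening. When $a=0$ we have $\upomega_+=0$, so $\omega=m\upomega_+$ is excluded only at $\omega=0$, and the superradiance factor $\omega(\omega-m\upomega_+)=\omega^2>0$; when $s$ is a half-integer, $\mathfrak{C}_s=(-1)^{2s}\mathfrak{B}_s\geq 0$ by Remark~\ref{rmk:TS-constant-sign} and again there is no superradiant regime in the relevant identity \eqref{eq:energy-identity-sub-half}. In either case the conservation law of Proposition~\ref{prop:wronskian-identity}, specialized to the outgoing-type combination defining the scattering matrix, reduces after cancellation of the common explicit prefactors to exactly $\lvert\swei{\mathfrak{T}}{s}\rvert^2+\lvert\swei{\mathfrak{R}}{s}\rvert^2=1$ (the unitarity of the scattering matrix in the absence of superradiance), which is the claimed $G\equiv 1$.

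The main obstacle I anticipate is bookkeeping the many explicit powers of $\omega$, $\omega-m\upomega_+$, $\xi$, $\beta$ and $\mathfrak{C}_s$ in \eqref{eq:energy-identity-extremal}--\eqref{eq:energy-identity-sub-half} when translating Proposition~\ref{prop:wronskian-identity} into a statement about $\swei{\mathfrak{T}}{s}$ and $\swei{\mathfrak{R}}{s}$ via \eqref{eq:def-t-r-1}--\eqref{eq:def-t-r-2}, and verifying in each of the three cases ($|a|=M$; $|a|<M$, $s$ integer; $|a|<M$, $s$ half-integer) that every surviving coefficient is bounded above and below in terms of $C_\mc{B}$ (resp.\ $C_{\tilde{\mc{B}}}$) alone. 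The sign of $\mathfrak{C}_s$ — positive on $\mc{B}$ by hypothesis, and on $\tilde{\mc{B}}$ not needed for the tilded coefficients — is what makes the nonnegative-combination structure work; keeping track of which identity (spin $+s$ versus spin $-s$) to use for which of $\mathfrak{T},\tilde{\mathfrak{T}},\mathfrak{R},\tilde{\mathfrak{R}}$, matching the $s\geq0$ versus $s<0$ conventions in Lemma~\ref{def:t-r-coeffs}, is the place where an error is most likely to creep in.
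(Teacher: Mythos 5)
Your proposal is correct and follows essentially the same route as the paper: transmission coefficients are bounded directly from \eqref{eq:def-t-r-2} together with Theorem~\ref{thm:quantitative}, reflection coefficients are then controlled via the conservation identities of Proposition~\ref{prop:wronskian-identity} specialized to the scattering normalization, which collapse to $\lp|\swei{\mathfrak{R}}{s}\rp|^2+\frac{\omega-m\upomega_+}{\omega}\lp|\swei{\mathfrak{T}}{s}\rp|^2=1$ (and its tilded analogue), and the unitarity relation $\lp|\swei{\mathfrak{R}}{s}\rp|^2+\lp|\swei{\mathfrak{T}}{s}\rp|^2=1$ gives $G\equiv 1$ when $a=0$ or $s$ is a half-integer. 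The only slight imprecision is attributing the spin restriction to Proposition~\ref{prop:wronskian-identity} (which holds for all $s$) rather than to Proposition~\ref{prop:eigenfunctions} and the sign information of Remark~\ref{rmk:TS-constant-sign}, but this does not affect the argument.
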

\begin{remark} In general, $\mc{B}\subsetneq\tilde{\mc{B}}$. However, when $|s|=2$, if we take $\lambda$ to be a separation constant, $\lambda=\bm\uplambda_{ml}^{[s],\,(a\omega)}$ for some admissible $l$ (see Proposition~\ref{def:angular-ode}), by \eqref{eq:TS-constant-sign} in Remark~\ref{rmk:TS-constant-sign}, we must have $|\mathfrak{C}_{2}|>0$ if $\omega\neq 0$, so $\mc{B}=\tilde{\mc{B}}$. We recall that, in the case $|s|=2$, the Teukolsky equation describes the dynamics of the extremal curvature components under a linearization of the Einstein equation around the Kerr black hole solution, making this spin especially meaningful in the study of the Kerr black hole spacetimes.
\end{remark}

\begin{figure}[htbp]
    \centering
    \begin{subfigure}[t]{0.49\textwidth}
        \centering
 \includegraphics[scale=1]{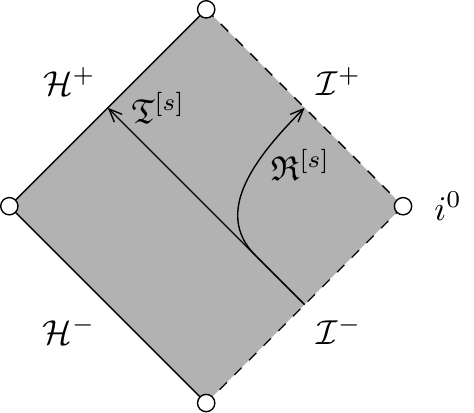}
        \caption{}
    \end{subfigure}%
    ~ 
        \begin{subfigure}[t]{0.49\textwidth}
        \centering
 \includegraphics[scale=1]{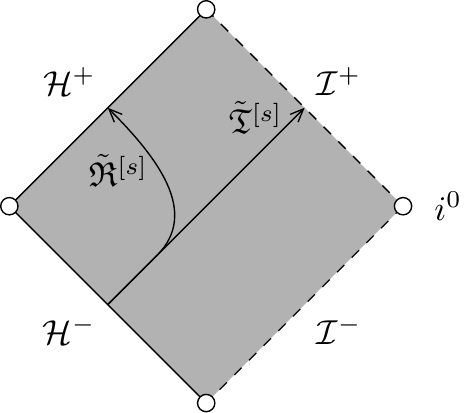}
        \caption{}
    \end{subfigure}%
    \caption{These Penrose diagrams for the manifold with boundary $\mc{R}\cup\mc{H}^-$ (see Section~\ref{sec:kerr-spacetime}) are meant to illustrate the meaning of the reflection and transmission coefficients for $s\geq 0$. In (a) we consider a forward scattering problem where data of energy 1 is placed at $\mc{I}^{-}$ and the reflection and transmission coefficients measure the fraction of energy that is scattered to $\mc{I}^+$ and $\mc{H}^+$, respectively. In (b), we consider a different scattering problem where data of energy 1 is placed at $\mc{H}^{-}$ and the reflection and transmission coefficients measure the fraction of energy that is scattered to $\mc{H}^+$ and $\mc{I}^+$, respectively. In the case $s<0$, the arrows in (a) represent the tilded quantities and those in (b) represent the non-tilded quantities.\\    The difference in our notation can be thought of reflecting the complementary character that one expects from suitable data posed for the opposing signs of the Teukolsky equation: consistency with the the Teukolsky--Starobinsky identities means that, at each past boundary only data for one sign of $s$ needs to be specified. We refer the reader to the upcoming \cite{Masaood} for further insight on this point, specifically in the case $|s|=2$ and $a=0$.}
    \label{fig:scattering-r-t}
\end{figure}

\begin{proof}[Proof of Corollary~\ref{cor:bddness-scattering}]
We begin by using the energy identities in Propositions \ref{prop:T-identity} and  \ref{prop:wronskian-identity} to relate the transmission and reflection coefficients. For the non-tilded map (Figure~\ref{fig:scattering-r-t}a), set
\begin{itemize}
\item if $|a|=M$ and $s$ is an integer or if $s=0$,
\begin{gather*}
\swei{a}{ s}_{\mc{H}^{-}}=0\,,\quad \lp|\swei{a}{ s}_{\mc{I}^{-}}\rp|=1\,, \quad \lp|\swei{a}{ s}_{\mc{I}^{+}}\rp|^2=\frac{\mathfrak C_s}{(2\omega)^{4s}}\lp|\swei{\mathfrak{R}}{ s}\rp|^2\,, \\
\quad\lp|\swei{a}{s}_{\mc{H}^{+}}\rp|^2=\lp[\frac{\omega}{2M^2(\omega-m\upomega_+)}\rp]^{2s}\lp|\swei{\mathfrak{T}}{s}\rp|^2\,,\\
\lp|\swei{a}{- s}_{\mc{H}^{-}}\rp|=1\,, \quad\swei{a}{- s}_{\mc{I}^{-}}=0\,, \quad\lp|\swei{a}{-s}_{\mc{I}^{+}}\rp|^2=\lp(\frac{\omega-m\upomega_+}{\omega}\rp)^2\lp[\frac{2M^2(\omega-m\upomega_+)}{\omega}\rp]^{2s}\lp|\swei{{\mathfrak{T}}}{-s}\rp|^2\,, \\
\lp|\swei{a}{- s}_{\mc{H}^{+}}\rp|^2=\frac{\mathfrak C_s}{[4M^2(\omega-m\upomega_+)]^{4s}}\lp|\swei{\mathfrak{R}}{- s}\rp|^2\,;
\end{gather*}
\item if $|a|=M$ and $s$ is a half-integer,
\begin{gather*}
\swei{a}{ s}_{\mc{H}^{-}}=0\,,\quad \lp|\swei{a}{ s}_{\mc{I}^{-}}\rp|=1\,, \quad \lp|\swei{a}{ s}_{\mc{I}^{+}}\rp|^2=\frac{\mathfrak C_s}{(2\omega)^{4s}}\lp|\swei{\mathfrak{R}}{ s}\rp|^2\,, \\
\quad\lp|\swei{a}{s}_{\mc{H}^{+}}\rp|^2=2M^2\lp[\frac{\omega}{2M^2(\omega-m\upomega_+)}\rp]^{2s+1}\lp|\swei{\mathfrak{T}}{s}\rp|^2\,,\\
\lp|\swei{a}{- s}_{\mc{H}^{-}}\rp|=1\,, \quad\swei{a}{- s}_{\mc{I}^{-}}=0\,, \quad\lp|\swei{a}{-s}_{\mc{I}^{+}}\rp|^2=2M^2\lp[\frac{\omega}{2M^2(\omega-m\upomega_+)}\rp]^{2s+1}\lp|\swei{{\mathfrak{T}}}{-s}\rp|^2\,,\quad \\
\lp|\swei{a}{- s}_{\mc{H}^{+}}\rp|^2=\frac{\mathfrak C_s}{[4M^2(\omega-m\upomega_+)]^{4s}}\lp|\swei{\mathfrak{R}}{- s}\rp|^2\,;
\end{gather*}
\item if $|a|<M$ and $s$ is an integer, setting products denoted by $\prod$ to be the identity whenever $s=1$,
\begin{gather*}
\swei{a}{ s}_{\mc{H}^{-}}=0\,,\quad \lp|\swei{a}{ s}_{\mc{I}^{-}}\rp|=1\,, \quad \lp|\swei{a}{ s}_{\mc{I}^{+}}\rp|^2=\frac{\mathfrak C_s}{(2\omega)^{4s}}\lp|\swei{\mathfrak{R}}{ s}\rp|^2\,, \\
\quad\lp|\swei{a}{s}_{\mc{H}^{+}}\rp|^2=\frac{4|\xi|^2\prod_{j=1}^{s-1}\lp[4|\xi|^2+(s-j)^2\rp]}{(2\omega)^{2s}}\lp|\swei{\mathfrak{T}}{s}\rp|^2\,,\\
\lp|\swei{a}{- s}_{\mc{H}^{-}}\rp|=1\,, \quad\swei{a}{- s}_{\mc{I}^{-}}=0\,, \quad\lp|\swei{a}{-s}_{\mc{I}^{+}}\rp|^2=\frac{(2\omega)^{2s-2}}{\prod_{j=1}^{s-1}\lp[4|\xi|^2+(s-j)^2\rp]}\lp(\frac{r_+-r_-}{2Mr_+}\rp)^2\lp|\swei{{\mathfrak{T}}}{-s}\rp|^2\,,\quad \\
\lp|\swei{a}{- s}_{\mc{H}^{+}}\rp|^2=\frac{\mathfrak C_s}{4|\xi|^2(4|\xi|^2+s^2)\prod_{j=1}^{s-1}\lp[4|\xi|^2+(s-j)^2\rp]^2}\lp|\swei{\mathfrak{R}}{- s}\rp|^2\,;
\end{gather*}
\item if $|a|<M$ and $s$ is a half-integer, setting products denoted by $\prod$ to be the identity whenever $s=1/2$,
\begin{gather*}
\swei{a}{ s}_{\mc{H}^{-}}=0\,,\quad \lp|\swei{a}{ s}_{\mc{I}^{-}}\rp|=1\,, \quad \lp|\swei{a}{ s}_{\mc{I}^{+}}\rp|^2=\frac{\mathfrak C_s}{(2\omega)^{4s}}\lp|\swei{\mathfrak{R}}{ s}\rp|^2\,, \\
\quad\lp|\swei{a}{s}_{\mc{H}^{+}}\rp|^2=\frac{\prod_{j=1}^{\lfloor s\rfloor}\lp[4|\xi|^2+(s-j)^2\rp]}{(2\omega)^{2s-1}}\frac{2Mr_+}{r_+-r_-}\lp|\swei{\mathfrak{T}}{s}\rp|^2\,,\\
\lp|\swei{a}{- s}_{\mc{H}^{-}}\rp|=1\,, \quad\swei{a}{- s}_{\mc{I}^{-}}=0\,, \quad\lp|\swei{a}{-s}_{\mc{I}^{+}}\rp|^2=\frac{(2\omega)^{2s-1}}{\prod_{j=1}^{\lfloor s\rfloor}\lp[4|\xi|^2+(s-j)^2\rp]}\frac{r_+-r_-}{2Mr_+}\lp|\swei{{\mathfrak{T}}}{-s}\rp|^2\,,\quad \\
\lp|\swei{a}{- s}_{\mc{H}^{+}}\rp|^2=\frac{\mathfrak C_s}{(4|\xi|^2+s^2)\prod_{j=1}^{\lfloor s \rfloor}\lp[4|\xi|^2+(s-j)^2\rp]^2}\lp|\swei{\mathfrak{R}}{- s}\rp|^2\,.
\end{gather*}
\end{itemize}
For the tilded map (Figure~\ref{fig:scattering-r-t}b), set
\begin{itemize}
\item if $|a|=M$ and $s$ is an integer or if $s=0$,
\begin{gather*}
\swei{a}{ s}_{\mc{H}^{-}}=0\,,\quad \lp|\swei{a}{ s}_{\mc{I}^{-}}\rp|=1\,, \quad \lp|\swei{a}{ s}_{\mc{I}^{+}}\rp|^2=\frac{\mathfrak C_s}{[4M^2(\omega-m\upomega_+)]^{4s}}\lp|\swei{\tilde{\mathfrak{R}}}{ s}\rp|^2\,, \\
\quad\lp|\swei{a}{s}_{\mc{H}^{+}}\rp|^2=\lp[\frac{2M^2(\omega-m\upomega_+)}{\omega}\rp]^{2s}\lp|\swei{\tilde{\mathfrak{T}}}{s}\rp|^2\,,\\
\lp|\swei{a}{- s}_{\mc{H}^{-}}\rp|=1\,, \quad\swei{a}{- s}_{\mc{I}^{-}}=0\,, \quad\lp|\swei{a}{-s}_{\mc{I}^{+}}\rp|^2=\lp[\frac{\omega}{2M^2(\omega-m\upomega_+)}\rp]^{2s}\lp|\swei{\tilde{\mathfrak{T}}}{-s}\rp|^2\,,\quad \\
\lp|\swei{a}{- s}_{\mc{H}^{+}}\rp|^2=\frac{\mathfrak C_s}{(2\omega)^{4s}}\lp|\swei{\tilde{\mathfrak{R}}}{- s}\rp|^2\,;
\end{gather*}
\item if $|a|=M$ and $s$ is a half-integer,
\begin{gather*}
\swei{a}{ s}_{\mc{H}^{-}}=0\,,\quad \lp|\swei{a}{ s}_{\mc{I}^{-}}\rp|=1\,, \quad \lp|\swei{a}{ s}_{\mc{I}^{+}}\rp|^2=\frac{\mathfrak C_s}{[4M^2(\omega-m\upomega_+)]^{4s}}\lp|\swei{\tilde{\mathfrak{R}}}{ s}\rp|^2\,, \\
\quad\lp|\swei{a}{s}_{\mc{H}^{+}}\rp|^2=\frac{1}{2M^2}\lp[\frac{2M^2(\omega-m\upomega_+)}{\omega}\rp]^{2s+1}\lp|\swei{\tilde{\mathfrak{T}}}{s}\rp|^2\,,\\
\lp|\swei{a}{- s}_{\mc{H}^{-}}\rp|=1\,, \quad\swei{a}{- s}_{\mc{I}^{-}}=0\,, \quad\lp|\swei{a}{-s}_{\mc{I}^{+}}\rp|^2=\frac{1}{2M^2}\lp[\frac{\omega}{2M^2(\omega-m\upomega_+)}\rp]^{2s+1}\lp|\swei{\tilde{\mathfrak{T}}}{-s}\rp|^2\,,\quad \\
\lp|\swei{a}{- s}_{\mc{H}^{+}}\rp|^2=\frac{\mathfrak C_s}{(2\omega)^{4s}}\lp|\swei{\tilde{\mathfrak{R}}}{- s}\rp|^2\,;
\end{gather*}
\item if $|a|<M$ and $s$ is an integer, setting products denoted by $\prod$ to be the identity whenever $s=1$,
\begin{gather*}
\swei{a}{ s}_{\mc{H}^{-}}=0\,,\quad \lp|\swei{a}{ s}_{\mc{I}^{-}}\rp|=1\,, \quad \lp|\swei{a}{ s}_{\mc{I}^{+}}\rp|^2=\frac{\mathfrak C_s}{4|\xi|^2(4|\xi|^2+s^2)\prod_{j=1}^{s-1}\lp[4|\xi|^2+(s-j)^2\rp]^2}\lp|\swei{\tilde{\mathfrak{R}}}{ s}\rp|^2\,, \\
\quad\lp|\swei{a}{s}_{\mc{H}^{+}}\rp|^2=\frac{(4|\xi|^2+s^2)\prod_{j=1}^{s-1}\lp[4|\xi|^2+(s-j)^2\rp]}{(2\omega)^{2s}}\lp(\frac{r_+-r_-}{2Mr_+}\rp)^2\lp|\swei{\tilde{\mathfrak{T}}}{s}\rp|^2\,,\\
\lp|\swei{a}{- s}_{\mc{H}^{-}}\rp|=1\,, \quad\swei{a}{- s}_{\mc{I}^{-}}=0\,, \quad\lp|\swei{a}{-s}_{\mc{I}^{+}}\rp|^2=\frac{4|\xi|^2(2\omega)^{2s}}{(4|\xi|^2+s^2)\prod_{j=1}^{s-1}\lp[4|\xi|^2+(s-j)^2\rp]}\lp|\swei{\tilde{\mathfrak{T}}}{-s}\rp|^2\,,\quad \\
\lp|\swei{a}{- s}_{\mc{H}^{+}}\rp|^2=\frac{\mathfrak C_s}{(2\omega)^{4s}}\lp|\swei{\tilde{\mathfrak{R}}}{- s}\rp|^2\,;
\end{gather*}
\item if $|a|<M$ and $s$ is a half-integer, setting products denoted by $\prod$ to be the identity whenever $s=1/2$,
\begin{gather*}
\swei{a}{ s}_{\mc{H}^{-}}=0\,,\quad \lp|\swei{a}{ s}_{\mc{I}^{-}}\rp|=1\,, \quad \lp|\swei{a}{ s}_{\mc{I}^{+}}\rp|^2=\frac{\mathfrak C_s}{(4|\xi|^2+s^2)\prod_{j=1}^{\lfloor s \rfloor}\lp[4|\xi|^2+(s-j)^2\rp]^2}\lp|\swei{\tilde{\mathfrak{R}}}{ s}\rp|^2\,, \\
\quad\lp|\swei{a}{s}_{\mc{H}^{+}}\rp|^2=\frac{(4|\xi|^2+s^2)\prod_{j=1}^{\lfloor s\rfloor}\lp[4|\xi|^2+(s-j)^2\rp]}{(2\omega)^{2s+1}}\frac{r_+-r_-}{2Mr_+}\lp|\swei{\tilde{\mathfrak{T}}}{s}\rp|^2\,,\\
\lp|\swei{a}{- s}_{\mc{H}^{-}}\rp|=1\,, \quad\swei{a}{- s}_{\mc{I}^{-}}=0\,, \quad\lp|\swei{a}{- s}_{\mc{H}^{+}}\rp|^2=\frac{\mathfrak C_s}{(2\omega)^{4s}}\lp|\swei{\tilde{\mathfrak{R}}}{- s}\rp|^2\,,\quad \\
\lp|\swei{a}{-s}_{\mc{I}^{+}}\rp|^2=\frac{(2\omega)^{2s+1}}{(4|\xi|^2+s^2)\prod_{j=1}^{\lfloor s\rfloor}\lp[4|\xi|^2+(s-j)^2\rp]}\frac{r_+-r_-}{2Mr_+}\lp|\swei{\tilde{\mathfrak{T}}}{-s}\rp|^2\,.
\end{gather*}
\end{itemize}
We deduce that, if $s$ is an integer,
\begin{gather}  \label{eq:identity-t-r-withsuperrad}
\lp|\swei{\mathfrak{R}}{s}\rp|^2+\frac{\omega-m\upomega_+}{\omega} \lp|\swei{\mathfrak{T}}{s}\rp|^2=1\,,\qquad\qquad
\lp|\swei{\tilde{\mathfrak{R}}}{s}\rp|^2+\frac{\omega}{\omega-m\upomega_+} \lp|\swei{\tilde{\mathfrak{T}}}{s}\rp|^2=1\,,
\end{gather}
and if $s$ is half-integer or if $a=0$, 
\begin{gather} \label{eq:identity-t-r-nosuperrad}
\lp|\swei{\mathfrak{R}}{s}\rp|^2+ \lp|\swei{\mathfrak{T}}{s}\rp|^2=1\,,\qquad\qquad
\lp|\swei{\tilde{\mathfrak{R}}}{s}\rp|^2+ \lp|\swei{\tilde{\mathfrak{T}}}{s}\rp|^2=1\,.
\end{gather}

From \eqref{eq:identity-t-r-nosuperrad}, one can immediately obtain the bound $G\equiv 1$ in the case $a=0$ or $s=1/2$. In the remaining cases, whenever the superradiant condition $\omega(\omega-m\upomega_+)< 0$ is met, the reflection coefficient can have absolute value greater than 1; however, noting that boundedness of the transmission coefficients follows directly from the computation \eqref{eq:def-t-r-2} and its analogue for $s<0$, together with Theorem \ref{thm:quantitative}, we can conclude using \eqref{eq:identity-t-r-withsuperrad}.
\end{proof}

\subsection{Estimates on the integral transformations}
\label{sec:estimates-u-tilde}

In this section, we present an estimate for the transformed equation, which is one of the main ingredients in the proof of Theorem \ref{thm:quantitative}.
\begin{proposition} \label{proposition:estimates-u-tilde}  Consider $|a|\leq M$ and let $\tilde{u}$ be as defined in \cref{prop:ode-u-tilde} for $|a|=M$ and as in \cref{prop:ode-u-tilde-sub} for $|a|<M$. Further define
\begin{equation*}
\zeta_1(x):=\begin{cases}
(x-r_+)(x-r_-)(x^2+a^2)^{-1} &\text{~if~} |a|<M\\
(x-M)(x-2M)(x^2+2M^2)^{-1}&\text{~if~} |a|=M
\end{cases}\,,
\end{equation*}
functions $f$ and $g$ by
\begin{align*}
f(x)&=\begin{dcases} 
1 &\text{if $|a|<M$}\\
x^{-1/4} &\text{if $a=M$ and $\omega(\omega-m\upomega_+)>0$}\\
x^{-1/4} \exp\lp(4\sqrt{-2M\omega(\omega-m\upomega_+)}x^{1/2}\rp)&\text{if $a=M$ and $\omega(\omega-m\upomega_+)\leq 0$}
\end{dcases}\,,\\
g(x)&=\begin{dcases} 
x^2 &\text{if $|a|<M$}\\
x^2 &\text{if $a=M$ and $\omega(\omega-m\upomega_+)>0$}\\
x^{5/2}\exp\lp(8\sqrt{-2M\omega(\omega-m\upomega_+)}x^{1/2}\rp)&\text{if $a=M$ and $\omega(\omega-m\upomega_+)\leq 0$}
\end{dcases}\,,
\end{align*}
and a coordinate $x^*$ such that $dx/dx^*=\zeta_1$ and $x^*(3M)=0$.

For any $(a,\omega,m,\lambda)\in [-M,M]\times\mathcal{A}$, we have
\begin{align}
\lp|\lp(f(x)\tilde{u}\rp)(+\infty)\rp|^2\leq G({C_\mc{A},M,|s|})\int_{-\infty}^\infty g(x) \lp|\tilde{H}\rp|^2 \zeta_1 dx^*\,, \label{eq:estimate-u-tilde}
\end{align} 
where the constant $G({C_\mc{A},M,|s|})$ is explicitly computable.
\end{proposition}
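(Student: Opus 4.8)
The plan is to derive the estimate \eqref{eq:estimate-u-tilde} from the ODE \eqref{eq:ode-u-tilde} (or \eqref{eq:ode-u-tilde-sub} in the subextremal case) by running the same $\tilde{Q}^T$-current argument as in the proof of Theorems \ref{thm:mode-stability-real-axis} and \ref{thm:mode-stability-upper-half}, but now keeping careful track of the inhomogeneity $\tilde{H}$ and of all the constants that appear. First I would recall the current $\tilde{Q}^T = \Im(\tilde{u}'\overline{\omega\tilde{u}})$ from \eqref{eq:T-current-tilde}, and compute $(\tilde{Q}^T)'$ using the \emph{inhomogeneous} equation \eqref{eq:ode-u-tilde}: since $\tilde V$ is real for real $\omega$ (property (iv) of Remarks \ref{rmk:prop-potential} and \ref{rmk:prop-potential-sub}), the only contribution is $(\tilde Q^T)' = -\Im(\overline{\omega\tilde u}\,\zeta_1\tilde H)$. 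Integrating in $x^*$ and using the boundary conditions of statement 4 of the two propositions, the left endpoint $x^*\to-\infty$ gives a nonnegative multiple of $|(\Delta^{s/2}u)(-\infty)|^2$, while the right endpoint $x^*\to+\infty$ produces exactly the quantity $|(f(x)\tilde u)(+\infty)|^2$ up to an explicit positive factor; here the role of $f$ is precisely to renormalise $\tilde u$ against its (possibly exponentially growing, in the superradiant extremal case) asymptotic behaviour so that the boundary term is finite and computable. This yields an identity of the schematic form $c_1 |(f\tilde u)(+\infty)|^2 + c_2|(\Delta^{s/2}u)(-\infty)|^2 = \int_{-\infty}^\infty \Im(\overline{\omega\tilde u}\,\zeta_1\tilde H)\,dx^*$ with $c_1,c_2>0$ explicit in the frequency parameters.

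The second step is to absorb the right-hand side. The crude bound $|\Im(\overline{\omega\tilde u}\,\zeta_1\tilde H)| \le \tfrac12 \varepsilon^{-1}|\omega|\,|\tilde u|^2 g^{-1}\zeta_1 + \tfrac12\varepsilon|\omega|\,g|\tilde H|^2\zeta_1$ for a parameter $\varepsilon$ separates a term controlled by $\int g|\tilde H|^2\zeta_1\,dx^*$ — which is the desired right-hand side of \eqref{eq:estimate-u-tilde} — from a term $\int |\tilde u|^2 g^{-1}\zeta_1\,dx^*$ that must itself be bounded by $\int g|\tilde H|^2\zeta_1\,dx^*$ (times an explicit constant). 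To get such a spacetime bound on $\tilde u$ I would introduce a second, coercive, current: a $y$-current $\tilde Q^y = y|\tilde u'|^2 + y\tilde V|\tilde u|^2$ of the same flavour as in the unique continuation Lemma~\ref{lemma:unique-continuation}, with $y$ chosen monotone, bounded, vanishing appropriately at the ends, so that $(\tilde Q^y)'$ controls $|\tilde u'|^2$ and, using the positivity properties (i)–(iii) of $\tilde V$ in the Remarks, also $|\omega|^2\,(\text{weight})\,|\tilde u|^2$, modulo a borrowable error and the inhomogeneous contribution $y\,\zeta_1\,\Re(\overline{\tilde u'}\tilde H)$. Combining the two currents (Morawetz-type plus energy-type), and using that $f,g$ are tailored to the three regimes $|a|<M$, $|a|=M$ with $\omega(\omega-m\upomega_+)>0$, and $|a|=M$ with $\omega(\omega-m\upomega_+)\le 0$, should close the argument and produce the bound with an explicit constant $G(C_\mathcal{A},M,|s|)$, all of whose ingredients ($c_1^{-1}$, $c_2$, the coercivity constants of $\tilde V$, the weights in $f,g$) are rational or elementary expressions in $|\omega|^{\pm1}$, $|\omega-m\upomega_+|^{\pm1}$, $|m|$, $|\lambda|$, $M$ and $|s|$, hence bounded on $\mathcal{A}$ by definition of $C_\mathcal{A}$.

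The main obstacle, I expect, is the near-infinity analysis in the extremal superradiant case $|a|=M$, $\omega(\omega-m\upomega_+)\le 0$, where $\tilde u$ can grow like $\exp(4\sqrt{-2M\omega(\omega-m\upomega_+)}\,x^{1/2})$: one has to verify that the weights $f$ and $g$ (note $g\sim f^{-2}x^2$, matching the $\exp(8\sqrt{\cdots}\,x^{1/2})$ growth) are chosen so that all boundary terms at $x^*=+\infty$ converge, that the coercive current $\tilde Q^y$ can still be made to close despite this exponential growth, and that the resulting constants remain uniform as $a\to M$ — the very uniformity that the subextremal estimates of \cite{Shlapentokh-Rothman2015} lacked. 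A secondary technical point is bookkeeping the potential $\tilde V$: one must extract from \eqref{eq:V-tilde} and \eqref{eq:V-tilde-sub} the explicit lower bounds on $\tilde V_{\omega^2}$, the upper bound on $|\lambda|$-dependent terms, and the sign of the parameter-independent part, uniformly down to the boundary points $x=2M$ (resp. $x=r_+$) and out to $x=\infty$, in order to make $G$ genuinely computable rather than merely finite.
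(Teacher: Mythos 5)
Your overall architecture --- an energy ($T$-type) identity to produce the boundary term at $x^*=+\infty$, a coercive $y$-current for the bulk, and Cauchy--Schwarz with the weights $g$, $g^{-1}$ to absorb the inhomogeneity --- does match the paper's proof in the subextremal case and in the extremal non-superradiant case $\omega(\omega-m\upomega_+)>0$: there $\tilde Q^T(+\infty)$ equals $\omega^2|\tilde u(+\infty)|^2$, respectively $4|\omega|\sqrt{2M\omega(\omega-m\upomega_+)}\,|(x^{-1/4}\tilde u)(+\infty)|^2$, i.e.\ an explicit positive multiple of $|(f\tilde u)(+\infty)|^2$, exactly as you claim.

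There is, however, a genuine gap in the case $|a|=M$, $\omega(\omega-m\upomega_+)\le 0$, which is precisely the extremal-specific content the proposition was designed to capture. In that regime $\tilde u$ is exponentially \emph{decaying} as $x\to\infty$, like $x^{1/4}\exp\big(-4\sqrt{-2M\omega(\omega-m\upomega_+)}\,x^{1/2}\big)$; your description of it as ``possibly exponentially growing'' is inconsistent with $f$ itself being exponentially growing while $f\tilde u$ has a finite limit. As a consequence $\tilde Q^T(+\infty)=0$: the $T$-flux at infinity degenerates completely and carries no information about the renormalised limit $|(f\tilde u)(+\infty)|^2$, so your Step~1 cannot produce the left-hand side of \eqref{eq:estimate-u-tilde} here. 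In the paper the boundary term is instead extracted from the flux of a virial-type current $\tilde Q^h = h\Re(\tilde u'\overline{\tilde u})-\tfrac12 h'|\tilde u|^2$ with exponentially \emph{growing} weight $h\sim \exp\big(8\sqrt{-\beta\gamma/M}\,x^{1/2}\big)(1+Bx^{-1/2})$, paired with a matched $y$-current; one computes $\tilde Q^{h}(+\infty)=-4\sqrt{-\beta\gamma/M}\,|(f\tilde u)(+\infty)|^2=-4\tilde Q^{y}(+\infty)$, which enters the identity with a favourable sign, and the subleading constants $A$, $B$, $C$ in $h$ and $y$ must be chosen explicitly in terms of $L$, $\alpha$, $\beta\gamma$ so that the bulk coefficients $h+y'$ and $-h\tilde V-\tfrac12 h''+(y\tilde V)'$ are positive to orders $x^{-1}$ and $x^{-2}$ respectively, with the indefinite-sign region $[3M,R_1]$ absorbed into the near-horizon $\hat y$-estimate. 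This construction, not a renormalised $T$-flux, is what makes the estimate close with the stated weights $f$ and $g\sim x^{5/2}\exp\big(8\sqrt{-2M\omega(\omega-m\upomega_+)}\,x^{1/2}\big)$, and it is the step your proposal identifies as ``the main obstacle'' but does not actually resolve.
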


\begin{remark} \label{rmk:finite-u-tilde-estimates}
Proposition~\ref{proposition:estimates-u-tilde} gives an estimate for the boundary term of $\tilde{u}$ as $x\to \infty$ in terms of the inhomogeneity $\tilde{H}$ and an explicitly computable constant depending on $C_{\mc{A}}$, $M$ and $|s|$. 

Moreover, note that, as long as $\hat{F}^{[-s]}$ is smooth and compactly supported far from the horizon and away from $r=\infty$, $\tilde{H}$ (see definitions in \cref{prop:ode-u-tilde} and \cref{prop:ode-u-tilde-sub}) must decay faster than any polynomial as $x\to\infty$ (since it can be written as the a Fourier transform of $\hat{F}^{[-s]}$ extended to $\mathbb{R}$ by zero); moreover, we can apply \cref{lemma:technical-core-mode-stability-extremal} to obtain even more precise asymptotics for $\tilde{H}$ in the superradiant regime for the extremal case. These considerations are enough to show that the right hand side of the estimate in \cref{proposition:estimates-u-tilde} is finite for $\hat{F}^{[-s]}$ smooth and compactly supported in $r^*$. 
\end{remark}

\begin{proof}[Proof of Proposition~\ref{proposition:estimates-u-tilde}]
Recall the definition of the $h$ and $y$ currents: 
\begin{align*}
\tilde{Q}^y(x^*)&:=y|\tilde{u}'|^2+y\tilde{V}|\tilde{u}|^2\,,\quad (Q^y)'(x^*) = y'|\tilde{u}'|^2+(y\tilde{V})'|u|^2+2y\zeta_1\Re\lp[\tilde{u}'\overline{\tilde{H}}\rp]\,,\\
\tilde{Q}^h&:= h\Re\lp[\tilde{u}'\overline{\tilde{u}}\rp]-\frac12 h'|\tilde{u}|^2\,,\quad (\tilde{Q}^h)'= h |\tilde{u}'|^2 -\lp(h\tilde{V}+\frac12 h''\rp)|\tilde{u}|^2+ h\zeta_1 \Re(\tilde{u}\overline{\tilde{H}})\,,
\end{align*}
where the second identity in each line is obtained using  \eqref{eq:ode-u-tilde} if $|a|=M$ and \eqref{eq:ode-u-tilde-sub} if $|a|<M$. Throughout this proof, we will be heavily using the expressions for $\tilde{V}$ and the boundary conditions of $\tilde{u}$ as obtained in Propositions  \ref{prop:ode-u-tilde} and \ref{prop:ode-u-tilde-sub}, respectively. Moreover, we will use the notation $\lesssim_{C_{\mc{A}},M,|s|}$ whenever it is understood that the left hand side can be bounded by the product of the right hand side by an explicitly computable constant depending on $C_{\mc{A}}$, $M$ and $|s|$.

\medskip
\noindent\textit{Estimate near the horizon.} For the $x^*=-\infty$ end, we follow an identical approach to that in \cref{lemma:unique-continuation}. Decompose $\hat{V}(x^*):=\tilde{V}(x^*)-\omega_0^2$, where $\omega_0^2:=\tilde{V}(-\infty)>0$. Define
\begin{align*}
\hat{y}(x^*) := -\exp\lp(-C_1\int_{-\infty}^{x^*} \zeta_1(r^*) dr^*\rp)\,,
\end{align*}
where $\zeta_1(x^*)$, already defined, is a fixed positive function. Consequently, $\hat{y}'>0$ in $(-\infty,\infty)$, $\hat{y}(+\infty)=0$ and $\hat{y}(-\infty)=-1$. The fundamental theorem of calculus implies
\begin{align*}
\int_{-\infty}^\infty \lp(\hat{y}'|\tilde{u}'|^2+[\hat{y}'\omega_0^2-(\hat{y}\hat{V})']|\tilde{u}|^2\rp)dx^* = 2\tilde{Q}^T(-\infty)-\int_{-\infty}^\infty 2\hat{y}\zeta_1\Re\lp[\tilde{u}'\overline{\tilde{H}}\rp]dx^*\,,
\end{align*}
as $Q^{\hat{y}}(-\infty)=-2\tilde{Q}^T(-\infty)$.
The only term which can threaten this estimate on the left hand side is  $(\hat{y}\hat{V})'|u|^2$; however, this term cam be absorbed into the remaining ones on the left hand side by making $C_1$ large enough, as in the proof of \cref{lemma:unique-continuation}. Thus, after Cauchy--Schwarz on the integral with the inhomogeneity, we obtain
\begin{align*}
\int_{-\infty}^\infty \lp(\hat{y}'|\tilde{u}'|^2+\hat{y}'\omega_0^2|\tilde{u}|^2\rp)dx^* \lesssim \lp|\tilde{Q}^T(-\infty)\rp|+\int_{-\infty}^\infty \zeta_1 \lp|\tilde{H}\rp|^2dx^*\,.
\end{align*}

\medskip
\noindent\textit{Estimate as $x^*\to\infty$, for $|a|<M$.} For the $x^*=\infty$ end in the subextremal case, we follow the approach of \cite[Lemma 6.1]{Shlapentokh-Rothman2015}: define 
\begin{align*}
y(x^*) :=\exp\lp(-C_2\int_{x^*}^{+\infty} \zeta_2(r^*) dr^*\rp)\,,
\end{align*}
where $\zeta_2(x^*)$ is a fixed positive function which is identically 1 near $x^*=-\infty$ and is $(x^*)^{-2}$ near $x^*=+\infty$. In particular, $y'>0$ in $(-\infty,\infty)$, $y(+\infty)=1$ and $y(-\infty)=0$. The fundamental theorem of calculus implies
\begin{align*}
\int_{-\infty}^\infty \lp(y'|\tilde{u}'|^2+[y'\omega^2-(y\hat{V})']|\tilde{u}|^2\rp)dx^* = 2\tilde{Q}^T(\infty)-\int_{-\infty}^\infty 2y\zeta_1\Re\lp[\tilde{u}'\overline{\tilde{H}}\rp]dx^*\,,
\end{align*}
where $\hat{V}(x):=\tilde{V}(x)-\omega^2$ is $O(x^{-1})$ as $x\to \infty$. The only term which can threaten this estimate is  $(y\tilde{V})'|\tilde{u}|^2$. Let $C_3$ be a large constant to be chosen later and $\chi$ be a smooth function which is 1 on $(-\infty,C_3]$ and 0 on $[C_3+1,\infty)$. Define $\hat{V}_1:=\chi \hat{V}$ and  $\hat{V}_2:=(1-\chi) \hat{V}$. For the $\tilde{V}_1$ we can apply a similar trick as near the $x^*=-\infty$ end
\begin{align*}
\lp|\int_{-\infty}^\infty (y\hat{V}_1)'|\tilde{u}|^2 dx^* \rp|&\leq \int_{-\infty}^\infty  2|y||\tilde{u}'||\hat{V}_1||\tilde{u}| dx^* \\
&\leq \frac14 \int_{-\infty}^\infty \lp[y'|\tilde{u}'|^2+\lp(\frac{4C_3}{\omega C_2}\lp\lVert \frac{\hat{V} \chi}{C_3 \zeta_2}\rp\rVert_{\infty} \rp)^2y'\omega^2|\tilde{u}|^2 \rp]dx^* \,,
\end{align*}
and for $\hat{V}_2$ we note that $\hat{V}_2'$ has the same decay as $\zeta_2$, so
\begin{align*}
\lp|\int_{-\infty}^\infty (y\hat{V}_2)'|\tilde{u}|^2 dx^* \rp|&\leq \frac{1}{4}\int_{-\infty}^\infty  y'\omega^2\lp[\frac{\lVert r \hat{V_2}\rVert}{C_3 \omega^2} +\lp\lVert\frac{\hat{V}_2'}{\zeta_2 C_2 \omega^2}\rp\rVert_\infty \rp]|\tilde{u}|^2 dx^*\,.
\end{align*}
If we make $C_3$ large and then take $C_2\gg C_3$, after Cauchy--Schwarz on the integral with the inhomogeneity, we obtain
\begin{align*}
\int_{-\infty}^\infty \lp(y'|\tilde{u}'|^2+y'\omega^2|\tilde{u}|^2\rp)dx^* \lesssim \lp|\tilde{Q}^T(\infty)\rp|+\int_{-\infty}^\infty \zeta_2^{-1}\zeta_1^2 \lp|\tilde{H}\rp|^2dx^*\,.
\end{align*}
which, summing to the $\hat{y}$ current estimate from before, yields
\begin{align*}
\int_{-\infty}^\infty  x^{-2}\lp[|\tilde{u}'|^2+|\tilde{u}|^2\rp]\zeta_1 dx^* \lesssim \lp|\tilde{Q}^T(\infty)\rp|+\int_{-\infty}^\infty x^2 \lp|\tilde{H}\rp|^2\zeta_1dx^*\,.
\end{align*}

\medskip
\noindent\textit{Estimate as $x^*\to\infty$, for $|a|=M$ and $\omega(\omega-m\upomega_+)>0$.} For the extremal case, we begin with the non-superradiant frequencies. Define
\begin{gather*}
\tilde{f}=-\arctan(x)\,,\quad \frac{d\tilde{f}}{dx}=O(x^{-2})\,,\quad\frac{d^2\tilde{f}}{dx^2}=O(x^{-3})\,,\quad \frac{d^3\tilde{f}}{dx^3}=O(x^{-4})\,.
\end{gather*}
Set $f=\tilde{f}\chi$ for a smooth function $\chi$ which vanishes for $x\leq 3M$ and is identically 1 for $x\geq 4M$. Then, we have $\tilde{Q}^{h=f'}(\pm \infty)+\tilde{Q}^{h=f}(\pm \infty)=0$, so an application of the fundamental theorem of calculus yields
\begin{align*}
\int_{-\infty}^\infty \lp[2f'|\tilde{u}'|^2+\lp(f\tilde{V}'-\frac12 f'''\rp)|\tilde{u}|^2\rp]dx^* = -\int_{-\infty}^\infty \lp[2f \Re\lp(\tilde{u}'\overline{\tilde{H}}\rp)+f' \Re\lp(\tilde{u}\overline{\tilde{H}}\rp)\rp]\zeta_1 dx^*\,,
\end{align*}
We have
\begin{align*}
&f\tilde{V}'-\frac12 f''' =\lp(\frac{4\beta\gamma/M}{x^2}\arctan(x)+O(x^{-3})\rp) \text{~ as~} x\to \infty\,,
\end{align*}
but $f\tilde{V}'-\frac12 f'''$ may have a bad sign for $x\in[3M,R_1]$ with $R_1$ large. By rescaling $f$, this term of indeterminate sign can be made small enough to be absorbed by the left hand side of the ${\hat{y}}$ estimate from before. Then, after Cauchy--Schwarz on the term with an inhomogeneity, we obtain
\begin{align*}
\int_{-\infty}^\infty x^{-2}\lp[|\tilde{u}'|^2+|\tilde{u}|^2\rp]\zeta_1 dx^*\lesssim \int_{-\infty}^\infty x^2 |\tilde{H}|^2 \zeta_1 dx^*
\end{align*}

\medskip
\noindent\textit{Estimate as $x^*\to\infty$, for $|a|=M$ and $\omega(\omega-m\upomega_+)<0$.}  
Define $\tilde{h}(x)$ by
\begin{gather*}
\tilde{h}:=\exp\lp(8\sqrt{-\beta\gamma/M}x^{1/2}\rp)\,,\quad \frac{d\tilde{h}}{dx}=\lp(4\sqrt{-\beta\gamma/M}\rp)x^{-1/2}\tilde{h}(x)\,,\\
\frac{d^2\tilde{h}}{dx^2}=\lp(-16\beta\gamma/M-2\sqrt{-\beta\gamma/M}x^{-1/2}\rp)x^{-1}\tilde{h}(x)\,,
\end{gather*}
and $\tilde{y}(x)$ by
\begin{gather*}
\tilde{y}(x):=\frac{\tilde{h}(x)}{4\sqrt{-\beta\gamma/M}}\lp[x^{1/2}-\frac{A}{\sqrt{-\beta\gamma/M} }+Cx^{-1/2}\rp]\,,\\
\frac{d\tilde{y}}{dx}=\tilde{h}(x)\lp[1+\frac{1-2A}{8\sqrt{-\beta\gamma/M}}x^{-1/2}+Cx^{-1}+O(x^{-3/2})\rp]\,.
\end{gather*}
Set  $h=\chi\tilde{h}(1+Bx^{-1/2})$ and $y=-\chi\tilde{y}$ for a smooth function $\chi$ which vanishes for $x\leq 3M$ and is identically 1 for $x\geq 4M$.  Then
\begin{align*}
\tilde{Q}^{h}(+\infty)
&=\lp(-2\sqrt{-\beta\gamma/M}-\frac12\times 4\sqrt{-\beta\gamma/M}\rp)\lp|\exp\lp(4\sqrt{-\beta\gamma/M}x^{1/2}\rp)x^{-1/4}\tilde{u}\rp|^2(\infty)\\
&=-4\sqrt{\frac{-\beta\gamma}{M}}\lp|\exp\lp(4\sqrt{-\beta\gamma/M}x^{1/2}\rp)x^{-1/4}\tilde{u}\rp|^2(\infty)=-4\tilde{Q}^{y}(+\infty)\,,\\
\tilde{Q}^{h}(-\infty)&=\tilde{Q}^{y}(-\infty)=(-\infty)=0\,.
\end{align*}

Adding the $y$ and $h$ currents, an application of the fundamental theorem of calculus yields
\begin{align*}
&-\frac34\tilde{Q}^{h}(+\infty)+\int_{-\infty}^{\infty} \lp[(h+y')|\tilde{u}|^2+\lp(-h\mc{V}-\frac12 h''+(y\tilde V)'\rp)|\tilde{u}|^2\rp]dx^*\\
&\quad= -\int_{-\infty}^\infty \zeta_1\lp\{2 y\Re\lp[\tilde{u}'\overline{\tilde{H}}\rp]+h\Re\lp[\tilde{u}\overline{\tilde{H}}\rp]\rp\}dx^*\,,
\end{align*}
where, as $x\to\infty$,
\begin{align*}
h+y'=\tilde{h}\lp[\frac{-1+2A+8B}{8\sqrt{-\beta\gamma/M}}x^{-1/2}-Cx^{-1}+O(x^{-3/2})\rp]\,,
\end{align*}
and, recalling that 
\begin{align*}
\tilde{V}=\frac{4\beta\gamma/M}{x}-\frac{L+16\beta\gamma+\alpha^2}{x^2}+O(x^{-2})\,,
\end{align*}
we have
\begin{align*}
&-h\mc{V}-\frac12 h''+(y\tilde{V})' \\
&\quad= -\frac12(8B+2A-1)\sqrt{-\beta\gamma/M}\tilde{h}x^{-3/2}\\
&\quad\qquad+\tilde{h}\lp[\lp(-\frac{3}{16}+\frac{A}{2}+3B+L+\alpha^2+\frac{4\beta\gamma}{M}(4M-C)\rp)x^{-2}+O(x^{-5/2})\rp]  &\text{~as~} x\to \infty\,.
\end{align*}
Choosing
\begin{align*}
A=\frac14 +4(L+\alpha^2)+96\beta\gamma\,,\qquad B=\frac{3}{16}-(L+\alpha^2)-24\beta\gamma\,,\qquad C=-4M\,,
\end{align*}
we obtain finally
\begin{align*}
h+y'&=\tilde{h}\lp[\frac{4M}{x}+O(x^{-3/2})\rp] \text{~as~} x\to \infty\,,\\
-h\mc{V}-\frac12 h''+(y\tilde{V})' &= \tilde{h}\lp[\frac{1}{2x^2}+O(x^{-5/2})\rp]  \text{~as~} x\to \infty\,.
\end{align*}

As before, these weights may have a bad sign for $x\in[3M,R_1]$, where $R_1$ is some large constant. By rescaling $h$ and $y$, this term of indeterminate sign can be made small enough that it can be absorbed by the ${\hat{y}}$ estimate's left hand side. Then, after Cauchy--Schwarz on the term with an inhomogeneity, we obtain
\begin{align*}
&\int_{-\infty}^\infty  x^{-3/2} \lp[\lp|\exp\lp(4\sqrt{-\beta\gamma/M}x^{1/2}\rp)x^{1/4}\tilde{u}'\rp|^2+\lp|\exp\lp(4\sqrt{-\beta\gamma/M}x^{1/2}\rp)x^{-1/4}\tilde{u}\rp|^2\rp] \zeta_1dx^* \\
&\qquad+\lp|\lp[\exp\lp(4\sqrt{-\beta\gamma/M}x^{1/2}\rp)x^{-1/4}\tilde{u}\rp](+\infty)\rp|^2\\
&\qquad\lesssim \lp|\tilde{Q}^T(-\infty)\rp|+ \int_{-\infty}^\infty x^{5/2} \lp|\exp\lp(4\sqrt{-\beta\gamma/M}x^{1/2}\rp)x^{-1/4}\tilde{H^*}\rp|^2\zeta_1dx^*\,.
\end{align*}
Note that the estimate holds even if $\omega(\omega-m\upomega_+)=0$.

\medskip
\noindent\textit{The $T$ estimate for $|a|\leq M$.}  Consider the $\tilde{Q}^T$ current,
\begin{align*}
\tilde{Q}^T := \Im \lp(\tilde{u}'\overline{\omega \tilde{u}}\rp)\,, \quad \lp(\tilde{Q}^T\rp)' :=\zeta_1 \omega \Im \tilde{H}\overline{u}\,,
\end{align*}
where the second identity is obtained from \eqref{eq:ode-u-tilde-sub} and \eqref{eq:ode-u-tilde}. We recall that
\begin{align*}
\tilde{Q}^T(\infty)=\begin{cases}
\omega^2\lp|\tilde{u}(\infty)\rp|^2\,, & \text{~if~} |a|\leq M\\
4|\omega|\sqrt{2M\omega(\omega-m\upomega_+)}\lp|\lp(x^{-1/4}\tilde{u}\rp)(\infty)\rp|^2\,, & \text{~if~} |a|= M \text{~and~} \omega(\omega-m\upomega_+)>0\\
0\,, & \text{~if~} |a|= M \text{~and~} \omega(\omega-m\upomega_+)<0
\end{cases}\,,
\end{align*}
using the boundary conditions in Propositions~\ref{prop:ode-u-tilde} and \ref{prop:ode-u-tilde-sub}.

By the fundamental theorem of calculus,
\begin{align*}
\lp|\tilde{Q}^T(\infty)\rp|+\lp|\tilde{Q}^T(-\infty)\rp|\lesssim \int_{-\infty}^\infty\Im\lp(\zeta_1\overline{\tilde{H}}\tilde{u}\rp)dx^*\,,
\end{align*}
which we can Cauchy--Schwarz in the most appropriate way to combine with the previous estimates to finish the proof.
\end{proof}

\subsection{Proof of Theorem \ref{thm:quantitative}}
\label{sec:quantitative-proof}

In this section, we present the proof of Theorem~\ref{thm:quantitative}.  We first note that the definition of the Wronskian  via \eqref{def:Wronskian} cannot easily yield Theorem~\ref{thm:quantitative}, because the representations we have for $\swei{u}{s}_{\mc{I}^+}$ and $\swei{u}{s}_{\mc{I}^+}$ are merely asymptotic near $r=\infty$ and $r=r_+$, respectively. However, the Wronskian can be used to construct a Green's function for the inhomogeneous version of the radial ODE \eqref{eq:radial-ODE}:
\begin{lemma} \label{lemma:Greens-function} Let $\swei{\hat{F}}{s}$ be compactly supported in $(r_+,\infty)$ and set $\swei{H}{s}=\Delta^{1+s/2}(r^2+a^2)^{-3/2}\swei{\hat{F}}{s}$. Then, if $\lp(\swei{\mathfrak{W}}{s}\rp)^{-1}$ is finite,
\begin{align*}
\swei{u}{s}(r^*)=\frac{1}{\swei{\mathfrak{W}}{s}}\lp[\swei{u}{s}_{\mc{I}^+}(r^*)\int_{-\infty}^{r^*} \swei{u}{s}_{\mc{H}^+}(x^*)\swei{H}{s}(x^*)dx^*+\swei{u}{s}_{\mc{H}^+}(r^*)\int_{r^*}^{\infty} \swei{u}{s}_{\mc{I}^+}(x^*)\swei{H}{s}(x^*)dx^*\rp]
\end{align*}
solves the inhomogenous radial ODE \eqref{eq:u-Schrodinger} with boundary conditions compatible with those in Definition \ref{def:outgoing-radial-solution}.
\end{lemma}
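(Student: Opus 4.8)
The plan is to verify directly that the proposed $\swei{u}{s}$ solves the inhomogeneous ODE \eqref{eq:u-Schrodinger} with the correct boundary behavior, using the standard variation-of-parameters structure. First I would differentiate the formula for $\swei{u}{s}(r^*)$ with respect to $r^*$. The two terms coming from differentiating the integrals' endpoints are
\begin{align*}
\frac{1}{\swei{\mathfrak{W}}{s}}\lp[\swei{u}{s}_{\mc{I}^+}(r^*)\swei{u}{s}_{\mc{H}^+}(r^*)-\swei{u}{s}_{\mc{H}^+}(r^*)\swei{u}{s}_{\mc{I}^+}(r^*)\rp]\swei{H}{s}(r^*)=0\,,
\end{align*}
so they cancel and one is left with
\begin{align*}
\lp(\swei{u}{s}\rp)'(r^*)=\frac{1}{\swei{\mathfrak{W}}{s}}\lp[\lp(\swei{u}{s}_{\mc{I}^+}\rp)'(r^*)\int_{-\infty}^{r^*} \swei{u}{s}_{\mc{H}^+}\swei{H}{s}dx^*+\lp(\swei{u}{s}_{\mc{H}^+}\rp)'(r^*)\int_{r^*}^{\infty} \swei{u}{s}_{\mc{I}^+}\swei{H}{s}dx^*\rp]\,.
\end{align*}
Differentiating once more, the endpoint terms now combine into exactly $\swei{\mathfrak{W}}{s}(\swei{u}{s}_{\mc{I}^+},\swei{u}{s}_{\mc{H}^+})/\swei{\mathfrak{W}}{s}=1$ times $\swei{H}{s}(r^*)$ — here one uses Definition~\ref{def:Wronskian} and the remark that $\swei{\mathfrak{W}}{s}$ is $r^*$-independent because $\swei{u}{s}_{\mc{I}^+}$ and $\swei{u}{s}_{\mc{H}^+}$ both solve the second-order ODE \eqref{eq:u-Schrodinger} with no first-order term — while the remaining terms, where the second derivative falls on $\swei{u}{s}_{\mc{I}^+}$ or $\swei{u}{s}_{\mc{H}^+}$ inside the integrals, reproduce $-(\omega^2-V)\swei{u}{s}$ since each of $\swei{u}{s}_{\mc{I}^+},\swei{u}{s}_{\mc{H}^+}$ satisfies the homogeneous equation. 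This yields $\lp(\swei{u}{s}\rp)''+(\omega^2-V)\swei{u}{s}=\swei{H}{s}$, which upon translating back through $\swei{H}{s}=\Delta^{1+s/2}(r^2+a^2)^{-3/2}\swei{\hat F}{s}$ is precisely the inhomogeneous radial ODE \eqref{eq:radial-ODE} in its rescaled form.

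Next I would check the boundary conditions. Since $\swei{\hat F}{s}$ is compactly supported in $(r_+,\infty)$, for $r^*$ below the support the integral $\int_{-\infty}^{r^*}\swei{u}{s}_{\mc{H}^+}\swei{H}{s}dx^*$ vanishes, so there $\swei{u}{s}$ is a constant multiple of $\swei{u}{s}_{\mc{H}^+}$, hence satisfies the horizon boundary condition in Definition~\ref{def:outgoing-radial-solution} (smoothness of $\swei{R}{s}(r)(r-r_+)^{s-\xi}$ at $r=r_+$, or the analogous extremal condition). Symmetrically, for $r^*$ above the support $\int_{r^*}^\infty\swei{u}{s}_{\mc{I}^+}\swei{H}{s}dx^*$ vanishes, so $\swei{u}{s}$ is a constant multiple of $\swei{u}{s}_{\mc{I}^+}$ there and thus has the outgoing asymptotics $\swei{R}{s}(r)\sim e^{i\omega r}r^{2iM\omega-1-2s}$ as $r\to\infty$, again as required by Definition~\ref{def:outgoing-radial-solution}.

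I do not expect any serious obstacle here: this is the classical Green's function construction, and the only subtlety is bookkeeping — ensuring the endpoint contributions at each differentiation combine correctly and that the hypothesis $\lp(\swei{\mathfrak{W}}{s}\rp)^{-1}<\infty$ (equivalently $\swei{\mathfrak{W}}{s}\neq 0$, guaranteed on the relevant frequency range by Theorem~\ref{thm:mode-stability-real-axis} via Corollary~\ref{cor:nonexplicit-quantitative-mode-stability}) is what makes the formula well-defined. One should also note that convergence of the two integrals is automatic from compact support of $\swei{\hat F}{s}$ and local boundedness of $\swei{u}{s}_{\mc{H}^+},\swei{u}{s}_{\mc{I}^+}$, so no decay estimates are needed for the mere statement of the lemma. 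The mildly delicate point, if anything, is simply recording that $\swei{u}{s}_{\mc{H}^+}$ and $\swei{u}{s}_{\mc{I}^+}$ are genuine classical solutions of \eqref{eq:u-Schrodinger} on all of $r^*\in(-\infty,\infty)$ (not merely asymptotic solutions), which is how they were defined in Definition~\ref{def:uhor-uout} together with \eqref{eq:uout-uhor}; granting that, the computation above is routine.
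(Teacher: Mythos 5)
Your verification is correct and is precisely the standard variation-of-parameters argument that the paper implicitly relies on: the paper states Lemma~\ref{lemma:Greens-function} without proof, treating it as the classical Green's function construction, and your computation (cancellation of the first endpoint terms, the second endpoint terms producing $\swei{\mathfrak{W}}{s}/\swei{\mathfrak{W}}{s}=1$ times $\swei{H}{s}$, and the compact support of $\swei{\hat F}{s}$ forcing $\swei{u}{s}$ to be proportional to $\swei{u}{s}_{\mc{H}^+}$ below the support and to $\swei{u}{s}_{\mc{I}^+}$ above it) fills in exactly the intended details. No gaps.
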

\noindent Hence, if we choose a particular inhomogeneity and obtain suitable estimates on the solution to the the radial ODE with outgoing boundary condition and the inhomogeneity of our choice, we also obtain a lower bound on $\mathfrak{W}$. This will be our strategy:

\begin{proof}[Proof of \cref{thm:quantitative}]
We will begin by establishing the strong bound in $\mc{A}$. Let $s>0$ and $(\omega,m,\lambda)$ be any admissible frequency triple in the bounded range $\mc{A}$. Throughout the proof, we will use the notation $\lesssim_{C_{\mc{A}},M,|s|}$ whenever it is understood that the left hand side can be bounded by the product of the right hand side by an explicitly computable constant depending on $C_{\mc{A}}$, $M$ and $|s|$, and $\sim_{C_{\mc{A}},M,|s|}$ when the left hand side can also be bounded by the right hand side in this way. 

For $s\geq 0$, suppose we specify a smooth function $H^{[-s]}$, compactly supported away from $r_+$. Let $\swei{u}{-s}$ be the solution to the radial ODE \eqref{eq:u-Schrodinger} with inhomogeneity $\swei{H}{-s}$ and boundary conditions compatible with those in Definition \ref{def:outgoing-radial-solution}. Using the Teukolsky--Starobinsky identities \eqref{prop:TS-radial} to define
\begin{align*}
u^{[+s]}:=\Delta^{s/2}(r^2+a^2)^{1/2}\lp(\mc{D}_0^-\rp)^{2s}\lp(\Delta^{s/2}(r^2+a^2)^{-1/2}u^{[-s]}\rp)\,,
\end{align*}
we obtain a solution to the radial ODE of spin $+s$ with inhomogeneity (see \cref{lemma:TS-commutation})
\begin{align*}
H^{[+s]}:=(r^2+a^2)^{1/2}\Delta^{s/2}\lp(\mc{D}_0^+\rp)^{2s}\lp(\Delta^{s/2}(r^2+a^2)^{-1/2}H^{[-s]}\rp)\,.
\end{align*}
Moreover, assuming the support of $\swei{H}{-s}$ is sufficiently far from $r^*=\pm \infty$ that it does not affect the first $s$ coefficients of the asymptotic representation for $R^{[-s]}$ at either end, we conclude, by the proof of Lemma~\ref{lemma:TS-radial-boundary-conditions}, that $u^{[+s]}$ also has outgoing boundary conditions.  We can thus write, by Lemma~\ref{lemma:Greens-function},
\begin{align}
\begin{split}\label{eq:Wronskian-identity-Green}
&\lp|\int_{-\infty}^\infty u^{[- s]}_{\mc{I}^+}(r^*) H^{[- s]}(r^*)dr^*\rp|^2 \lp(\swei{\mathfrak{W}}{-s}\rp)^{-2} = \lp|\lp(\Delta^{- s/2} u^{[- s]}\rp)(-\infty)\rp|^2\\
&\lp|\int_{-\infty}^\infty u_{\mc{I}^+}^{[+ s]} (r^2+a^2)^{1/2}\Delta^{s/2}\lp(\mc{D}_0^+\rp)^{2s}\lp(\Delta^{s/2}(r^2+a^2)^{-1/2}H^{[-s]}\rp)dr^*\rp|^2\lp(\swei{\mathfrak{W}}{+s}\rp)^{-2} \\
&\quad= \lp|\lp((r^2+a^2)^{1/2}\lp(\mc{D}_0^-\rp)^{2s}\lp(\Delta^{s/2}(r^2+a^2)^{-1/2}u^{[-s]}\rp)\rp)(-\infty)\rp|^2
\,.
\end{split}
\end{align}

As the potential in the ODE for $u$ is, in general, complex, obtaining such bounds directly from the radial ODE \eqref{eq:u-Schrodinger} can be quite difficult (even for $a=0$, this is not the usual strategy in the literature, see \cite{Dafermos2016a}). The key estimate to circumvent these issues was obtained in statements 4(b) of Proposition~\ref{prop:ode-u-tilde} and \ref{prop:ode-u-tilde-sub}:
\begin{align}
\lp|\Delta^{-s/2}u^{[-s]}(-\infty)\rp|^2 \sim_{C_{\mc{A}},M,|s|} \lp|(f(x^*)\tilde{u})(+\infty)\rp|^2  \label{eq:crucial-identity}\,,
\end{align}
for some appropriate $f$ determined in \cref{proposition:estimates-u-tilde}. It follows from \cref{lemma:TS-radial-boundary-conditions} that
\begin{align*}
\lp|\lp((r^2+a^2)^{1/2}\lp(\mc{D}_0^-\rp)^{2s}\lp(\Delta^{s/2}(r^2+a^2)^{-1/2}u^{[-s]}\rp)\rp)(-\infty)\rp|^2 &\sim_{C_{\mc{A}},M,|s|}\lp|(f(x^*)\tilde{u})(+\infty)\rp|^2\,.
\end{align*}
It immediately follows that, in the frequency parameter range $\mc{A}$, \cref{proposition:estimates-u-tilde} an estimate, depending on $M$, $|s|$ and $C_{\mc{A}}$, for the boundary terms at $r^*=-\infty$ in the right hand side of \eqref{eq:Wronskian-identity-Green} in terms of the transformed inhomogeneity $\tilde{H}$ and thus in terms of $\swei{H}{-s}$. 

Hence, to obtain an explicit lower bound for $\mathfrak{W}$, we have to construct a suitable smooth $H^{[-s]}$ which is compactly supported very far from $r=r_+$ and $r=\infty$ (so that the first $s$ coefficients of the series expansion for $\swei{u}{-s}$ at either end are the same as for a homogeneous solution of outgoing boundary conditions) so that the left and right hand sides of \eqref{eq:Wronskian-identity-Green} are finite. For the right hand side, the assumption on the support is sufficient, by Remark~\ref{rmk:finite-u-tilde-estimates}. For the left hand side, we must work a bit more.

Recall that, for $r>r_+$,  we can determine the precise behavior of $u_{\mc{I}^+}$, against which the inhomogeneities are integrated:
\begin{align*}
\swei{u}{\pm s}_{\mc{I}^+}=e^{i\omega r}r^{2iM \omega \mp s}\lp[\sum_{k=0}^{2s} \swei{c}{\pm s}_kr^{-k}+O\lp(r^{-2s-1}\rp)\rp] \,,
\end{align*}
for $\swei{c}{\pm s}_0=1$, some complex $\swei{c}{\pm s}_k=\swei{c}{\pm s}_k(\omega,m,\lambda)$ and where the remainder can be explicity estimated (by adapting the proof of \cite[Lemma C1]{Shlapentokh-Rothman2014}, for instance). Fix $R_1$ and $R_2$ such that $r_+\ll R_1<R_2$, $1/R_2\gg 0$, fix a bump function $\chi$ supported in $[R_1,R_2]$ and choose $\swei{b}{-s}_k$  for $k=0,...,s$. Set
\begin{align*}
H^{[-s]}=\sum_{k=0}^s \swei{b}{-s}_k e^{-i\omega r}r^{-2iM \omega - s-k}\chi\,; 
\end{align*}
then
\begin{align*}
H^{[+s]}=e^{-i\omega r}r^{-2iM \omega}\lp[\sum_{k=0}^s \swei{b}{+s}_k r^{ s-k}\chi+\mathbbm{1}_{\supp \chi'}h\rp]\,,
\end{align*}
where, by a proof similar to \cref{lemma:TS-radial-boundary-conditions}, we find $\swei{b}{+s}_k$ and $h$ can be explicitly computed in terms of $\swei{b}{-s}_k$, $\omega$, $m$ and $s$. We thus have
\begin{align*}
\swei{u}{- s}_{\mc{I}^+}\swei{H}{- s}&=\sum_{k=0}^s\lp[\sum_{j=0}^{2s} \swei{b}{- s}_k\swei{c}{- s}_j r^{-k-j}+O\lp(r^{-2s-k-1}\rp)\rp]\chi\,,\\
\swei{u}{+ s}_{\mc{I}^+}\swei{H}{+ s}&=\sum_{k=0}^s \lp[\sum_{j=0}^{2s}\swei{b}{+ s}_k\swei{c}{+ s}_j r^{-k-j}+O\lp(r^{-2s-1-k}\rp)\rp]\chi \\
&\qquad+h\lp[\sum_{j=0}^{2s}\swei{c}{+ s}_j r^{-s-j}+O(r^{-3s-1})\rp]\mathbbm{1}_{\supp \chi'}\,,
\end{align*}
which is certainly integrable and where the terms denoted by $O$ can be explicitly computed. With this choice of $H^{[-s]}$, we finally obtain from \eqref{eq:Wronskian-identity-Green}
\begin{align*}
\lp(\swei{\mathfrak{W}}{\pm s}\rp)^{-2} &\lesssim_{C_\mc{A},M,|s|}\lp[\int_{2M}^\infty g(x)\lp|\tilde{H}\rp|^2dx\rp]\times\\
&\qquad\times\lp[\int_{R_1}^{R_2} \sum_{k=0}^s \lp|\sum_{j=0}^{2s}\swei{b}{\pm s}_k\swei{c}{\pm s}_j (\omega,m,\lambda)r^{-k-j}+O_{\omega,m,\lambda}\lp(r^{-2s-1-k}\rp)\rp|dr^*\rp.\\
&\quad\quad \lp.+\int_{R_1}^{R_2} |h|\sum_{j=0}^{2s} \lp|\swei{c}{+ s}_j (\omega,m,l)r^{-s-j}+O_{\omega,m,\lambda}\lp(r^{-3s-1}\rp)\rp|dr^*\rp]^{-2}\\
&\quad\lesssim_{C_\mc{A},M,|s|} 1\,, \numberthis \label{eq:Wronskian-explicit-bound}
\end{align*}
where $g(x)$ can be read off the statement of Proposition~\ref{proposition:estimates-u-tilde} and $\tilde{H}$ is given by \eqref{eq:H-tilde} or \eqref{eq:H-tilde-sub}  if $|a|<M$, under the identification $\hat{F}=\Delta^{-1+s/2}(r^2+a^2)^{3/2}\swei{H}{-s}$. We remark that the bound \eqref{eq:Wronskian-explicit-bound} can be explicitly computed because $R_1$, $R_2$, $\swei{b}{\pm s}_k$  and thus $\swei{H}{-s}$ have been chosen and, moreover, $\swei{c}{\pm s}_j(\omega,m,\lambda)$, $h$ and the dependence of $O_{\omega,m,\lambda}\lp(r^{-2s-1-k}\rp)$ on the frequency parameters are all determined uniquely by the condition that $\swei{c}{\pm s}_0=1$ and the theory of asymptotic analysis for the radial ODE.
\end{proof}

\section{Fixed-frequency solutions at zero frequency and at the superradiance threshold}
\label{sec:remark-threshold}

In this section, we discuss the fixed-frequency solutions characterized by the real frequencies our Theorems~\ref{thm:mode-stability-real-axis} and \ref{thm:quantitative} do not cover: $\omega=0$ and, if $|a|=M$, $\omega=m\upomega_+$. 

The most interesting question concerning such frequencies is determining what is the behavior of the Wronskian from Definition \ref{def:Wronskian} (and, hence, of the transmission and reflection coefficients from Definition~\ref{def:t-r-coeffs}) in the limit $\omega\to 0$ and, if $|a|=M$, $\omega\to m\upomega_+$. As there is no \textit{a priori} way of defining mode solutions at these frequencies that would correspond to ``continuous limits'' of Definition~\ref{def:mode-solution} when $\omega\to 0$ or $\omega\to m\upomega_+$; thus, this question is out of the scope of mode stability. However, in this section, we attempt to shed some light into these these limits: in Section~\ref{sec:finite-energy}, we discuss finite energy solutions at these frequencies, highlighting the similarities between the two limits; in Section~\ref{sec:quantitative-strengthening}, we obtain a quantitative lower bound on the decay rate for the Wronskian in the double limit $\omega\to m\upomega_+$ and $a\to M$.

\subsection{Finite energy solutions at zero frequency and at the superradiance threshold}
\label{sec:finite-energy}

Though there is no definition of mode solution that can be \textit{a priori} seen as continuous in the limits $\omega\to 0$ and, if $|a|=M$, $\omega\to m\upomega_+$, one can nonetheless investigate the existence of \textit{finite energy solutions} at such frequencies. The following two propositions show that, if $am\neq 0$, there there are no nontrivial solutions to the homogeneous radial ODE~\eqref{eq:radial-ODE} with finite energy along a hyperboloidal hypersuface (see Figure~\ref{fig:hypersurfaces}b).

\begin{proposition} \label{prop:mode-zero-freq} Fix $s\in\frac12\mathbb{Z}$, $M>0$ and $|a|\leq M$. Let $m$ be an admissible parameter with respect to $s$. Then, if $\omega=0$ and $a m\neq 0$ there is no nontrivial solution to the homogeneous radial ODE~\eqref{eq:radial-ODE} with finite energy along a hyperboloidal hypersuface (see Figure~\ref{fig:hypersurfaces}b).
\end{proposition}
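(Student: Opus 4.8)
The plan is to analyze the homogeneous radial ODE \eqref{eq:radial-ODE} at $\omega=0$, where it degenerates into an ODE with regular singular points, and to rule out the existence of a nontrivial solution satisfying both boundary conditions — the one at the horizon dictated by finite energy along a hyperboloidal hypersurface and the one at $r=\infty$ — when $am\neq 0$. First I would set $\omega=0$ in \eqref{eq:radial-ODE}: the equation collapses to
\begin{align*}
\Delta^{-s}\frac{d}{dr}\lp(\Delta^{s+1}\frac{d}{dr}\rp)R(r)+\lp(\frac{a^2m^2-2is(r-M)(-am)}{\Delta}\rp)R(r)-\lp(\lambda-2am\cdot 0\rp)R(r) = 0\,,
\end{align*}
i.e. (writing the leading irregularity of the zero-frequency operator explicitly) a Fuchsian equation on $(r_+,\infty)$ with regular singular points at $r=r_\pm$ and at $r=\infty$ — a hypergeometric-type equation. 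In the extremal case $|a|=M$ the two finite singular points merge at $r=M$; the term $a^2m^2/\Delta = M^2m^2/(r-M)^2$ makes $r=M$ a regular singular point with indicial exponents governed by $\pm iMm$ (together with the spin shift $-s$), so the relevant solutions oscillate like $(r-M)^{\pm i Mm - s}$ rather than decaying, which will be the mechanism forcing triviality.

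The key steps, in order, are: (i) perform the indicial analysis at $r=r_+$ (resp. $r=M$) for the $\omega=0$ equation, identifying the two characteristic exponents and noting that, since $am\neq 0$, these exponents have a nonzero imaginary part — crucially, neither of them is the ``trivial'' exponent that would be compatible with finite energy along a hyperboloidal slice through $\mc{H}^+$; (ii) translate the finite-energy condition on a hyperboloidal hypersurface into a boundary condition on $R$ as $r\to r_+$, analogous to the computation in the proof of \cref{lemma:smooth-extension-horizon}, showing that finite energy selects a specific behavior — effectively requiring $\Delta^s(r-r_+)^{\,-\xi|_{\omega=0}}(r^2+a^2)^{1/2}R$ or its $|a|=M$ analogue to be bounded — and observe that, at $\omega=0$ with $am\neq0$, this amounts to requiring $R$ to decay faster than \emph{both} admissible exponents allow, hence $R$ must vanish near $r_+$; (iii) invoke a unique continuation argument for the second-order linear ODE (the coefficients are smooth on $(r_+,\infty)$) — for instance \cref{lemma:unique-continuation} applied after the $u$-rescaling $u=(r^2+a^2)^{1/2}\Delta^{s/2}R$ and passage to $r^*$ coordinates, using that the rescaled potential is real and has the requisite decay at the horizon end — to conclude that vanishing to infinite order (or, more robustly, vanishing of $(|u|+|u'|)$ at one end) propagates and forces $R\equiv 0$ on the whole interval.

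The main obstacle I anticipate is step (ii): making precise \emph{which} boundary behavior at the horizon is imposed by finiteness of the hyperboloidal energy at $\omega=0$, and checking carefully that it is strictly incompatible with \emph{both} indicial exponents of the degenerate equation (so that one genuinely forces $R\to 0$, not merely the exclusion of one of the two solutions). This is delicate because the energy functional for $s\neq 0$ is the weighted one of \cite{Dafermos2017} rather than the $\p_t$-energy, and because at $\omega=0$ the exponents $\pm iMm$ (extremal) or the pair built from $\xi|_{\omega=0} = \tfrac{iMm}{\sqrt{M^2-a^2}}\cdot(\text{sign factor})$ and its partner (subextremal) are purely imaginary, so \emph{both} solutions have $|R|$ bounded and nonvanishing — the finite-energy selection must rule out oscillatory-but-nondecaying behavior, which requires quantifying the weight $\Delta^{s}$ carefully against the measure appearing in the energy. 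Once this incompatibility is established, steps (i) and (iii) are routine: (i) is a standard Frobenius computation and (iii) follows verbatim from \cref{lemma:unique-continuation} after noting the rescaled potential at $\omega=0$ is real-valued (the imaginary, $s$-dependent part of $V$ in \eqref{eq:Teukolsky-potential} is proportional to $\omega$ plus a term proportional to $am(r-M)$, which must be handled — possibly by instead appealing directly to an asymptotic-analysis argument of the type of \cref{lemma:R-general-asymptotics} rather than \cref{lemma:unique-continuation} — and I would fall back to the latter if the potential fails to be real).
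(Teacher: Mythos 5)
There is a genuine gap, and it sits exactly where you anticipated trouble: step (ii) is false. At $\omega=0$ the horizon $r=r_+$ is still a regular singular point in the subextremal case (nothing degenerates there; what degenerates at $\omega=0$ is the point $r=\infty$, which becomes a \emph{regular} singular point instead of an irregular one), and the two indicial exponents are $\xi-s$ and $-\xi$ with $\xi=i\,am/(r_+-r_-)$ purely imaginary. The branch $R\sim(r-r_+)^{\xi-s}$ is precisely the one that extends smoothly to $\mc{H}^+$ after passing to Kerr-star coordinates — this is the computation in the proof of \cref{lemma:smooth-extension-horizon}, where the phase $e^{i(\omega\overline{t}-m\overline{\phi})}\sim(r-r_+)^{-\xi}$ exactly cancels the oscillation — and it therefore has perfectly finite energy on any hyperboloidal slice through the horizon. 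So the finite-energy condition at the horizon selects one branch; it does not force $R\to 0$ there, and no contradiction can be extracted from local analysis at $r=r_+$ alone. (A secondary error: in the extremal case with $m\neq 0$ one has $\beta=-iam\neq 0$ at $\omega=0$, so $r=M$ remains an \emph{irregular} singular point of rank $1$ with essential-singularity behavior $e^{\pm\beta(r-M)^{-1}}$, not a regular singular point with exponents $\pm iMm$ as you claim.)

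What is missing is the global ingredient that links the two ends. The paper's proof works at $r=\infty$: at $\omega=0$ the exponents there are $(s+\tfrac12)\pm\sqrt{s^2+\tfrac14+\lambda}$, which are \emph{real}, and finiteness of the hyperboloidal energy (via \cite[Lemma D.4]{Shlapentokh-Rothman2015} and its spin-weighted analogue) kills the growing branch, forcing genuine power-law decay of $R$ and $R'$ as $r\to\infty$. This makes the conserved Wronskian current $W[\swei{u}{+s},\overline{\swei{u}{-s}}]$ of \cref{prop:wronskian-identity} (which replaces the $T$-current for $s\neq 0$ and is conserved since $\omega=0$ is real) vanish at $r^*=+\infty$; by conservation it vanishes at $r^*=-\infty$, where it equals $2im\upomega_+\,|\Delta^{\pm s/2}\swei{u}{\pm s}|^2(-\infty)$. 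The hypothesis $am\neq 0$ enters only here, giving $u(-\infty)=0$ — which \emph{is} incompatible with both indicial exponents at the horizon — and then \cref{lemma:unique-continuation} (whose hypotheses hold since $\omega^2-V\to(m\upomega_+)^2\neq 0$ at the horizon) concludes $u\equiv 0$. Your step (iii) is fine as the final move, but without the energy identity transporting the decay at infinity to vanishing at the horizon, the argument cannot close.
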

\begin{proof}
For $\omega=0$, the homogeneous radial ODE~\eqref{eq:radial-ODE} has a regular singularity at $r=\infty$ (contrast with the irregular singularity for $\omega\neq 0$). An asymptotic analysis shows that solutions to the homogeneous radial ODE~\eqref{eq:radial-ODE} are given by  (see \cite[Chapter 5]{Olver1973}, e.g.) the following expansion as $r\to \infty$,
\begin{align*}
\swei{R}{s}(r)&= r^{\lp(s+\frac12\rp)-\sqrt{s^2+\frac14+\lambda}}\lp[c_1+O(r^{-1})\rp]+r^{\lp(s+\frac12\rp)+\sqrt{s^2+\frac14+\lambda}}\lp[c_2+O(r^{-1})\rp]\,,
\end{align*}
if $\sqrt{s^2+\frac14+\lambda}\neq 0$, and if, $\sqrt{s^2+\frac14+\lambda}=0$,
\begin{align*}
\swei{R}{s}(r)&= r^{s+\frac12}\lp[\frac{c_1}{r}-c_2\log r  +O(r^{-2})\rp]\,.
\end{align*}

By \cite[Lemma D.4]{Shlapentokh-Rothman2015}, we find that, for $s=0$, $R$ corresponds to a solution of finite energy if and only if
\begin{align*}
\int_{r_0}^\infty \lp|\frac{\Delta}{r^2+a^2}\frac{d}{dr}R\rp|^2r^2 dr <\infty
\end{align*}
for some $r_0>3M$ large, which holds only if $\lambda+1/4>1$ and $c_2=0$. The finite energy condition can be generalized for $s\in\frac12\mathbb{Z}$, if we take into account the properties of the algebraically special frame used to derive the Teukolsky equations: we require $\lambda+s^2+1/4\geq (1+|s|)^2$. (We note that, choosing $\lambda=\bm\uplambda$ from Proposition~\ref{def:angular-ode}, we have $\bm\uplambda+s^2= l(l+1)$ if $\omega=0$, hence $\bm\uplambda+s^2+1/4\geq (1+|s|)^2 \Leftrightarrow l\geq |s|+1$.) For $s\geq 0$, the Wronskian in Proposition~\ref{prop:wronskian-identity} becomes
\begin{align*}
W\lp[\swei{u}{+s},\overline{\swei{u}{-s}}\rp]&=\Delta^{-s+1}\lp(\frac{d}{dr}\lp(\Delta^s\swei{R}{+s}\rp)\overline{\swei{R}{-s}}-\lp(\Delta^s\swei{R}{+s}\rp)\frac{d}{dr}\overline{\swei{R}{-s}}\rp)\\
&=2\lp(1+2s-\sqrt{\lambda+s^2+1/4}\rp)r^{2s-2\sqrt{\lambda+s^2+1/4}}\text{~ as }r\to\infty\,,
\end{align*}
 so, under the finite energy assumption, $W(+\infty)=0$. By conservation of the Wronskian, (see proof of Proposition~\ref{prop:wronskian-identity})
  \begin{align*}
0= W(\infty)=W(-\infty)=2im\upomega_+ |\Delta^{\pm s/2} \swei{u}{\pm s}|(-\infty)\,.
\end{align*}
If $a\neq 0$ and $m\neq 0$, we can use a unique continuation argument as in Lemma~\ref{lemma:unique-continuation} to infer that $u\equiv 0$. 
\end{proof}

\begin{proposition} \label{prop:mode-superrad-threshold} Fix $s\in\frac12\mathbb{Z}$, $M>0$ and $|a|= M$. Let $m$ be an admissible parameter with respect to $s$. Then, if $\omega=m\omega_+$ and $m\neq 0$ there is no nontrivial solution to the homogeneous radial ODE~\eqref{eq:radial-ODE} with finite energy along a hyperboloidal hypersurface (see Figure~\ref{fig:hypersurfaces}b).
\end{proposition}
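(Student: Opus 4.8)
\textbf{Proof proposal for Proposition~\ref{prop:mode-superrad-threshold}.} The plan is to mirror the structure of the proof of Proposition~\ref{prop:mode-zero-freq}, but with the behaviour at $r=M$ playing the role that the behaviour at $r=\infty$ played there. The key point is that when $|a|=M$ and $\omega=m\upomega_+$, we have $\beta=2iM^2(\omega-m\upomega_+)=0$, so the irregular singularity of rank $1$ of the homogeneous radial ODE~\eqref{eq:radial-ODE} (equivalently~\eqref{eq:radial-ODE-alpha-beta-gamma}) at $r=M$ degenerates: with $\beta=0$, the leading exponential factor $e^{\beta(r-M)^{-1}}$ disappears and $r=M$ becomes a \emph{regular} singular point. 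An asymptotic (Frobenius) analysis as in \cite[Chapter 5]{Olver1973} then shows that solutions behave, as $r\to M$, like a superposition of $(r-M)^{\sigma_+}$ and $(r-M)^{\sigma_-}$ (with a possible logarithmic correction when the two indicial exponents coincide or differ by an integer), where $\sigma_\pm$ are the roots of the indicial equation of~\eqref{eq:radial-ODE-alpha-beta-gamma} at $r=M$ with $\beta=0$ and $\omega=m\upomega_+$. First I would carry out this indicial computation explicitly, obtaining $\sigma_\pm$ in terms of $s$, $m$, and $\lambda$ (and, at this frequency, $L=\lambda+a^2\omega^2-2am\omega$).

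Next I would impose the finite-energy condition along a hyperboloidal hypersurface (Figure~\ref{fig:hypersurfaces}b). As in the proof of Proposition~\ref{prop:mode-zero-freq}, one invokes \cite[Lemma D.4]{Shlapentokh-Rothman2015} and its generalisation to $s\in\frac12\mathbb{Z}$ taking into account the weights of the algebraically special frame: finiteness of the energy near $r=M$ translates into a lower bound on the admissible indicial exponent, which forces the coefficient of the ``bad'' branch (the one with the smaller real part of the exponent, or the logarithmic one) to vanish. This singles out the branch of $\swei{R}{s}$ that is regular enough at $r=M$. I would then compute the Wronskian $W[\swei{u}{+s},\overline{\swei{u}{-s}}]$ from Proposition~\ref{prop:wronskian-identity} evaluated at $r=M$ using these restricted asymptotics: under the finite-energy assumption, the surviving power of $(r-M)$ in $W$ has strictly positive real part, so $W(-\infty)=0$ (recall $r^*\to-\infty$ corresponds to $r\to M$).

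Finally, by conservation of the Wronskian when $\omega\in\mathbb{R}$ (see the proof of Proposition~\ref{prop:wronskian-identity}), $W$ vanishes identically; evaluating instead at $r=\infty$ and using the asymptotics~\eqref{eq:model-solutions-series} there, $W(+\infty)$ is a nonzero multiple of $2i\omega\,|\Delta^{\pm s/2}\swei{u}{\pm s}|(+\infty)$, which at $\omega=m\upomega_+\neq 0$ (here $m\neq 0$ and $a=M\neq 0$, so $\omega=m\upomega_+=m/(4M^2)\neq 0$) forces $\swei{u}{\pm s}(+\infty)=0$. A unique continuation argument as in Lemma~\ref{lemma:unique-continuation} — applied at the $r=\infty$ end, where the Teukolsky potential~\eqref{eq:Teukolsky-potential} approaches $\omega^2\neq 0$ with an integrable error — then yields $\swei{u}{\pm s}\equiv 0$, hence $\swei{R}{s}\equiv 0$.

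The main obstacle I anticipate is the indicial analysis at $r=M$ in the degenerate case $\beta=0$: one must be careful that this is genuinely a regular singular point after the degeneration (the rank-$1$ irregular structure collapses), identify the two indicial exponents correctly, and handle the resonant case in which they coincide or differ by an integer (producing a logarithm), since exactly this resonant structure is what must be excluded by the finite-energy requirement. A secondary technical point is pinning down the precise power-weight in the generalised finite-energy criterion for $s\in\frac12\mathbb{Z}$ near $r=M$, analogous to the condition $\lambda+s^2+1/4\geq(1+|s|)^2$ used at infinity in Proposition~\ref{prop:mode-zero-freq}; this requires tracking the frame weights carefully but should be routine given the references. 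Once the $r=M$ asymptotics and the energy criterion are in place, the Wronskian-conservation and unique-continuation steps are essentially identical to those already used for $\omega=0$.
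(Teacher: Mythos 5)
Your proposal is correct and follows essentially the same route as the paper: the degeneration of the rank-$1$ irregular singularity at $r=M$ to a regular singular point when $\beta=0$, the Frobenius analysis giving exponents $-(s+\tfrac12)\pm\sqrt{s^2+\tfrac14+\lambda+M^2\omega^2-2m^2}$, the finite-energy selection of the good branch, vanishing of the conserved Wronskian at $r=M$, and unique continuation from the $r=\infty$ end. The only cosmetic difference is that the paper implements the finite-energy criterion at the horizon via regularity of the solution in Kerr-star coordinates and the horizon energy flux, rather than via the analogue of \cite[Lemma D.4]{Shlapentokh-Rothman2015} used at infinity.
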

\begin{proof}
For $\omega=m\upomega_+$,  the homogeneous radial ODE~\eqref{eq:radial-ODE} has a regular singularity at $r=\infty$ (contrast with the irregular singularity for $\omega\neq 0$). An asymptotic analysis shows that solutions to the homogeneous radial ODE~\eqref{eq:radial-ODE} are given by  (see \cite[Chapter 5]{Olver1973}, e.g.) the following expansion as $r\to M$,
\begin{align*}
\swei{R}{s}(r)&= (r-M)^{-\lp(s+\frac12\rp)+\sqrt{s^2+\frac14+\lambda+M^2\omega^2-2m^2}}\lp[c_1+O(r^{-1})\rp]\\
&\qquad +(r-M)^{-\lp(s+\frac12\rp)-\sqrt{s^2+\frac14+\lambda+M^2\omega^2-2m^2}}\lp[c_2+O(r^{-1})\rp]\,,
\end{align*}
if $\sqrt{s^2+\frac14+\lambda}\neq 0$, and if, $\sqrt{s^2+\frac14+\lambda}=0$,
\begin{align*}
\swei{R}{s}(r)&= (r-M)^{-s-\frac12}\lp[c_1(r-M)+c_2\log (r-M)  +O(r^{-2})\rp]\,.
\end{align*}

By the proof of Lemma~\ref{lemma:smooth-extension-horizon}, in Kerr-star coordinates, which are regular along $\mc{H}^+$
\begin{align*}
\swei{\upalpha}{s}(r) = (r-M)^{2iM\omega}\swei{R}{s}(r)e^{-i\omega t^*}e^{im \phi^*}S^{[s],\,(a\omega)}_{m\lambda}(\theta)
\end{align*}
For $s=0$, the energy flux along the future event horizon controls the integral of $\lp|\p_t^*\upalpha\rp|^2$ along $\mc{H}^+$. Hence, a finite energy solution must have $\lp|\p_t^*\upalpha\rp|<\infty$ at $r=M$, so we require $c_2=0$ and $\lambda+M^2\omega^2-2m^2\geq 0$. If we take into account the properties of the algebraically special frame used to derive the Teukolsky equations, the latter condition is generalized for $s\in\frac12\mathbb{Z}$ as follows (see also \cite{Richartz2017}):
\begin{align*}
\sqrt{s^2+\frac14+\lambda+M^2\omega^2-2m^2}\geq |s|+\frac12\Leftrightarrow \lambda+M^2\omega^2-2m^2+s^2\geq 0\,.
\end{align*}

For $s\geq 0$, the Wronskian in Proposition~\ref{prop:wronskian-identity} becomes
\begin{align*}
W\lp[\swei{u}{+s},\overline{\swei{u}{-s}}\rp]&=\Delta^{-s+1}\lp(\frac{d}{dr}\lp(\Delta^s\swei{R}{+s}\rp)\overline{\swei{R}{-s}}-\lp(\Delta^s\swei{R}{+s}\rp)\frac{d}{dr}\overline{\swei{R}{-s}}\rp)\\
&=2\lp(-\frac12 +s+\sqrt{s^2+\frac14+\lambda+M^2\omega^2-2m^2}\rp)\times \\
&\qquad\times (r-M)^{2\sqrt{s^2+\frac14+\lambda+M^2\omega^2-2m^2}}\lp(1+O(r-M)\rp)\,,
\end{align*}
as $r\to M$, so, under the finite energy assumption, $W(-\infty)=0$. By conservation of the Wronskian, (see proof of Proposition~\ref{prop:wronskian-identity})
  \begin{align*}
0= W(-\infty)=W(+\infty)=2i\omega \lp|\Delta^{\pm s/2} \swei{u}{\pm s}\rp|(+\infty)\,.
\end{align*}
If $\omega=m\upomega_+\neq 0$, we can use a unique continuation argument as in Lemma~\ref{lemma:unique-continuation} to infer that $u\equiv 0$. 
\end{proof}

It is important to keep in mind that, while providing some reassurance, \textbf{Propositions~\ref{prop:mode-zero-freq} and \ref{prop:mode-superrad-threshold} cannot be used to infer quantitative bounds on the Wronskian} from Definition~\ref{def:Wronskian} in the limits $\omega\neq 0$ or $\omega\neq m\upomega_+$, if $|a|=M$, in Theorem~\ref{thm:quantitative}. 

\subsection{A quantitative lower bound on the decay rate of the Wronskian near the superradiant threshold}
\label{sec:quantitative-strengthening}

Indeed, to understand the rate of blow-up (or not) of the inverse of the Wronskian in these limits, another approach is needed: for instance, applying multiplier currents to the radial ODE~\ref{eq:radial-ODE} that are well-adapted to the limits $\omega\neq 0$ or $\omega\neq m\upomega_+$ or, alternatively, tracking the dependence of $|\omega|^{-1}$ and $|\omega-m\upomega_+|^{-1}$ in the proof of Theorem~\ref{thm:quantitative}.

The use of multiplier currents  has proved sucessful in the case $s=0$ and $|a|<M$: the complete scattering theory obtained in \cite{Dafermos2014}, which implies, in particular, that the Wronskian is at least of size $|\omega|$ in the limit $\omega\to 0$, was obtained by use of multiplier currents, adapted to the  radial ODE for small $\omega$, in \cite[Propositions 8.7.1 to 8.7.3]{Dafermos2016b}. The proof given there can be easily be extended to cover the full range $|a|\leq M$ and $\omega$ small for $s=0$. Similarly, multiplier currents can be employed to show that the Wronskian is at least of size $|\omega-m\upomega_+|$ in the limit $\omega\to m\upomega_+$, for $|a|=M$ and $s=0$. These techniques are  out of the scope of the present paper and will appear elsewhere. 

On the other hand, a closer look into the proof of Theorem~\ref{thm:quantitative} reveals that, while divisions by $|\omega|$ are abundant, divisions by $|\omega-m\upomega_+|$ are relatively scarce. Thus, one can easily deduce 
\begin{proposition}[Stronger quantitative bound on the Wronskian] Fix $M>0$ and $s\in\frac12\mathbb{Z}$. Let $\mc{A}'$ be a set of frequency parameters $(\omega,m,\lambda)$ admissible with respect to $s$ such that $\omega$ is real and
\begin{align*}
C_{\mc{A}'}&:=\sup_{(\omega,m,l)\in\mc{A}'}\lp(|\omega|+|\omega|^{-1}+|m|+\lp|\lambda\rp|\rp)<\infty\,.
\end{align*}
Then
\begin{align}
\begin{gathered}
\sup_{(a,\,\omega,\,m,\,\lambda)\in[-M,M]\times\mc{A}'}\lp(\lp|\omega-m\upomega_+\rp|^{1-2s}\delta_{|a|,M}+1\rp)\lp|\swei{\mathfrak{W}}{s}\rp|^{-2}\\
\leq G(C_{\mc{A}'},M,|s|)<\infty\,,
\end{gathered}\label{eq:Wronskian-bounded-strong}
\end{align}
where $G$ will be given in an explicitly computable manner by \eqref{eq:Wronskian-explicit-bound-strong}.
\label{prop:quantitative-strong}
\end{proposition}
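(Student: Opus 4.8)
\textbf{Proof strategy for Proposition~\ref{prop:quantitative-strong}.} The plan is to run the proof of Theorem~\ref{thm:quantitative} exactly as written, but to bookkeep the powers of $|\omega - m\upomega_+|$ that appear, rather than the powers of $|\omega|$. Recall that that proof produces, via the Green's function representation in Lemma~\ref{lemma:Greens-function} and the identity \eqref{eq:Wronskian-identity-Green}, a bound of the schematic form $\lp(\swei{\mathfrak{W}}{\pm s}\rp)^{-2} \lesssim [\text{estimate for } \tilde{u}(+\infty) \text{ in terms of } \tilde{H}] \times [\text{explicit integral of } u^{[\pm s]}_{\mc{I}^+}\, H^{[\pm s]}]^{-1}$, where $\tilde{H}$ is the transform of the chosen inhomogeneity and the second factor is a completely explicit quantity built from a hand-picked $H^{[-s]}$ supported far from $r=r_+$ and $r=\infty$. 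The key input is Proposition~\ref{proposition:estimates-u-tilde}, which estimates $|(f(x)\tilde u)(+\infty)|^2$ by $G(C_{\mc A},M,|s|)\int g(x)|\tilde H|^2 \zeta_1\,dx^*$, together with the crucial identity \eqref{eq:crucial-identity} relating $|\Delta^{-s/2}u^{[-s]}(-\infty)|^2$ to $|(f(x^*)\tilde u)(+\infty)|^2$. First I would isolate, in each of these ingredients, the exact exponent of $|\omega - m\upomega_+|$ that enters.

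The dependence on $|\omega - m\upomega_+|$ enters in two places. First, in the boundary conditions in statement 4 of Propositions~\ref{prop:ode-u-tilde} and~\ref{prop:ode-u-tilde-sub}: in the extremal case, statements 4(b)(ii) and 4(b)(iii) convert the boundary term of $\tilde u$ at $+\infty$ into $|\Delta^{s/2}u|(-\infty)$ with a factor $\lp|2M^3(\omega-m\upomega_+)/\omega\rp|^{1/2-2s}$ or $\lp|2M^3(\omega-m\upomega_+)/\omega\rp|^{1/2-s}$ respectively, while the subextremal case (statement 4(b) of Proposition~\ref{prop:ode-u-tilde-sub}) carries no $(\omega-m\upomega_+)$ factor at all. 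Second, in the $T$-current computation at the horizon inside the proof of Proposition~\ref{proposition:estimates-u-tilde}, $\tilde Q^T(-\infty)$ is controlled using the boundary conditions at $x^*=-\infty$; here again one tracks the power of $\omega-m\upomega_+$ that appears via $\omega_0^2=\tilde V(-\infty)$ and via the relation between $\tilde u(-\infty)$ and $u(-\infty)$. Collecting these, the only genuinely negative power of $|\omega-m\upomega_+|$ that survives comes from the factor $\lp|(\omega-m\upomega_+)\rp|^{1/2-2s}$ (extremal, spin $+s$) or $\lp|(\omega-m\upomega_+)\rp|^{1/2-s}$ (extremal, spin $-s$); squaring, as in \eqref{eq:Wronskian-identity-Green}, and combining with the Teukolsky--Starobinsky rescaling (which in the extremal case multiplies by the constants $\mathfrak{C}_s^{(2)}=(-2\beta)^{2s}$, $\mathfrak{C}_s^{(6)}=(2\beta)^{2s}$ with $\beta = 2iM^2(\omega-m\upomega_+)$, contributing further powers of $|\omega-m\upomega_+|$), one finds that $\lp|\swei{\mathfrak W}{s}\rp|^{-2}$ is bounded by an explicit $G(C_{\mc A'},M,|s|)$ times a \emph{non-negative} power of $|\omega - m\upomega_+|$ — precisely $|\omega-m\upomega_+|^{2s-1}$ in the extremal case. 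That is exactly the content of \eqref{eq:Wronskian-bounded-strong}, where the prefactor $\lp(|\omega-m\upomega_+|^{1-2s}\delta_{|a|,M}+1\rp)$ appears on the left and the Kronecker delta restricts the extra gain to $|a|=M$.

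Concretely, the steps in order are: (1) restate \eqref{eq:Wronskian-identity-Green}, \eqref{eq:crucial-identity} and the estimate \eqref{eq:estimate-u-tilde}, now carrying the $|\omega-m\upomega_+|$-dependent constants from statements 4(b)(ii)--(iii) of Proposition~\ref{prop:ode-u-tilde} and from statement 4(b) of Proposition~\ref{prop:ode-u-tilde-sub} explicitly rather than absorbing them into $G$; (2) re-examine the $\hat y$- and $T$-current estimates in the proof of Proposition~\ref{proposition:estimates-u-tilde}, verifying that no division by $|\omega-m\upomega_+|$ occurs there beyond what is already tracked — the weights $\zeta_1$, $f$, $g$ and the constants $C_1, C_2, C_3$ depend only on $C_{\mc A'}$, $M$, $|s|$ and the \emph{sign} of $\omega(\omega-m\upomega_+)$; (3) feed the hand-picked inhomogeneity $H^{[-s]}=\sum_{k=0}^s \swei{b}{-s}_k e^{-i\omega r} r^{-2iM\omega - s - k}\chi$ of the Theorem~\ref{thm:quantitative} proof back in, so that the explicit second factor in \eqref{eq:Wronskian-identity-Green} is unchanged; (4) collect powers of $|\omega-m\upomega_+|$ from the explicit constants $\mathfrak{C}_s^{(2)}$, $\mathfrak{C}_s^{(6)}$ and the boundary-condition factors, arriving at
\begin{align}
\lp(\lp|\omega-m\upomega_+\rp|^{1-2s}\delta_{|a|,M}+1\rp)\lp(\swei{\mathfrak{W}}{\pm s}\rp)^{-2}\lesssim_{C_{\mc A'},M,|s|} 1\,, \label{eq:Wronskian-explicit-bound-strong}
\end{align}
which is \eqref{eq:Wronskian-bounded-strong}. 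I expect the main obstacle to be purely bookkeeping: one must be careful that in the extremal, \emph{non}-superradiant case $\omega(\omega-m\upomega_+)>0$ the relevant boundary-condition factor is $\lp|2M^3(\omega-m\upomega_+)/\omega\rp|^{1/2-2s}$, whose exponent $1/2-2s$ is negative for $s>0$ — so the gain genuinely runs the "right" way only after squaring and combining with $\mathfrak{C}_s$-factors — and to confirm that the integral $\int g(x)|\tilde H|^2\zeta_1\,dx^*$ on the right of \eqref{eq:estimate-u-tilde} does not secretly reintroduce a negative power of $|\omega-m\upomega_+|$ through the exponential weight $\exp(8\sqrt{-2M\omega(\omega-m\upomega_+)}x^{1/2})$ in $g$; but by Remark~\ref{rmk:finite-u-tilde-estimates} and Lemma~\ref{lemma:technical-core-mode-stability-extremal} this integral is finite with a constant depending only on the allowed data, so no such division appears. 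No division by $|\omega-m\upomega_+|$ is incurred anywhere else, which is why the statement can be strengthened precisely in this way while $|\omega|^{-1}$ must be retained in $C_{\mc A'}$.
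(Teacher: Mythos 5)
Your overall strategy is the paper's own: re-run the proof of Theorem~\ref{thm:quantitative}, tracking powers of $|\omega-m\upomega_+|$ through the Green's function identity \eqref{eq:Wronskian-identity-Green}, the boundary-condition factors of statement 4(b) of Proposition~\ref{prop:ode-u-tilde}, the Teukolsky--Starobinsky constants $\mathfrak{C}_s^{(2)},\mathfrak{C}_s^{(6)}\sim|\beta|^{2s}$, and the estimate of Proposition~\ref{proposition:estimates-u-tilde}; and your final exponent agrees with \eqref{eq:Wronskian-bounded-strong}. But your bookkeeping misses the one genuinely delicate point. The reason Proposition~\ref{prop:quantitative-strong} does not follow by simply quoting \eqref{eq:estimate-u-tilde} is that its constant $G(C_{\mc{A}},M,|s|)$ is permitted to depend on $|\omega-m\upomega_+|^{-1}$ through $C_{\mc{A}}$, whereas $C_{\mc{A}'}$ excludes that quantity; the hidden division sits at $x^*=+\infty$, not at $x^*=-\infty$ where your step (2) proposes to look. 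Indeed, the boundary fluxes controlled in the proof of Proposition~\ref{proposition:estimates-u-tilde} are $\tilde{Q}^T(+\infty)=4|\omega|\sqrt{2M\omega(\omega-m\upomega_+)}\,|(x^{-1/4}\tilde{u})(+\infty)|^2$ in the non-superradiant case and $\tilde{Q}^h(+\infty)=-4\sqrt{-\beta\gamma/M}\,|(f\tilde{u})(+\infty)|^2$ in the superradiant case, so isolating $|(f\tilde{u})(+\infty)|^2$ costs a factor $|\omega-m\upomega_+|^{-1/2}$. The paper's resolution is to keep that factor on the left, i.e.\ to record the strengthened estimate $|\omega-m\upomega_+|^{1/2}|(f\tilde{u})(+\infty)|^2\lesssim_{C_{\mc{A}'},M,|s|}\int g|\tilde{H}|^2\zeta_1\,dx^*$, namely \eqref{eq:estimate-u-tilde-stronger}. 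Nothing at $x^*=-\infty$ contributes: there $\tilde{V}(-\infty)\propto\omega^2$ and the boundary condition 4(b)(i) involves only $\omega$. As written, your verification would inspect the clean end, conclude nothing needs fixing, and be unable to account for the half power separating the exponent $1-2s$ of the statement from the $1/2-2s$ that the unweighted chain produces.

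Two smaller corrections to the exponent count. First, the exponents $1/2-2s$ and $1/2-s$ in statements 4(b)(ii) and 4(b)(iii) of Proposition~\ref{prop:ode-u-tilde} distinguish the cases $\omega(\omega-m\upomega_+)>0$ and $\omega(\omega-m\upomega_+)<0$, not the two signs of the spin: that proposition is stated only for $s\leq 0$, and the spin-$(+s)$ Wronskian is reached in \eqref{eq:Wronskian-identity-Green} by applying $\lp(\mc{D}_0^-\rp)^{2s}$ to $u^{[-s]}$ at the horizon, which is precisely where $|\mathfrak{C}_s^{(2)}|^2\sim|\omega-m\upomega_+|^{4s}$ enters. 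Second, no additional squaring of the boundary-condition factor is needed: statement 4(b)(ii) already relates the squared amplitudes $|(x^{-1/4}\tilde{u})(+\infty)|^2$ and $|(\Delta^{s/2}u)(-\infty)|^2$, so after the substitution $s\mapsto-s$ it yields \eqref{eq:crucial-identity-strong} directly with exponent $-1/2-2s$; squaring it again, as your step (4) suggests, would spoil the count even though you quote the correct final exponent.
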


\begin{remark}
It is important to highlight already the difference between Theorem~\ref{thm:quantitative} and Proposition~\ref{prop:quantitative-strong}. 

Theorem~\ref{thm:quantitative} should be seen as containing exactly the same information as mode stability or, concretely, Corollary~\ref{cor:nonexplicit-quantitative-mode-stability} (the result is a bound on the Wronskian directly which holds only away from frequencies excluded from the mode stability statement, Theorem~\ref{thm:mode-stability-real-axis}), albeit presented in an explicitly computable (in terms of the frequency range $\mc{A}$, the black hole mass $M$ and the Teukolsky spin $s$) manner.

By contrast, Proposition~\ref{prop:quantitative-strong} goes {\normalfont beyond} mode stability, by providing an upper bound for the rate of blow-up of $|\mathfrak{W}|^{-1}$ in the double limit $\omega\to m\upomega_+$ and $|a|\to M$.  
\end{remark}

An immediate consequence is a lower bound for the decay rate of the transmission coefficients in scattering
\begin{corollary}[of Proposition~\ref{prop:quantitative-strong}]\label{cor:bddness-scattering-stronger}
Fix $M>0$ and $s\in\{0,\frac12 ,1,\frac32 ,2\}$. Let $\mc{A}'$ be the set of frequency parameters $(\omega,m,\lambda)$ introduced in Proposition~\ref{prop:quantitative-strong}. Recall the transmission coefficients introduced in Definition~\ref{def:t-r-coeffs}. We have
\begin{align*}
\lp(\lp|\omega-m\upomega_+\rp|^{\mp(1+2s)}\delta_{|a|,M}+1)+1\rp)\lp|\swei{\mathfrak{{T}}}{\pm s}\rp|^2\leq G(C_{\mc{A}'},M,|s|)<\infty\,,\\
\lp(\lp|\omega-m\upomega_+\rp|^{\pm(1-2s)}\delta_{|a|,M}+1)+1\rp)\lp|\swei{\tilde{\mathfrak{T}}}{\pm s}\rp|^2\leq G(C_{{\mc{A}'}},M,|s|)<\infty\,,
\end{align*}
where $G$ can be given explicitly.
\end{corollary}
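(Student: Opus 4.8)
The plan is to obtain Corollary~\ref{cor:bddness-scattering-stronger} as a direct consequence of Proposition~\ref{prop:quantitative-strong} together with the explicit formulas \eqref{eq:def-t-r-2} for the transmission coefficients. Recall from Lemma~\ref{def:t-r-coeffs} that, for $s\geq 0$, we have $\swei{\mathfrak{T}}{s}=-2i(\omega-m\upomega_+)/\swei{\mathfrak{W}}{s}$ and $\swei{\tilde{\mathfrak{T}}}{s}=-2i\omega/\swei{\mathfrak{W}}{s}$, while for $s<0$ the roles of the tilded and non-tilded coefficients are interchanged. Thus, in all cases, a transmission coefficient is, up to a factor of $2$, equal to either $(\omega-m\upomega_+)\lp(\swei{\mathfrak{W}}{s}\rp)^{-1}$ or $\omega\lp(\swei{\mathfrak{W}}{s}\rp)^{-1}$.

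First I would translate the bound \eqref{eq:Wronskian-bounded-strong} of Proposition~\ref{prop:quantitative-strong} into bounds on these quantities. For the non-superradiant piece, on $\mc{A}'$ we have $|\omega|\leq C_{\mc{A}'}$, so
\begin{align*}
\lp(\lp|\omega-m\upomega_+\rp|^{1-2s}\delta_{|a|,M}+1\rp)\lp|\omega\,\lp(\swei{\mathfrak{W}}{s}\rp)^{-1}\rp|^2\leq C_{\mc{A}'}^2\, G(C_{\mc{A}'},M,|s|)\,,
\end{align*}
which, combined with the identification $\swei{\tilde{\mathfrak{T}}}{s}=-2i\omega\lp(\swei{\mathfrak{W}}{s}\rp)^{-1}$ for $s\geq 0$ (respectively $\swei{\mathfrak{T}}{s}$ for $s<0$), gives the stated bound on the coefficient carrying the factor of $\omega$ in its numerator. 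For the piece carrying the factor $(\omega-m\upomega_+)$, I would multiply \eqref{eq:Wronskian-bounded-strong} through by $\lp|\omega-m\upomega_+\rp|^{2}$: away from the superradiant threshold this introduces an extra power and yields
\begin{align*}
\lp(\lp|\omega-m\upomega_+\rp|^{3-2s}\delta_{|a|,M}+\lp|\omega-m\upomega_+\rp|^2\rp)\lp|\swei{\mathfrak{W}}{s}\rp|^{-2}\leq \lp|\omega-m\upomega_+\rp|^2 G(C_{\mc{A}'},M,|s|)\,,
\end{align*}
and on $\mc{A}'$ one has $\lp|\omega-m\upomega_+\rp|\leq |\omega|+|m||\upomega_+|\leq C_{\mc{A}'}+ |m|/(2M)\lesssim_{C_{\mc{A}'},M} 1$, so the right-hand side is again controlled by an explicitly computable constant. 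This produces a bound of the form $\lp(\lp|\omega-m\upomega_+\rp|^{1-2s}\delta_{|a|,M}+1\rp)\lp|(\omega-m\upomega_+)\lp(\swei{\mathfrak{W}}{s}\rp)^{-1}\rp|^2\leq G$; rewriting the exponent $1-2s$ in terms of $\mp(1+2s)$ when matching with $\swei{\mathfrak{T}}{\pm s}$ versus $\swei{\tilde{\mathfrak{T}}}{\pm s}$, and likewise $1-2s$ as $\pm(1-2s)$ for the tilded case, one arrives at precisely the two displayed inequalities, with the $\delta_{|a|,M}$ factor tracking whether we are in the extremal case.

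The bookkeeping of signs and exponents across the four cases $s\geq 0$/$s<0$ and tilded/non-tilded is the only genuinely delicate point; I would organise it by fixing $s\geq 0$ first, reading off the two formulas in \eqref{eq:def-t-r-2}, and then invoking the symmetry stated in Lemma~\ref{def:t-r-coeffs} (and in Proposition~\ref{prop:wronskian-identity}, via $\swei{\mathfrak{W}}{-s}$) to deduce the $s<0$ case by the substitution exchanging tilded and non-tilded objects and $s\mapsto -s$. One must also note that $\mc{A}'$ is defined without any $|\omega-m\upomega_+|^{-1}$ control, which is exactly what makes this a statement that survives the limit $\omega\to m\upomega_+$; no division by $\lp|\omega-m\upomega_+\rp|$ is performed anywhere in the argument above, so the conclusion holds uniformly in $|a|\leq M$ including the extremal case. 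I expect no substantive obstacle here: the Corollary is a purely algebraic repackaging of Proposition~\ref{prop:quantitative-strong} using the closed-form expressions for $\swei{\mathfrak{T}}{s}$ and $\swei{\tilde{\mathfrak{T}}}{s}$, and the only care needed is that the explicit constant $G(C_{\mc{A}'},M,|s|)$ be traced through the two multiplications by bounded quantities so that it remains, as claimed, explicitly computable.
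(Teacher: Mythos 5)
Your overall strategy is the paper's: the corollary is pure algebra from the closed forms \eqref{eq:def-t-r-2} combined with Proposition~\ref{prop:quantitative-strong}, and your treatment of the coefficients carrying $\omega$ in the numerator is fine. The gap is in the other half. For $\swei{\mathfrak{T}}{+s}=-2i(\omega-m\upomega_+)\lp(\swei{\mathfrak{W}}{+s}\rp)^{-1}$ the inequality to be proved is
\begin{align*}
\lp|\omega-m\upomega_+\rp|^{-(1+2s)}\lp|\swei{\mathfrak{T}}{+s}\rp|^2=4\lp|\omega-m\upomega_+\rp|^{1-2s}\lp|\swei{\mathfrak{W}}{+s}\rp|^{-2}\leq 4G\,,
\end{align*}
which is \emph{exactly} the extremal part of \eqref{eq:Wronskian-bounded-strong}: the factor $|\omega-m\upomega_+|^{2}$ sitting in the numerator of $|\swei{\mathfrak{T}}{+s}|^2$ must be absorbed into the weight, shifting the exponent from $1-2s$ down to $-(1+2s)$. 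Your manipulation instead multiplies \eqref{eq:Wronskian-bounded-strong} by $|\omega-m\upomega_+|^{2}$ and then discards that factor on the right-hand side via $|\omega-m\upomega_+|\lesssim_{C_{\mc{A}'},M}1$, which only yields
\begin{align*}
\lp(\lp|\omega-m\upomega_+\rp|^{1-2s}\delta_{|a|,M}+1\rp)\lp|\swei{\mathfrak{T}}{+s}\rp|^2\lesssim_{C_{\mc{A}'},M,|s|} 1\,.
\end{align*}
This weight is smaller than the claimed $|\omega-m\upomega_+|^{-(1+2s)}$ by a factor $|\omega-m\upomega_+|^{2}$, i.e.\ you prove a strictly weaker inequality precisely in the limit $\omega\to m\upomega_+$ that the corollary is designed to control. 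The closing step, ``rewriting the exponent $1-2s$ in terms of $\mp(1+2s)$,'' is where this is hidden: $1-2s$ and $-(1+2s)$ differ by $2$ and are not related by any of the sign or spin conventions in play.

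The fix is to do less, not more: never multiply the Wronskian bound by anything. Write each squared transmission coefficient as $4|\omega-m\upomega_+|^{2}\lp|\swei{\mathfrak{W}}{\sigma}\rp|^{-2}$ or $4|\omega|^{2}\lp|\swei{\mathfrak{W}}{\sigma}\rp|^{-2}$ according to \eqref{eq:def-t-r-2} (with tilded and untilded interchanged for $\sigma<0$); when the numerator is $|\omega-m\upomega_+|^{2}$ move it into the weight, and when it is $|\omega|^{2}$ bound it by $C_{\mc{A}'}^{2}$; then apply \eqref{eq:Wronskian-bounded-strong} with spin $\sigma=\pm s$. The four exponents then come out as $(1-2\sigma)-2$ in the first case and $1-2\sigma$ in the second, which for $\sigma=\pm s$ reproduces $\mp(1+2s)$ and $\pm(1-2s)$ exactly as stated. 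The ``$+1$'' parts follow, as you say, from $|\omega|,\,|\omega-m\upomega_+|\lesssim_{C_{\mc{A}'},M}1$ on $\mc{A}'$, and no division by $|\omega-m\upomega_+|$ occurs anywhere.
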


\begin{proof}
The proof follows easily from the expressions for the transmission coefficients given in \eqref{eq:def-t-r-2} for $s\geq 0$, and their analogue for $s<0$.
\end{proof}

To conclude, we outline the proof of Proposition~\ref{prop:quantitative-strong}:

\begin{proof}[Proof sketch for Proposition~\ref{prop:quantitative-strong}]  We refer the reader to the proof of Theorem~\ref{thm:quantitative}, in Section~\ref{sec:quantitative-proof}, to fill in the gaps in the sketch we present. Let $s\geq 0$ for simplicity and let $(\omega,m,\lambda)\in\mc{A}'$. Once again, we use the notation $\lesssim_{C_{\mc{A}'},M,|s|}$ to denote the existence of a constant which can be explicitly computed in terms of $C_{\mc{A}'}$, $M$  and $|s|$

Note that, if $|a|=M$, \eqref{eq:crucial-identity} in that proof can be replaced by
\begin{align}
\lp|\Delta^{-s/2}u^{[-s]}(-\infty)\rp|^2 \sim_{C_{\mc{A}'},M,|s|} \lp|\omega-m\upomega_+\rp|^{-1/2-2s}\lp|(f(x^*)\tilde{u})(+\infty)\rp|^2  \label{eq:crucial-identity-strong}\,,
\end{align}
according to statements 4(b) of Proposition~\ref{prop:ode-u-tilde}, for some appropriate $f$ given in \cref{proposition:estimates-u-tilde}. From Lemma~\ref{lemma:TS-radial-boundary-conditions} and Proposition~\ref{prop:TS-radial} in the case of outgoing boundary conditions, we obtain
\begin{align*}
&\lp|\lp((r^2+a^2)^{1/2}\lp(\mc{D}_0^-\rp)^{2s}\lp(\Delta^{s/2}(r^2+a^2)^{-1/2}u^{[-s]}\rp)\rp)(-\infty)\rp|^2 \\
&\quad\sim_{C_{\mc{A}'},M,|s|}\lp|\omega-m\upomega_+\rp|^{4s}\lp|\Delta^{-s/2}u^{[-s]}(-\infty)\rp|^2 \\
&\quad\sim_{C_{\mc{A}'},M,|s|} \lp|\omega-m\upomega_+\rp|^{-1/2+2s}\lp|(f(x^*)\tilde{u})(+\infty)\rp|^2\,.
\end{align*}

Moreover, in the proof of Proposition~\ref{proposition:estimates-u-tilde} given in Section~\ref{sec:estimates-u-tilde}, we show the slightly stronger statement (compared to \eqref{eq:estimate-u-tilde}) 
\begin{align}
|\omega-m\upomega_+|^{1/2}\lp|\lp(f(x)\tilde{u}\rp)(+\infty)\rp|^2\lesssim_{C_{\mc{A}'},M,|s|}\int_{-\infty}^\infty g(x) \lp|\tilde{H}\rp|^2 \zeta_1 dx^*\,,\label{eq:estimate-u-tilde-stronger}
\end{align}
for $|a|=M$. Thus, the same construction as for \eqref{eq:Wronskian-explicit-bound} in the proof of Theorem~\ref{thm:quantitative}, using \eqref{eq:estimate-u-tilde-stronger}, shows that for $|a|=M$,
\begin{align*}
&|\omega-m\upomega_+|^{1\pm 2s}\lp(\swei{\mathfrak{W}}{\pm s}\rp)^{-2} \\
&\quad\lesssim_{C_{\mc{A}'},M,|s|}|\omega-m\upomega_+|^{1/2}\lp|(f(x^*)\tilde{u})(+\infty)\rp|^2\\
&\quad\lesssim_{C_{\mc{A}'},M,|s|}\lp[\int_{2M}^\infty g(x)\lp|\tilde{H}\rp|^2dx\rp]\times\\
&\quad\qquad\times\lp[\int_{R_1}^{R_2} \sum_{k=0}^s \lp|\sum_{j=0}^{2s}\swei{b}{\pm s}_k\swei{c}{\pm s}_j (\omega,m,\lambda)r^{-k-j}+O_{\omega,m,\lambda}\lp(r^{-2s-1-k}\rp)\rp|dr^*\rp.\\
&\quad\qquad\qquad \lp.+\int_{R_1}^{R_2} |h|\sum_{j=0}^{2s} \lp|\swei{c}{+ s}_j (\omega,m,l)r^{-s-j}+O_{\omega,m,\lambda}\lp(r^{-3s-1}\rp)\rp|dr^*\rp]^{-2}\\
&\quad\lesssim_{C_{\mc{A}'},M,|s|} 1\,,\numberthis \label{eq:Wronskian-explicit-bound-strong}
\end{align*}
where $g(x)$, $\tilde{H}$ and $\swei{H}{-s}$, $R_1$, $R_2$, $\swei{b}{\pm s}_k$, $\swei{c}{\pm s}_j(\omega,m,\lambda)$ and $h$ are as in \eqref{eq:Wronskian-explicit-bound}. Combining with the uniform boundedness of the Wronskian in the subextremal case (Theorem~\ref{thm:quantitative}), one finally obtains Proposition~\ref{prop:quantitative-strong}.
\end{proof}

\printbibliography

\end{document}